\newcommand{\bra}[1]{{\left\langle{#1}\right\vert}}
\newcommand{\ket}[1]{{\left\vert{#1}\right\rangle}}
\newcommand{\qw}[1][-1]{\ar @{-} [0,#1]}
\newcommand{\qwx}[1][-1]{\ar @{-} [#1,0]}
\newcommand{\cw}[1][-1]{\ar @{=} [0,#1]}
\newcommand{\gate}[1]{*+<.6em>{#1} \POS ="i","i"+UR;"i"+UL **\dir{-};"i"+DL **\dir{-};"i"+DR **\dir{-};"i"+UR **\dir{-},"i" \qw}
\newcommand{\meter}{*=<1.8em,1.4em>{\xy ="j","j"-<.778em,.322em>;{"j"+<.778em,-.322em> \ellipse ur,_{}},"j"-<0em,.4em>;p+<.5em,.9em> **\dir{-},"j"+<2.2em,2.2em>*{},"j"-<2.2em,2.2em>*{} \endxy} \POS ="i","i"+UR;"i"+UL **\dir{-};"i"+DL **\dir{-};"i"+DR **\dir{-};"i"+UR **\dir{-},"i" \qw}
\newcommand{\control}{*!<0em,.025em>-=-<.2em>{\bullet}}
\newcommand{\controlo}{*+<.01em>{\xy -<.095em>*\xycircle<.19em>{} \endxy}}
\newcommand{\ctrl}[1]{\control \qwx[#1] \qw}
\newcommand{\ctrlo}[1]{\controlo \qwx[#1] \qw}
\newcommand{\targ}{*+<.02em,.02em>{\xy ="i","i"-<.39em,0em>;"i"+<.39em,0em> **\dir{-}, "i"-<0em,.39em>;"i"+<0em,.39em> **\dir{-},"i"*\xycircle<.4em>{} \endxy} \qw}
\newcommand{\qswap}{*=<0em>{\times} \qw}
\newcommand{\multigate}[2]{*+<1em,.9em>{\hphantom{#2}} \POS [0,0]="i",[0,0].[#1,0]="e",!C *{#2},"e"+UR;"e"+UL **\dir{-};"e"+DL **\dir{-};"e"+DR **\dir{-};"e"+UR **\dir{-},"i" \qw}
\newcommand{\ghost}[1]{*+<1em,.9em>{\hphantom{#1}} \qw}
\newcommand{\gategroup}[6]{\POS"#1,#2"."#3,#2"."#1,#4"."#3,#4"!C*+<#5>\frm{#6}}
\newcommand{\rstick}[1]{*!L!<-.5em,0em>=<0em>{#1}}
\newcommand{\lstick}[1]{*!R!<.5em,0em>=<0em>{#1}}
\newcommand{\ustick}[1]{*!D!<0em,-.5em>=<0em>{#1}}
\newcommand{\Qcircuit}{\xymatrix @*=<0em>}
\newcommand{\pureghost}[1]{*+<1em,.9em>{\hphantom{#1}}}
\newcommand{\braket}[2]{\left\langle #1 \big| #2 \right\rangle}
\newcommand{\e}{\textrm{e}}
\newcommand{\ii}{\textrm{i}}
\theoremstyle{definition}
\newtheorem{exercise}{\bf Exercise}[chapter]
\newtheorem{theorem}{Theorem}[chapter]
\newtheorem{lemma}[theorem]{Lemma}
\newtheorem{proposition}[theorem]{Proposition}
\newtheorem{corollary}[theorem]{Corollary}
\newtheorem{definition}[theorem]{Definition}
\newtheorem{conjecture}[theorem]{Conjecture}
\title{\Huge\bf Basic Quantum Algorithms\\
\vspace{2cm}
\mbox{}
}
\author{\bf Renato Portugal\\
\\
{\small Full Researcher at the}\\
{\small National Laboratory of Scientific Computing}\\
{\small LNCC/MCTI}
\\
\vspace{4cm}\\
}
\begin{document}

\maketitle

\begin{abstract}
Quantum computing is evolving so rapidly that it forces us to revisit, rewrite, and update the foundations of the theory. \emph{Basic Quantum Algorithms} revisits the earliest quantum algorithms. The journey began in 1985 with Deutsch attempting to evaluate a function at two domain points simultaneously. Then, in 1992, Deutsch and Jozsa created a quantum algorithm that determines whether a Boolean function is constant or balanced. The following year, Bernstein and Vazirani realized that essentially the same algorithm could be used to identify a specific Boolean function within a set of linear Boolean functions. In 1994, Simon introduced a novel quantum algorithm that determines whether a function is one-to-one or two-to-one exponentially faster than any classical algorithm for the same problem. That same year, Shor developed two groundbreaking quantum algorithms for integer factoring and calculating discrete logarithms, posing a threat to widely used cryptographic methods. In 1995, Kitaev proposed an alternative formulation based on phase estimation that proved valuable in numerous applications. The following year, Grover devised a quantum search algorithm that is quadratically faster than its classical counterpart. More than a decade later, Harrow, Hassidim, and Lloyd proposed a quantum algorithm for solving systems of linear equations, now known as the HHL algorithm. With an emphasis on the circuit model, this work provides a detailed description of all these remarkable algorithms.

\end{abstract}

\addtocontents{toc}{\protect\thispagestyle{empty}}
\tableofcontents
\thispagestyle{empty}
\newpage
\setcounter{page}{1}

\chapter{Introduction}\label{chap:Intro}

Quantum algorithms are a rapidly evolving subarea of quantum computing, not only in terms of new algorithms but also in applications and implementations. The basic algorithms serve as the pillars of this new edifice. The construction began with a change in the rules of the game. Instead of storing information in bits, which take either zero or one, we are allowed to store information in qubits, the state of which is a superposition of zeros and ones. The rules based on classical mechanics were replaced by rules based on quantum mechanics.

The first breakthrough came with Deutsch's 1985 proposal to evaluate a one-bit Boolean function at two points simultaneously using quantum parallelism, which exploits the superposition of zeros and ones. At the time, a framework for creating new algorithms was missing, which Deutsch established in 1989 with the introduction of quantum gates and circuits, taking the place of well-known classical gates such as AND, OR, and NOT.

In 1992, Deutsch and Jozsa developed an algorithm to determine whether a Boolean function is balanced or constant, giving momentum to the field of quantum algorithms and inspiring the development of oracle-based algorithms. The goal is to find a hidden property of a function with as few queries as possible.

Bernstein and Vazirani observed in 1993 that the Deutsch--Jozsa algorithm could be used to identify a specific Boolean function within a set of linear Boolean functions. The Bernstein--Vazirani algorithm outperforms its classical counterpart without exploiting entanglement, relying solely on quantum parallelism.

The momentum continued to grow as Simon published a quantum algorithm in 1994 that exponentially outperformed classical algorithms in determining whether a function is one-to-one or two-to-one. This algorithm exploited entanglement and had applications in finding hidden subgroups within specific classes of groups.

In the same year, Shor developed two groundbreaking quantum algorithms for factoring composite integers and calculating discrete logarithms, which posed a significant threat to the cryptographic methods widely used today. Shor's algorithms brought quantum computing to the spotlight, and since then the field has been growing at an astonishing rate. Shor's algorithm can also be formulated as an oracle-based algorithm with a function that is periodic. The goal is to find the period by evaluating the function as few times as possible. Finding periods is a task well suited for the Fourier transform, which in classical computation has complexity $O(N \log N)$, where $N$ represents the data size. In the quantum domain, however, the Fourier transform can be implemented using $O(\log^2 N)$ universal gates, and it is the quantum superposition that makes this possible.

In 1995, Kitaev introduced another formulation of Shor's algorithms after developing a quantum algorithm for phase estimation. Given a unitary operator and one of its eigenvectors, the algorithm efficiently finds the corresponding eigenvalue, which is completely characterized by its phase. Kitaev's algorithm proved useful for other applications, such as quantum counting.

Grover focused on unsorted databases in 1996 and developed a quantum algorithm that can locate an item quadratically faster than classical searching. Grover's algorithm can also be formulated as an oracle-based algorithm with a Boolean function that is constant except for a single point in the domain. The goal is to find that point by evaluating the function as few times as possible. When written as a black box algorithm, it becomes clear that Grover's algorithm has wide applicability.

More recently, Harrow, Hassidim, and Lloyd proposed a quantum algorithm for solving systems of linear equations, now known as the HHL algorithm. Given a matrix $A$ and a vector $\vec{b}$, the algorithm prepares a quantum state proportional to the solution $\vec{x}$ of the linear system $A \vec{x} = \vec{b}$. Under certain conditions on the matrix $A$, the algorithm can achieve an exponential speedup over the best known classical algorithms for this task.

\emph{Basic Quantum Algorithms} details the remarkable contributions mentioned above. There is no hope of describing these algorithms properly without using the correct language: mathematics, and more specifically linear algebra. Concepts such as quantum superposition and entanglement acquire precise meaning when expressed in this language. Measurements are described by projectors, gates by unitary operators, and qubits by vectors. Projectors, unitary operators, and vectors are the words of the language of linear algebra. When describing quantum algorithms or anything related to quantum computing, it is better to rely on mathematics; otherwise someone will probably utter nonsense.

\emph{Basic Quantum Algorithms} follows as much as possible the historical ordering, which also corresponds roughly to an order of increasing complexity. We feel as if we are climbing steps of increasing height, strengthening our muscles and preparing for the challenge of understanding more complex quantum algorithms. Each chapter is designed to be as independent as possible, allowing readers already familiar with some algorithms to skip certain sections.

Lastly, do not hesitate to contact the author (\verb|portugal@lncc.br|) if there are errors or problems in terms of imprecision or missing citations. Suggestions are also welcome.

\subsection*{Acknowledgments}

The author thanks P. H. G. Lug\~ao, G. A. Bezerra, and G. A. Bridi for useful discussions.

\chapter{Quantum Circuits}\label{chap:2}


The goal of this Chapter is to define the concepts of qubit, logic gate, and quantum circuit. Before that, we briefly review key facts of linear algebra~\cite{SA97,GS88} using Dirac notation from the beginning. References for this Section are~\cite{RP11,YM08,Buch24}. Additional references for quantum mechanics and linear algebra for quantum computing are Sections~2.1 and~2.2 of~\cite{NC00}.

\section{Review of linear algebra using Dirac notation}

There are several notations to show that a variable $v$ is a vector, for example, $\vec{v}$, $\textbf{v}$, and so on. In quantum computing, the most common notation is Dirac's: $\ket{v}$. A sequence of vectors is denoted by $\ket{v_0}$, $\ket{v_1}$, $\ket{v_2}$ and so forth. It is very common to abuse this notation and denote the same sequence as $\ket{0}$, $\ket{1}$, $\ket{2}$ and so on.

The canonical basis of a two-dimensional vector space has two vectors, denoted by $\{\ket{0},\ket{1}\}$ in Dirac notation, where $\ket{0}$ and $\ket{1}$ have the following representation
\[
\ket{0}=\begin{bmatrix} 1 \\ 0 \end{bmatrix} \,\,\,\text{ and }\,\,\,
\ket{1}=\begin{bmatrix} 0 \\ 1 \end{bmatrix}.
\]
These vectors have two entries or components, unit length, and are orthogonal. Then, this basis is orthonormal. It is called the \textit{canonical basis} in linear algebra and the \textit{computational basis} in quantum computing. Note that $\ket{0}$ is not the null vector, but the first vector of the canonical basis. All entries of the null vector are equal to 0. In the two-dimensional case, it is
\[
\begin{bmatrix} 0 \\ 0 \end{bmatrix}
\]
without any special designation in Dirac notation.

A generic vector in a two-dimensional vector space is obtained via the \textit{linear combination} of the basis vectors,
\[
\ket{\psi}=\alpha\ket{0}+\beta\ket{1},
\]
where $\alpha$ and $\beta$ are complex numbers. These numbers are the entries of vector $\ket{\psi}$, as can be seen from the notation
\begin{equation*}\label{eq:psi}
\ket{\psi}=\begin{bmatrix} \alpha \\ \beta \end{bmatrix}.
\end{equation*}
The \textit{dual vector} (with respect to $\ket{\psi}$) is denoted by $\bra{\psi}$ and is obtained by transposing $\ket{\psi}$ and conjugating each entry. Using the previous equation, we obtain
\begin{equation*}\label{eq:brapsi}
\bra{\psi}=\begin{bmatrix} \alpha^* & \beta^* \end{bmatrix},
\end{equation*}
which can be written as
\[
\bra{\psi}=\alpha^*\bra{0}+\beta^*\bra{1},
\]
where
\[
\bra{0}=\begin{bmatrix} 1& 0 \end{bmatrix} \,\,\,\text{ and }\,\,\,
\bra{1}=\begin{bmatrix} 0 & 1 \end{bmatrix}.
\]
The dual vector $\bra{\psi}$ is a $1\times 2$ matrix and vector $\ket{\psi}$ is a $2\times 1$ matrix. At this point, we introduce the \textit{dagger} symbol, denoted by $\dagger$, which is the notation for the conjugate transpose vector (transpose the vector and then conjugate each entry or vice versa). Then, we may write $\bra{\psi}=\ket{\psi}^\dagger$ and $\ket{\psi}=\bra{\psi}^\dagger$. Applying the dagger twice gives back the original vector.

Suppose that $\ket{\psi_1}$ and $\ket{\psi_2}$ are two-dimensional vectors given by
\[
\ket{\psi_1}=\begin{bmatrix} \alpha \\ \beta \end{bmatrix} \,\,\,\,\text{ and }\,\,\,\,
\ket{\psi_2}=\begin{bmatrix} \gamma \\ \delta \end{bmatrix}.
\]
The \textit{inner product} of two vectors $\ket{\psi_1}$ and $\ket{\psi_2}$ is a complex number denoted by $\braket{\psi_1}{\psi_2}$ and defined as the matrix product of the dual vector $\bra{\psi_1}$ by $\ket{\psi_2}$, as follows
\begin{equation*}
\braket{\psi_1}{\psi_2}=\begin{bmatrix} \alpha^* & \beta^* \end{bmatrix}\begin{bmatrix} \gamma \\ \delta \end{bmatrix}=\alpha^*\gamma+\beta^*\delta.
\end{equation*}
In Dirac notation, the calculation of the inner product is performed by distributing the matrix product over the sum of vectors, as follows
\begin{equation*}
\braket{\psi_1}{\psi_2}=\big(\alpha^*\bra{0}+\beta^*\bra{1}\big)\cdot \big(\gamma\ket{0}+\delta\ket{1}\big) =\alpha^*\gamma\,\braket{0}{0}+\beta^*\delta\,\braket{1}{1}=\alpha^*\gamma+\beta^*\delta.
\end{equation*}
The \textit{norm} of vector $\ket{\psi_1}$ is denoted by $\|\,\ket{\psi_1}\,\|$ and defined as
\begin{equation*}
\|\,\ket{\psi_1}\,\|=\sqrt{\braket{\psi_1}{\psi_1}}=\sqrt{|\alpha|^2+|\beta|^2},
\end{equation*}
where $|\alpha|$ is the \textit{absolute value} of $\alpha$, that is
\[
|\alpha|=\sqrt{\alpha\cdot \alpha^*}.
\]
If $\alpha=a+b\,\ii $, where $\ii$ is the imaginary unit $(\ii =\sqrt{-1})$, $a$ is the real part and $b$ is the imaginary part, then
\[|\alpha|=\sqrt{(a+b\,\ii )\cdot (a-b\,\ii )}=\sqrt{a^2+b^2}.
\]
A complex number $\alpha$ such that $|\alpha|=1$ is called a \textit{unit complex number} and can be written as $\e^{\ii\theta}=\cos\theta+\ii\sin\theta$, where $\theta$ is an angle.
In real vector spaces, the inner product is called \textit{scalar product} and is given by
\begin{equation*}
\braket{\psi_1}{\psi_2}=\|\,\ket{\psi_1}\|\,\,\|\,\ket{\psi_2}\|\,\cos\theta,
\end{equation*}
where $\theta$ is the angle between vectors $\ket{\psi_1}$ and $\ket{\psi_2}$.

Using these definitions, we can show that the basis ${\ket{0}, \ket{1}}$ is orthonormal, meaning that the vectors $\ket{0}$ and $\ket{1}$ are orthogonal and each has a norm of 1, that is
\begin{eqnarray*}
\braket{0}{0} = 1,\,\,\,\,\,\,\,\,
\braket{0}{1} = 0,\,\,\,\,\,\,\,\,
\braket{1}{0} = 0,\,\,\,\,\,\,\,\,
\braket{1}{1} = 1.
\end{eqnarray*}
An algebraic way of denoting orthonormality and of compacting the last four equations into one is
\[
\braket{k}{\ell}=\delta_{k\ell},
\]
where $k$ and $\ell$ are bits ($k,\ell \in {0,1}$) and $\delta_{k\ell}$ is the \textit{Kronecker delta}, defined as
\[
\delta_{k\ell}=\begin{cases} 1, \text{ if }k=\ell, \\ 0, \text{ if }k\neq \ell. \end{cases}
\]

\begin{exercise}
Let $\theta \in \mathbb{R}$ and define
\[
\ket{\psi} = \frac{1}{\sqrt{3}}\ket{0} + \sqrt{\frac{2}{3}}\,\e^{\ii\theta}\ket{1}.
\]
\begin{enumerate}
\item[(a)] Write $\ket{\psi}$ in column vector form.
\item[(b)] Compute $\bra{\psi}$.

\item[(c)] Show explicitly that $\|\ket{\psi}\| = 1$.

\item[(d)] Compute $\braket{0}{\psi}$ and $\braket{1}{\psi}$.

\item[(e)] For which values of $\theta$ does $\ket{\psi}$ become orthogonal to
\[
\ket{\phi}=\sqrt{\frac{2}{3}}\ket{0}+\frac{1}{\sqrt{3}}\ket{1}?
\]
\end{enumerate}
\end{exercise}

\section{Qubit and superposition}

The basic memory unit of a classical computer is the \textit{bit}, which takes on the values 0 or 1. Usually, the bit is implemented using two distinct voltages, following the convention that null or low voltage represents bit 0 and high voltage represents bit 1. To determine whether the output is bit 0 or 1 at the end of the computation, it is necessary to measure the voltage.

The basic memory unit of a quantum computer is the \textit{qubit}, which also yields 0 or 1 at the end of the computation. The qubit can be implemented using an electric current in a small superconductor, following the convention that clockwise current represents 0 and counter-clockwise current represents 1, or vice versa. The difference from the classical device appears during the computation, since the qubit allows the simultaneous coexistence of 0 and 1. During the computation, or before the measurement, the \textit{state} of a qubit is represented by a norm-1 two-dimensional vector and the states of a qubit corresponding to 0 and 1 are $\ket{0}$ and $\ket{1}$. The definition of \textit{state} is a vector of norm 1 in a complex vector space endowed with the inner product presented in the previous Section.\footnote{A finite-dimensional vector space with an inner product is a \textit{Hilbert space}.} The state can be thought of as the ``value'' of the qubit before the measurement. Quantum coexistence is represented mathematically by a linear combination of orthonormal vectors as follows
\[
\ket{\psi}=\alpha\ket{0}+\beta\ket{1},
\]
where $\alpha$ and $\beta$ are complex numbers that obey the constraint
\[
|\alpha|^2+|\beta|^2=1.
\]
The state of the qubit is the vector $\ket{\psi}$, which has norm 1 and entries $\alpha$ and $\beta$. The complex numbers $\alpha$ and $\beta$ are the \textit{amplitudes} of the state $\ket{\psi}$.

The coexistence of bits 0 and 1 cannot be implemented in a classical device, since it is not possible to have low and high voltage simultaneously, as everyone knows. In quantum mechanics, though hard to believe, it is possible to have a quantum system (usually microscopic) in a \textit{superposition} of the states corresponding to low and high voltage. This coexistence cannot be understood within our classical intuition; it is something fundamentally new that becomes clear only through the mathematical formalism. This superposition can only be fully maintained if the quantum system is sufficiently isolated from the surrounding macroscopic environment. When we measure the quantum system to determine the voltage value, the measuring device inevitably interacts with the system, producing a stochastic result, which is either low or high voltage, similar to the classical bit. In other words, superposition is maintained only as long as no measurement interaction reveals which outcome would be obtained.

Note that quantum mechanics is a \textit{scientific theory}, meaning its laws and results can be tested objectively in laboratories. In addition, unnecessary laws and statements are promptly discarded. Therefore, the statement that superposition requires isolation has practical consequences and has been tested and re-tested for over 100 years in thousands of quantum mechanics laboratories worldwide. On the other hand, alternative theories that attempt to reproduce classical intuition without superposition have been ruled out by experimental tests.

From a computational point of view, we have a qubit in superposition and we use this feature in a circuit. For example, the circuit
\[
\Qcircuit @C=2.3em @R=1.9em {
\lstick{\ket{\psi}}       &\qw & \meter & \rstick{
0 \textrm{ or }
1}\cw
}\vspace*{3pt}
\]
tells us that the initial ``value'' of the qubit is $\ket{\psi}$ and this information is conveyed unchanged from left to right until a measurement is performed, as shown by the \textit{meter} (the display of a voltmeter). The measurement outputs 0 or 1. Classical information is conveyed by a double wire. If the state of the qubit is $\ket{\psi}=\ket{0}$, a measurement will necessarily output 0 and if the state is $\ket{1}$, a measurement will necessarily output 1. In the general case, if the state is $\alpha\ket{0}+\beta\ket{1}$, a measurement will return 0 with probability $|\alpha|^2$ or 1 with probability $|\beta|^2$, as shown in the circuit \vspace{0.3cm}
\[
\Qcircuit @C=2.3em @R=1.9em {
\lstick{\alpha\ket{0}+\beta\ket{1}}       &\qw & \meter & \rstick{\begin{cases} 0, \text{ with probability }|\alpha|^2, \\ 1, \text{ with probability }|\beta|^2. \end{cases}}\cw
}\hspace{2.5cm}\vspace*{7pt}
\]
The output can be depicted by a histogram of the probability distribution.
It is important to repeat the fact that $\alpha$ and $\beta$ are called \textit{amplitudes} of the state $\alpha\ket{0}+\beta\ket{1}$ and are numbers that can be negative and may have an imaginary part. On the other hand, $|\alpha|^2$ and $|\beta|^2$ are positive real numbers in the interval $[0,1]$ and are called probabilities. A careless interchange between amplitudes and probabilities creates unforgivable errors.

\begin{figure}[!ht]
\centering
\includegraphics[trim=0 0 0 0,clip,scale=0.9]{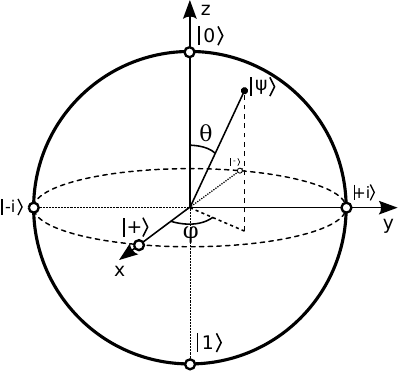}
\caption{Bloch sphere and the location of states $\ket{0}$, $\ket{1}$, $\ket{\pm}$, and $\ket{\pm \ii}$. An arbitrary state $\ket{\psi}$ is shown with spherical angles $\theta$ and $\varphi$.}
\label{fig:esfera_Bloch}
\end{figure}

The state of a qubit can be characterized by two angles $\theta$ and $\varphi$ as follows
\begin{equation*}
    \ket{\psi}=\cos\frac{\theta}{2}\,\ket{0} +  \textrm{e}^{\ii \varphi}\sin\frac{\theta}{2}\, \ket{1},
\end{equation*}
where $0\le \theta \le \pi$ and $0\le \varphi < 2\pi$. This notation shows that there is a one-to-one correspondence between the physically distinct states of a qubit and points on the surface of a sphere of radius 1, called the \textit{Bloch sphere}. The angles $\theta$ and $\varphi$ are spherical angles that describe the location of the state $\ket{\psi}$, as shown in Fig.~\ref{fig:esfera_Bloch}. A point on the Bloch sphere is described by a three-dimensional vector with real entries
\begin{equation*}
    \begin{bmatrix}
  \sin \theta \cos\varphi \\
  \sin \theta \sin\varphi \\
  \cos \theta \\
\end{bmatrix}.
\end{equation*}
The locations of the states
\[
\ket{\pm}=\frac{\ket{0}\pm\ket{1}}{\sqrt{2}},
\qquad
\ket{\pm \ii}=\frac{\ket{0}\pm \ii\ket{1}}{\sqrt{2}}
\]
correspond to the spherical angles
\[
\begin{aligned}
\ket{+} &: (\theta,\varphi)=(\pi/2,0), \\
\ket{-} &: (\theta,\varphi)=(\pi/2,\pi), \\
\ket{+\ii} &: (\theta,\varphi)=(\pi/2,\pi/2), \\
\ket{-\ii} &: (\theta,\varphi)=(\pi/2,3\pi/2).
\end{aligned}
\]
Therefore, $\ket{\pm}$ lie on the $x$-axis and $\ket{\pm \ii}$ lie on the $y$-axis of the Bloch sphere.

If we have an arbitrary single-qubit state $\alpha\ket{0}+\beta\ket{1}$ and we want to find the spherical angles $\theta$ and $\varphi$, the first thing to do is to write $\alpha$ and $\beta$ as $r_1\e^{\ii\varphi_1}$ and $r_2\e^{\ii\varphi_2}$, respectively, where $r_1=|\alpha|$ and $r_2=|\beta|$. Now we multiply the state by $\e^{-\ii\varphi_1}$ to obtain $r_1\ket{0}+\e^{\ii(\varphi_2-\varphi_1)}r_2\ket{1}$. Then, we take $\varphi=\varphi_2-\varphi_1$ and $\theta=2\arccos r_1$. Note that $r_2=\sin(\theta/2)$ because $r_1^2+r_2^2=1$. There is no problem in multiplying the state by a unit complex number such as $\e^{-\ii\varphi_1}$ because in quantum mechanics two quantum states that differ by a global factor are considered equivalent and have the same location on the Bloch sphere. The global factor must be a unit complex number and is usually called \textit{global phase factor}.

\begin{exercise}
Let
\[
\ket{\psi}=\cos\frac{\theta}{2}\ket{0}
+\e^{\ii\varphi}\sin\frac{\theta}{2}\ket{1},
\qquad
0\le \theta \le \pi, \; 0\le \varphi < 2\pi.
\]

\begin{enumerate}
\item[(a)] Show that $\|\ket{\psi}\|=1$.

\item[(b)] Compute the measurement probabilities of obtaining $0$ and $1$ in the computational basis.

\item[(c)] Show that
\[
\ket{\psi'}=\e^{\ii\gamma}\ket{\psi}
\]
has the same measurement probabilities as $\ket{\psi}$ for any real $\gamma$.
\item[(d)] Find the Bloch sphere angles $(\theta,\varphi)$ corresponding to the state
\[
\ket{\phi}=\frac{\sqrt{3}}{2}\ket{0}+\frac{1}{2}\,\e^{\ii\pi/4}\ket{1}.
\]
\end{enumerate}
\end{exercise}

\section{Single-qubit gates}

A single-qubit gate is a $2\times 2$ unitary matrix. A matrix $U$ is unitary if $U^\dagger U = I$. Equivalently, unitary operators preserve inner products and therefore preserve norms, that is, $\| U\ket{\psi} \| = \| \ket{\psi} \|$ for all $\ket{\psi}$. Formally, suppose that $\ket{\psi'}=U\ket{\psi}$, where $\ket{\psi}$ is a norm-1 two-dimensional vector. If $U$ is a unitary matrix, then $\ket{\psi'}$ will have norm 1. For example, the Hadamard matrix
$$
H=\frac{1}{\sqrt 2}\begin{bmatrix}
    1 & \,\,\,1 \\
    1 & -1 \\
\end{bmatrix}
$$
is unitary. Therefore, the multiplication of $H$ by the basis vectors has to result in norm-1 vectors. In fact,
\[
H\ket{0}\,\,=\,\,H\begin{bmatrix} 1 \\ 0 \end{bmatrix}\,\,=\,\,\frac{1}{\sqrt 2}\begin{bmatrix} 1 \\ 1 \end{bmatrix}\,\,=\,\,\frac{1}{\sqrt 2}\ket{0}+\frac{1}{\sqrt 2}\ket{1}.
\]
We denote this vector by $\ket{+}$, that is
\[\ket{+}=\frac{1}{\sqrt 2}\ket{0}+\frac{1}{\sqrt 2}\ket{1}.
\]
Multiplying $H$ by $\ket{1}$ yields vector $\ket{-}$ defined as
\[
\ket{-}=\frac{1}{\sqrt 2}\ket{0}-\frac{1}{\sqrt 2}\ket{1},
\]
which also has norm 1. These calculations are important because we need to learn the output of the gate. If the input is $\ket{0}$ then the output is $\ket{+}$. If the input is $\ket{1}$, the output is $\ket{-}$. If the input is a superposition $\alpha\ket{0}+\beta\ket{1}$, the output is the superposition of $\ket{+}$ and $\ket{-}$ with the same amplitudes, $\alpha\ket{+}+\beta\ket{-}$, because we use the linearity property of the gate, that is, instead of thinking that $H$ is a matrix, we use that $H$ is a linear operator and if $H$ is applied to a linear combination of vectors $\ket{0}$ and $\ket{1}$ with amplitudes $\alpha$ and $\beta$, the result is a linear combination of $H\ket{0}$ and $H\ket{1}$ with the same amplitudes $\alpha$ and $\beta$. While we could avoid this abstract perspective, when multiplying a matrix by a sum of vectors, we must distribute the multiplication over the sum of vectors.

Verifying that $H$ maps the vectors of an orthonormal basis to norm-1 vectors is not sufficient to prove that $H$ is unitary. It is also necessary to check that the resulting vectors remain orthogonal, that is, to verify that $\braket{-}{+}=0$. A more direct way to establish that $H$ is unitary is to compute $H H^\dagger$, where $H^\dagger$ is obtained by transposing $H$ and conjugating each entry. If $H H^\dagger = I$, then $H$ is unitary. The matrix $H^\dagger$ is called the \textit{Hermitian transpose} of $H$.

A \textit{quantum circuit} is a graphical representation of a quantum algorithm. The input qubit is located on the left, and the information (qubit's state) is transmitted unaltered from left to right until it encounters a logic gate. The gate receives the input from the left, acts on the qubit's state, and the resulting state is then passed to the right. The gate processing is accomplished by multiplying the unitary matrix, which represents the gate, by the vector that represents the qubit's state. For example, the expression
 $\ket{+}=H\ket{0}$ is represented by the following circuit:
\[
\Qcircuit @C=2.3em @R=1.9em {
\lstick{\ket{0}}        &\gate{H}  &  \rstick{\ket{+}.}\qw
}\vspace{0.1cm}
\]
The input is vector $\ket{0}$, which is conveyed unchanged by the wire to $H$, which acts on the input and transforms it into $\ket{+}$, which is then conveyed to the right. The gate action is calculated by multiplying $H$ by $\ket{0}$. Therefore, the result of the computation is $\ket{+}$. If at the end of the computation we perform a measurement, the circuit is \vspace* {0.3cm}
\[
\Qcircuit @C=2.3em @R=1.9em {
\lstick{\ket{0}}    & \gate{H} & \meter & \rstick{\begin{cases} 0, \text{ with probability }\frac{1}{2}, \vspace{0.1cm}\\ 1, \text{ with probability }\frac{1}{2}. \end{cases}}\cw
}\hspace{3cm}\vspace*{0.5cm}
\]
The circuit shows that measuring the qubit in the state $\ket{+}$ yields 0 with probability $1/2$ and 1 with probability $1/2$. Fig.~\ref{fig:histograma_H} shows the histogram of the probability distribution generated in Qiskit.\footnote{Qiskit is open-source software for running programs on IBM quantum computers.}

\begin{figure}[!ht]
\centering
\includegraphics[scale=0.4]{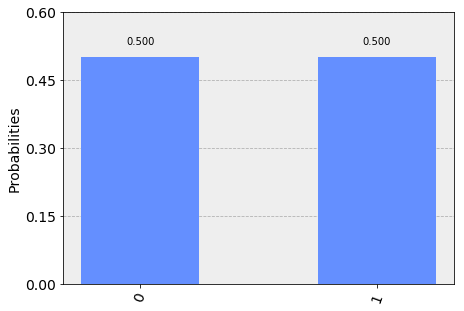}
\caption{Histogram of the probability distribution generated by measuring a qubit whose state is $\ket{+}$.}
\label{fig:histograma_H}
\end{figure}

An example that is simpler than the previous one is the $X$ gate, defined as
$$X=\begin{bmatrix}
    0 & 1 \\
    1 & 0 \\
\end{bmatrix}.
$$ $X$ is the quantum NOT gate because $\ket{1}=X\ket{0}$ and $\ket{0}=X\ket{1}$. We can verify these equations by multiplying the matrix $X$ by $\ket{0}$ and $\ket{1}$. In a more compact form, we can write $\ket{j\oplus 1}=X\ket{j}$, where $\oplus$ is the XOR operation or sum modulo 2. Because of this, the gate $X$ is also represented as $\oplus$. A circuit using the $X$ gate is
\[
\Qcircuit @C=2.3em @R=1.9em {
\lstick{\ket{0}}    & \gate{X} & \meter & \rstick{1 \text{ with probability 1.}}\cw
}\hspace{3cm}\vspace*{0.2cm}
\]

Now we can increase the complexity. How can we generate a superposition such that the amplitudes of $\alpha\ket{0}+\beta\ket{1}$ are different and nonzero? For example, how to generate a state $\alpha\ket{0}+\beta\ket{1}$ such that $|\alpha|^2=25\%$ and $|\beta|^2=75\%$ before the measurement? The answer is to use the most general single-qubit gate, whose algebraic expression is
\[
U(\theta,\phi,\lambda)=
\left[\begin{array}{cc}
 \cos\frac{\theta}{2} &  -\e^{ \ii \lambda}\sin\frac{\theta}{2} \\
 \e^{ \ii \phi}\sin\frac{\theta}{2}   &  \e^{ \ii (\lambda+\phi)}\cos\frac{\theta}{2}
\end{array}\right].
\]
After applying $U(\theta,0,0)$ on $\ket{0}$, we obtain
\[
U(\theta,0,0)\ket{0}=\cos\frac{\theta}{2}\,\ket{0}+\sin\frac{\theta}{2}\,\ket{1}.
\]
We must choose $\theta = 2\pi/3$, since we require
\[
\cos^2\frac{\theta}{2} = \frac{1}{4},
\]
which implies $\frac{\theta}{2} = \frac{\pi}{3}$. The previous example using the Hadamard gate can be reproduced by taking $\theta=\pi/2$ and $\lambda=\pi$ because $H=U(\pi/2,0,\pi)$.

$U(\theta,\phi,\lambda)$ is a universal single-qubit gate because every $2\times 2$ unitary matrix can be written as a global phase factor times $U(\theta,\phi,\lambda)$. For instance, three useful gates obtained from $U$ are
\begin{align*}
R_x(\theta) &= U\Big(\theta,-\frac{\pi}{2},\frac{\pi}{2}\Big) = \left[\begin{array}{cc}
 \,\,\,\,\,\,\cos\frac{\theta}{2} &  -\ii\sin\frac{\theta}{2} \vspace{2pt}\\
 -\ii\sin\frac{\theta}{2}   &  \,\,\,\,\,\,\cos\frac{\theta}{2}
\end{array}\right],\\
R_y(\theta) &= U(\theta,0,0) = \left[\begin{array}{cc}
 \cos\frac{\theta}{2} &  -\sin\frac{\theta}{2} \vspace{2pt}\\
 \sin\frac{\theta}{2}   &  \,\,\,\,\cos\frac{\theta}{2}
\end{array}\right],\\
R_z(\lambda) &= \frac{U(0,0,\lambda)}{\e^{\ii \lambda/2}} =
\left[\begin{array}{cc}
 \e^{-\ii \lambda/2} &  0 \\
 0   &  \e^{ \ii \lambda/2}
\end{array}\right].
\end{align*}
Here $R_x$, $R_y$, and $R_z$ are the operators that rotate the Bloch sphere about the $x$-, $y$-, and $z$-axes, respectively. Although $U(\theta,\phi,\lambda)$ provides a convenient mathematical parametrization of all single-qubit unitary operators, in practical implementations one often works with a smaller set of elementary gates.

\begin{exercise}
Let $\theta$ be an angle and $U$ a matrix such that $U^2=I$. Using the Taylor expansion of the exponential function
\[
\e^{x}\,=\,\sum_{k=0}^\infty\frac{x^k}{k!},
\]
show that
\[
\text{e}^{\ii\theta U}=\cos(\theta) I + \ii\sin(\theta) U.
\]
Show that
\begin{align*}
R_x(\theta) &= \text{e}^{-\ii  \frac{\theta}{2} X}=\cos\frac{\theta}{2} \,I - \ii \sin\frac{\theta}{2} \,X,\\
R_y(\theta) &= \text{e}^{-\ii  \frac{\theta}{2} Y}=\cos\frac{\theta}{2} \,I - \ii \sin\frac{\theta}{2} \,Y,\\
R_z(\theta) &= \text{e}^{-\ii  \frac{\theta}{2} Z}=\cos\frac{\theta}{2} \,I - \ii \sin\frac{\theta}{2} \,Z.
\end{align*}
\end{exercise}

When constructing quantum circuits, the most important single-qubit gates are
\[
   I_2 \,\,=\,\, \begin{bmatrix}
                 1 & 0 \\
                 0 & 1 \\
               \end{bmatrix},\hspace{0.4cm}
    X \,\,=\,\,  \begin{bmatrix}
                 0 & 1 \\
                 1 & 0 \\
               \end{bmatrix}, \hspace{0.4cm}
   Y \,\,=\,\,  \begin{bmatrix}
                 0 & -\ii \\
                 \ii & \,\,\,\,0 \\
               \end{bmatrix}, \hspace{0.4cm}
   Z \,\,=\,\,  \begin{bmatrix}
                 1 & \,\,\,\,0 \\
                 0 & -1 \\
               \end{bmatrix}
\]
known as \textit{Pauli matrices},
\[
H=\frac{1}{\sqrt 2}\begin{bmatrix}
    1 & \,\,\,\,1 \\
    1 & -1 \\
\end{bmatrix},\hspace{0.3cm}
   S= \begin{bmatrix}
                 1 & 0 \\
                 0 & \ii \\
               \end{bmatrix},\hspace{0.2cm}
    S^\dagger= \begin{bmatrix}
                 1 & \,\,\,\,0 \\
                 0 & -\ii \\
               \end{bmatrix},\hspace{0.3cm}
    T= \begin{bmatrix}
                 1 & 0 \\
                 0 & \textrm{e}^{\frac{\ii \pi}{4}} \\
               \end{bmatrix},\hspace{0.2cm}
    T^\dagger= \begin{bmatrix}
                 1 & 0 \\
                 0 & \textrm{e}^{-\frac{\ii \pi}{4}} \\
               \end{bmatrix}
\]
known as the Hadamard gate, the phase gate, its conjugate, the $\pi/8$ gate (or $T$ gate), and its conjugate.\footnote{The $T^\dagger$ gate is in fact the transpose-conjugate gate, but since $T$ is diagonal, $T^\dagger$ is simply the conjugate gate.} The complex numbers
$\textrm{e}^{\pm\frac{\ii \pi}{4}} $
are equal to
\[
\textrm{e}^{\pm\frac{\ii \pi}{4}} = \frac{1\pm \ii}{\sqrt 2}.
\]
Every quantum circuit without measurements corresponds to a unitary operator and therefore has an equivalent algebraic representation. For example, if $A$, $B$, $C$ are single-qubit gates, the circuit
\[
\Qcircuit @C=2.3em @R=1.9em {
\lstick{\ket{0}}    & \gate{A}  & \gate{B}  & \gate{C} & \rstick{\ket{\psi}} \qw
}\vspace*{0.1cm}
\]
is equivalent to the algebraic expression
\[
\ket{\psi} \,=\, C\cdot B\cdot A \cdot \ket{0},
\]
where $\,\cdot\,$ is the matrix product, which is usually omitted. The algebraic expression equivalent to the circuit has the reverse order. Therefore, the last circuit can also be written as
\[
\Qcircuit @C=2.3em @R=1.9em {
\lstick{\ket{0}}    & \gate{CBA}  & \rstick{\ket{\psi},} \qw
}\vspace*{0.1cm}
\]
where $CBA$ is a $2\times 2$ unitary matrix. For example, the following circuits are equivalent:
\[
\Qcircuit @C=2.3em @R=1.9em {
\lstick{}    & \gate{H}  & \gate{X}  & \gate{H} & \rstick{\,\,\equiv} \qw & & \gate{Z}   & \rstick{,}\qw
}\vspace*{0.1cm}
\]
because $Z=HXH$. The equivalent algebraic expression can be used to simplify the circuit and predict its output.

In quantum computing, it is convenient to use classical notation when describing computational basis states. For instance, let $\ell \in \{0,1\}$ be a classical bit. Then, $\ket{\ell}$ denotes either $\ket{0}$ or $\ket{1}$. With this notation, we can write $\ket{\ell \oplus 1} = X\ket{\ell}$, where $\oplus$ denotes addition modulo 2. Similarly, we have $(-1)^\ell \ket{\ell} = Z\ket{\ell}$ and $\ii(-1)^\ell \ket{\ell \oplus 1} = Y\ket{\ell}$. These expressions are useful for determining the action of the gates $X$, $Y$, and $Z$ on computational basis states.

For the gate $U(0,0,\lambda)$, we have $\e^{\ii \lambda \ell}\ket{\ell} = U(0,0,\lambda)\ket{\ell}$, which includes $S$, $T$, and their conjugate gates as special cases. When acting on computational basis states, these gates do not create superposition. In contrast, the Hadamard gate transforms computational basis states into superposition states. In fact,
\[
H\ket{\ell} = \frac{\ket{0} + (-1)^\ell \ket{1}}{\sqrt{2}}.
\]

\begin{exercise}
Verify that the outputs of the following circuits are correct:

\[
\Qcircuit @C=1.5em @R=0.9em {
\lstick{\ket{\ell}}  & \qw  & \gate{X} & \gate{H}
& \rstick{\dfrac{\ket{0}-(-1)^{\ell}\ket{1}}{\sqrt 2}} \qw
}
\]

\[
\Qcircuit @C=1.5em @R=0.9em {
\lstick{\ket{\ell}}  & \qw  & \gate{H} & \gate{X}
& \rstick{\dfrac{(-1)^\ell\ket{0}+\ket{1}}{\sqrt 2}} \qw
}
\]

Show also that, up to a global phase, the output of

\[
\Qcircuit @C=1.5em @R=0.9em {
\lstick{\ket{0}}  & \qw  & \gate{H} & \gate{T} & \gate{H}
& \rstick{\dfrac{\ket{0}-\ii(\sqrt{2}-1)\ket{1}}
{\sqrt{2}\sqrt{2-\sqrt{2}}}} \qw
}
\]

is correct.
\end{exercise}

\section{Quantum states and entanglement}

The state of two qubits is described by a vector of norm 1 that belongs to a four-dimensional vector space, consistent with the four possible results after measuring the qubits in the computational basis: 00, 01, 10, 11.  The first bit refers to the first qubit and the second bit to the second qubit, as is usual in textbooks on quantum computing, and the least significant bit is on the right, as usual.

Following the laws of quantum mechanics, there is a one-to-one correspondence between the canonical basis and the possible measurement outcomes as follows:
\begin{equation*}
\ket{00}  = \begin{bmatrix}
1 \\ 0 \\ 0 \\ 0
\end{bmatrix}
,\,\,\ \
\ket{01}  =\begin{bmatrix}
0 \\ 1 \\ 0 \\ 0
\end{bmatrix}
,\,\,\ \
\ket{10}  =\begin{bmatrix}
0 \\ 0 \\ 1 \\ 0
\end{bmatrix}
,\,\,\ \
\ket{11}  =\begin{bmatrix}
0 \\ 0 \\ 0 \\ 1
\end{bmatrix}
.
\end{equation*}
In the classical case, the state of two bits is either 00 or 01 or 10 or 11, exclusively. In the quantum case, the state of two qubits is the linear combination
\[
\ket{\psi}= a_0\ket{00}+a_1\ket{01}+a_2\ket{10}+a_3\ket{11},
\]
where $a_0$, $a_1$, $a_2$, and $a_3$ are complex numbers. When the state of two qubits is $\ket{\psi}$, the outcome of a measurement in the computational basis is either 00 or 01 or 10 or 11, exclusively and stochastically. In general, there is no way of predicting the measurement outcome deterministically even knowing $\ket{\psi}$, unless $\ket{\psi}$ is one of the computational basis states. However, if we know $\ket{\psi}$, then we know the probabilities of outcomes (via the Born rule), which are
\begin{align*}
&\text{prob}(00)=\left|\braket{00}{\psi}\right|^2=\left|a_0\right|^2,\,\,\,\,\\
&\text{prob}(01)=\left|\braket{01}{\psi}\right|^2=\left|a_1\right|^2,\,\,\,\,\\
&\text{prob}(10)=\left|\braket{10}{\psi}\right|^2=\left|a_2\right|^2,\,\,\,\,\\
&\text{prob}(11)=\left|\braket{11}{\psi}\right|^2=\left|a_3\right|^2.
\end{align*}
The sum of those probabilities is 1. If we do not know the state $\ket{\psi}$, a single measurement does not allow the determination of $\ket{\psi}$, that is, we cannot find the amplitudes $a_0$, $a_1$, $a_2$, and $a_3$. There is an important theorem in quantum mechanics known as the \textit{non-cloning theorem}~\cite{Die82,Par70,WZ82}.

\begin{theorem} (No cloning)
Using unitary operators, it is impossible to make an identical copy of an arbitrary unknown quantum state that is available to us.
\end{theorem}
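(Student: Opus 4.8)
The plan is to argue by contradiction, so first I would formalize what ``cloning'' should mean as a unitary process. A cloning device would be a single unitary operator $U$ acting on a composite system consisting of the unknown input together with a fixed, state-independent ``blank'' register $\ket{s}$, such that
$U\big(\ket{\psi}\ket{s}\big)=\ket{\psi}\ket{\psi}$
for \emph{every} one-qubit state $\ket{\psi}$. The crucial feature is that the same $U$ and the same initial blank $\ket{s}$ must succeed for all inputs, precisely because the state is unknown to us and $U$ cannot be tailored to it.

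Next I would exploit the fact that unitary operators preserve inner products (this follows from the norm-preservation property used to define unitarity in the previous section, via the polarization identity). Applying the hypothetical $U$ to two arbitrary inputs $\ket{\psi}$ and $\ket{\phi}$ gives $U\big(\ket{\psi}\ket{s}\big)=\ket{\psi}\ket{\psi}$ and $U\big(\ket{\phi}\ket{s}\big)=\ket{\phi}\ket{\phi}$. Taking the inner product of the two left-hand sides, equating it with the inner product of the two right-hand sides, and using $\braket{s}{s}=1$, yields the functional equation $\braket{\psi}{\phi}=\braket{\psi}{\phi}^2$.

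Writing $x=\braket{\psi}{\phi}$, this reads $x=x^2$, so $x\in\{0,1\}$: the two states would have to be either orthogonal or identical. But $\ket{\psi}$ and $\ket{\phi}$ were arbitrary, and one can certainly pick two states whose inner product is neither $0$ nor $1$, for instance $\ket{0}$ and $\ket{+}$, whose inner product is $1/\sqrt{2}$. This contradiction shows that no such $U$ can exist, which proves the theorem.

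I expect the main obstacle to be conceptual rather than computational: correctly encoding the cloning hypothesis so that the single, input-independent unitary and the fixed blank register are built into the statement, rather than allowing a different operation for each input. Once that set-up is pinned down, the inner-product argument is very short, and the only technical point worth spelling out is that unitarity forces preservation of inner products, not merely of norms.
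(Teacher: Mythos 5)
The paper does not actually prove this theorem; it only states it and defers to the cited references (Wootters--Zurek, Dieks, Park), so there is no in-text argument to compare against. Your proposal is the standard and correct inner-product proof: assuming a single input-independent unitary $U$ and fixed blank $\ket{s}$ with $U(\ket{\psi}\ket{s})=\ket{\psi}\ket{\psi}$ for all $\ket{\psi}$, preservation of inner products forces $\braket{\psi}{\phi}=\braket{\psi}{\phi}^2$, which fails for, e.g., $\ket{0}$ and $\ket{+}$. You are also right that the conceptual content lies in the set-up (one $U$, one $\ket{s}$, all inputs). Two minor remarks. First, the fully general cloning model allows an extra ancilla whose final state may depend on the input, $U(\ket{\psi}\ket{s}\ket{A})=\ket{\psi}\ket{\psi}\ket{A_\psi}$; your argument survives this, since then $|x|=|x|^2\,|\braket{A_\phi}{A_\psi}|\le |x|^2$ still forces $|x|\in\{0,1\}$, but it is worth saying so to match the strength of the claim ``impossible using unitary operators.'' Second, an even more elementary route available with only the tools of this chapter is the linearity argument: if $U$ clones $\ket{0}$ and $\ket{1}$, then linearity gives $U(\ket{+}\ket{s})=(\ket{00}+\ket{11})/\sqrt{2}$, which is entangled and hence not equal to $\ket{+}\ket{+}$; this avoids invoking the polarization identity. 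Either version is a valid proof of the stated theorem.
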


This theorem severely restricts any possibility of determining $\ket{\psi}$ through measurements. However, if we can generate $\ket{\psi}$ again and again, for example, through a circuit, we can repeat the whole process several times and obtain an approximation for $\text{prob}(00)$, $\text{prob}(01)$, $\text{prob}(10)$, and $\text{prob}(11)$. For example, by repeating 1000 times, we can determine these probabilities with two digits. Unfortunately, we are still unable to determine $\ket{\psi}$ exactly because knowing $|a_0|^2$ doesn't allow us to determine $a_0$ exactly. It may seem that this is unimportant---false. Let's consider a critical example. Suppose that
\[
\text{prob}(00)=\frac{1}{2},\,\,\,\,\,\,\,\,\,\,  \text{prob}(01)=0,\,\,\,\,\,\,\,\,\,\, \text{prob}(10)=0,\,\,\,\,\,\,\,\,\,\,  \text{prob}(11)=\frac{1}{2}.
\]
We have at least two possibilities for $\ket{\psi}$:
\[
\ket{\psi_1}=\frac{\ket{00}+\ket{11}}{\sqrt 2}\,\,\,\,\,\,\, \text{ and }\,\,\,\,\,\,\, \ket{\psi_2}=\frac{\ket{00}-\ket{11}}{\sqrt 2}.
\]
Note that $\ket{\psi_1}$ and $\ket{\psi_2}$ are orthogonal. This shows that we can make a serious mistake. We cannot conclude that two circuits are equivalent merely because they produce the same probability distribution in the computational basis.

At this point, the following question is relevant: Suppose that we know $\ket{\psi}$, is it possible to determine the state of each qubit? The answer is ``depends on $\ket{\psi}$''. If $\ket{\psi}$ is one of the states of the computational basis then we know the state of each qubit. For example, suppose $\ket{\psi}=\ket{10}$. We have to factorize $\ket{\psi}$ as
\[
\ket{10}\,=\,\ket{1}\ket{0}\,=\,\ket{1}\otimes\ket{0},
\]
where $\otimes$ is called \textit{Kronecker product}. When factorization is successful, we know the state of each qubit. In this case, the state of the first qubit is $\ket{1}$ and the state of the second is $\ket{0}$. When we write $\ket{1}\ket{0}$, the Kronecker product is implicitly assumed.

The Kronecker product\footnote{There is a more abstract formulation of the Kronecker product called \textit{tensor product}. In this work, we use these terms interchangeably. The notation $A\otimes B$ reads ``A tensor B''; however, note that the terms \textit{tensor} and \textit{tensor product} are used in other areas of mathematics with differing meaning, such as in differential geometry.} of two vectors or two matrices is defined as follows. Let $A$ be a $m\times n$ matrix and $B$ a $p\times q$ matrix. Then,
\begin{equation*}
    A\otimes B = \begin{bmatrix}
                   a_{11} B & \cdots & a_{1n} B \\
                    & \ddots &  \\
                   a_{m1} B & \cdots & a_{mn} B \\
                 \end{bmatrix}.
\end{equation*}
The result is a $mp\times nq$ matrix. The Kronecker product of vectors $\ket{1}$ and $\ket{0}$ is calculated by viewing these vectors as $2\times 1$ matrices and is given by
\begin{equation*}
\ket{1}\otimes\ket{0}=
    \begin{bmatrix}
      0 \\
      1 \\
    \end{bmatrix}\otimes \begin{bmatrix}
                           1 \\
                           0 \\
                         \end{bmatrix}\,=\,
    \begin{bmatrix}
      0\begin{bmatrix}
                           1 \\
                           0 \\
                         \end{bmatrix} \\ \\
      1\begin{bmatrix}
                           1 \\
                           0 \\
                         \end{bmatrix}
    \end{bmatrix}  \,=\,
    \begin{bmatrix}
      0 \\
      0 \\
      1 \\
      0
    \end{bmatrix}=\ket{10}.
\end{equation*}
Note that the Kronecker product is noncommutative. For example, $\ket{1}\otimes\ket{0}\neq \ket{0}\otimes\ket{1}$. An important hint is to never change the order of the Kronecker product.

From the laws of quantum mechanics, if the state of a qubit is $\ket{\psi_1}$ and the state of a second one is $\ket{\psi_2}$, then the state $\ket{\psi}$ of the composite system of the two qubits will initially be
\[
\ket{\psi}=\ket{\psi_1}\otimes\ket{\psi_2}.
\]
We can always obtain the state of the composite system when we know the states of the parts. However, the reverse process is not possible in general. For example, suppose that the state of two qubits is
\[
\ket{\psi}=\frac{\ket{00}+\ket{11}}{\sqrt 2}.
\]
We want to find single-qubit states $\ket{\psi_1}=\alpha\ket{0}+\beta\ket{1}$ and $\ket{\psi_2}=\gamma\ket{0}+\delta\ket{1}$ such that
\[
(\alpha\ket{0}+\beta\ket{1})\otimes(\gamma\ket{0}+\delta\ket{1}) = \frac{\ket{00}+\ket{11}}{\sqrt 2}.
\]
Expanding the left-hand side, we obtain the following system of equations:
\begin{align*}
\alpha\gamma=\frac{1}{\sqrt 2},\ \
\alpha\delta=0,\ \
 \beta\gamma=0,\ \
 \beta\delta=\frac{1}{\sqrt 2}.
\end{align*}
Since this system has no solution, the two-qubit state $\ket{\psi}$ cannot be written as the Kronecker product of single-qubit states. In quantum mechanics, a composite quantum system may have a definite pure state $\ket{\psi}$ while a subsystem does not have a definite pure state.

\begin{exercise}
The Bell states are
\[
\ket{\Phi^\pm} = \frac{ \ket{00} \pm \ket{11} }{\sqrt{2}}
\,\,\,\text{   and   }\,\,\,
\ket{\Psi^\pm} = \frac{\ket{01} \pm \ket{10}}{\sqrt{2}}.
\]
Show that all of these states are entangled.
\end{exercise}

A quantum state of a composite system that cannot be factorized in terms of the Kronecker product is called an \textit{entangled state}. Entangled states are very important in quantum computing because without them the computational power of the quantum computer would be badly impaired. However, the presence of entanglement in a quantum algorithm does not guarantee that this algorithm is more efficient than its classical counterpart.

The term ``definite state'' $\ket{\psi}$ in quantum mechanics means \textit{pure state}. A state of a quantum system is called a pure state if we are 100\% sure that the system is described by a norm-1 vector $\ket{\psi}$. On the other hand, if we are not 100\% sure, that is, if we know that the state of the system is $\ket{\psi_1}$ with probability $0<p<1$ or $\ket{\psi_2}$ with probability $1-p$, then the state is \textit{mixed} and is represented by an \textit{ensemble} or a positive matrix $\rho$ such that $\mathrm{Tr}(\rho)=1$. Mixed states are often used to describe the state of a sub-system of an entangled system.

We can generalize the discussion of this Section to $n$ qubits, where $n\ge 1$. The computational basis has $2^n$ vectors, each vector with $2^n$ entries,
\begin{equation*}
\ket{0\cdots 00}  = \begin{bmatrix}
1 \\ 0 \\ \vdots \\ 0
\end{bmatrix},\,\,\ \
\ket{0 \cdots 01}  =\begin{bmatrix}
0 \\ 1 \\ \vdots \\ 0
\end{bmatrix},\,\,\ \
\cdots,\,\,\ \
\ket{1\cdots 11}  =\begin{bmatrix}
0 \\ 0 \\ \vdots \\ 1
\end{bmatrix}.
\end{equation*}
Note that the binary number inside the ket, for instance, $0\cdots 0$ in $\ket{0\cdots 0}$, has $n$ bits and the state itself is the Kronecker product of $n$ single-qubit states. The binary number $0\cdots 0$ can be written in the decimal notation as $\ket{0\cdots 0}\rightarrow\ket{0}$. Each binary number inside the kets can be written in the decimal notation as
\[
\ket{0\cdots 00}\rightarrow\ket{0}, \,\,\,\,\,\ket{0\cdots 01}\rightarrow\ket{1}, \,\,\,\,\,\ket{0\cdots 10}\rightarrow\ket{2}, \,\,\,\,\, ..., \,\,\,\,\, \ket{1\cdots 11}\rightarrow\ket{2^n-1}.
\]
A generic state $\ket{\psi}$ belongs to a $2^n$-dimensional vector space. Then,
\[
\ket{\psi}=a_0\,\ket{0}+a_1\,\ket{1}+a_2\,\ket{2}+ \cdots +a_{2^n-1}\,\ket{2^n-1},
\]
where
\[
\left|a_0\right|^2+\left|a_1\right|^2+\left|a_2\right|^2+\cdots+\left|a_{2^n-1}\right|^2=1.
\]
After a measurement of all qubits, we obtain a random $n$-bit string with the following probability distribution: The outcome is either the $n$-bit string $0\cdots 00$ with probability $\left|a_0\right|^2$, or the $n$-bit string $0\cdots 01$ with probability $\left|a_1\right|^2$, and so on. We have a \textit{sample space} comprising those $n$-bit strings and a probability distribution given by prob$(\ell)=\left|a_\ell\right|^2$, where $\ell$ is a $n$-bit string. The measurement outcome is a random variable that takes a value $\ell$ in this sample space with probability prob$(\ell)$.

As mentioned earlier, a computational basis state of $n$ qubits can be written as the Kronecker product of $n$ single-qubit basis states. For example, for $n=3$ qubits, we can obtain the second vector of the computational basis using the Kronecker product as
\begin{equation*}
\ket{0}\otimes\ket{0}\otimes\ket{1}\,\,=\,\,
    \begin{bmatrix}
      1 \\
      0 \\
    \end{bmatrix}\otimes \begin{bmatrix}
                           1 \\
                           0 \\
                         \end{bmatrix}\otimes
                         \begin{bmatrix}
                           0 \\
                           1 \\
                         \end{bmatrix}\,=\,
    \begin{bmatrix}
      0 \\
      1 \\
      0 \\
      0 \\
      0 \\
      0 \\
      0 \\
      0
    \end{bmatrix}\,\,=\,\,\ket{001}.
\end{equation*}
In the decimal notation, $\ket{0}$ can be confused with the state of 1 qubit. To avoid confusion, we have to know what is the number of qubits. For example, if $\ket{0}$ refers to the state of 3 qubits in the decimal notation, then in binary we have $\ket{000}$.

In this Section, we have defined entangled states, a key concept in quantum computing that is best understood with mathematics. Learning the basic definition of entanglement is somewhat similar to understanding prime numbers in arithmetic. After mastering addition and multiplication, we learn factorization, discovering that numbers like 17 cannot be factored into smaller integers, which leads us to the concept of prime numbers. In quantum mechanics, the analogy lies in understanding that entangled states are the \textit{irreducible} vectors of composite quantum systems in terms of the Kronecker product. Specifically, given a state $\ket{\psi}$, we aim to determine whether it can be factored into two smaller vectors, $\ket{\psi_1}$ and $\ket{\psi_2}$, such that $\ket{\psi} = \ket{\psi_1} \otimes \ket{\psi_2}$. If no such factorization is possible, the state is entangled. While a single-qubit state is not entangled (since a single qubit is not a composite system), entanglement becomes meaningful with two or more qubits. In the next Section, we will explore how to produce an entangled state using a quantum computer.

\begin{exercise}
Let
\[
\ket{\psi}=a_0\ket{00}+a_1\ket{01}+a_2\ket{10}+a_3\ket{11}
\]
be a state of two qubits.

\begin{enumerate}
\item[(a)] Prove that $\ket{\psi}$ is separable (not entangled) if and only if $a_0 a_3 = a_1 a_2$.

\item[(b)] Decide whether the state
\[
\ket{\phi}=\frac{1}{\sqrt{6}}\Big(2\ket{00}+\ket{01}+\ket{10}\Big).
\]
is entangled.
\item[(c)] Decide whether the state
\[
\ket{\phi}=\frac{1}{2}\Big(\ket{00}+ \mathrm{e}^{\ii\theta}\ket{01}+ \mathrm{e}^{\ii\varphi}\ket{10}+ \mathrm{e}^{\ii(\theta+\varphi)}\ket{11}\Big),
\]
is entangled, where $\theta,\varphi\in\mathbb{R}$.
\end{enumerate}
\end{exercise}

\section{Two-qubit quantum gates}

The most important two-qubit gate is CNOT or controlled-NOT gate, also denoted by $C(X)$ or $CX$. It is defined as
$$\textrm{CNOT}\,\ket{k}\ket{\ell}=\ket{k}X^k\ket{\ell},$$
and is represented by the circuit
\[
\Qcircuit @C=2.3em @R=1.9em {
\lstick{\ket{k}}    & \ctrl{1} &  \rstick{\ket{k}} \qw \\
\lstick{\ket{\ell}}  & \targ    &  \rstick{X^k\ket{\ell}=\ket{\ell \oplus k},}  \qw \\
}\vspace{0.2cm}
\]
where $k$ and $\ell$ are bits. The state of the first qubit (\textit{control}) doesn't change after applying CNOT. The state of the second qubit (\textit{target}) changes only if bit $k$ is 1. In this case the output is $X\ket{\ell}=\ket{\ell\oplus 1}$. If $k=0$ then $X^0=I_2$ and $I_2\ket{\ell}=\ket{\ell}$, where $I_2$ is the $2\times 2$ identity matrix. We have defined CNOT by showing its action on the vectors of the computational basis. In linear algebra, this definition is complete, because to know the action of CNOT on an arbitrary vector, which is a linear combination of vectors of the computational basis, we use the linearity of this gate. For example, in the circuit below the first input is in superposition:
\[
\Qcircuit @C=3.3em @R=1.em {
\lstick{\frac{\ket{0}+\ket{1}}{\sqrt 2}}   & \ctrl{2} &   \qw \\
  &  &  & \frac{\ket{00}+\ket{11}}{\sqrt 2}. \\
\lstick{\ket{0}}                           & \targ    &   \qw  \gategroup{1}{1}{3}{3}{.7em}{\}}
}
\]
What is the output? The best way to determine the output is via algebraic calculations. After using the distributive property of the Kronecker product over the sum of vectors, the input to the circuit is
\[
\frac{\ket{0}+\ket{1}}{\sqrt 2}\otimes \ket{0} \,\,\,=\,\,\, \frac{1}{\sqrt 2}\,\ket{0}\otimes\ket{0}+ \frac{1}{\sqrt 2}\,\ket{1}\otimes\ket{0} \,\,\,=\,\,\, \frac{\ket{00}+\ket{10}}{\sqrt 2}.
\]
To calculate the action of CNOT on a sum of vectors, we use the linearity of the matrix product, that is,
\[
\text{CNOT}\cdot\left( \frac{\ket{00}+\ket{10}}{\sqrt 2}\right)\,\,=\,\, \frac{1}{\sqrt 2}\,\text{CNOT}\cdot\ket{00}+ \frac{1}{\sqrt 2}\,\text{CNOT}\cdot\ket{10},
\]
where $\text{CNOT}\cdot\ket{00}$ denotes the multiplication of the CNOT matrix by vector $\ket{00}$. Using the definition given at the beginning of the Section, we obtain $\text{CNOT}\ket{00}=\ket{00}$ and  $\text{CNOT}\ket{10}=\ket{11}$, and we confirm that the output is
\[
\frac{\ket{00}+\ket{11}}{\sqrt 2}.
\]
Since this result is an entangled state, we cannot factorize it and therefore we cannot write the output for each qubit.

The same result is obtained if we use the matrix representation, which is
\[
\text{CNOT} =
\begin{bmatrix}
    I_2 &  \\
    & X \\
\end{bmatrix}
=
\begin{bmatrix}
                 1 & 0 & 0 & 0 \\
                 0 & 1 & 0 & 0 \\
                 0 & 0 & 0 & 1 \\
                 0 & 0 & 1 & 0
               \end{bmatrix},
\]
and the representation of $\ket{00}$ and $\ket{10}$ as 4-dimensional vectors, that is,
\[
\frac{1}{\sqrt 2}
\begin{bmatrix}
  1 & 0 & 0 & 0 \\
  0 & 1 & 0 & 0 \\
  0 & 0 & 0 & 1 \\
  0 & 0 & 1 & 0
\end{bmatrix}
\begin{bmatrix}
  1  \\
  0  \\
  0  \\
  0
\end{bmatrix}
+
\frac{1}{\sqrt 2}
\begin{bmatrix}
  1 & 0 & 0 & 0 \\
  0 & 1 & 0 & 0 \\
  0 & 0 & 0 & 1 \\
  0 & 0 & 1 & 0
\end{bmatrix}
\begin{bmatrix}
  0  \\
  0  \\
  1  \\
  0
\end{bmatrix}
=
\frac{1}{\sqrt 2}
\begin{bmatrix}
  1  \\
  0 \\
  0  \\
  1
\end{bmatrix}.
\]

The complete circuit that implements the entangled state above when the initial state of the qubits is $\ket{00}$ is
\[
\Qcircuit @C=2.3em @R=0.7em {
\lstick{\ket{0}} & \gate{H}   & \ctrl{2} &  \meter \qw & \cw \\
&  &  &  &  & &  {\substack{\text{   00 with probability 0.5,} \\ \text{   11 with probability 0.5.}}} \\
\lstick{\ket{0}} &\qw            & \targ    &  \meter \qw &\cw \gategroup{1}{1}{3}{5}{.7em}{\}}
}
\]
Since the state of the first qubit is initially $\ket{0}$, we have to use $H$ to generate $(\ket{0}+\ket{1})/\sqrt 2$. In fact, we have
\[
\text{CNOT}\cdot (H \otimes I)\,\ket{00}=\text{CNOT}\cdot\big(H\ket{0}\otimes\ket{0})=\frac{\ket{00}+\ket{11}}{\sqrt 2}.
\]

The next example is simpler than the previous one. Consider the circuit without measurements
\[
\Qcircuit @C=2.3em @R=0.7em {
\lstick{\ket{0}}    & \gate{H} &  \rstick{\ket{+}} \qw \\
\lstick{\ket{0}}  & \gate{H}  &  \rstick{\ket{+}.}  \qw \\
}\vspace{0.2cm}
\]
What is the output? Usually, to calculate the output, we convert the circuit to its equivalent algebraic expression. For this circuit, we have
$(H\otimes H)\ket{00}.$
The calculation is performed in the following way:
\[
(H\otimes H)\ket{00}\,\,=\,\,(H\otimes H)\cdot(\ket{0}\otimes\ket{0})\,\,=\,\,(H\ket{0})\otimes (H\ket{0})\,\,=\,\,\ket{+}\otimes\ket{+}.
\]
In the second equality, we use the following property of the Kronecker product:
\[
(A\otimes B)\cdot(C\otimes D)\,\,=\,\,(A\cdot C)\otimes (B\cdot D),
\]
for matrices $A$, $B$, $C$, $D$, or
\[
(A\otimes B)\cdot(\ket{\psi_1}\otimes\ket{\psi_2})\,\,=\,\,(A\ket{\psi_1})\otimes (B\ket{\psi_2}),
\]
which is valid for any matrices $A$ and $B$ and vectors $\ket{\psi_1}$ and $\ket{\psi_2}$ as long as the number of entries of the vectors is equal to the corresponding number of columns of the matrices.
So the output of the circuit is
\[
\ket{+}\otimes\ket{+}\,\,=\,\,\frac{\ket{0}+\ket{1}}{\sqrt 2} \otimes \frac{\ket{0}+\ket{1}}{\sqrt 2}\,\,=\,\,\frac{\ket{0}+\ket{1}+\ket{2}+\ket{3}}{2}.
\]

Let us consider another example. Take the following circuit without measurements:
\[
\Qcircuit @C=2.3em @R=1.0em {
\lstick{\ket{0}}    & \gate{H} &  \rstick{\ket{+}} \qw \\
\lstick{\ket{0}}  & \qw  &  \rstick{\ket{0}.}  \qw \\
}\vspace{0.2cm}
\]
How to calculate the output using the equivalent algebraic expression? The hint is to use the following equivalent circuit:
\[
\Qcircuit @C=2.3em @R=0.7em {
\lstick{\ket{0}}    & \gate{H} &  \rstick{\ket{+}} \qw \\
\lstick{\ket{0}}  & \gate{I_2}  &  \rstick{\ket{0},}  \qw \\
}\vspace{0.1cm}
\]
where $I_2$ is the two-dimensional identity matrix. The algebraic calculation is done as follows: \[
(H\otimes I_2)\ket{00}\,\,=\,\,(H\ket{0})\otimes (I_2\ket{0})\,\,=\,\,\ket{+}\otimes\ket{0}\,\,=\,\,\frac{\ket{00}+\ket{10}}{\sqrt 2}.
\]

When we convert a quantum circuit to its equivalent algebraic expression, we must use the Kronecker product for gates in the same column and the matrix product for gates in the same wire or in sequence; however, we must reverse the order of the gates in the second case. For example, the algebraic expression equivalent to the circuit
\[
\Qcircuit @C=2.3em @R=0.7em {
\lstick{\ket{0}}  & \gate{A} & \gate{B} & \multigate{2}{D} & \qw \\
\lstick{}         &          &          &                  &  & {\ket{\psi},\,\,\,\,\,\,}  \\
\lstick{\ket{0}}  & \qw      & \gate{C} & \ghost{D}        &  \qw
\gategroup{1}{1}{3}{5}{.7em}{\}}
}\vspace{0.2cm}
\]
where $A$, $B$, and $C$ are single-qubit gates and $D$ is a irreducible two-qubit gate, is
\[
\ket{\psi} \,=\, D\cdot (B\otimes C)\cdot (A\otimes I_2)\cdot \ket{00}.
\]
We can simplify this expression a little and write
\[
\ket{\psi} \,=\, D \cdot (B A\,\otimes \,C) \ket{00}.
\]
Only $D$ can create or destroy entanglement. Operators $A$, $B$, and $C$ neither create nor destroy entanglement. The proof that $A$ cannot create or destroy entanglement is as follows. Suppose that the input state is $\ket{\psi_1}\otimes \ket{\psi_2}$ (unentangled). The action of $A$ outputs $(A\ket{\psi_1})\otimes \ket{\psi_2}$, which is unentangled. Now suppose that the input is an entangled state $\ket{\psi}$. The action of $A$ outputs $(A\otimes I)\ket{\psi}$. If this state is unentangled, that is, there exist $\ket{\psi_1}$ and $\ket{\psi_2}$ such that $(A\otimes I)\ket{\psi}=\ket{\psi_1}\otimes \ket{\psi_2}$, we reach a contradiction because the last equation is equivalent to $\ket{\psi}=(A^\dagger\ket{\psi_1})\otimes \ket{\psi_2}$.

CNOT is so important that we describe a variant that is CNOT activated by 0. It is defined by
$$\ket{k}\ket{\ell}\longrightarrow\ket{k}X^{(1-k)}\ket{\ell},$$
and is represented by the circuit
\[
\Qcircuit @C=2.3em @R=1.9em {
\lstick{\ket{k}}    & \ctrlo{1} &  \rstick{\ket{k}} \qw \\
\lstick{\ket{\ell}}  & \targ    &  \rstick{X^{(1-k)}\ket{\ell}=\ket{\ell\oplus k\oplus 1}.}  \qw \\
}\vspace{0.2cm}
\]
Note that the control qubit is denoted by the empty circle indicating that the CNOT's control is inactive if the state of the control qubit is $\ket{1}$. This gate is obtained from the usual CNOT by multiplying $(X\otimes I_2)$ on both sides, as shown in the following circuit equivalence:
\[
\Qcircuit @C=1.5em @R=0.7em {
             & \gate{X}      & \ctrl{2} & \gate{X}   &          \qw   & & & \ctrlo{2} & \qw   \\
             &          &          &       & \rstick{\,\,\equiv} & &  \\
\lstick{}    & \qw & \targ & \qw  &  \qw       & & & \targ   & \rstick{.}\qw
}\vspace*{0.1cm}
\]
This gate is represented by a block matrix of the form
$$ (X\otimes I_2)\cdot CNOT \cdot (X\otimes I_2) =
\begin{bmatrix}
     & I_2 \\
    I_2 &  \\
\end{bmatrix}
\begin{bmatrix}
    I_2 &  \\
    & X \\
\end{bmatrix}
\begin{bmatrix}
     & I_2 \\
    I_2 &  \\
\end{bmatrix} =
\begin{bmatrix}
    X &  \\
    & I_2 \\
\end{bmatrix}.
$$
An alternative way to obtain the matrix representation is by listing sequentially the output of each vector of the computational basis
\begin{align*}
\ket{00}\xrightarrow{\circ\text{\textemdash}\oplus}\ket{01},\\
\ket{01}\xrightarrow{\circ\text{\textemdash}\oplus}\ket{00},\\
\ket{10}\xrightarrow{\circ\text{\textemdash}\oplus}\ket{10},\\
\ket{11}\xrightarrow{\circ\text{\textemdash}\oplus}\ket{11}.
\end{align*}
Then convert each output into the vector notation and join them side by side to form a matrix:
$$
\begin{bmatrix}
  0  \\
  1 \\
  0  \\
  0
\end{bmatrix}
\begin{bmatrix}
  1  \\
  0 \\
  0  \\
  0
\end{bmatrix}
\begin{bmatrix}
  0  \\
  0 \\
  1  \\
  0
\end{bmatrix}
\begin{bmatrix}
  0  \\
  0 \\
  0  \\
  1
\end{bmatrix}\longrightarrow
\begin{bmatrix}
  0 & 1 & 0 & 0 \\
  1 & 0 & 0 & 0 \\
  0 & 0 & 1 & 0 \\
  0 & 0 & 0 & 1
\end{bmatrix}=
\begin{bmatrix}
    X &  \\
    & I_2 \\
\end{bmatrix}.
$$

The second most important two-qubit gate is the controlled $Z$ gate, denoted by $C(Z)$ or $CZ$. Its circuit representations are
\[
\Qcircuit @C=1.5em @R=0.7em {
         & \ctrl{2} & \qw  & & & \gate{Z} & \qw  & & & \ctrl{2} & \qw   \\
         &          & \rstick{\,\,\equiv} & & &  &     \rstick{\,\,\equiv} & &  \\
\lstick{}& \gate{Z} & \qw  & & & \ctrl{-2} & \qw   & & & \ctrl{-2}   & \rstick{.}\qw
}\vspace*{0.1cm}
\]
It does not matter which qubit is the control or target, that is, $Z$ may be controlled by the first qubit, and target on the second, or the other way around. The third representation is interesting because the qubits are on an equal footing. There is only one matrix representation given by
\[
C(Z) =
\begin{bmatrix}
    I_2 &  \\
    & Z \\
\end{bmatrix}
=
\begin{bmatrix}
                 1 & 0 & 0 & 0 \\
                 0 & 1 & 0 & 0 \\
                 0 & 0 & 1 & 0 \\
                 0 & 0 & 0 & -1
               \end{bmatrix}.
\]
A useful method for proving that two circuits are equivalent is to apply both to the states in the computational basis. If you can verify that both circuits produce the same output for every input state in the computational basis, the circuits are equivalent. This is because, by the linearity of linear algebra, the circuits will also produce the same output for any arbitrary input state.

The CNOT and $C(Z)$ gates are connected as shown by the following circuit equivalence:
\[
\Qcircuit @C=1.5em @R=0.9em {
             & \qw      & \ctrl{2} & \qw   &          \qw   & & & \ctrl{2} & \qw   \\
             &          &          &       & \rstick{\,\,\equiv} & &  \\
\lstick{}    & \gate{H} & \targ & \gate{H}  &  \qw       & & & \gate{Z}   & \rstick{.}\qw
}\vspace*{0.1cm}
\]
The equivalence follows from
\[ (I\otimes H)\,\text{CNOT}\,(I\otimes H)=C(Z), \]
which can be shown using the matrix representation. There is an alternative way of showing the equivalence by using the fact that $H^2=I$ and then $I\otimes H$ can be replaced by $C(H)$ because two $H$'s act trivially if the CNOT's control is inactive. Then
\[ (I\otimes H)\,\text{CNOT}\,(I\otimes H)=C(H)C(X)C(H)=C(HXH)=C(Z), \]
because $HXH=Z$ and $C(A)C(B)=C(AB)$ for any single-qubit gates $A$ and $B$ acting on the same target and controlled by the same qubit.

Now we are ready to show that the control and target of a CNOT invert when we multiply CNOT by $H\otimes H$ on both sides. Indeed,
\[ (H\otimes H)\,\text{CNOT}\,(H\otimes H)= (H\otimes I)(I\otimes H)\,\text{CNOT}\,(I\otimes H)(H\otimes I)=(H\otimes I)C(Z)(H\otimes I). \]
Consider the first qubit as the target of $C(Z)$. Then we have $HZH=X$ controlled by the second qubit, and we are done. The circuit representation is
\[
\Qcircuit @C=1.5em @R=0.7em {
             & \gate{H}      & \ctrl{2} & \gate{H}   &          \qw   & & & \targ & \qw   \\
             &          &          &       & \rstick{\,\,\equiv} & &  \\
\lstick{}    & \gate{H} & \targ & \gate{H}  &  \qw       & & & \ctrl{-2}   & \rstick{.}\qw
}\vspace*{0.1cm}
\]

The third most important two-qubit gate is the SWAP gate. Its circuit representations are
\[
\Qcircuit @C=1.5em @R=0.9em {
         &\ctrl{2}& \targ  &\ctrl{2} & \qw           &&&\targ    &\ctrl{2}& \targ   & \qw            & & & \qswap & \qw   \\
         &         &        &          & \rstick{\,\,\equiv}&&&         &        &         & \rstick{\,\,\equiv} & & &  \qwx       &\\
\lstick{}&\targ    &\ctrl{-2}& \targ    & \qw           &&&\ctrl{-2}& \targ  &\ctrl{-2}& \qw            & & & \qswap \qwx  & \rstick{.}\qw
}
\]
The matrix representation is
\[
\text{SWAP}
=
\begin{bmatrix}
                 1 & 0 & 0 & 0 \\
                 0 & 0 & 1 & 0 \\
                 0 & 1 & 0 & 0 \\
                 0 & 0 & 0 & 1
               \end{bmatrix}.
\]
This gate inverts an unentangled state $\ket{\psi_1}\ket{\psi_2}$ into $\ket{\psi_2}\ket{\psi_1}$, which follows from the fact that
\[
\text{SWAP}\ket{00}=\ket{00},\,\,\,\,\text{SWAP}\ket{01}=\ket{10},\,\,\,\,\text{SWAP}\ket{10}=\ket{01},\,\,\,\,\text{SWAP}\ket{11}=\ket{11}.
\]

\begin{exercise}
Let the input state of two qubits be
\[
\ket{\psi}=
(\alpha\ket{0}+\beta\ket{1})
\otimes
(\gamma\ket{0}+\delta\ket{1}),
\]
where $\alpha,\beta,\gamma,\delta\in\mathbb{C}$ and the state is normalized.

\begin{enumerate}
\item[(a)] Compute explicitly $\mathrm{CNOT}\ket{\psi}$.

\item[(b)] Show that the output state is entangled if and only if
\[
\beta\gamma \neq 0.
\]

\item[(c)] Conclude that CNOT creates entanglement only when the control qubit is in superposition and the target qubit is not an eigenstate of $X$.
\end{enumerate}
\end{exercise}

\subsection*{Preparing an arbitrary two-qubit state}

An arbitrary single-qubit state is prepared up to a global phase with gates $R_y(\theta)$ and $R_z(\varphi)$, as follows
\[
R_z(\varphi)R_y(\theta)\ket{0}=\e^{-\frac{\ii \varphi}{2}}\left(\cos\frac{\theta}{2}\,\ket{0}+\e^{\ii \varphi}\sin\frac{\theta}{2}\,\ket{1} \right),
\]
where
\begin{align*}
R_y(\theta)=\left[\begin{array}{cc}
 \cos\frac{\theta}{2} &  -\sin\frac{\theta}{2} \vspace{3pt}\\
 \sin\frac{\theta}{2}   &  \,\,\,\,\cos\frac{\theta}{2}
\end{array}\right]
\end{align*}
and
\begin{align*}
R_z(\varphi)=\e^{-\frac{\ii \varphi}{2}}
\left[\begin{array}{cc}
 1 &  0 \\
 0   &  \e^{ \ii \varphi}
\end{array}\right].
\end{align*}
This method can be extended to the two-qubit case.

Supposing that the initial state of 2 qubits is $\ket{00}$, how do we prepare an arbitrary two-qubit state $\ket{\psi}=a_0\ket{00}+a_1\ket{01}+a_2\ket{10}+a_3\ket{11}$? This new state can be used as the initial state of an algorithm instead of $\ket{00}$. There no trivial circuit to achieve this task. First let us rewrite $\ket{\psi}$ as
\[
\ket{\psi}=\e^{\ii \alpha_0}|a_0|\,\ket{00}+\e^{\ii \alpha_1}|a_1|\,\ket{01}+\e^{\ii \alpha_2}|a_2|\,\ket{10}+\e^{\ii \alpha_3}|a_3|\,\ket{11}.
\]
Now each amplitude is a non-negative number times a unit complex number.

The output of the circuit
\[
\Qcircuit @C=1.3em @R=.3em {
\lstick{\ket{0}} & \gate{R_y(2\theta_1)}&\ctrlo{2}           &\ctrl{2}            & \rstick{} \qw \\
\lstick{}         &          &          &                  &  & {\,\,\,\ket{\psi}}  \\
\lstick{\ket{0}} & \qw                 &\gate{R_y(2\theta_2)}&\gate{R_y(2\theta_3)}& \rstick{}  \qw
\gategroup{1}{1}{3}{5}{.7em}{\}} }
\]
is
\[
\ket{\psi}=\cos\theta_1\cos\theta_2\,\ket{00}+\cos\theta_1\sin\theta_2\,\ket{01}+\sin\theta_1\cos\theta_3\,\ket{10}+\sin\theta_1\sin\theta_3\,\ket{11}.
\]
Then, after solving the system of equations
\begin{align*}
\cos\theta_1\cos\theta_2 &= |a_0|,\\
\cos\theta_1\sin\theta_2 &= |a_1|,\\
\sin\theta_1\cos\theta_3 &= |a_2|,\\
\sin\theta_1\sin\theta_3 &= |a_3|,
\end{align*}
we find $\theta_1$, $\theta_2$, and $\theta_3$, which must be used in the circuit as parameters of $R_y$. To obtain the solution, we need to square the equations, add iteratively two by two equations, and take the square root. Since all numbers on the right-hand side are non-negative, the process ends up with the correct solution. For instance, $\theta_1$ is given by
\[
\theta_1=\arccos\sqrt{|a_0|^2+|a_1|^2}.
\]
The state produced by the circuit is
\[
\ket{\psi}=|a_0|\,\ket{00}+|a_1|\,\ket{01}+|a_2|\,\ket{10}+|a_3|\,\ket{11}.
\]
The phases are missing.

For example, to prepare state $\frac{1}{2}\left(\ket{00}+\ket{01}+\ket{10}+\ket{11}\right)$, which has only positive amplitudes, the solution of the system of equations is $\theta_1=\theta_2=\theta_3=\pi/4$. When $\theta_2=\theta_3$, the product of the controlled gates simplifies to only one gate $R_y(2\theta_2)$ applied on the second qubit. Then,
\[
R_y\left(\frac{\pi}{2}\right)\otimes R_y\left(\frac{\pi}{2}\right)\,\ket{00}\,=\,\frac{1}{2}\left(\ket{00}+\ket{01}+\ket{10}+\ket{11}\right).
\]

To obtain the phases, we have to augment the previous circuit with the following sequence of gates
\[
\Qcircuit @C=1.3em @R=.3em {
\lstick{} & \gate{R_z(\beta_1)}&\ctrlo{2}           &\ctrl{2}            & \rstick{} \qw \\
\lstick{}         &          &          &                  &  & {\,\,\,}  \\
\lstick{} & \qw                 &\gate{R_z(\beta_2)}&\gate{R_z(\beta_3)}& \rstick{,}  \qw
}
\]
where
\begin{align*}
\beta_1&=\frac{\alpha_2+\alpha_3-\alpha_0-\alpha_1}{2},\\
\beta_2&={\alpha_1-\alpha_0},\\
\beta_3&={\alpha_3-\alpha_2}.
\end{align*}
After using those two circuits one after the other, we obtain the output $\e^{\ii \beta_0}\ket{\psi}$, where
\[
\beta_0=\frac{\alpha_0+\alpha_1+\alpha_2+\alpha_3}{4}.
\]

\begin{exercise}\label{ex:state-prep-two-qubit}
Find $\theta_2$ and $\theta_3$ in terms of $|a_0|,|a_1|,|a_2|,|a_3|$, and show that the output of the circuit
\[
\Qcircuit @C=1.3em @R=.3em {
\lstick{\ket{0}} & \gate{R_y(2\theta_1)}&\ctrlo{2}           &\ctrl{2}            & \gate{R_z(\beta_1)}&\ctrlo{2}           &\ctrl{2}            & \rstick{} \qw \\
\lstick{}         &          &          &                  &  &     &          &  &{\,\,\,}  \\
\lstick{\ket{0}} & \qw                 &\gate{R_y(2\theta_2)}&\gate{R_y(2\theta_3)}& \qw                 &\gate{R_z(\beta_2)}&\gate{R_z(\beta_3)}& \rstick{}  \qw
 }
\]
is
\[
\e^{\ii \beta_0}\left(\e^{\ii \alpha_0}|a_0|\,\ket{00}+\e^{\ii \alpha_1}|a_1|\,\ket{01}+\e^{\ii \alpha_2}|a_2|\,\ket{10}+\e^{\ii \alpha_3}|a_3|\,\ket{11}\right).
\]
\end{exercise}

\begin{exercise}\label{ex:decompRy}
The decomposition of $C(R_y(\theta))$ in terms of CNOT gates and single-qubit gates is shown in the following circuit equivalence:
\[
\Qcircuit @C=1.5em @R=0.7em {
             & \ctrl{2}&  \qw                & & &\ctrl{2}& \qw                  &\ctrl{2}& \qw& \qw&   \\
             &         & \rstick{\,\,\equiv} & &  \\
\lstick{}    & \gate{R_y(\theta)}   &  \qw                & & & \targ  &\gate{R_y(-\theta/2)} & \targ  &\gate{R_y(\theta/2)}& \rstick{.}\qw
}\vspace*{0.1cm}
\]
Show the equivalence between the circuits.
\end{exercise}

\section{Multiqubit gates} \label{subsec:porta3oumais}

The most important three-qubit gate is the Toffoli gate, denoted by CCNOT or $C^2(X)$, defined by
$$C^2(X)\,\ket{j}\ket{k}\ket{\ell}=\ket{j}\ket{k}X^{jk}\ket{\ell},$$
and represented by the circuit
\[
\Qcircuit @C=2.3em @R=1.3em {
\lstick{\ket{j}}    & \ctrl{1} &  \rstick{\ket{j}} \qw \\
\lstick{\ket{k}}    & \ctrl{1} &  \rstick{\ket{k}} \qw \\
\lstick{\ket{\ell}}  & \targ    &  \rstick{X^{jk}\ket{\ell}=\ket{\ell\oplus(j \text{ AND } k)}.}  \qw \\
}\vspace*{0.2cm}
\]
This gate has two controls and one target. $X$ acts on the target qubit if and only if both control qubits are set to one. If one control is set to zero, the target doesn't change. This gate can be seen as the quantum version of the classical AND gate because if $\ell=0$, the output of the third qubit is $(j \text{ AND } k)$.
The matrix representation of the Toffoli gate is a diagonal block matrix given by
\[
C^2(X)= \begin{bmatrix}
                 I_2 &  &  &  \\
                  & I_2 &  &  \\
                  &  & I_2 &  \\
                  &  &  & X
               \end{bmatrix}.
\]

There are variants of the Toffoli gate active when the control qubits are set to zero. For example, the circuit
\[
\Qcircuit @C=2.3em @R=1.3em {
\lstick{\ket{j}}    & \ctrlo{1} &  \rstick{\ket{j}} \qw \\
\lstick{\ket{k}}    & \ctrlo{1} &  \rstick{\ket{k}} \qw \\
\lstick{\ket{\ell}}  & \targ    &  \rstick{X^{(1-j)(1-k)}\ket{\ell}}  \qw \\
}\vspace*{0.2cm}
\]
implements a gate that applies $X$ on the third qubit if and only if the first two control qubits are set to zero. It can be implemented using a standard Toffoli gate and $X$ gates, as shown in the following circuit equivalence:
\[
\Qcircuit @C=0.8em @R=0.6em {
\lstick{}&\ctrlo{1}&\rstick{} \qw           & & & & &\gate{X}    &\ctrl{1}   &\gate{X}  & \qw            \\
\lstick{}&\ctrlo{1}&\rstick{\,\,\,\,\,\equiv} \qw& & & & &\gate{X}    &\ctrl{1} &\gate{X}   &\qw          \\
\lstick{}&  \targ &\rstick{} \qw            & & & & &\qw         &\targ   & \qw &\rstick{.}\qw
}\vspace{0.2cm}
\]

The multi-controlled NOT\footnote{The multi-controlled NOT is also called multiqubit Toffoli or generalized Toffoli.} gate $C^{n}(X)$ is a $(n+1)$-qubit gate with $n$ control qubits and one target. It is defined by
$$
C^{n}(X)\ket{q_0}\ket{q_1}\cdots\ket{q_{n-1}}\ket{q_n}=\ket{q_0}\ket{q_1}\cdots\ket{q_{n-1}}X^{q_0q_1\cdots q_{n-1}}\ket{q_n},
$$
and is represented by the circuit
\[
\Qcircuit @C=2.3em @R=1.5em {
\lstick{\ket{q_0}}    & \ctrl{1} &  \rstick{\ket{q_0}} \qw \\
\lstick{\ket{q_1}}    & \ctrl{0} &  \rstick{\ket{q_1}} \qw \\
\lstick{\vdots}    & \vdots &  \rstick{\vdots} \\
\lstick{\ket{q_{n-1}}}    & \ctrl{1} &  \rstick{\ket{q_{n-1}}} \qw \\
\lstick{\ket{q_n}}  & \targ    &  \rstick{X^{q_0q_1\cdots q_{n-1}}\ket{q_n},}  \qw \\
}\vspace*{0.2cm}
\]
where $q_0q_1\cdots q_{n-1}$ is the product of bits $q_0$, $q_1$, ..., $q_{n-1}$. Therefore, the state of the target qubit changes if and only if all control qubits are set to 1. The state of each control qubit doesn't change when we describe this gate acting on the computational basis. The action on a generic vector is obtained by writing the vector as a linear combination of vectors of the computational basis and then using linearity.

The simplest way to decompose the multi-controlled NOT gate in terms of the usual Toffoli gate is by using $(n-2)$ draft qubits called \textit{ancillas}.\footnote{An extra auxiliary qubit with a fixed input state is usually called a \textit{clean ancilla}. In this work, we refer to it simply as an \textit{ancilla}.} The ancillas are interlaced with the control qubits, the first ancilla being inserted between the second and third qubits. The best way to explain the decomposition is to show an example. Consider the gate $C^5(X)$, whose decomposition requires three \textit{ancillas}, as shown in the following circuit equivalence:
\[
\Qcircuit @C=1.3em @R=0.7em {
\lstick{}& \ctrl{1}& \rstick{} \qw      & & &                & &\ctrl{1}& \qw     & \qw     & \qw     & \qw    &  \qw   & \ctrl{1} & \qw \\
\lstick{}& \ctrl{2}& \rstick{} \qw      & & &                & &\ctrl{1} & \qw     & \qw     & \qw     & \qw    &  \qw   & \ctrl{1} & \qw\\
\lstick{}&    ---- &                    & & &&\lstick{\ket{0}} &\targ    &\ctrl{1} & \qw     & \qw     & \qw    &\ctrl{1}& \targ& \rstick{\ket{0}}\qw \\
\lstick{}& \ctrl{2}&  \rstick{} \qw     & & &                & &\qw      &\ctrl{1}& \qw     & \qw     & \qw    &\ctrl{1}& \qw &\qw\\
\lstick{}&    ---- &\rstick{\,\,\,\,\,\equiv}& & &&\lstick{\ket{0}} &\qw      &\targ    &\ctrl{1} &\qw      &\ctrl{1}& \targ& \qw& \rstick{\ket{0}}\qw \\
\lstick{}& \ctrl{2}&  \rstick{} \qw     & & &                & &\qw      &\qw      &\ctrl{1}&\qw      &\ctrl{1}& \qw& \qw& \qw \\
\lstick{}&  ----   &                    & & &&\lstick{\ket{0}} &\qw      &\qw      &\targ    &\ctrl{1} &\targ& \qw& \qw& \rstick{\ket{0}}\qw \\
\lstick{}& \ctrl{1}&  \rstick{} \qw     & & &                & &\qw      &\qw      &\qw      &\ctrl{1} &\qw& \qw& \qw& \qw \\
\lstick{}& \targ   &  \rstick{} \qw     & & &                & &\qw      &\qw      &\qw      &\targ    &\qw& \qw& \qw& \rstick{.}\qw \\
}\vspace{0.1cm}
\]

The multi-controlled NOT gate can also be activated when the control qubits are set to zero. In this case, the control qubit is represented by an empty circle. Since each control can be either empty or full, and besides there are $n+1$ target qubits, the total number of multi-controlled NOT gates $C^n(X)$ is $(n+1)2^n$. The combination of these multi-controlled NOT gates can be used to implement any Boolean function of $n$ bits, as described in Section~\ref{sec:circ_Boolean_fcn}.

\subsection*{Preparing an arbitrary three-qubit state}

Suppose the initial state of the 3 qubits is $\ket{000}$. We prepare an arbitrary three-qubit state $\ket{\psi}=\sum_{j=0}^7 a_j \ket{j}$ with non-negative real amplitudes using the circuit
\[
\Qcircuit @C=0.7em @R=1.0em {
\lstick{\ket{0}} & \gate{R_y(2\theta_1)}&\ctrlo{1}                          &\ctrl{1}                             &\ctrlo{1}                         &\ctrlo{1}                          &\ctrl{1}                          &\ctrl{1}                          & \rstick{} \qw \\
\lstick{\ket{0}} & \qw                                  &\gate{R_y(2\theta_2)}&\gate{R_y(2\theta_3)}&\ctrlo{1}                          &\ctrl{1}                             &\ctrlo{1}                          &\ctrl{1}                         &  \qw& \rstick{\ket{\psi}} \\
\lstick{\ket{0}} & \qw                                  &\qw                                   &\qw                                   &\gate{R_y(2\theta_4)}&\gate{R_y(2\theta_5)}&\gate{R_y(2\theta_6)}&\gate{R_y(2\theta_7)}& \rstick{} \qw
\gategroup{1}{1}{3}{9}{.7em}{\}}}\vspace{0.2cm}
\]
and by solving the system of equations
\begin{align*}
\cos\theta_1\cos\theta_2\cos\theta_4 &= a_0,\,\,\,\,\,\,\,\,\,\,\,\cos\theta_1\cos\theta_2\sin\theta_4 = a_1,\\
\cos\theta_1\sin\theta_2\cos\theta_5 &= a_2,\,\,\,\,\,\,\,\,\,\,\,\,\cos\theta_1\sin\theta_2\sin\theta_5 = a_3,\\
\sin\theta_1\cos\theta_3\cos\theta_6 &= a_4,\,\,\,\,\,\,\,\,\,\,\,\,\sin\theta_1\cos\theta_3\sin\theta_6 = a_5,\\
\sin\theta_1\sin\theta_3\cos\theta_7 &= a_6,\,\,\,\,\,\,\,\,\,\,\,\,\,\sin\theta_1\sin\theta_3\sin\theta_7 = a_7,
\end{align*}
to determine the angles $\theta_1,...,\theta_7$, which are used in the circuit as parameters for $R_y$. This method is valid because, if we carefully apply all the gates in the circuit iteratively to the initial state $\ket{000}$, we can verify that the output is
\[
\ket{\psi} \,=\,
\begin{bmatrix}
\cos\theta_1\cos\theta_2\cos\theta_4 \\
\cos\theta_1\cos\theta_2\sin\theta_4 \\
\cos\theta_1\sin\theta_2\cos\theta_5 \\
\cos\theta_1\sin\theta_2\sin\theta_5 \\
\sin\theta_1\cos\theta_3\cos\theta_6 \\
\sin\theta_1\cos\theta_3\sin\theta_6 \\
\sin\theta_1\sin\theta_3\cos\theta_7 \\
\sin\theta_1\sin\theta_3\sin\theta_7
\end{bmatrix}.
\]
The process can be extended to more qubits, and the general pattern can be obtained recursively.

It is also straightforward to find the solution to the previous equations. For instance, to find $\theta_1$, we square the equations of the first line and add them up to obtain
\[
\cos^2\theta_1\cos^2\theta_2=a_0^2+a_1^2.
\]
Repeating the same procedure with the second line, we obtain
\[
\cos^2\theta_1\sin^2\theta_2=a_2^2+a_3^2.
\]
Then, adding these equation, we find
\[
\theta_1 = \arccos \sqrt{a_0^2+a_1^2+a_2^2+a_3^2}.
\]

\begin{exercise}
Find the remaining angles $\theta_j$ for $j$ from 2 to 7.
\end{exercise}

\begin{exercise}
Show that $C^2(R_y(\theta))$ can be decomposed into two Toffoli gates and $R_y$ gates, similar to the circuit in Exercise~\ref{ex:decompRy}.
\end{exercise}

\subsection*{Partial measurement}

It is not necessary to measure all qubits at the end of a quantum circuit. For instance, consider the following circuit:
\[
\Qcircuit @C=2.3em @R=1.7em {
\lstick{\ket{0}}    & \multigate{1}{U} &  \meter \qw & \cw \\
\lstick{\ket{0}}  & \ghost{U}    & \qw & \rstick{.}\qw\\
\ustick{\hspace{0.9cm}\ket{\psi_0}} & \ustick{\hspace{1.6cm}\ket{\psi}}&
}
\]
In this example, only the first qubit is measured. Although we illustrate the concept with two qubits, it generalizes to an arbitrary number of qubits. The states $\ket{\psi_0}$ and $\ket{\psi}$ shown at the bottom of the circuit represent the state of the quantum system at different stages. Initially, $\ket{\psi_0} = \ket{00}$, and after applying the unitary operation $U$, the state becomes $\ket{\psi} = U\ket{\psi_0}$. The measurement occurs when the quantum system is in state $\ket{\psi}$.

Mathematically, each measurement is represented by a set of orthogonal projection operators. In this case, the projections are
\begin{align*}
P_0 &=\ket{0}\bra{0}\otimes I,\\
P_1 &=\ket{1}\bra{1}\otimes I.
\end{align*}
By definition, an operator $P$ is an orthogonal projection if it is Hermitian and satisfies $P^2=P$. The expression for $\ket{0}\bra{0}$ is obtained by computing the matrix product of $\ket{0}$ and $\bra{0}$, which results in a 2-dimensional matrix. When a set of projections corresponds to a quantum measurement, they must sum to the identity operator
\begin{equation}
P_0+P_1\,=\,I\otimes I.
\end{equation}
According to the measurement postulate, the probability of obtaining outcome 0 or 1 is given by
\begin{align*}
p_0 &= \bra{\psi}P_0\ket{\psi},\\
p_1 &= \bra{\psi}P_1\ket{\psi},
\end{align*}
respectively. Immediately after the measurement, the quantum state collapses to
\begin{align*}
\ket{\psi'} =
\begin{cases}
\frac{1}{\sqrt{p_0}} P_0 \ket{\psi}, & \text{if the outcome is } 0, \\
\frac{1}{\sqrt{p_1}} P_1 \ket{\psi}, & \text{if the outcome is } 1.
\end{cases}
\end{align*}
Note that the post-measurement state $\ket{\psi'}$ is still a two-qubit state; in general, measuring the first qubit also changes the (conditional) state of the second qubit.

\begin{exercise}
Find the matrix representations of $P_0$ and $P_1$, verify that they are orthogonal projection operators, and check that $P_0 + P_1 = I \otimes I$.
\end{exercise}

\begin{exercise}
Generalize the mathematical description of a partial measurement of $n'$ qubits in an $n$-qubit circuit. You will need $2^{n'}$ projection operators. Then, express your answer compactly using an $n'$-bit string $i$, such as in $p_i$ and $P_i$.
\end{exercise}

\begin{exercise}
The measurement postulate states that an observable \( \mathcal{O} \) must be a linear combination of orthogonal projection operators:
\[
\sum_\ell \ell P_\ell,
\]
where \( \ell \in \mathbb{R} \) and \( P_\ell \) are orthogonal projection operators satisfying \( \sum_\ell P_\ell = I \). If the physical system is in the state \( \ket{\psi} \) just before measuring \( \mathcal{O} \), the outcome \( \ell \) occurs with probability
$p_\ell = \bra{\psi} P_\ell \ket{\psi}$. Show that after many repetitions of the measurement process, the average observed value converges to $\bra{\psi} \mathcal{O} \ket{\psi}$.
\end{exercise}

\section{Circuit of a Boolean function}\label{sec:circ_Boolean_fcn}

Suppose we want to evaluate a Boolean expression on a quantum computer. How do we implement the circuit? We need to know the types of bricks we have, some of which are shown in Fig.~\ref{Fig:3bricks}. \vspace{-15pt}
\begin{figure}[!ht]
\centering
\[
\Qcircuit @C=1.0em @R=1.9em {
\lstick{a}    & \ctrl{1} &  \rstick{a} \qw \\
\lstick{b}    & \ctrl{1} &  \rstick{b} \qw \\
\lstick{0}  & \targ    &  \rstick{a\wedge b}  \qw \\
&{\text{AND}}\\
}\hspace{2.5cm}
\Qcircuit @C=1.0em @R=1.8em {
\lstick{a}  & \ctrlo{1} &  \rstick{a} \qw \\
\lstick{b}  & \ctrlo{1} &  \rstick{b} \qw \\
\lstick{1}  & \targ &  \rstick{a\vee b}  \qw \\
&{\text{OR}}\\
} \hspace{2.5cm}
\Qcircuit @C=1.0em @R=1.93em {
\lstick{}    &  &  \rstick{} \\
\lstick{a}    & \ctrl{1} &  \rstick{a} \qw \\
\lstick{b}  & \targ    &  \rstick{a\oplus b}  \qw \\
&{\text{XOR}}\\
}\hspace{2.5cm}
\Qcircuit @C=1.em @R=2.0em {
\lstick{}   &  &  \rstick{} \\
\lstick{}  &  &  \rstick{} \\
\lstick{a}  &\targ    &  \rstick{\bar a}  \qw \\
&{\text{NOT}}\\
}
\]
\caption{Basic bricks to build a Boolean expression.  To implement the AND and OR gates, we must add an extra ancilla bit initialized as 0 and 1, respectively.}\label{Fig:3bricks}
\end{figure}

The Boolean function\footnote{We use the following equivalent notations: $\text{AND}\,\equiv\wedge$, $\text{OR}\,\equiv\vee$, $\text{NOT}\,\equiv\,\bar{\,}$ .} $f(a,b)=a\wedge b$ is not reversible; therefore, we need an extra bit to implement $U_f$, so that $U_f\ket{a,b,0}=\ket{a,b,a\wedge b}$. Note that we also have to keep the input bits. That is why the AND and OR gates require an extra ancilla bit, while the XOR and NOT gates don't. The ancilla bit must be initialized as 0 for the AND gate and as 1 for the OR gate. For the OR gate, the ancilla bit can be initialized as 0 followed by a NOT gate. We obtain other useful gates with these bricks, for instance, the NAND gate is obtained by placing a NOT gate before or after a AND gate in the following way:
\[
\Qcircuit @C=1.3em @R=1.6em {
\lstick{a}     &\qw & \ctrl{1}&  \rstick{a} \qw \\
\lstick{b}    &\qw& \ctrl{1}  &  \rstick{b} \qw \\
\lstick{0}  & \targ    & \targ &  \rstick{\overline{a\wedge b}.}
 \qw \\
}\vspace*{0.2cm}
\]
Alternatively, we can simply initialize the ancilla as 1 instead of adding the NOT gate. The output $\overline{a\wedge b}$ is equivalent to $\bar a\vee \bar b$ by De Morgan's laws, which is described in Fig.~\ref{Fig:simp-rules}.  The exact same methods can be applied to obtain the NOR gate.

\begin{figure}[!ht]
\centering
\fbox{
\begin{tabular}{lllll}
Involution & $\bar{\bar a} = a$ \\ \vspace{2pt}
Idempotent & $a\wedge a = a$ & $a\vee a = a$ \\ \vspace{2pt}
Commutative\ \ \ & $a\wedge b = b\wedge a$ & $a\vee b = b\vee a$ \\ \vspace{2pt}
De Morgan & $\overline{a\wedge b} = \bar a\vee \bar b$ & $\overline{a\vee b}=\bar a\wedge \bar b$ \\ \vspace{2pt}
Associative & $(a\wedge b)\wedge c = a\wedge(b\wedge c)$   & $(a\vee b)\vee c = a\vee(b\vee c)$ \\ \vspace{2pt}
Distributive & $a\wedge (b\vee c) = (a\wedge b)\vee(a\wedge c)$   & $a\vee(b\wedge c) = (a\vee b)\wedge(a\vee c)$ \\ \vspace{2pt}
Complement & $a\wedge \bar a = 0$\hspace{1cm} $\bar 0=1$  & $a\vee \bar a = 1$\hspace{1cm}  $0=\bar 1$  \\ \vspace{2pt}
Identity & $a\wedge 0 = 0$\hspace{1cm}  $a\wedge 1 = a$ & $a\vee 0 = a$\hspace{1cm}   $a\vee 1 = 1$
\end{tabular}
}
\caption{Simplification rules for Boolean expressions.}\label{Fig:simp-rules}
\end{figure}

The output label of the XOR gate in Fig.~\ref{Fig:3bricks} does not explicitly display both input bits. This introduces a challenge if we need to reuse the input bits. A useful implementation of the XOR gate, at the cost of losing efficiency, is
\[
\Qcircuit @C=1.3em @R=1.6em {
\lstick{a}     &\qw & \ctrl{2}&  \rstick{a} \qw \\
\lstick{b}    &\ctrl{1}& \qw  &  \rstick{b} \qw \\
\lstick{0}  & \targ    & \targ &  \rstick{ a\oplus b.}  \qw \\
}\vspace*{0.2cm}
\]
The same problem happens if we implement the Boolean expression $(\bar a\wedge b)$ as follows
\[
\Qcircuit @C=1.3em @R=1.6em {
\lstick{a}     &\targ & \ctrl{1}&  \rstick{\bar a} \qw \\
\lstick{b}    &\qw& \ctrl{1}  &  \rstick{b} \qw \\
\lstick{0}  & \qw    & \targ &  \rstick{\bar a\wedge b.}  \qw \\
}\vspace*{0.cm}
\]
We cannot reuse the first input bit straightway. It is usually better to implement this Boolean expression placing a NOT before and after the first control, which is equivalent to an  empty control as follows
\[
\Qcircuit @C=1.3em @R=1.6em {
\lstick{a}    & \ctrlo{1}&  \rstick{a} \qw \\
\lstick{b}    & \ctrl{1}  &  \rstick{b} \qw \\
\lstick{0}    & \targ &  \rstick{\bar a\wedge b.}  \qw \\
}\vspace*{0.2cm}
\]

What about the ancilla bits, how can we reuse them? This is also possible but we need a larger example to show how it is done. Consider the following Boolean function:
\begin{equation}\label{eq:Boolean-expr}
f(a,b,c)\,=\, a\wedge (c\vee (a\wedge \bar b\wedge \bar c) ),
\end{equation}
which is implemented in the following way:
\[
\Qcircuit @C=1.9em @R=1.em {
\lstick{a}&\ctrl{1} & \qw     &\ctrl{4}& \qw     &\ctrl{1}    &\rstick{a} \qw \\
\lstick{b}&\ctrlo{1}& \qw     &\qw     & \qw     &\ctrlo{1}   &\rstick{b} \qw \\
\lstick{c}&\ctrlo{1}&\ctrlo{1}&\qw     &\ctrlo{1}&\ctrlo{1}   &\rstick{c} \qw \\
\lstick{0}& \targ   &\ctrlo{1}&\qw     &\ctrlo{1}& \targ      &\rstick{0} \qw \\
\lstick{1}& \qw     & \targ   &\ctrl{1}& \targ   & \qw        &\rstick{1} \qw \\
\lstick{0}& \qw     & \qw     &\targ   & \qw     & \qw        &\rstick{f(a,b,c).} \qw
\gategroup{1}{5}{5}{6}{1.5em}{--}
}\vspace*{0.cm}
\]
We start the implementation by selecting the expression inside the innermost parentheses, which is $(a\wedge \bar b\wedge \bar c)$. We use 3 wires for the inputs $a$, $b$, and $c$ and an ancilla wire (4th bit) for the output of this sub-expression. We use a multi-controlled NOT gate with full control for $a$ and empty controls for $\bar b$ and $\bar c$. Besides, the ancilla bit is initialized as 0 for the AND case. The second gate implements the OR of the output of the previous gate with $c$. All controls are empty and the new ancilla bit is initialized as 1 for the OR case. The third gate implements the AND of the last output with $a$. All controls are full and the new ancilla bit is initialized as 0 for the AND case. The output of the last ancilla bit is $f(a,b,c)$. To reuse the ancilla bits (except the last one) we have to include the gates in the dashed box, which is the mirror of the gates of the first part.  


Before building the circuit of a Boolean expression, it is a good idea to simplify the expression by applying the simplification rules listed in Fig.~\ref{Fig:simp-rules}. Expression~(\ref{eq:Boolean-expr}) simplifies to
\[
f(a,b,c)\,=\, a\wedge (\bar b\vee c),
\]
which can be implemented using 3 Toffoli gates and 2 ancilla bits in the following way:
\[
\Qcircuit @C=1.9em @R=0.9em {
\lstick{a}&\qw      &\ctrl{3}&\qw    &\rstick{a} \qw \\
\lstick{b}&\ctrl{1} &\qw     &\ctrl{1}   &\rstick{b} \qw \\
\lstick{c}&\ctrlo{1}&\qw     &\ctrlo{1}   &\rstick{c} \qw \\
\lstick{1}& \targ   &\ctrl{1}& \targ      &\rstick{1} \qw \\
\lstick{0}& \qw     &\targ   & \qw        &\rstick{f(a,b,c).} \qw
}\vspace*{0.cm}
\]

This last example shows that we obtain equivalent circuits by manipulating the Boolean expression. For example, we know that
\[
a\wedge b\wedge c\wedge d\wedge e = (((a\wedge b)\wedge c)\wedge d)\wedge e.
\]
The left-hand side is implemented with one multi-controlled NOT gate and one ancilla bit, while the right-hand side is implemented with 7 standard Toffoli gates and 4 ancillas bits. This provides a decomposition of the multi-controlled NOT gate in terms of the Toffoli gates, identical to the one presented at the end of Section~\ref{subsec:porta3oumais}.

\subsection*{Circuit of a truth table}

Let's show how to obtain the quantum circuit of a truth table. We only need multi-controlled NOT gates. To show that the multi-controlled NOT gates can implement any truth table on a quantum computer, we take a 3-bit Boolean function $f(a,b,c)$ defined by the following truth table as an example:
\begin{displaymath}
\begin{array}{c c c|c}
a & b &c & f(a,b,c)\\ 
\hline 
0 & 0 & 0 & 0 \\
0 & 0 & 1 & 1 \\
0 & 1 & 0 & 0 \\
0 & 1 & 1 & 0 \\
1 & 0 & 0 & 0 \\
1 & 0 & 1 & 0 \\
1 & 1 & 0 & 1 \\
1 & 1 & 1 & 0 \\
\end{array}
\end{displaymath}
After this example, it is evident how the general case is obtained. Since $f$ has three input bits, we use multi-controlled NOT gates with three controls. The 4th qubit is the target. The output of $f$ is the output of a measurement of the target qubit. Since $f$ has two clauses in the \textit{disjunctive normal form} (there are two outputs 1 in the truth table), we use two multi-controlled NOT gates.\footnote{In the disjunctive normal form, we only consider rows of the truth table whose output is 1. For each row with output 1, we write a conjunction of three literals, using NOT for each input 0. Then we write a disjunction of the conjunctions, like this, $f(a,b,c)=(\bar a\wedge\bar b\wedge c)\vee (a\wedge b\wedge \bar c)$, where $(\bar a\wedge\bar b\wedge c)$ is true only for input 001 and $(a\wedge b\wedge \bar c)$ only for input 110. Then $f$ is true only for these inputs.}
The first gate must be active when the input is $\ket{001}$ and the second when the input is $\ket{110}$, which correspond to the rows of the truth table whose output is 1. The following circuit implements $f$:
\[
\Qcircuit @C=2.3em @R=1.6em {
\lstick{\ket{0}}    & \ctrlo{1} & \ctrl{1} &  \rstick{\ket{0}} \qw \\
\lstick{\ket{0}}    & \ctrlo{1} & \ctrl{1} &  \rstick{\ket{0}} \qw \\
\lstick{\ket{1}}    & \ctrl{1} & \ctrlo{1} & \rstick{\ket{1}} \qw \\
\lstick{\ket{0}}  & \targ  & \targ  &  \rstick{\ket{1}.}  \qw \\
}\vspace*{0.2cm}
\]
If the input is $\ket{a,b,c}\ket{0}$ then the output is $\ket{a,b,c}\ket{f(a,b,c)}$.
This shows that the quantum computer can calculate any $n$-bit Boolean function using a multi-controlled NOT gate with $n$ control qubits and one target qubit for each output 1 of the truth table. The goal here is not to implement classical algorithms on quantum computers because it makes no sense to build a much more expensive machine to run only classical algorithms. However, the implementation we have just described can be used for inputs in superposition, which is not allowed on a classical computer. Unfortunately, this quantum circuit construction technique for calculating truth tables is not efficient in general, since the number of multi-controlled NOT gates increases exponentially as a function of the number of qubits in the worst case.

\section{Quantum parallelism}

The simplest model of quantum computing is described by the circuit
\[
\Qcircuit @C=2.3em @R=0.7em {
\lstick{\ket{0}} & \gate{H} & \multigate{5}{\,\,\,U\,\,\,} &\meter  & \rstick{0 \text{ or } 1} \cw \\
\lstick{}  &  &        &        & \rstick{}  \\
\lstick{\vdots \,\,\,}  & {\vdots} &        &    {\vdots}   & & \lstick{\vdots}  \\
\lstick{}  &  &        &        & \rstick{}  \\
\lstick{}  &  &        &        & \rstick{}  \\
\lstick{\ket{0}} & \gate{H} & \ghost{\,\,\,U\,\,\,}        & \meter & \rstick{0 \text{ or } 1.} \cw
}\vspace{0.2cm}
\]
The initial state is $\ket{0,...,0}$, which is a state from the computational basis, as it represents a classical state. Choosing $\ket{0}$ as the initial state for each qubit is no loss of generality. In the next step, the Hadamard gate is applied to each qubit to enable \textit{quantum parallelism}. Following this, the unitary matrix $U$ is applied to all qubits. Finally, a measurement meter on each qubit returns a bit string. Note that the number of output qubits must match the number of input qubits, as unitary gates are invertible, making the computation process fully reversible in the absence of measurement, meaning no information is erased.

From the circuit above, we can understand the mechanism and advantage of quantum parallelism in the standard quantum computing model. A quantum circuit can simultaneously perform computations on multiple inputs through superposition, a feat that would require an exponential number of classical processors to match. The Hadamard gates initially transform each qubit into a superposition of all possible states, representing every possible classical input. The unitary operator $U$ is then applied to this superposition, executing the computation for all inputs simultaneously due to quantum parallelism. However, the final measurement yields only one classical outcome (an $n$-bit string), representing a single result from the superposition of potential outcomes. This highlights the unique computational power of quantum systems, albeit with the limitation that only a single result is observable after measurement.

More precisely, the first step of the circuit is to perform the following computation: $(H\ket{0})\otimes\cdots\otimes(H\ket{0})$, that is
\[
\left(\frac{\ket{0}+\ket{1}}{\sqrt 2}\right)\otimes\cdots\otimes\left(\frac{\ket{0}+\ket{1}}{\sqrt 2}\right).
\]
After expanding this product, we obtain \[
\frac{1}{\sqrt{2^n}}\big(\ket{0\cdots 0}+\ket{0\cdots 01}+\ket{0\cdots 10}+\cdots+\ket{1\cdots 1}   \big).
\]
Each $n$-bit string in the sum above represents a possible input of a classical algorithm. The next step is to apply $U$. Due to linearity, $U$ acts \textit{simultaneously} on all terms in the sum, allowing $2^n$ simultaneous computations on an $n$-qubit quantum computer. Note, however, that the result of these computations is a superposition state, and after measurement, the final output is a single $n$-bit string.

The input to the circuit is fixed; it is the state $\ket{0,...,0}$, which represents the lowest energy state, making it the easiest to prepare experimentally. Why is this choice no loss of generality? Unlike in classical computing, where inputs (like $3$ and $4$ in $3+4$) are directly fed into a circuit, quantum computing uses gates to prepare the desired input states. For example, to prepare the input $3$ and $4$ on a 6-qubit quantum computer, we initialize it in the state $\ket{0,0,0}\ket{0,0,0}$ and apply $I\otimes X \otimes X$ to the first register and $X\otimes I\otimes I$ to the second, resulting in $\ket{0,1,1}\ket{1,0,0}$, which represent the desired numbers. In quantum computing, the classical inputs are therefore produced by quantum gates rather than being directly assigned. Following this reasoning, applying Hadamard gates for quantum parallelism would scatter the specific initial values across all possible states, losing the desired input numbers among them. Without applying Hadamard gates, we effectively use the quantum computer as a classical computer, which is the standard approach for algorithms that do not benefit from quantum enhancements.

\section{Big $O$ notation}

Since $O$ and $\Omega$ notation will be used frequently in the next Section, this is a good time to define them. To determine the computational complexity of an algorithm, we typically consider the number of qubits $n$ in the circuit that implements it and examine whether the circuit depth scales as a polynomial or exponential function of $n$. Circuit depth is defined as the number of layers (or columns) of gates from a universal gate set, where gates that can be processed in parallel are placed in the same column. If the depth is a polynomial function of $n$, the algorithm is considered efficient in the complexity-theoretic sense; if it scales exponentially, the algorithm is considered inefficient. For polynomial depth, it is also important to know the degree of the polynomial, since lower-degree polynomials indicate more efficient algorithms.

Let $f(n)$ represent the depth of a circuit with $n$ qubits, or more generally any function defined for sufficiently large integers $n$. For most algorithms, $f(n)$ is an intricate function that is often too difficult to express exactly. However, if we can find a function $g(n)$ and constants $c > 0$ and $n_0 > 0$ such that $|f(n)| \le c\,|g(n)|$ for all $n \ge n_0$, we say that $f(n)$ is $O(g(n))$, and we write
\[
f(n)=O(g(n)).
\]
For example, if $f(n) = 7n^2 + 3n + 5$, we can correctly say that $f(n) = O(n^2)$ because, by choosing $c = 10$ and $n_0 = 2$, we have $f(n) \le 10\,n^2$ for all $n \ge 2$. It would also be correct to say that $f(n) = O(n^3)$, but this would miss an opportunity to describe the computational complexity more precisely. On the other hand, it would be incorrect to say that $f(n) = O(n)$, because no constants $c > 0$ and $n_0 > 0$ satisfy $f(n) \le c\,n$ for all sufficiently large $n$. Stating that the computational complexity of an algorithm is $O(n^a)$ for some $a > 0$ leaves open the possibility of finding a more efficient algorithm with complexity $O(n^b)$, where $b < a$.

If we can find a function $g(n)$ and constants $c > 0$ and $n_0 > 0$ such that $|f(n)| \ge c\,|g(n)|$ for all $n \ge n_0$, we say that $f(n)$ is $\Omega(g(n))$, and we write
\[
f(n)=\Omega(g(n)).
\]
For example, if every algorithm that solves a given computational problem has complexity $f(n)=\Omega(\e^n)$, then no algorithm with a polynomial-depth circuit can solve that problem.

Using these definitions, we have the equivalence
\[
f(n)=O(g(n)) \Longleftrightarrow g(n)=\Omega(f(n)).
\]
We then define the big $\Theta$ notation as
\[
f(n)=\Theta(g(n)) \Longleftrightarrow f(n)=O(g(n)) \text{ and } f(n)=\Omega(g(n)).
\]

There is also the little $o$ notation, defined as follows: $f(n)=o(g(n))$ if, for every constant $c>0$, there exists a constant $n_0>0$ such that $|f(n)| \le c\,|g(n)|$ for all $n \ge n_0$. This means that $f(n)$ becomes negligible compared to $g(n)$ as $n \to \infty$, or equivalently
\[
f(n)=o(g(n)) \Longleftrightarrow \lim_{n \to \infty} \frac{f(n)}{g(n)} = 0,
\]
provided that $g(n)\neq 0$ for all sufficiently large $n$. For instance, if $p(n)$ is a probability that depends on $n$, stating that $p(n)=1+o(1)$ implies that $\lim_{n \to \infty} p(n)=1$.

It is important to note a key distinction between the definitions of little $o$ and big $O$: in the definition of little $o$, the inequality must hold for every constant $c>0$, whereas in the definition of big $O$, it is sufficient that it holds for at least one constant $c>0$.

\section{Decomposition into universal gates}\label{sec:decomp-Univ}

When working with quantum algorithms, we typically reason in terms of $n$-qubit unitary operators, and eventually, we need to implement these unitary operators on quantum computers. Quantum computers provide a set of universal gates, which usually includes CNOT, SWAP, $R_x(\theta)$, $R_y(\theta)$, $R_z(\theta)$, $H$, $S$, $T$, Pauli matrices, and a few other gates. This implies that we must decompose the unitary operators we wish to implement into a sequence of operators from the set of available gates, using tensor products, matrix products, and multiplication by unit complex numbers (global phases). An interesting question to consider is: what is the smallest set of universal gates? If we aim to decompose any unitary operator exactly, the set must be infinite, with one of the simplest choice being~\cite{BBCD95}
\[
S_{\text{universal}}=\{\text{CNOT},R_y(\theta),R_z(\theta)\},
\]
where $0\le \theta<2\pi$.

A sketch of the proof that $S_{\text{universal}}$ is universal is as follows. We first use the result of the following exercise.

\begin{exercise}\label{Exer:univ-gates}
Using the fact that an arbitrary $(2\times 2)$ matrix $U$ with complex entries can be written in the form
\[
\begin{bmatrix}
 a\,\e^{\ii\alpha_a} &  b\,\e^{\ii\alpha_b} \vspace{2pt}\\
 c\,\e^{\ii\alpha_c} &  d\,\e^{\ii\alpha_d}
\end{bmatrix},
\]
where $a$, $b$, $c$, and $d$ are nonnegative real numbers and $\alpha_a$, $\alpha_b$, $\alpha_c$, $\alpha_d$ are angles, and assuming that $U$ is unitary ($UU^\dagger=U^\dagger U=I$), which imposes constraints on these parameters, show that $U$ can be re-parametrized as
\begin{equation}\label{eq:arb-U}
U\,=\,\e^{\ii \alpha}
\begin{bmatrix}
 \e^{-\ii\frac{\beta+\delta}{2}} \cos\frac{\gamma}{2} &
 - \e^{\ii\frac{\delta-\beta}{2}}\sin\frac{\gamma}{2}\vspace{2pt}\\
 \e^{\ii\frac{\beta-\delta}{2}} \sin\frac{\gamma}{2} &
 \e^{\ii\frac{\beta+\delta}{2}}\cos\frac{\gamma}{2}
\end{bmatrix},
\end{equation}
where $\alpha$, $\beta$, $\gamma$, and $\delta$ are angles.
\end{exercise}

The first part of the proof that $S_{\text{universal}}$ is universal is to use the definitions of $R_y$ and $R_z$ to obtain any one-qubit unitary operator $U$ in the form given by Eq.~(\ref{eq:arb-U}). This corresponds to the standard ZYZ Euler decomposition of a single-qubit unitary and yields
\[
U \,=\, \e^{\ii \alpha} R_z(\beta)\,R_y(\gamma)\,R_z(\delta).
\]
Hence, $U$ can be implemented up to a global phase $\e^{\ii\alpha}$ by a sequence of three rotations. The corresponding circuit is
\[
\Qcircuit @C=2.3em @R=1.6em {
\lstick{} & \gate{R_z(\delta)} & \gate{R_y(\gamma)} & \gate{R_z(\beta)} & \rstick{.} \qw
}\vspace*{0.2cm}
\]
Given a unitary matrix, we first express it in the form of Eq.~(\ref{eq:arb-U}), from which we extract the angles $\beta$, $\gamma$, and $\delta$, and then construct the circuit accordingly.

The second part of the proof deals with the decomposition of an $n$-qubit unitary gate $U$, where $n\ge 2$. We begin by expressing $U$ as a product of $m$ two-level unitary matrices, denoted by $U_1, \dots, U_m$, in the computational basis. Let $U_j$ be a matrix in this sequence. A two-level unitary matrix is defined as a unitary operator that acts non-trivially on exactly two states of the computational basis. Note that this definition depends on the chosen basis. Specifically, there exist two distinct computational basis states $\ket{k}$ and $\ket{\ell}$ such that
\begin{align*}
U_j\ket{k} &= a\ket{k} + c\ket{\ell},\\
U_j\ket{\ell} &= b\ket{k} + d\ket{\ell},\\
U_j\ket{i} &= \ket{i} \text{ for all } i \notin \{k,\ell\}.
\end{align*}
Thus, the diagonal entries are equal to $1$ for all indices $i \notin \{k,\ell\}$, and
the four entries in the rows and columns indexed by $k$ and $\ell$ form a $2\times 2$ unitary matrix
\[
\tilde{U}_j \,=\,
\begin{bmatrix}
a & b \\
c & d
\end{bmatrix}.
\]
All remaining matrix entries of $U_j$ are zero.

For example, the matrix
\begin{equation}\label{eq:U_j-}
U_j =
\begin{bmatrix}
1 & 0 & 0 & 0 & 0 & 0 & 0 & 0 \\
0 & a & 0 & 0 & 0 & 0 & b & 0 \\
0 & 0 & 1 & 0 & 0 & 0 & 0 & 0 \\
0 & 0 & 0 & 1 & 0 & 0 & 0 & 0 \\
0 & 0 & 0 & 0 & 1 & 0 & 0 & 0 \\
0 & 0 & 0 & 0 & 0 & 1 & 0 & 0 \\
0 & c & 0 & 0 & 0 & 0 & d & 0 \\
0 & 0 & 0 & 0 & 0 & 0 & 0 & 1
\end{bmatrix}
\end{equation}
is a two-level three-qubit unitary matrix, where $\ket{k}=\ket{001}$ and $\ket{\ell}=\ket{110}$. Note that both $b$ and $c$ may be zero, in which case $\tilde{U}_j$ is diagonal and $a$ and $d$ are complex numbers of unit modulus, and vice versa: both $a$ and $d$ may be zero, in which case $\tilde{U}_j$ is anti-diagonal and $b$ and $c$ are complex numbers of unit modulus.

The decomposition of an $n$-qubit $U$ as a product of two-level unitary matrices $U_1, \dots, U_m$ is inefficient in the generic case, because in the generic case $m=\Omega(4^n)$. An example of how this decomposition is obtained can be found in Section~4.5.1 of Ref.~\cite{NC00}.

\begin{exercise}
Show that any decomposition of $H^{\otimes n}$ into a product of $m$ two-level unitary matrices $U_1,\dots,U_m$ requires at least $m \ge 2^n - 1$ factors. To solve this exercise, it is not necessary to know the explicit procedure for constructing the matrices $U_1,\dots,U_m$. As a hint, consider the action of the circuit on the computational basis state $\ket{0^n}$. Show that when a two-level unitary is applied to any state, the number of computational-basis states with nonzero amplitude in the resulting superposition can increase by at most one. Compare this with the number of nonzero amplitudes in $H^{\otimes n}\ket{0^n}$.
\end{exercise}

In contrast, a two-level unitary admits an efficient decomposition, producing a circuit with $O(n^2)$ gates from the universal gate set. We therefore concentrate on the circuit construction for two-level unitaries. The method is best explained through an example, which generalizes directly.

Consider the two-level unitary matrix~(\ref{eq:U_j-}). The circuit construction uses the basis states $\ket{k}$ and $\ket{\ell}$. We take $\ket{\ell}$ as a reference state and start from $\ket{k}$, with the goal of transforming it into a state that differs from $\ket{\ell}$ in only one qubit, where the $2\times2$ unitary block $\tilde{U}_j$ will be applied.

In the example, $\ket{k}=\ket{001}$ and $\ket{\ell}=\ket{110}$. Since their first qubits differ, we apply a multi-controlled NOT gate that flips the first qubit, controlled by the remaining qubits matching those of $\ket{k}$. This operation is shown as the first gate in Fig.~\ref{fig:two-level-decomp} and maps $\ket{001}$ to $\ket{101}$. Next, the second qubit of $\ket{101}$ differs from that of $\ket{\ell}=\ket{110}$. We therefore apply a second multi-controlled NOT gate that flips the second qubit, again controlled by the remaining qubits. This is the second gate in Fig.~\ref{fig:two-level-decomp}, and the state becomes $\ket{111}$.

At this point, the current state and $\ket{\ell}$ differ in only one qubit (the third qubit). When only one qubit differs, we replace the NOT operation by a multi-controlled single-qubit gate $C^2(\tilde{U}_j)$ acting on that qubit, with the other qubits as controls. This gate implements the nontrivial $2\times2$ unitary block and corresponds to the third gate in Fig.~\ref{fig:two-level-decomp}. Finally, the preceding multi-controlled NOT gates are applied again in reverse order to uncompute the intermediate transformations and restore the remaining basis states, ensuring symmetry of the construction.

\begin{figure}[h!]
\[
\Qcircuit @C=2.3em @R=1.4em {
\lstick{} & \targ     & \ctrl{1} &\ctrl{1}           & \ctrl{1} & \targ &  \rstick{} \qw \\
\lstick{} & \ctrlo{-1}& \targ    & \ctrl{1}          &\targ      & \ctrlo{-1}& \rstick{} \qw \\
\lstick{} & \ctrl{-1} & \ctrl{-1}& \gate{\tilde{U}_j}& \ctrl{-1}& \ctrl{-1}& \rstick{} \qw \\
}\vspace*{0.2cm}
\]
\caption{Quantum circuit decomposition of the unitary operator $U_j$ from Eq.~(\ref{eq:U_j-}).
}\label{fig:two-level-decomp}
\end{figure}

\begin{exercise}
Show that the circuit in Fig.~\ref{fig:two-level-decomp} implements the unitary operator $U_j$ in Eq.~(\ref{eq:U_j-}).
(Hint: verify the action on all computational basis states and use linearity.)
\end{exercise}

\begin{exercise}
Let $U_j$ be a two-level four-qubit unitary matrix defined on the computational basis by
\begin{align*}
U_j\ket{1010} &= \frac{1}{\sqrt{2}}\bigl(\ket{1010}+\ket{1100}\bigr),\\
U_j\ket{1100} &= \frac{1}{\sqrt{2}}\bigl(\ket{1010}-\ket{1100}\bigr),
\end{align*}
and $U_j\ket{i}=\ket{i}$ for all computational basis states $\ket{i}\notin\{\ket{1010},\ket{1100}\}$.
\begin{enumerate}
\item[(a)] By applying the construction described above, derive the circuit
\[
\Qcircuit @C=2.3em @R=1.1em {
\lstick{} & \ctrl{1}   & \ctrl{1}   & \ctrl{1}          & \rstick{} \qw \\
\lstick{} & \targ      & \ctrl{1}   & \targ             & \rstick{} \qw \\
\lstick{} & \ctrl{-1}  & \gate{H}   & \ctrl{-1}         & \rstick{} \qw \\
\lstick{} & \ctrlo{-1} & \ctrlo{-1} & \ctrlo{-1}        & \rstick{.} \qw
}\vspace*{0.2cm}
\]
\item[(b)] Verify that the circuit implements $U_j$ by checking its action on all computational basis states.
\end{enumerate}
\end{exercise}

\begin{exercise}
Decompose the following two-qubit unitary matrix
\[
\begin{bmatrix}
\frac{1}{\sqrt{2}} & \frac{1}{\sqrt{2}} & 0 & 0 \\
\frac{1}{\sqrt{2}} & -\frac{1}{\sqrt{2}} & 0 & 0 \\
0 & 0 & 0 & -\ii \\
0 & 0 & \ii & 0
\end{bmatrix}
\]
as a product of two-level unitary matrices. Then, following the construction described above, derive a quantum circuit that implements this unitary using controlled one-qubit gates. (Since this is a two-qubit operator, multi-controlled NOT gates reduce to CNOT gates.)
\end{exercise}

The next step in the proof that $S_{\text{universal}}$ is universal is the decomposition of the multi-controlled gate $C^n(\tilde{U}_j)$. In Section~\ref{subsec:porta3oumais}, we described the decomposition of the multi-controlled NOT gate $C^n(X)$ into $(2n-3)$ Toffoli gates using $(n-2)$ ancillas. A similar construction can be used to implement $C^n(\tilde{U}_j)$ using $2(n-1)$ Toffoli gates, $(n-1)$ ancillas, and one controlled-$\tilde{U}_j$ gate. The circuit structure for the case $n=4$ is illustrated below:
\[
\Qcircuit @C=1.3em @R=0.7em {
\lstick{}& \ctrl{1}& \rstick{} \qw      & & &                & &\ctrl{1}& \qw     & \qw     & \qw     & \qw    &  \qw   & \ctrl{1} & \qw \\
\lstick{}& \ctrl{2}& \rstick{} \qw      & & &                & &\ctrl{1} & \qw     & \qw     & \qw     & \qw    &  \qw   & \ctrl{1} & \qw\\
\lstick{}&    ---- &                    & & &&\lstick{\ket{0}} &\targ    &\ctrl{1} & \qw     & \qw     & \qw    &\ctrl{1}& \targ& \rstick{\ket{0}}\qw \\
\lstick{}& \ctrl{2}&  \rstick{} \qw     & & &                & &\qw      &\ctrl{1}& \qw     & \qw     & \qw    &\ctrl{1}& \qw &\qw\\
\lstick{}&    ---- &\rstick{\,\,\,\,\,\equiv}& & &&\lstick{\ket{0}} &\qw      &\targ    &\ctrl{1} &\qw      &\ctrl{1}& \targ& \qw& \rstick{\ket{0}}\qw \\
\lstick{}& \ctrl{2}&  \rstick{} \qw     & & &                & &\qw      &\qw      &\ctrl{1}&\qw      &\ctrl{1}& \qw& \qw& \qw \\
\lstick{}&  ----   &                    & & &&\lstick{\ket{0}} &\qw      &\qw      &\targ    &\ctrl{1} &\targ& \qw& \qw& \rstick{\ket{0}}\qw \\
\lstick{}&\gate{\tilde{U}_j}    &  \rstick{} \qw     & & &                & &\qw      &\qw      &\qw      &\gate{\tilde{U}_j}    &\qw& \qw& \qw& \rstick{.}\qw \\
}\vspace{0.1cm}
\]
Assuming that $\tilde{U}_j$ is written in the form of Eq.~(\ref{eq:arb-U}), the controlled gate $C(\tilde{U}_j)$ admits the following decomposition:
\[
\Qcircuit @C=1.5em @R=0.7em {
              & \ctrl{2}                     & \qw  & & & \qw          & \ctrl{2} & \qw          & \ctrl{2}& \gate{D} & \qw   \\
               &          & \rstick{\,\,\equiv} &     \\
\lstick{}& \gate{\tilde{U}_j} & \qw  & & & \gate{C} & \targ    & \gate{B}  & \targ    & \gate{A} &  \rstick{,}\qw
}\vspace*{0.1cm}
\]
where
\begin{align*}
A &= R_z(\beta)\,R_y\left(\frac{\gamma}{2}\right),\\
B &= R_y\left(-\frac{\gamma}{2}\right)\,R_z\left(-\frac{\delta+\beta}{2}\right),\\
C &= R_z\left(\frac{\delta-\beta}{2}\right),\\
D &= \left[\begin{array}{cc}
 1 &  0\\
 0   &  \e^{\ii\alpha}
\end{array}\right].
\end{align*}

\begin{exercise}
Assume that $\tilde{U}_j$ is written in the form of Eq.~(\ref{eq:arb-U}). Prove that $ABC = I$ and that $\e^{\ii\alpha} A X B X C = \tilde{U}_j$. Use these identities to verify that the proposed decomposition of $C(\tilde{U}_j)$ is correct by evaluating the circuit on the inputs $\ket{0}\ket{i}$ and $\ket{1}\ket{i}$, where $i$ is an arbitrary bit. Recall that two circuits are equivalent if they produce the same output for all computational-basis inputs. See also Corollary~4.2 in Ref.~\cite{NC00}.
\end{exercise}

As a final step in the decomposition, the Toffoli gate can be implemented using only single-qubit gates and CNOT gates, as shown below:
\[
\Qcircuit @C=0.8em @R=0.8em {
\lstick{}&\ctrl{1}&\rstick{} \qw           & & & & &\qw      &\qw      & \qw             &\ctrl{2}& \qw     & \qw     &  \qw            & \ctrl{2}& \qw    &\ctrl{1} &\gate{T}&\ctrl{1} &\qw\\
\lstick{}&\ctrl{1}&\rstick{\,\,\,\,\,=} \qw& & & & &\qw      &\ctrl{1} & \qw             &\qw     & \qw     & \ctrl{1}&  \qw            & \qw     &\gate{T}&\targ    &\gate{T^\dagger}&\targ&\qw\\
\lstick{}&  \targ &\rstick{} \qw           & & & & &\gate{H} &\targ    &\gate{T^\dagger} &\targ   &\gate{T} & \targ   &\gate{T^\dagger} & \targ   &\gate{T}& \gate{H}  & \qw & \qw &\rstick{.}\qw
}\vspace{5pt}
\]
In the next exercise, you are asked to show that the gates $H$ and $T$ can be expressed in terms of the rotation gates $R_y(\theta)$ and $R_z(\theta)$, thereby completing the sketch of the proof that $S_\text{universal}$ is a universal gate set.

\begin{exercise}
Show that, up to a global phase, the following identities hold:
\begin{align*}
H &= \e^{\frac{\pi\ii}{2}} \, R_y\!\left(\frac{\pi}{2}\right) R_z(\pi),\\
S &= \e^{\frac{\pi\ii}{4}} \, R_z\!\left(\frac{\pi}{2}\right),\\
T &= \e^{\frac{\pi\ii}{8}} \, R_z\!\left(\frac{\pi}{4}\right).
\end{align*}
\end{exercise}

\begin{exercise}
Show that the gate set $\{\mathrm{CNOT},\, H,\, R_z(\theta)\}$, for $0 \le \theta < 2\pi$,
is universal.
\end{exercise}

\begin{exercise}\label{exe:C(AUAdagger)}
Let $U$ and $A$ be unitary operators acting on the target system. Show that
\[
C^n(A^\dagger U A) = (I\otimes A^\dagger)\, C^n(U)\, (I\otimes A).
\]
When $n=1$, we have
\[
\Qcircuit @C=1.5em @R=0.7em {
         & \ctrl{2}             & \qw                 & & & \qw      & \ctrl{2} & \qw  &\qw     \\
         &                      & \rstick{\,\,\equiv} & \\
\lstick{}& \gate{A^\dagger U A } & \qw                 & & & \gate{A} & \gate{U}    & \gate{A^\dagger}& \rstick{.}\qw}\vspace*{0.1cm}
\]
\end{exercise}

\begin{exercise}
Determine the gates $A$, $B$, $C$, and $D$ in the controlled-unitary decomposition when $\tilde{U}_j = Y$, and draw a circuit that implements $C(Y)$ using gates from the universal set. Then construct an alternative circuit using only one CNOT gate and single-qubit gates by employing the identity
\[
Y = R_z(\pi/2)\, X \, R_z(-\pi/2)
\]
together with the relation $C(U X U^\dagger) = (I\otimes U)\, C(X)\, (I\otimes U^\dagger)$. Draw this alternative circuit.
\end{exercise}

\begin{exercise}
Determine the gates $A$, $B$, $C$, and $D$ in the controlled-unitary decomposition when $\tilde{U}_j = H$, and draw a circuit that implements $C(H)$ using gates from the universal set. Then derive an alternative circuit using only one CNOT gate and single-qubit gates based on the identity
\[
H = S\, H\, T\, X\, T^\dagger\, H\, S^\dagger.
\]
Draw this alternative circuit.
\end{exercise}

\begin{exercise}
The Toffoli gate plays an important role in the decomposition of unitary operators into universal gates. Moreover, in many constructions, Toffoli gates appear in conjugate pairs acting on the same qubits. In such cases, one may use simplified decompositions that differ from the exact Toffoli gate by a global phase, since the global phases cancel and do not affect the overall unitary.
\end{exercise}

\subsection*{Using only one ancilla}

In the decomposition into universal gates described above, we introduced $(n-2)$ auxiliary qubits initialized to the state $\ket{0}$. All multi-controlled NOT gates in the circuit can share the same auxiliary qubits because they return to the state $\ket{0}$ after each gate. If an auxiliary qubit is initialized to $\ket{1}$, the process fails. Therefore, it seems that we cannot use qubits already present in the circuit as auxiliary qubits. For instance, when decomposing $C^{\lceil n/2 \rceil}(X)$, we require $\lceil n/2 \rceil - 2$ auxiliary qubits, even though there are enough qubits available in the circuit. The issue is that qubits already in the circuit may be in a state different from $\ket{0}$.

To overcome this limitation, we modify the decomposition of the multi-controlled NOT gate so that the auxiliary-qubit input may be in an arbitrary state $\ket{i}$, where $i \in \{0,1\}$. If such an auxiliary qubit is part of the original circuit and you have borrowed it for some purpose, it is referred to as a \textit{work qubit}.\footnote{An extra qubit added to the circuit with a known fixed initial state, typically $\ket{0}$, is called a \textit{clean ancilla}. A qubit already present in the circuit that assists in the decomposition of a multiqubit gate is called a \textit{dirty ancilla}. In this work, we use the term \textit{ancilla} to mean \textit{clean ancilla} and the term \textit{work qubit} instead of \textit{dirty ancilla}. Therefore, the initial state of a work qubit is not known a priori.}

Let us now show how to decompose $C^{m}(X)$ into Toffoli gates using $(m-2)$ work qubits, where
$m \le \left\lceil\frac{n}{2}\right\rceil$ or $m \le \left\lfloor\frac{n}{2}\right\rfloor + 1$.
In both cases, a sufficient number of work qubits is available in the circuit. We convert the ancillas from the decomposition of the multi-controlled NOT gate described in Section~\ref{subsec:porta3oumais} into work qubits by adding $O(m)$ Toffoli gates inside a dashed box, as shown in the following example with five controls:
\[
\Qcircuit @C=0.8em @R=0.7em {
\lstick{q_0}&\ctrl{1}&\rstick{}\qw&               &&&&&                &\ctrl{1} & \qw     & \qw     & \qw     & \qw    &  \qw   & \ctrl{1}& \qw     & \qw     & \qw     & \qw    &  \qw &\qw \\
\lstick{q_1}&\ctrl{2}&\rstick{}\qw&               &&&&&                &\ctrl{1} & \qw     & \qw     & \qw     & \qw    &  \qw   & \ctrl{1}& \qw     & \qw     & \qw     & \qw    &  \qw &\qw \\
\lstick{w_0}& \qw & \qw           &               &&&&&\lstick{\ket{i}}&\targ    &\ctrl{1} & \qw     & \qw     & \qw    &\ctrl{1}& \targ   &\ctrl{1} & \qw     & \qw     & \qw    &\ctrl{1} &\rstick{\ket{i}}\qw \\
\lstick{q_2}&\ctrl{2}&\rstick{}\qw&               &&&&&                &\qw      &\ctrl{1} & \qw     & \qw     & \qw    &\ctrl{1}& \qw     &\ctrl{1} & \qw     & \qw     & \qw    &\ctrl{1} &\qw \\
\lstick{w_1}&  \qw  &  \qw          &\rstick{\equiv}&&&&&\lstick{\ket{j}}&\qw      &\targ    &\ctrl{1} &\qw      &\ctrl{1}& \targ  & \qw     &\targ    &\ctrl{1} &\qw      &\ctrl{1}& \targ  &\rstick{\ket{j}}\qw \\
\lstick{\vdots\,\,}&\ctrl{2}&\rstick{}\qw&               &&&&&                &\qw      &\qw      &\ctrl{1} &\qw      &\ctrl{1}& \qw    & \qw     &\qw      &\ctrl{1} &\qw      &\ctrl{1}& \qw    &\qw \\
\lstick{}& \qw   &  \qw          &               &&&&&\lstick{\ket{k}}&\qw      &\qw      &\targ    &\ctrl{1} &\targ   & \qw    & \qw     &\qw      &\targ    &\ctrl{1} &\targ   & \qw   &\rstick{\ket{k}}\qw \\
\lstick{}&\ctrl{1}&\rstick{}\qw&               &&&&&                &\qw      &\qw      &\qw      &\ctrl{1} &\qw     & \qw    & \qw     &\qw      &\qw      &\ctrl{1} &\qw     & \qw   &\qw \\
\lstick{}&\targ   &\rstick{}\qw&               &&&&&                &\qw      &\qw      &\qw      &\targ    &\qw     & \qw    & \qw     &\qw      &\qw      &\targ    &\qw     & \qw   &\rstick{.}\qw
\gategroup{3}{17}{9}{21}{.9em}{--}
}\vspace{0.1cm}
\]
If the gates inside the dashed box are removed, the circuit reduces to the one described in Section~\ref{subsec:porta3oumais} and operates correctly only when $\ket{i} = \ket{j} = \ket{k} = \ket{0}$. By including the gates inside the dashed box, the two circuits become fully equivalent. That is, they implement the same unitary operator and therefore produce identical outputs not only for arbitrary $\ket{i}$, $\ket{j}$, and $\ket{k}$, but for any input state. The proof is left as an exercise.

\begin{exercise}\label{exe-2-Toff-decomp}
Prove that the circuits shown in the last figure are equivalent for an arbitrary number of control qubits. As a hint, it is useful to observe that the right-hand circuit can be obtained recursively as follows. The dashed box is a $C^{m-1}(X)$ gate. Let us formally describe the first recursive step. Let $q_0,\ldots,q_{m-1}$ be the control qubits and let $w_0$ be the first work qubit. The left-hand circuit is the multi-controlled NOT gate
\[
C^m_{q_0,\ldots,q_{m-1}}(X_{q_m}),
\]
which specifies the number of controls $m$, the control qubits $q_0,\ldots,q_{m-1}$, and the target qubit $q_m$. The first recursive step is
\[
C^m_{q_0,\ldots,q_{m-1}}(X_{q_m})
=
\bigl(C^{m-1}_{w_0,q_2,\ldots,q_{m-1}}(X_{q_m})\bigr)^\dagger\,
C^2_{q_0,q_1}(X_{w_0})\,
C^{m-1}_{w_0,q_2,\ldots,q_{m-1}}(X_{q_m})\,
C^2_{q_0,q_1}(X_{w_0}).
\]
To verify this recursive step, it suffices to assume that the work qubit $w_0$ is in an arbitrary computational-basis state $\ket{i}$, where $i\in\{0,1\}$. Explain why the adjoint operation $(\cdot)^\dagger$ (i.e., the conjugate transpose) is essential. Finally, draw the corresponding circuit identity omitting the work qubits $w_1,\ldots,w_{m-3}$ (which are introduced during the recursive process).
\end{exercise}

\begin{exercise}\label{exe-2-U2-I-decomp}
Extend the proof of Exercise~\ref{exe-2-Toff-decomp} to include the case $C^n(U)$ for the special case where $U^2=I$. In this context, $U$ represents a single-qubit unitary operator, which includes the multi-controlled NOT gate when $U=X$. Modify the circuit accordingly by replacing the target NOT gates with $U$ gates and then show the first recursive step:
\[
C^m_{q_0,\ldots,q_{m-1}}(U_{q_m}) =
\bigl(C^{m-1}_{w_0,q_2,\ldots,q_{m-1}}(U_{q_m})\bigr)^\dagger\,
C^2_{q_0,q_1}(X_{w_0})\,
C^{m-1}_{w_0,q_2,\ldots,q_{m-1}}(U_{q_m})\,
C^2_{q_0,q_1}(X_{w_0}).
\]
Note that $U^\dagger = U$, and therefore $C^k(U)^\dagger = C^k(U)$. At the end of the recursion, there are two $C^2(U)$ gates and Toffoli gates.
\end{exercise}

\begin{exercise}\label{exe-2-U2-I-decomp-2}
Provide an alternative proof of Exercise~\ref{exe-2-U2-I-decomp}, up to a global phase, for the case of a multi-controlled gate $C^m(U)$, where $U$ is a one-qubit unitary gate such that $U^2=I$.
\begin{enumerate}
\item[(a)] Show that there exists a one-qubit unitary $A$ and a global phase $\e^{\ii\phi}$ such that
\[
U=\e^{\ii\phi}\,A X A^\dagger .
\]
\item[(b)] Use part (a) together with the identity
\[
C^m(A^\dagger V A)=(I\otimes A^\dagger)\,C^m(V)\,(I\otimes A),
\]
where $A$ acts on the target qubit, to reduce the decomposition of $C^m(U)$ to that of $C^m(X)$, and hence obtain a Toffoli-based decomposition using work qubits.
\end{enumerate}
\end{exercise}\vspace{8pt}

The key technique for using only one ancilla (or one work qubit) to decompose $C^n(X)$, which is our primary goal, is to modify the initial step as follows. We partition the control qubits into two halves (two disjoint subsets whose sizes differ by at most one), as shown in the following circuit equivalence:
\[
\Qcircuit @C=1.3em @R=0.9em {
\lstick{}& \ctrl{1}& \rstick{} \qw      & & &                & &\ctrl{1}& \qw      & \ctrl{1} & \qw   & \qw      \\
\lstick{}& \ctrl{1}& \rstick{} \qw      & & &                & &\ctrl{1}& \qw      & \ctrl{1} & \qw   & \qw      \\
\lstick{}& \ctrl{1}& \rstick{} \qw      & & &                & &\ctrl{1}& \qw      & \ctrl{1} & \qw   & \qw      \\
\lstick{}& \ctrl{2}& \rstick{} \qw      & & &                & &\ctrl{1}& \qw      & \ctrl{1} & \qw   & \qw      \\
\lstick{}&    ---- &      \rstick{\,\,\,\,\,\equiv}              & & &&\lstick{\ket{i}} &\targ       &\ctrl{1}   & \targ   &\ctrl{1}   & \rstick{\ket{i}}\qw \\
\lstick{}& \ctrl{1}&  \rstick{} \qw     & & &                & &\qw      &\ctrl{1}  & \qw     &\ctrl{1}   & \qw \\
\lstick{}& \ctrl{1}&  \rstick{} \qw     & & &                & &\qw      &\ctrl{1}  & \qw     &\ctrl{1}   & \qw \\
\lstick{}& \ctrl{1}&  \rstick{} \qw     & & &                & &\qw      &\ctrl{1}  & \qw     &\ctrl{1}   & \qw \\
\lstick{}& \targ   &  \rstick{} \qw     & & &                & &\qw      &\targ     & \qw     &\targ       & \rstick{.}\qw \\
}\vspace{0.1cm}
\]
To verify the equivalence of the circuits, we consider four cases: (1)~both halves have inputs $\ket{1}$; (2)~the first half has inputs $\ket{1}$ while the second half has at least one input $\ket{0}$; (3)~the first half has at least one input $\ket{0}$ while the second half has inputs $\ket{1}$; and (4)~both halves have at least one input $\ket{0}$. In each case, assume that the work qubit may be in either state, $\ket{i}=\ket{0}$ or $\ket{i}=\ket{1}$, one at a time.

We have shown how to decompose $C^n(X)$ into four $C^m(X)$ gates using one work qubit, where $m = \lceil n/2 \rceil$ for two of them and $m = \lfloor n/2 \rfloor + 1$ for the other two. The remaining qubits can serve as work qubits to decompose each $C^m(X)$ gate into $O(m)$ Toffoli gates, as described earlier. As a result, $C^n(X)$ can be decomposed into $O(n)$ Toffoli gates using only one ancilla (or one work qubit).

\begin{exercise}
Determine the exact number of Toffoli gates required to decompose $C^n(X)$ using one work qubit, following the method described above.
\end{exercise}

\begin{exercise}\label{exe-2-U2-I-decomp-3}
Extend the decomposition described above to the case $C^n(U)$, where $U$ is a one-qubit unitary satisfying $U^2 = I$ (which includes the multi-controlled NOT gate as the special case $U = X$). Modify the circuit accordingly and show the equivalence of the resulting circuits.
\end{exercise}

The decomposition of $C^n(U)$ using only one ancilla is still missing when $U$ is an arbitrary one-qubit gate. This can be achieved using the following circuit equivalence:
\[
\Qcircuit @C=1.3em @R=0.7em {
\lstick{}& \ctrl{1}&\rstick{}\qw&                         &&&&&\ctrl{1} &\qw                   &\ctrl{1} & \qw      \\
\lstick{}& \ctrl{2}&\rstick{}\qw&                         &&&&&\ctrl{1} &\qw                   &\ctrl{1} & \qw\\
\lstick{}& \ctrl{2}&\rstick{}\qw&                         &&&&&\ctrl{1} &\qw                   &\ctrl{1} & \qw\\
\lstick{}& \ctrl{1}&\rstick{}\qw&                         &&&&&\ctrl{1} &\qw                   &\ctrl{1} & \qw  \\
\lstick{}& \ctrl{2}&\rstick{}\qw&&&&&&\ctrl{1} &\qw                   &\ctrl{1} &\qw   \\
\lstick{}&   -----  &\rstick{}   & \rstick{\,\equiv}                            &&&&\lstick{\ket{0}}&\targ    &\ctrl{1}              &\targ    &\rstick{\ket{0}}\qw     \\
\lstick{}& \gate{U}&\rstick{}\qw&                         &&&&&\qw      &\gate{U}&\qw      & \rstick{.}\qw \\
}
\]
There are two $C^n(X)$ gates, which can be decomposed into $O(n)$ Toffoli gates using the method described above, since one work qubit is available for them. Therefore $C^n(U)$ can be decomposed into $O(n)$ Toffoli gates and one controlled-$U$ gate using only one ancilla. Within the decomposition framework described above, any unitary operator can thus be implemented using only one ancilla, although not necessarily with $O(n)$ basic gates. Next, we show that all ancillas can be eliminated.

\begin{exercise}\label{exe-2-U2-I-decomp-4}
The aim of this exercise is to modify the decomposition of $C^n(U)$ described above so that the ancilla is replaced by a work qubit in the special case where $U^2 = I$. Modify the circuit accordingly and show the equivalence of the resulting circuits, assuming that the input to the work qubit is the state $\ket{i}$, where $i\in\{0,1\}$. Draw the corresponding circuit equivalence.
\end{exercise}

\subsection*{Eliminating all ancillas}

To avoid using ancillas in the decomposition of $C^n(X)$ and $C^n(U)$ when $n>2$, we modify the first step of the construction. An alternative first step for $C^n(X)$ (which also applies to $C^n(U)$ with the appropriate replacements) is given by the following circuit equivalence:
\[
\Qcircuit @C=1.3em @R=0.7em {
\lstick{}& \ctrl{1}&\rstick{}\qw&                         &&&&\qw        &\ctrl{1} &\qw                    &\ctrl{1} &\ctrl{1} & \qw      \\
\lstick{}& \ctrl{2}&\rstick{}\qw&                         &&&&\qw       &\ctrl{1} &\qw                    &\ctrl{1} &\ctrl{1} & \qw\\
\lstick{}& \ctrl{1}&\rstick{}\qw&                         &&&&\qw       &\ctrl{1}  &\qw                    &\ctrl{1} &\ctrl{1} & \qw  \\
\lstick{}& \ctrl{1}&\rstick{}\qw&\rstick{\,\,\,\,\,\equiv}&&&&\qw       &\ctrl{1}  &\qw                    &\ctrl{1} &\ctrl{2} &\qw   \\
\lstick{}& \ctrl{1}&\rstick{}\qw&                         &&&&\ctrl{1} &\targ      &\ctrl{1}              &\targ &\qw &\qw     \\
\lstick{}& \gate{X}   &\rstick{}\qw&                         &&&&\gate{\sqrt X} &\qw        &\gate{\sqrt X^\dagger}&\qw&\gate{\sqrt X}    & \rstick{.}\qw \\
}\vspace{0.1cm}
\]
After this decomposition, we can use the last qubit (the target qubit) as a work qubit in the decomposition of the multi-controlled NOT gates $C^{n-1}(X)$ into $O(n)$ Toffoli gates. The decomposition of $C^{n-1}\big(\sqrt{X}\big)$ is obtained by a recursive method, in which $C^{n-1}\big(\sqrt{X}\big)$ is broken down into $C^{n-2}\big(X^{1/4}\big)$, $C^{n-2}(X)$, $C\big(X^{1/4}\big)$, and $C\big(X^{1/4}\big)^\dagger$. After completing $(n-1)$ recursive steps, the process yields $O(n^2)$ Toffoli gates and $O(n)$ controlled gates.

Eliminating all ancillas comes at a significant cost with this method. With one ancilla, the decomposition of a multi-controlled gate $C^n(U)$ into universal gates requires $O(n)$ gates; without ancillas, the cost increases to $O(n^2)$. Using alternative techniques, Refs.~\cite{SP22,ICKHC16,SP13} give an $O(n)$-depth decomposition for $C^n(X)$ or $C^n(U)$, while Ref.~\cite{CZFDPP24,KG25} provides a polylogarithm-depth decomposition with and without auxiliary qubits.


\begin{exercise}\label{exe:sqrt-X}
Show that
\[
\sqrt{R_x(\theta)} = R_x\!\left(\frac{\theta}{2}\right),
\]
and then show that
\begin{align*}
\sqrt X &= \e^{\frac{\pi\ii}{4}} \, R_x\!\left(\frac{\pi}{2}\right)
= \e^{\frac{\pi\ii}{4}} \, H R_z\!\left(\frac{\pi}{2}\right) H
= H S H,\\
\sqrt X^\dagger &= \e^{-\frac{\pi\ii}{4}} \, R_x\!\left(-\frac{\pi}{2}\right)
= \e^{-\frac{\pi\ii}{4}} \, H R_z\!\left(-\frac{\pi}{2}\right) H
= H S^\dagger H.
\end{align*}
Show that this construction extends to
\[
X^{1/2^n}
= \e^{\,{i\pi/2^{\,n+1}}}\,
H R_z\!\left(\frac{\pi}{2^n}\right) H,
\]
where $n$ is a natural number. Use this result to obtain the decomposition of $X^{1/4}$ in terms of $H$ and $T$.
\end{exercise}

\begin{exercise}
To compute the square root of an arbitrary one-qubit gate $U$, we use the concepts of eigenvalues and eigenvectors, which are also essential in the Kitaev algorithm discussed later in this work. The spectral theorem (see Theorem~2.1 of~\cite{NC00}) for two-dimensional matrices states that for any one-qubit unitary operator there exists an orthonormal basis $\{\ket{\psi_1}, \ket{\psi_2}\}$ such that
\[
U = \lambda_1 \ket{\psi_1}\bra{\psi_1} + \lambda_2 \ket{\psi_2}\bra{\psi_2},
\]
where $\lambda_1,\lambda_2$ are the eigenvalues and $\ket{\psi_1},\ket{\psi_2}$ are the corresponding eigenvectors.
Show that
\[
\sqrt U = \sqrt{\lambda_1}\ket{\psi_1}\bra{\psi_1} + \sqrt{\lambda_2}\ket{\psi_2}\bra{\psi_2}.
\]
Find the eigenvalues and an orthonormal basis of eigenvectors of the gate $X$. That is, find $\lambda_1,\lambda_2$ and $\{\ket{\psi_1}, \ket{\psi_2}\}$ (each with unit norm) such that $X\ket{\psi_1} = \lambda_1\ket{\psi_1}$ and $X\ket{\psi_2} = \lambda_2\ket{\psi_2}$. Check that $\braket{\psi_1}{\psi_2}=0$. Then compute $\sqrt{X}$ and compare it with the result of Exercise~\ref{exe:sqrt-X}.
\end{exercise}

\begin{exercise}
Find the simplest circuits that implement $C\!\big(\sqrt{X}\big)$ and $C\!\big(\sqrt{X}^\dagger\big)$ using CNOT gates and one-qubit gates. Then find a decomposition of $C\!\big(X^{1/2^n}\big)$, where $n$ is a natural number.
\end{exercise}


\section{Final remarks}

In this Chapter, we reviewed fundamental concepts of linear algebra and quantum computation with the goal to prepare us for a better understanding of the basic quantum algorithms. We defined the concepts of superposition and entanglement, which are key resources used by quantum algorithms to achieve greater efficiency compared to classical algorithms. Quantum circuits were the main focus of this Chapter. We began by describing gates involving a few qubits and then presented the most important multiqubit gates. These gates can be used to implement any Boolean function on a quantum computer, which is necessary for many algorithms such as Deutsch-Jozsa, Bernstein-Vazirani, and Simon. The final Section discussed the decomposition of unitary operators into basic gates, which is part of the quantum compilation process. We emphasized the decomposition of multi-controlled gates, as they are central to building quantum circuits. The material covered in this Chapter provides a solid foundation for understanding the basic quantum algorithms.

\chapter{Deutsch's Algorithm}\label{chap:Deutsch}

Deutsch's algorithm is the first algorithm to exploit quantum parallelism. It uses two qubits (only one in the economical version) and has a modest gain, but it has inspired the development of several new quantum algorithms that are more efficient than their classical counterparts. Deutsch's problem was posed in 1985~\cite{Deu85} without yet using the quantum circuit model. The concepts of universal gates and quantum circuits were first presented in~\cite{Deu89}, approximately four years later. A generalization of Deutsch's algorithm was described in~\cite{DJ92} and is known as the Deutsch-Jozsa algorithm (see the next Chapter). The version described here follows the modern view of Deutsch's algorithm~\cite{CEMM98,NC00}, which differs slightly from the original one. In the final Section, we describe an implementation of Deutsch's algorithm with only one qubit.

\section{Problem formulation}

Suppose we have a 1-bit Boolean function $f:\{0,1\}\longrightarrow\{0,1\}$ without knowing the details of the implementation of $f$. We want to determine whether this function is \textit{balanced} or \textit{constant}. A 1-bit Boolean function is balanced if $f(0)\neq f(1)$; otherwise, the function is constant, in which case $f(0)=f(1)$. There are four Boolean functions $f$, whose truth tables are described in Fig.~\ref{fig:truthtablesf0-f3} with names $f_0$ to $f_3$.

\begin{figure}[!ht]\label{fig:truthtablesf0-f3}
\begin{center}
\begin{tabular}{c|c|c|c|c}
$x$ & $f_0(x)$ & $f_1(x)$ & $f_2(x)$ & $f_3(x)$\\
\hline
0 & 0 & 0 & 1 & 1\\
1 & 0 & 1 & 0 & 1
\end{tabular}\vspace{-13pt}
\end{center}
\caption{Truth tables of all 1-bit Boolean functions.}
\end{figure}

\noindent
The \textit{disjunctive normal forms} are
\begin{align*}
 &f_0(x)=0,\\
 &f_1(x)=x,\\
 &f_2(x)=\bar x,\\
 &f_3(x)=\bar x\vee x,
\end{align*}
where $\bar x=\text{NOT}\, x$. The Boolean expression of $f_3$ can be simplified since $\bar x\vee x=1$.

A classical algorithm that finds the solution to this problem needs to evaluate $f$ twice, meaning it must evaluate both $f(0)$ and $f(1)$. However, Deutsch's algorithm uses a unitary operator $U_f$ that implements $f$ and calls this operator only once. In this case, $f(0)$ and $f(1)$ are also evaluated, but the difference is that the evaluations are performed simultaneously. This idea is used repeatedly in quantum algorithms.

In the quantum case, $f$ is implemented through a two-qubit unitary operator $U_f$ defined as
\[
U_f\ket{x}\ket{j}=\ket{x}\ket{j\oplus f(x)},
\]
where $\oplus$ is the XOR operation or addition modulo 2. This is a recipe that can be used to implement an arbitrary $n$-bit Boolean function. We must ensure that $x$ is the input to the first register, and to obtain $f(x)$, we set $j=0$ as the input to the second register and then we look at the output of the second register. Now we use the technique described in Chapter~\ref{chap:2} to obtain the quantum circuit of functions $f_0$ to $f_3$. We use their disjunctive normal forms and we have to use one multi-controlled NOT gate for each output 1 in the truth table. In the case of two qubits, the multi-controlled NOT gate reduces to a CNOT activated by either 0 or 1. There is no output 1 in the truth table of $f_0$. Then,
$$U_{f_0}=I\otimes I.$$
There is one output 1 in the truth table of $f_1$, which corresponds to input 1. We use the standard CNOT, which is active when the control is set to 1. Then,
$$U_{f_1}=\text{CNOT}.$$
There is one output 1 in the truth table of $f_2$, which corresponds to input 0. We use the CNOT that is active when the control is set to 0. Then,
$$U_{f_2}=(X\otimes I)\cdot \text{CNOT} \cdot (X\otimes I).$$
Finally, there are two outputs 1 in the truth table of $f_3$ with inputs 0 and 1. We use the standard CNOT and the CNOT that is active when the control is set to 0. Then,
$$U_{f_3}=(X\otimes I)\cdot \text{CNOT} \cdot (X\otimes I) \cdot \text{CNOT}.$$
It is evident that using the general recipe to implement Boolean functions based on their truth tables can produce unnecessarily large circuits. In such cases, it is necessary to simplify the Boolean expression before building the circuit. At this point, there is no specific recipe to guide us, except for our skill in handling Boolean functions and designing circuits. For $f_3$, we know that $f_3(x)=1$ and the output must be 1 regardless of the input $x$ to the first qubit. Since the input to the second qubit is 0, the output 1 is obtained by using an $X$ gate. Then, the simplified version of $U_{f_3}$ is
$$U_{f_3}=I\otimes X.$$
Note that $U_f$ is unitary in all cases.

\section{The algorithm}

\begin{algorithm}[!ht]
\caption {Deutsch's algorithm} \label{algo_Deu}
\KwIn{A Boolean function $f:\{0,1\}\longrightarrow \{0,1\}$.}
\KwOut{$0$ if $f$ is constant, 1 if $f$ is balanced.} \BlankLine
Prepare the initial state $\ket{0}\ket{1}$\;
Apply $H\otimes H$\;
Apply $U_f$\;
Apply $H\otimes H$\;
Measure the first qubit in the computational basis.
\end{algorithm}

Deutsch's algorithm is described in Algorithm~\ref{algo_Deu} and the (non-economical) circuit is
\vspace* {3pt}
\[
\Qcircuit @C=2.0em @R=1.6em {
\lstick{\ket{0}}        & \gate{H} & \multigate{1}{\,\,\,U_f\,\,\,} & \gate{H} & \meter  & \rstick{f(0)\oplus f(1)} \cw \\
\lstick{\ket{1}}        & \gate{H} & \ghost{\,\,\,U_f\,\,\,}        & \gate{H}&   \rstick{\ket{1}.} \qw \\
\ustick{\hspace{0.7cm}\ket{\psi_0}} & \ustick{\hspace{1.4cm}\ket{\psi_1}}& \ustick{\hspace{2.0cm}\ket{\psi_2}} & \ustick{\hspace{1.4cm}\ket{\psi_3}}
}\vspace{0.0cm}
\]
We easily check that the circuit corresponds exactly to the steps of Algorithm~\ref{algo_Deu}. The output $f(0)\oplus f(1)$ is $0$ if $f$ is constant, and 1 if $f$ is balanced. Note that the last Hadamard gate applied to the second qubit can be eliminated without affecting the algorithm. This gate is included here because the central part of the circuit is symmetric and the analysis of the algorithm is neater with it than without it. The states at the bottom of the circuit are used in the analysis of the algorithm.

\section{Analysis of the algorithm}

After the first step, the state of the qubits is
\[
\ket{\psi_0}=\ket{0}\otimes\ket{1}.
\]
After the second step, the state of the qubits is
\begin{eqnarray*}
\ket{\psi_1} &=& (H\ket{0})\otimes(H\ket{1})\\
&=&\frac{\ket{0}+\ket{1}}{\sqrt 2}\otimes \ket{-},
\end{eqnarray*}
where $\ket{-}=(\ket{0}-\ket{1})/\sqrt 2$. After the third step, the state of the qubits is
\begin{eqnarray*}
\ket{\psi_2} &=& U_f\ket{\psi_1}\\
&=&\frac{U_f\ket{0}\ket{-}+U_f\ket{1}\ket{-}}{\sqrt 2}.
\end{eqnarray*}
To simplify $\ket{\psi_2}$, let us show the following proposition:
\begin{proposition}\label{prop:U_fxminus}
Let $f:\{0,1\}^n\longrightarrow\{0,1\}$ be a $n$-bit Boolean function and define $U_f$ as
\[
U_f=\sum_{x\in \{0,1\}^n}\sum_{j=0}^1 \, \ket{x,j\oplus f(x)}\bra{x,j}.
\]
Then
\[
U_f\big(\ket{x}\otimes\ket{-}\big) \,=\,(-1)^{f(x)}\ket{x}\otimes\ket{-}.
\]
\end{proposition}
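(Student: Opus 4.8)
The plan is to use the \emph{phase kickback} mechanism: expand $\ket{-}$ in the computational basis, apply $U_f$ term by term using linearity, and then observe that adding $f(x)$ modulo $2$ to the target bit either fixes or swaps the two basis states $\ket{0}$ and $\ket{1}$, which is precisely what produces the scalar $(-1)^{f(x)}$.

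First I would write $\ket{x}\otimes\ket{-} = \tfrac{1}{\sqrt 2}\big(\ket{x}\ket{0} - \ket{x}\ket{1}\big)$ and apply $U_f$ using its linearity, so that each term transforms by the defining rule $U_f\ket{x}\ket{j} = \ket{x}\ket{j\oplus f(x)}$. This gives $\tfrac{1}{\sqrt 2}\big(\ket{x}\ket{f(x)} - \ket{x}\ket{1\oplus f(x)}\big)$, and since the first register is untouched I can factor out $\ket{x}$, leaving $\ket{x}\otimes \tfrac{1}{\sqrt 2}\big(\ket{f(x)} - \ket{1\oplus f(x)}\big)$.

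Next I would establish the key one-bit identity $\tfrac{1}{\sqrt 2}\big(\ket{0\oplus b} - \ket{1\oplus b}\big) = (-1)^b\,\ket{-}$, valid for any bit $b$. A clean route is a two-line case analysis: for $b=0$ the XOR does nothing and we recover $\ket{-}$; for $b=1$ the XOR swaps $\ket{0}$ and $\ket{1}$, turning $\ket{0}-\ket{1}$ into $\ket{1}-\ket{0} = -\big(\ket{0}-\ket{1}\big)$, hence the factor $-1$. Setting $b=f(x)$ and recombining with the factored $\ket{x}$ yields $U_f\big(\ket{x}\otimes\ket{-}\big) = (-1)^{f(x)}\,\ket{x}\otimes\ket{-}$, as claimed.

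There is essentially no serious obstacle here; the only point requiring care is recognizing that addition modulo $2$ acts as a permutation of $\{0,1\}$, so no amplitude is lost and the sole effect on the target register is the possible sign flip. I would phrase the case analysis so that it is manifestly independent of the value of $x$, since $f(x)$ is the only quantity that enters. Because the argument uses only the definition of $U_f$ and linearity, it applies verbatim to every $x\in\{0,1\}^n$, which is exactly the generality the statement asks for.
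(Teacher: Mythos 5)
Your proof is correct and follows essentially the same route as the paper's: expand $\ket{-}$ in the computational basis, apply the defining rule of $U_f$ by linearity, and resolve the two cases $f(x)=0$ and $f(x)=1$ to extract the sign $(-1)^{f(x)}$. Isolating the one-bit identity $\tfrac{1}{\sqrt 2}\big(\ket{0\oplus b}-\ket{1\oplus b}\big)=(-1)^b\ket{-}$ is a tidy way to package the paper's final case analysis, but it is the same argument.
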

\begin{proof}
Using the definition of $\ket{-}$, we obtain
\[
U_f\big(\ket{x}\otimes\ket{-}\big) \,=\, \frac{U_f\ket{x}\ket{0}-U_f\ket{x}\ket{1}}{\sqrt 2}.
\]
Using the definition of $U_f$, we obtain
\[
U_f\big(\ket{x}\otimes\ket{-}\big) \,=\, \frac{\ket{x}\ket{f(x)}-\ket{x}\ket{1\oplus f(x)}}{\sqrt 2}.
\]
If $f(x)=0$, the right-hand side is $\ket{x}\ket{-}$ and if $f(x)=1$, the right-hand side is $-\ket{x}\ket{-}$. Then, joining these results together, we have $(-1)^{f(x)}\ket{x}\otimes\ket{-}$.
\end{proof}

Using the proposition above, $\ket{\psi_2}$ simplifies to
\begin{eqnarray*}
\ket{\psi_2}
&=&\frac{(-1)^{f(0)}\ket{0}+(-1)^{f(1)}\ket{1}}{\sqrt 2}\otimes \ket{ -}\\
&=& \begin{cases}
\pm\,\ket{+}\otimes\ket{-}, & \text{if }f(0)=f(1),\\
\pm\,\ket{-}\otimes\ket{-}, & \text{if }f(0)\neq f(1).
\end{cases}
\end{eqnarray*}
In the last passage, we have simplified the state of the first qubit without specifying exactly what is the sign of the amplitude. This sign has no effect on the output of the algorithm.
After the fourth step, the state of the qubits is
\begin{eqnarray*}
\ket{\psi_3} &=& (H\otimes H)\ket{\psi_2}\\
&=& \begin{cases}
\pm\,\ket{0}\otimes\ket{1}, & \text{if }f(0)=f(1),\\
\pm\,\ket{1}\otimes\ket{1}, & \text{if }f(0)\neq f(1),
\end{cases}
\end{eqnarray*}
where we have used the fact that $H\ket{+}=\ket{0}$ and $H\ket{-}=\ket{1}$, which can be deduced using $\ket{+}=H\ket{0}$, $\ket{-}=H\ket{1}$, and $H^2=I$. The state $\ket{\psi_3}$ can be further simplified to
\begin{eqnarray*}
\ket{\psi_3} &=& \pm \,\ket{f(0)\oplus f(1)}\otimes\ket{1}.
\end{eqnarray*}
After the fifth step, the measurement of the first qubit in the computational basis returns $f(0)\oplus f(1)$, which is $0$ if $f$ is constant and 1 if $f$ is balanced, concluding the analysis of the algorithm.

Note that the sign of $\ket{\psi_3}$ has no influence on the measurement result because the probability of obtaining $f(0)\oplus f(1)$ is $|\pm 1|^2=1.$ It is easy to check, if relevant, that this sign is $(-1)^{f(0)}$.

\section{Analysis of the entanglement}

Summarizing the set of states of the qubits after each step, we have
\begin{align*}
\ket{\psi_0}&=\ket{0}\otimes\ket{1},\\
\ket{\psi_1}&=\ket{+}\otimes\ket{-},\\
\ket{\psi_2}&=\begin{cases}
\pm\,\ket{+}\otimes\ket{-}, & \text{if }f(0)=f(1),\\
\pm\,\ket{-}\otimes\ket{-}, & \text{if }f(0)\neq f(1),
\end{cases}\\
\ket{\psi_3}&=\pm \,\ket{f(0)\oplus f(1)}\otimes\ket{1}.
\end{align*}
Regardless of $f$, the qubits are unentangled during the execution of Deutsch's algorithm because each state $\ket{\psi_i}$ for $i$ from 1 to 3 is a tensor product of single-qubit pure states. The qubits remain unentangled throughout the algorithm. This means that Deutsch's algorithm is faster than classical algorithms making use of quantum parallelism only.

There is an alternate route to analyze entanglement. No two-qubit operator $A\otimes B$ creates or destroys entanglement. In Deutsch's algorithm, only CNOT can create entanglement and it is used at most one time. Then, for each function $f$ we can simplify the whole algorithm in order to obtain only one final operator, whose input is $\ket{01}$. If we simplify the algorithm for each $f$, we obtain
\begin{align*}
(H\otimes H)\, U_{f_0} (H\otimes H) &= (H\otimes H)\cdot (H\otimes H) = I\otimes I  ,\\
(H\otimes H)\, U_{f_1} (H\otimes H) &= (H\otimes H)\cdot\text{CNOT}\cdot (H\otimes H) = \text{CNOT}_{10}  ,\\
(H\otimes H)\, U_{f_2} (H\otimes H) &= (Z\otimes I)\cdot \text{CNOT}_{10}\cdot (Z\otimes I),\\
(H\otimes H)\, U_{f_3} (H\otimes H) &=  I\otimes Z,
\end{align*}
where the control of $\text{CNOT}_{10}$ is the second qubit and the target is the first qubit. We conclude right away that no entanglement is created when $f$ is $f_0$ or $f_3$. For $f_1$ and $f_2$, the CNOT's control qubit is set to 1 because the input is $\ket{01}$, in these cases, no entanglement is created because CNOT can create entanglement only if the control qubit is in a superposition state, although this condition alone is not sufficient.

We can take advantage of the simplified version of the algorithm to check the output again and reanalyze the algorithm. The states of the qubits just before the measurement are
\begin{align*}
\ket{\psi_3}\big|_{f_0} &= (I\otimes I)\ket{0,1}=\ket{0,1},\\
\ket{\psi_3}\big|_{f_1} &= \text{CNOT}_{10}\ket{0,1} =\ket{1,1},\\
\ket{\psi_3}\big|_{f_2} &= (Z\otimes I)\cdot \text{CNOT}_{10}\cdot (Z\otimes I)\ket{0,1} =-\ket{1,1},\\
\ket{\psi_3}\big|_{f_3} &= (I\otimes Z)\ket{0,1} =-\ket{0,1}.
\end{align*}
The output of a measurement of the first qubit is 0 for $f_0$ and $f_3$, which are the constant functions, and 1 for $f_1$ and $f_2$, which are the balanced functions. We can also check that the sign is $(-1)^{f(0)}$ because
$(-1)^{f_0(0)}=(-1)^{f_1(0)}=+1$ and
$(-1)^{f_2(0)}=(-1)^{f_3(0)}=-1$.

\section{Who implements the oracle?}

Deutsch's algorithm is said to be more efficient than its classical counterpart in a limited context known as  ``query complexity''~\cite{KLM07}. In this context, we must compare the number of evaluations of $f$ in the classical case with the number of times $U_f$ is used in the quantum case. This is a rule of the game. Deutsch's algorithm applies $U_f$ only once, whereas the classical algorithm needs to evaluate $f$ twice. Therefore, Deutsch's is faster.

The analysis of Deutsch's algorithm shows that $f$ is evaluated at two distinct points in the domain simultaneously. This is not possible using a sequential classical algorithm. However, this can be carried out on a classical computer with parallel processors if the number of simultaneous threads does not scale up. Since Deutsch's algorithm uses a fixed number of qubits, we cannot scale it up and it is not possible to perform an asymptotic analysis. In terms of time complexity, the quantum version has no gain. The importance of Deutsch's algorithm lies in the fact that it stimulated the search for generalizations, such as the Deutsch-Jozsa, Bernstein-Vazirani, and Simon algorithms, which inspired Shor in the development of a quantum algorithm for factoring composite integers and a quantum algorithm for calculating discrete logarithms.

Lastly, it is important to note that implementing the oracle is not our responsibility. It is someone else's job. Without this understanding, we face a contradiction—we need to know the answer before even starting to implement the algorithm that will find the answer. We are allowed to query function $f$ implemented by someone else without looking at its implementation details. We are given what is called a \textit{black box} quantum computer with $U_f$ already implemented, which we can use more than once, and we may add new gates, but cannot examine the inner workings of the black box that implements $U_f$.

\section{Economical circuit of Deutsch's algorithm}

Deutsch's algorithm can be implemented with only one qubit in the following way:
\vspace* {3pt}
\[
\Qcircuit @C=2.0em @R=1.6em {
\lstick{\ket{0}}        & \gate{H} & \gate{\,\,\,{U'_f}^{\mbox{}}\,\,\,} & \gate{H} & \meter  & \rstick{f(0)\oplus f(1),} \cw
}\vspace{1pt}
\]
where
\[
U'_f = \sum_{x=0}^1 (-1)^{f(x)}\ket{x}\bra{x}.
\]
From Proposition~\ref{prop:U_fxminus}, we see that the second qubit is not necessary for the algorithm, although it is necessary if we want to obtain $f(x)$ in the classical sense, as shown below. By expanding the sum, we obtain
\[
U'_f = \left[\begin{array}{cc}
(-1)^{f(0)} & 0 \\
0 & (-1)^{f(1)}
\end{array}\right].
\]
Then,
\[
U'_f=\begin{cases} \pm I, \text{ if }f(0)=f(1), \\ \pm Z, \text{ if }f(0)\neq f(1). \end{cases}
\]

The analysis of the algorithm reduces to calculating
\begin{eqnarray*}
HU'_fH \ket{0}= \begin{cases}
\pm\,I\ket{0}, & \text{if }f(0)=f(1),\\
\pm\,X\ket{0}, & \text{if }f(0)\neq f(1).
\end{cases}
\end{eqnarray*}
Using that $f(0)\oplus f(1)=0$ if $f(0)=f(1)$, and $f(0)\oplus f(1)=1$ if $f(0)\neq f(1)$, we obtain
\[
HU'_fH \ket{0}= \pm \ket{f(0)\oplus f(1)}.
\]
After a measurement, the output is $f(0)\oplus f(1)$ with probability $|\pm 1|^2=1$.

Now it is straightforward to check that there is no entanglement in Deutsch's algorithm because a qubit cannot entangle with itself. Entanglement requires at least two qubits.

\subsection*{Consulting the oracle}

In the non-economical circuit, if the input to the first qubit of $U_f$ is $\ket{x}$, $U_f$ returns $f(x)$ when we perform a measurement of the second qubit. In the economical circuit, we obtain $f(x)$ in the classical sense using the circuit\vspace* {3pt}
\[
\Qcircuit @C=2.0em @R=1.6em {
\lstick{\ket{x}}        & \qw & \gate{U'_f}        & \qw&   \rstick{\ket{x}} \qw \\
\lstick{\ket{0}}        & \gate{H} & \ctrl{-1} & \gate{H} & \meter  & \rstick{f(x).} \cw
}\vspace{5pt}
\]
Indeed, the steps of this circuit are\vspace{1pt}
\begin{align*}
\ket{x}\ket{0} \xrightarrow[\text{}]{I\otimes H} \ket{x}\ket{+} \xrightarrow[\text{}]{\fbox{$U'_f$}-\bullet} \frac{\ket{x}\ket{0}+(-1)^{f(x)}\ket{x}\ket{1}}{\sqrt 2} \xrightarrow[\text{}]{I\otimes H} \ket{x}\ket{f(x)}.
\end{align*}
To calculate the last step, we use that $f(x)$ is either 0 or 1 for a fixed $x$. First, suppose that $f(x)=0$; then $I\otimes H$ is applied to $\ket{x}\ket{+}$, resulting in $\ket{x}\ket{f(x)}$. Second, suppose that $f(x)=1$; then $I\otimes H$ is applied to $\ket{x}\ket{-}$, resulting in $\ket{x}\ket{f(x)}$. After measuring the second qubit in the computational basis, we obtain $f(x)$ with probability 1.

\chapter{Deutsch-Jozsa Algorithm}\label{chap:DJ}

The Deutsch-Jozsa algorithm is a deterministic quantum algorithm, a generalization of Deutsch's algorithm, and the first example that is exponentially faster than its equivalent classical deterministic algorithm. It was published in 1992~\cite{DJ92} and revisited in 1998~\cite{CEMM98}. Many books~\cite{KLM07,LR14,MM04,Mer07,NC00,Wil08} and papers~\cite{KMR06,MJM15,QZ18,QZ20} have reviewed and generalized this algorithm.

\section{Problem formulation}

Let $f:\{0,1\}^n \longrightarrow \{0,1\}$ be a $n$-bit Boolean function, $n\ge 2$, with the following property: $f$ is either balanced or constant. A Boolean function is balanced if the inverse image of point 0 has cardinality $2^{n-1}$, and it is constant if the inverse image of point 0 has cardinality 0 or $2^{n}$. Suppose we are able to evaluate this function at any point in the domain; however, we do not have access to the details of the implementation of $f$, that is, we are given a \textit{black box} quantum computer with $f$ already implemented. The Deutsch-Jozsa algorithm solves the following problem: Determine whether $f$ is balanced or constant using this black box quantum computer.

There are only two constant functions, namely $f(x)=0$ and $f(x)=1$ for every $n$-bit string $x$, but there are many balanced functions. The best classical deterministic algorithm that solves this problem, given a black box function $f$ that is either balanced or constant implemented on a $n$-bit classical computer, is the following: Evaluate $f$ at $2^{n-1}+1$ distinct points in its domain and check whether the output is always the same ($f$ is constant) or not ($f$ is balanced).

The Deutsch-Jozsa algorithm, on the other hand, is a deterministic quantum algorithm that uses a black box unitary operator $U_f$ only once. The action of $U_f$ on the computational basis is
\[
U_f\ket{x}\ket{j}=\ket{x}\ket{j\oplus f(x)},
\]
where $x\in\{0,1\}^n$ and $j\in\{0,1\}$. The qubits are split into two quantum registers with sizes $n$ and 1.\footnote{A quantum register is a set of qubits.} After creating a superposition of the vectors $\ket{x}$ for all $x$, the Deutsch-Jozsa algorithm is able to compute $f(x)$ for all $x$ with a single application of $U_f$ and, after quick post-processing, determine whether $f$ is constant or balanced.

In the next proposition, we show that $U_f$ is unitary. Then, since each entry of $U_f$ is either 0 or 1, $U_f$ is a $2^{n+1}$-dimensional permutation matrix.\footnote{A \textit{permutation matrix} is a square binary matrix such that each row and each column has exactly one entry equal to 1 and zeroes elsewhere.}
\begin{proposition} \label{prop:U_f_is_unitary}
$U_f$ is unitary for any $n$-bit Boolean
 function.
\end{proposition}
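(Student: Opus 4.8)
The plan is to verify unitarity directly by showing $U_f^\dagger U_f = I$ on the $2^{n+1}$-dimensional space, working from the bra–ket expansion of $U_f$ already introduced in Proposition~\ref{prop:U_fxminus},
\[
U_f=\sum_{x\in\{0,1\}^n}\sum_{j=0}^1 \ket{x,\,j\oplus f(x)}\bra{x,j}.
\]
The conceptual heart is the observation that the index map $(x,j)\mapsto(x,\,j\oplus f(x))$ is a bijection of $\{0,1\}^n\times\{0,1\}$ onto itself; in fact it is an involution, since $\big(j\oplus f(x)\big)\oplus f(x)=j$. Hence $U_f$ merely permutes the vectors of the computational basis, so it carries an orthonormal basis to an orthonormal basis, which is precisely the defining (norm-preserving) property of a unitary operator.

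To make this rigorous via the matrix product, I would first take the Hermitian transpose, $U_f^\dagger=\sum_{x,j}\ket{x,j}\bra{x,\,j\oplus f(x)}$, and then form
\[
U_f^\dagger U_f=\sum_{x,j}\sum_{x',j'}\ket{x,j}\,\braket{x,\,j\oplus f(x)}{x',\,j'\oplus f(x')}\,\bra{x',j'}.
\]
The key step is to evaluate the inner product using orthonormality of the computational basis: it equals $\delta_{x,x'}\,\delta_{j\oplus f(x),\,j'\oplus f(x')}$. Once the first delta forces $x=x'$, the arguments $f(x)$ and $f(x')$ coincide, and the XOR cancellation reduces the second delta to $\delta_{j,j'}$. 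The double sum then collapses to $\sum_{x,j}\ket{x,j}\bra{x,j}=I$, establishing $U_f^\dagger U_f=I$. Since each entry of $U_f$ is $0$ or $1$, this simultaneously confirms the remark that $U_f$ is a permutation matrix, for which unitarity is automatic.

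I do not expect a genuine obstacle here: the argument is essentially bookkeeping. The only point demanding care is the cancellation $f(x)\oplus f(x)=0$, which is exactly what makes the map an involution and forces $j=j'$ after $x=x'$ is fixed; getting the index manipulation inside the Kronecker deltas precisely right is the sole place an error could creep in. A reader preferring to avoid the explicit sums could instead argue purely at the level of basis vectors: check that $U_f$ is injective on the $2^{n+1}$ computational basis states (if $\ket{x,\,j\oplus f(x)}=\ket{x',\,j'\oplus f(x')}$ then $x=x'$ and hence $j=j'$), conclude it is a bijection of this finite orthonormal basis, and invoke that a linear map permuting an orthonormal basis preserves all inner products and is therefore unitary.
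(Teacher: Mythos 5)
Your proof is correct and follows essentially the same route as the paper: both compute the matrix elements of $U_f^\dagger U_f$, use orthonormality to obtain $\delta_{xx'}$, and then exploit the XOR cancellation $\braket{j'\oplus f(x)}{j\oplus f(x)}=\delta_{jj'}$ once $x=x'$ is forced. The additional permutation-of-the-basis remark you make is consistent with the paper's own observation that $U_f$ is a permutation matrix.
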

\begin{proof}
Let us show that $U_f^\dagger U_f=I$. Using the definition of $U_f$, we have
\begin{align*}
\bra{x'}\bra{j'}U_f^\dagger U_f\ket{x}\ket{j} &= \braket{x'}{x}\,\braket{j'\oplus f(x')}{j\oplus f(x)}\\
&= \delta_{xx'}\,\braket{j'\oplus f(x')}{j\oplus f(x)}.
\end{align*}
Using that $\delta_{xx'}\neq 0$ only if $x=x'$, we have $\delta_{xx'}\braket{j'\oplus f(x')}{j\oplus f(x)}=\delta_{xx'}\braket{j'}{j}$. Then
\begin{align*}
\bra{x'}\bra{j'}U_f^\dagger U_f\ket{x}\ket{j} &= \delta_{xx'}\delta_{jj'}.
\end{align*}
Since $j,j'$ are arbitrary bits and $x,x'$ are arbitrary $n$-bit strings, the proof is complete.
\end{proof}

\section{The algorithm}

\begin{algorithm}[!ht]
\caption {Deutsch-Jozsa algorithm} \label{algo_DeuJoz}
\KwIn{A black box $U_f$ implementing a $n$-bit Boolean function $f:\{0,1\}^n\longrightarrow \{0,1\}$, which is either balanced or constant.}
\KwOut{$0$ if $f$ is constant; otherwise, $f$ is balanced.} \BlankLine
Prepare the initial state $\ket{0}^{\otimes n}\ket{1}$\;
Apply $H^{\otimes(n+1)}$\;
Apply $U_f$\;
Apply $H^{\otimes(n+1)}$\;
Measure the first register in the computational basis.
\end{algorithm}

The Deutsch-Jozsa algorithm is described in Algorithm~\ref{algo_DeuJoz} and the $(n+1)$-qubit circuit is
\vspace* {3pt}
\[
\Qcircuit @C=2.0em @R=1.6em {
\lstick{\ket{0}}        & \gate{H} & \multigate{3}{\,\,\,U_f\,\,\,} & \gate{H} & \meter  & \rstick{0 \text{ or } 1} \cw \\
\lstick{\vdots}         & \vdots   &                                & \vdots   &  \vdots  & \rstick{}  \\
\lstick{\ket{0}}        & \gate{H} & \ghost{\,\,\,U_f\,\,\,}        & \gate{H} & \meter  & \rstick{0 \text{ or } 1} \cw \\
\lstick{\ket{1}}        & \gate{H} & \ghost{\,\,\,U_f\,\,\,}        & \gate{H} & \rstick{\ket{1}.} \qw \\
\ustick{\hspace{0.7cm}\ket{\psi_0}} & \ustick{\hspace{1.4cm}\ket{\psi_1}}& \ustick{\hspace{2.0cm}\ket{\psi_2}} & \ustick{\hspace{1.4cm}\ket{\psi_3}}
}\vspace{0pt}
\]
We easily check that the circuit corresponds exactly to the steps of Algorithm~\ref{algo_DeuJoz}. The output is the $n$-bit string $0$ if $f$ is constant, and different from $0$ if $f$ is balanced. Note that the last Hadamard gate applied to the last qubit can be eliminated without affecting the algorithm. This gate is included here because the central part of the circuit is symmetric and the analysis of the algorithm is neater with it than without it. The states at the bottom of the circuit are used in the analysis of the algorithm.

\section{Analysis of the algorithm}

After the first step, the state of the qubits is
\[
\ket{\psi_0}=\ket{0}^{\otimes n}\otimes\ket{1}.
\]
After the second step, the state of the qubits is
\begin{eqnarray*}
\ket{\psi_1} &=& (H\ket{0})^{\otimes n}\otimes(H\ket{1})\\
&=&\frac{1}{\sqrt{2^n}}\sum_{x=0}^{2^n-1}\ket{x}\ket{-},
\end{eqnarray*}
where $\ket{-}=(\ket{0}-\ket{1})/\sqrt 2$. After the third step, the state of the qubits is
\begin{eqnarray*}
\ket{\psi_2} &=& U_f\ket{\psi_1}\\
&=&\frac{1}{\sqrt{2^n}}\sum_{x=0}^{2^n-1}U_f\big(\ket{x}\ket{-}\big).
\end{eqnarray*}
To simplify $\ket{\psi_2}$, we use Proposition~\ref{prop:U_fxminus} on Page~\pageref{prop:U_fxminus}, which states that
\[
U_f\big(\ket{x}\ket{-}\big) \,=\, (-1)^{f(x)}\ket{x}\ket{-}.
\]
We obtain
\[
\ket{\psi_2} \,=\,\frac{1}{\sqrt{2^n}}\sum_{x=0}^{2^n-1} (-1)^{f(x)}\ket{x}\ket{-}.
\]
After the fourth step, the state of the qubits is
\begin{eqnarray*}
\ket{\psi_3} &=& H^{\otimes (n+1)}\ket{\psi_2}\\
&=&  \frac{1}{\sqrt{2^n}}\sum_{x=0}^{2^n-1} (-1)^{f(x)} \big(H^{\otimes n}\ket{x}\big)\otimes \big(H\ket{-}\big).
\end{eqnarray*}
To simplify $\ket{\psi_3}$, let us show the following proposition:
\begin{proposition}\label{prop:Hnx}
Let $x\in\{0,1\}^n$ be a $n$-bit string $x_0\cdots x_{n-1}$.
Then
\[
H^{\otimes n}\ket{x} \,=\, \frac{1}{\sqrt{2^n}}\sum_{y=0}^{2^n-1}(-1)^{x\cdot y}\ket{y},
\]
where $x\cdot y=x_{0}y_{0}+  \cdots + x_{{n-1}}y_{{n-1}}\mod 2$.
\end{proposition}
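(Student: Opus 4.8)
The plan is to reduce the multi-qubit statement to the single-qubit Hadamard identity already recorded in Chapter~\ref{chap:2} and then invoke the multiplicativity of the Kronecker product. Recall that for a single bit the text established $H\ket{\ell}=\frac{\ket{0}+(-1)^\ell\ket{1}}{\sqrt 2}$, which I would first rewrite in the summation form $H\ket{\ell}=\frac{1}{\sqrt 2}\sum_{y\in\{0,1\}}(-1)^{\ell y}\ket{y}$. This is exactly the $n=1$ case of the proposition and will serve as the building block.

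First I would factor both the input and the operator. Since $\ket{x}=\ket{x_0}\otimes\cdots\otimes\ket{x_{n-1}}$ and $H^{\otimes n}=H\otimes\cdots\otimes H$, the rule $(A\otimes B)(\ket{\psi_1}\otimes\ket{\psi_2})=(A\ket{\psi_1})\otimes(B\ket{\psi_2})$ from Chapter~\ref{chap:2}, applied repeatedly, gives
\[
H^{\otimes n}\ket{x} \,=\, \bigotimes_{k=0}^{n-1} H\ket{x_k} \,=\, \bigotimes_{k=0}^{n-1}\left(\frac{1}{\sqrt 2}\sum_{y_k\in\{0,1\}}(-1)^{x_k y_k}\ket{y_k}\right).
\]

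Next I would expand this tensor product of $n$ sums by distributivity. Pulling out the $n$ factors of $1/\sqrt 2$ produces the prefactor $1/\sqrt{2^n}$, and the generic term is indexed by a choice of bits $(y_0,\dots,y_{n-1})$, carrying the sign $\prod_{k}(-1)^{x_k y_k}=(-1)^{x_0 y_0+\cdots+x_{n-1}y_{n-1}}$ and the basis vector $\ket{y_0}\otimes\cdots\otimes\ket{y_{n-1}}$. Two bookkeeping identifications finish the proof: $\ket{y_0}\otimes\cdots\otimes\ket{y_{n-1}}=\ket{y}$ where $y$ is the $n$-bit string $y_0\cdots y_{n-1}$ (so summing over all tuples $(y_0,\dots,y_{n-1})$ is the same as summing over $y=0,\dots,2^n-1$), and the exponent is precisely $x\cdot y$, which need only be read modulo $2$ since $(-1)^{(\cdot)}$ depends only on the parity.

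The step deserving the most care is this final consolidation: matching the $n$-fold sum over individual bits to the single sum over the integers $0$ to $2^n-1$ and collapsing the product of signs into one factor $(-1)^{x\cdot y}$; everything else is just the single-qubit formula together with the tensor-product rule. A fully equivalent alternative is induction on $n$, with the base case $n=1$ being the single-qubit identity and the inductive step writing $H^{\otimes(n+1)}=H^{\otimes n}\otimes H$, applying the hypothesis to the first $n$ qubits and the base case to the last, then merging the two sums; the same index-bookkeeping is the only nontrivial point there as well.
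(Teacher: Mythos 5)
Your proposal is correct and follows essentially the same route as the paper's own proof: factor $H^{\otimes n}\ket{x}$ into single-qubit factors $H\ket{x_k}=\frac{1}{\sqrt 2}\sum_{y_k}(-1)^{x_k y_k}\ket{y_k}$, expand the tensor product of sums, and re-index the $n$-fold binary sum as a single sum over $y=0,\dots,2^n-1$ with sign $(-1)^{x\cdot y}$. The only cosmetic difference is that you also mention the induction variant, which the paper does not use.
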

\begin{proof}
Using that $x=(x_0\cdots x_{n-1})_2$,
we obtain
\begin{eqnarray*}
H^{\otimes n}\ket{x} &=& \big(H\ket{x_0}\big)\otimes \cdots \otimes \big(H\ket{x_{n-1}}\big) \\
&=&  \Big(\frac{1}{\sqrt{2}}\sum_{y_0=0}^{1} (-1)^{x_0y_0} \ket{y_0}\Big)\otimes \cdots \otimes \Big(\frac{1}{\sqrt{2}}\sum_{y_{n-1}=0}^{1} (-1)^{x_{n-1}y_{n-1}} \ket{y_{n-1}}\Big).
\end{eqnarray*}
Putting all sums at the beginning, we obtain
\begin{eqnarray*}
H^{\otimes n}\ket{x} &=& \frac{1}{\sqrt{2^n}}\sum_{y_0=0}^{1}\,\,\cdots \sum_{y_{n-1}=0}^{1} (-1)^{\left(x_0y_0+\cdots+x_{n-1}y_{n-1}\right)} \ket{y_0}\otimes \cdots \otimes \ket{y_{n-1}}.
\end{eqnarray*}
Using the definition of $x\cdot y$ and converting to the decimal notation, we complete the proof.
\end{proof}

Using the proposition above, $\ket{\psi_3}$ simplifies to
\begin{equation*}
\ket{\psi_3}\,=\,\frac{1}{{2^n}}\sum_{y=0}^{2^n-1}\left(\sum_{x=0}^{2^n-1}(-1)^{x\cdot y+f(x)}\right)\ket{y}\otimes\ket{1}.
\end{equation*}
The amplitude of state $\ket{0}\ket{1}$ (0 in decimal) is
\begin{equation*}
\frac{1}{{2^n}}\sum_{x=0}^{2^n-1}(-1)^{f(x)}.
\end{equation*}
The probability that a measurement of the first register returns $y=0$ (in decimal) is
\begin{equation*}
p(0)\,=\,\left|\frac{1}{{2^n}}\sum_{x=0}^{2^n-1}(-1)^{f(x)}\right|^2.
\end{equation*}
If $f$ is constant, $p(0)=1$ and we know with certainty that the output is $y=0$. If $f$ is balanced, $p(0)=0$ and we know with certainty that the output is $y\neq 0$.

\section{Analysis of the entanglement}

There is no entanglement between the registers because each of the states $\ket{\psi_0}$ through $\ket{\psi_3}$ can be written either as $\ket{\psi}\otimes\ket{1}$ or $\ket{\psi}\otimes\ket{-}$, for some state $\ket{\psi}$. We have to check the first register only.
From the circuit of the algorithm, we realize that the only operator that creates or destroys entanglement is $U_f$. Then, it is enough to analyze
\[
\ket{\psi} \,=\,\frac{1}{\sqrt{2^n}}\sum_{x=0}^{2^n-1} (-1)^{f(x)}\ket{x},
\]
which is the state of the first register after applying $U_f$. We ask ourselves whether there are $a_i$ and $b_i$ so that
\[
\ket{\psi} \,=\,\frac{a_0\ket{0}+b_0\ket{1}}{\sqrt{2}}\otimes\cdots\otimes\frac{a_{n-1}\ket{0}+b_{n-1}\ket{1}}{\sqrt{2}},
\]
where each pair $(a_i,b_i)$ must obey $|a_i|^2+|b_i|^2=2$.  This is the only way of having no entanglement at all.
Equivalently, we ask whether the system of equations
\begin{align*}
a_0...a_{n-2}a_{n-1} &= (-1)^{f(0...00)}\\
a_0...a_{n-2}b_{n-1} &= (-1)^{f(0...01)}\\
a_0...b_{n-2}a_{n-1} &= (-1)^{f(0...10)}\\
a_0...b_{n-2}b_{n-1} &= (-1)^{f(0...11)}\\
&\vdots\\
b_0...b_{n-2}b_{n-1} &= (-1)^{f(1...11)}
\end{align*}
admits a solution or not. It is straightforward to check that $a_i\neq 0$ and $b_i\neq 0$ for all $i$. Besides, we need not worry about phases of $a_i$. In fact, without loss of generality, we consider $a_i$ real and positive because a global factor can be discarded. Then, by selecting equation
\begin{align*}
b_0a_1a_2...a_{n-2}a_{n-1} &= (-1)^{f(1...00)},
\end{align*}
we conclude that $b_0$ is also real. The same applies to the other $b_i$. Now, let us show that $a_i= 1$ and $b_i=\pm 1$. Let us start with $a_0$ and $b_0$ by selecting the following equations:
\begin{align*}
a_0a_1a_2...a_{n-2}a_{n-1} &= (-1)^{f(0...00)},\\
b_0a_1a_2...a_{n-2}a_{n-1} &= (-1)^{f(1...00)}.
\end{align*}
Dividing them, we obtain $a_0\pm b_0 =0$. This result together with the constraint $a_0^2+b_0^2=2$ implies that $a_0=1$ and $b_0=\pm 1$. The same applies to the other $a_i$ and $b_i$. Now, counting the number of unentangled states $\ket{\psi}$ with $a_i=1$ and $b_i=\pm 1$, we obtain at most $2^{n}$, up to a global sign. Or, at most $2^{n+1}$.

There are only two constant functions: $f(x)=0$ and $f(x)=1$ for all $x$, which correspond to $\ket{\psi}=\big(H\ket{0}\big)^{\otimes n}$ and $\ket{\psi}=-\big(H\ket{0}\big)^{\otimes n}$, respectively. In both cases, there is no entanglement. Next, let us count the number of balanced functions. Let $S=\{0,1\}^n$. The exact number of balanced functions is the number of subsets of $S$ with cardinality $2^{n-1}$ because as soon as we find such a subset $S'$, we define a balanced function by setting $f(x)=0$ if $x\in S'$ and $f(x)=1$ if $x\in S\setminus S'$. When we cover all such subsets, we have obtained all balanced functions. Since the domain of $f$ has cardinality $|S|=2^n$, the number of balanced functions is $\binom{2^n}{2^{n-1}}$.\footnote{An alternative way of counting the number of balanced functions is by considering the number of permutations of a list with $2^{n-1}$ zeros and $2^{n-1}$ ones, which yields $2^n!/(2^{n-1}!)^2$.}

The number of balanced functions grows faster than the number of states with no entanglement at all. Indeed, using the asymptotic approximation\footnote{\url{https://en.wikipedia.org/wiki/Binomial_coefficient}}
\[
\binom{2p}{p}\approx \frac{2^{2p}}{\sqrt{\pi p}}
\]
we obtain
\[
\binom{2^n}{2^{n-1}}\approx \frac{\sqrt{2}}{\sqrt{\pi}}\frac{2^{2^n}}{\sqrt{2^n}}.
\]
We conclude that when $n$ increases, there are more and more balanced functions $f$ with $U_f$ creating entanglement.

The discussion above shows that there exists an $n_0\ge 2$, such that for all $n\ge n_0$, there are balanced functions $f$ for which $U_f$ creates entanglement. Since, $\binom{2^n}{2^{n-1}}>2^{n+1}$ for $n\ge 3$, we can take $n_0=3$. The only remaining case that may have no entanglement is $n=2$. For $n=2$, there are 6 balanced functions, whose truth tables are described in Fig.~\ref{fig:truthtablesf0-f5} with function names $f_0$ to $f_5$.

\begin{figure}[!ht]\label{fig:truthtablesf0-f5}
\begin{center}
\begin{tabular}{c|c|c|c|c|c|c}
$x_0\,\,x_1$ & $f_0(x)$ & $f_1(x)$ & $f_2(x)$ & $f_3(x)$ & $f_4(x)$ & $f_5(x)$\\
\hline
0\,\,\,\,0 & 0 & 0 & 0 & 1 & 1 & 1\\
0\,\,\,\,1 & 0 & 1 & 1 & 0 & 0 & 1\\
1\,\,\,\,0 & 1 & 0 & 1 & 0 & 1 & 0\\
1\,\,\,\,1 & 1 & 1 & 0 & 1 & 0 & 0
\end{tabular}\vspace{-13pt}
\end{center}
\caption{Truth tables of all 2-bit balanced functions.}
\end{figure}

It is enough to analyze $f_0$ to $f_2$ because the other functions complement those ones. For instance, $f_5$ is the complement of $f_0$, which means that state $\ket{\psi}$ created by $U_{f_5}$ is equal to the state created by $U_{f_0}$ up to a global phase. States $\ket{\psi}$ that correspond to $f_0$ to $f_2$ (without normalization) are
\begin{align*}
\ket{00}+\ket{01}-\ket{10}-\ket{11} &= \big(\ket{0}-\ket{1}\big)\otimes\big(\ket{0}+\ket{1}\big),\\
\ket{00}-\ket{01}+\ket{10}-\ket{11} &= \big(\ket{0}+\ket{1}\big)\otimes\big(\ket{0}-\ket{1}\big),\\
\ket{00}-\ket{01}-\ket{10}+\ket{11} &= \big(\ket{0}-\ket{1}\big)\otimes\big(\ket{0}-\ket{1}\big),
\end{align*}
respectively. All those states are unentangled. We conclude that there is no entanglement in the Deutsch-Jozsa algorithm when $n=2$.

\begin{exercise} \label{exe:DJ-oracles} \mbox{}
\begin{enumerate}
\item[(a)] Show that the Boolean functions $f(x) = b\oplus  (s \cdot x)$, where $s$ is an $n$-bit string and $b$ is a bit, and
 \[
s \cdot x = s_0 x_0\oplus \ldots \oplus  s_{n-1} x_{n-1}= s_0 x_0 + \ldots + s_{n-1} x_{n-1} \mod 2,
\]
are either constant or balanced.
\item[(b)] Show that all Boolean functions described in Fig.~\ref{fig:truthtablesf0-f5} can be written as $f(x) = b\oplus (s \cdot x)$ for some $2$-bit string $s$ and some bit $b$.
\item[(c)] Find a 3-bit balanced function that cannot be written as $f(x) = b\oplus (s \cdot x) $ for any $3$-bit string $s$ and any bit $b$, and show that in this case there is entanglement in the Deutsch-Jozsa algorithm between at least two qubits of the first register.
\item[(d)] Show that there is no entanglement at all in the Deutsch-Jozsa algorithm if and only if the oracle is $f(x) = b\oplus (s \cdot x) $ for some $s \in \{0,1\}^n$ and $b \in \{0,1\}$.
\end{enumerate}
\end{exercise}

\section{Implementing the oracle}

In general, it is not up to us to implement the oracle, unless there is a formula for $f$ that does not reveal the solution beforehand. Without this understanding, we face a contradiction---we have to know the answer before starting to implement the algorithm that will find the answer. We are allowed to evaluate the function $f$ without looking at its implementation details. We are given what is called a \textit{black box} quantum computer with $U_f$ already implemented, which we can use and add new gates, but cannot see inside. In the classical case, we have to count the number of evaluations of $f$. In the quantum case, we have to count the number of applications of $U_f$. This is how we calculate the query complexity of oracle-based algorithms. Note that it does not matter whether the evaluation of $f$ is efficient or not.

With this understanding, let us show how to implement the oracle $U_f$ for the following entangling balanced function: $f(x)=0$ if $x\in\{000,010,100,101\}$ and $f(x)=1$ if $x\in\{001,011,110,111\}$. Using the disjunctive normal form, we add to the circuit one multi-controlled NOT gate for each point in $\{001,011,110,111\}$, as follows
\[
\Qcircuit @C=2.3em @R=1.5em {
\lstick{\ket{x_0}}    & \ctrlo{1} & \ctrlo{1} & \ctrl{1} & \ctrl{1} &  \rstick{\ket{x_0}} \qw \\
\lstick{\ket{x_1}}    & \ctrlo{1} & \ctrl{1} & \ctrl{1} & \ctrl{1} &  \rstick{\ket{x_1}} \qw \\
\lstick{\ket{x_2}}    & \ctrl{1} & \ctrl{1} & \ctrlo{1} & \ctrl{1} & \rstick{\ket{x_2}} \qw \\
\lstick{\ket{0}}  & \targ  & \targ  & \targ  & \targ  &  \rstick{\ket{f(x)}.}  \qw \\
}\vspace*{0.2cm}
\]
Since the first and second gates have opposite controls in the second qubit and identical controls in the first and third qubits (the third and fourth gates have a similar feature), this circuit can be simplified to
\[
\Qcircuit @C=2.3em @R=1.5em {
\lstick{\ket{x_0}}    & \ctrlo{2} & \ctrl{1} &  \rstick{\ket{x_0}} \qw \\
\lstick{\ket{x_1}}    & \qw & \ctrl{2} &  \rstick{\ket{x_1}} \qw \\
\lstick{\ket{x_2}}    & \ctrl{1} & \qw & \rstick{\ket{x_2}} \qw \\
\lstick{\ket{0}}  & \targ  & \targ  &  \rstick{\ket{f(x)}.}  \qw \\
}\vspace*{0.2cm}
\]

This example helps to show how to implement any balanced oracle using $n+1$ qubits. It is possible to implement the Deutsch-Jozsa algorithm with only $n$ qubits, but this discussion is postponed and fully addressed in Chapter~\ref{chap:Grover}.

\section{Final remarks}

There are efficient randomized classical algorithms that solve the Deutsch-Jozsa problem.
It is possible to find the correct solution with high probability by consulting the classical oracle a few times. More formally, consider the following randomized algorithm: (1)~Select uniformly at random $k\ge 2$ points $x_0$, ..., $x_{k-1}$ in the domain, repetitions are allowed, (2)~if $f(x_0)=\cdots=f(x_{k-1})$, return ``$f$ is constant''; otherwise, return ``$f$ is balanced''. This algorithm returns the output ``$f$ is balanced'' with certainty because as soon as we obtain two different values after evaluating $f$, we claim the promise that $f$ is either balanced or constant---it must be balanced. The probability that the output ``$f$ is constant'' is correct is $1-1/2^{k-1}$.\footnote{To calculate the probability that the output ``$f$ is constant'' is correct, we start by calculating the probability that ``$f$ is constant'' is wrong. The output ``$f$ is constant'' is wrong when $f$ is balanced and $f(x_0)=\cdots=f(x_{k-1})$. The probability that $f(x_0)=\cdots=f(x_{k-1})=0$ is $1/2^k$ because each evaluation is independent. Likewise, the probability that $f(x_0)=\cdots=f(x_{k-1})=1$ is $1/2^k$. Then, the probability that ``$f$ is constant'' is wrong is $2/2^k$ because all results equal to 0 and all results equal to 1 are mutually exclusive. Then, the probability that ``$f$ is constant'' is correct is $1-1/2^{k-1}$.} The success probability quickly tends to 1, for instance, take $k=10$, the success probability is at least 99.8\%.

The number of queries required to solve the Deutsch-Jozsa problem using deterministic classical algorithms is $\Omega(2^n)$, which means we need an exponential number of queries in the worst case. Using quantum algorithms or randomized classical algorithms with a small fixed error $\epsilon$, it is $O(1)$, which means the number of queries is fixed independently of the size of the problem.

We have shown that depending on the oracle, there is entanglement in the Deutsch-Jozsa algorithm. On the other hand, there is a restricted version of the Deutsch-Jozsa algorithm that has no entanglement at all. In this version, we have the promise that the oracle is $f(x) = (s \cdot x) \oplus b$ for some $s \in \{0,1\}^n$ and $b \in \{0,1\}$, and the goal is to determine whether $f$ is constant or balanced. Unfortunately, for this restricted version, there is an efficient deterministic classical algorithm.

\chapter{Bernstein-Vazirani Algorithm}\label{chap:BV}

The Bernstein-Vazirani algorithm was first presented at a conference in 1993~\cite{BV93}, and the full paper was published in 1997~\cite{BV97}. It was the first deterministic quantum algorithm to demonstrate a linear advantage over the best deterministic or randomized classical algorithm. The Bernstein-Vazirani algorithm exploits quantum parallelism but, interestingly, does not involve entanglement at all. This algorithm is described in several books~\cite{Mer07,NO08,RP11}.

\section{Problem formulation}

Let $s=s_0...s_{n-1}$ be an unknown $n$-bit string. Although we do not know $s$, we have at our disposal a Boolean function $f:\{0,1\}^n\longrightarrow\{0,1\}$ defined as
\[
f(x)=s\cdot x=s_0x_0+...+s_{n-1}x_{n-1}\mod 2,
\]
where $x_0,...,x_{n-1}$ are the bits of $x$. Note that $f$ is a linear function and each linear Boolean function is characterized by a specific hidden $s$. Our goal is to find $s$ by evaluating $f$ without knowing its implementation details. In quantum computing, the operator that implements this function is called an \textit{oracle}, because an oracle reveals $f(x)$ without showing $s$ explicitly, and $f(x)$ can be used to determine $s$. When we describe the circuit of the algorithm, $f$ is implemented by another person, because we do not know $s$. It is important to understand this, as it helps avoid the mistaken impression that we need to know the answer to find the answer, which is absurd.

In the classical version of this problem, we have to consult the \textit{classical oracle} at least $n$ times. In fact, we choose $x=10...0$ and ask the oracle what $f(10...0)$ is. The answer is $s_0$. Next we choose $x=010...0$ and ask the oracle what $f(010...0)$ is. The answer is $s_1$. The last query is $f(0...01)$, whose answer is $s_{n-1}$. This shows that we need to consult the oracle $n$ times and there is no way to reduce this number without introducing an error in the algorithm. In the quantum case, we consult the \textit{quantum oracle} only once, which allows us to find all bits of $s$, as described below.

In the quantum case, the function $f(x)=s\cdot x$ is implemented using the unitary operator $U_f$ of $n+1$ qubits, defined as
\[
U_f\ket{x}\ket{j}=\ket{x}\ket{j\oplus f(x)},
\]
where $x\in\{0,1\}^n$, $j$ is a bit, and $\oplus$ is the XOR operation or sum modulo 2. This operator uses two registers, with sizes $n$ and 1, respectively. We can use $U_f$ as many times as we wish. However, it is used only once in the Bernstein-Vazirani algorithm.

Summing up, the classical algorithm queries the classical oracle $n$ times using a $n$-bit classical computer. The quantum algorithm queries the quantum oracle only once using a $(n+1)$-qubit quantum computer. In the last Section of this Chapter, we show that the algorithm can be implemented on a $n$-qubit quantum computer.

\section{The algorithm}

\begin{algorithm} [!ht]
\caption{Bernstein-Vazirani algorithm} \label{algo_BV}
\KwIn{A Boolean function $f:\{0,1\}^n\longrightarrow \{0,1\}$ such that $f(x)=s\cdot x$.}
\KwOut{$s$ with probability equal to 1.} \BlankLine
Prepare the initial state $\ket{0}^{\otimes n}\ket{1}$\;
Apply $H^{\otimes (n+1)}$\;
Apply $U_f$\;
Apply $H^{\otimes (n+1)}$\;
Measure the first register in the computational basis.
\end{algorithm}

The Bernstein-Vazirani algorithm is described in Algorithm~\ref{algo_BV} and the circuit is \vspace{3pt}
\[
\Qcircuit @C=2.3em @R=0.9em {
\lstick{\ket{0}}        & \gate{H} & \multigate{4}{\,\,\,U_f\,\,\,} & \gate{H} & \meter  & \rstick{s_0} \cw \\
\lstick{\vdots \,\,\,}  & {\vdots} &                                &{\vdots}&{\vdots} & \rstick{\vdots}  \\
\lstick{}               &          &                                &&         & \rstick{}  \\
\lstick{\ket{0}}        & \gate{H} & \ghost{\,\,\,U_f\,\,\,}        & \gate{H}& \meter  & \rstick{s_{n-1}}  \cw\\
\lstick{\ket{1}}        & \gate{H} & \ghost{\,\,\,U_f\,\,\,}        & \gate{H}&   \rstick{\ket{1}.} \qw  \\
\lstick{}               &          &                                &&         & \rstick{}  \\
\ustick{\hspace{0.7cm}\ket{\psi_0}} & \ustick{\hspace{1.4cm}\ket{\psi_1}}& \ustick{\hspace{2.0cm}\ket{\psi_2}} & \ustick{\hspace{1.4cm}\ket{\psi_3}}
}\vspace{0.0cm}
\]
This circuit is identical to the circuit of the Deutsch-Jozsa algorithm, with the only difference being the function used in the Bernstein-Vazirani algorithm, which need not be either constant or balanced. Furthermore, in the Deutsch-Jozsa algorithm, we check whether all outputs are 0 to conclude that the function is constant; otherwise, it is balanced. In the Bernstein-Vazirani algorithm, each output is valuable for determining the unknown $s$. The states at the bottom of the circuit are used in the analysis of the algorithm.

\section{Analysis of the algorithm}

After the first step, the state of the qubits is
\[
\ket{\psi_0}=\ket{0}^{\otimes n}\ket{1}.
\]
After the second step, the state of the qubits is
\begin{eqnarray*}
\ket{\psi_1} &=& (H^{\otimes n}\ket{0}^{\otimes n})\otimes(H\ket{1})\\
&=& \frac{1}{\sqrt{2^n}}\sum_{x=0}^{2^n-1}\ket{x}\otimes \ket{-},
\end{eqnarray*}
where $\ket{-}=(\ket{0}-\ket{1})/\sqrt{2}$ and $x$ is written in the decimal notation. After the third step, the state of the qubits is
\begin{eqnarray*}
\ket{\psi_2} &=& U_f\ket{\psi_1}\\
&=& \frac{1}{\sqrt{2^n}}\sum_{x=0}^{2^n-1} U_f\ket{x}\otimes\ket{-}.
\end{eqnarray*}
To simplify $\ket{\psi_2}$, we use Proposition~\ref{prop:U_fxminus} on Page~\pageref{prop:U_fxminus}, which states that
\[
U_f\big(\ket{x}\ket{-}\big) \,=\, (-1)^{f(x)}\ket{x}\ket{-}.
\]
We obtain
\begin{eqnarray*}
\ket{\psi_2} &=& \frac{1}{\sqrt{2^n}}\sum_{x=0}^{2^n-1}(-1)^{s\cdot x}\ket{x}\otimes\ket{-}.
\end{eqnarray*}
After the fourth step, the state of the qubits is
\begin{eqnarray}\label{eq:BV-psi3}
\ket{\psi_3} &=& H^{\otimes (n+1)}\ket{\psi_2}\nonumber\\
&=&  H^{\otimes n}\left(\frac{1}{\sqrt{2^n}}\sum_{x=0}^{2^n-1}(-1)^{s\cdot x}\ket{x}\right)\otimes(H\ket{-}).
\end{eqnarray}
To simplify $\ket{\psi_3}$, we use Proposition~\ref{prop:Hnx} on Page~\pageref{prop:Hnx}, which states that for any $s=(s_0...s_{n-1})_2$
\[
H^{\otimes n}\ket{s} \,=\, \frac{1}{\sqrt{2^n}}\sum_{x=0}^{2^n-1}(-1)^{s\cdot x}\ket{x},
\]
where $s\cdot x=s_{0}x_{0}+ \cdots + s_{{n-1}}x_{{n-1}}\mod 2$. Then we use $H^2=I$ to obtain
\[
\ket{s}=H^{\otimes n}\left(\frac{1}{\sqrt{2^n}}\sum_{x=0}^{2^n-1}(-1)^{s\cdot x}\ket{x}\right).
\]
We replace this result in Eq.~(\ref{eq:BV-psi3}) to obtain
\begin{eqnarray*}
\ket{\psi_3} &=&  \ket{s}\otimes\ket{1}.
\end{eqnarray*}
The output of the measurement of the first register is $s_0,...,s_{n-1}$ with probability 1 because $\ket{s}=\ket{s_0}\otimes\cdots\otimes\ket{s_{n-1}}$.

$U_f$ is applied only once, so we say that a single query was made to the quantum oracle. Note that $U_f$ is applied to a superposition of all vectors of the computational basis, which means that $f$ is evaluated simultaneously at all points in the domain---all $n$-bit strings.  The result of this massive evaluation is a superposition state, which is often useless in many cases. However, in the Bernstein-Vazirani algorithm, applying $H^{\otimes n}$ to the first register at the end reveals $s$. Before the last step, the information about $s$ is encoded in relative phases. After applying $H^{\otimes n}$, interference converts those relative phases into the computational-basis state $\ket{s}$, which can be read out directly by measurement.

\section{Bernstein-Vazirani algorithm has no entanglement}

The analysis of entanglement initially follows the same approach used in the Deutsch-Jozsa algorithm.
Examining the algorithm's circuit, we realize that the only operator that creates or destroys entanglement is $U_f$. Thus, it is sufficient to analyze $\ket{\psi_2}$. There is entanglement in the Bernstein-Vazirani algorithm if and only if $\ket{\psi_2}$ is either totally or partially entangled. State $\ket{\psi_2}$ is given by
\[
\ket{\psi_2} \,=\,\frac{1}{\sqrt{2^n}}\sum_{x=0}^{2^n-1} (-1)^{s\cdot x}\ket{x}\otimes\ket{-}.
\]
Proposition~\ref{prop:Hnx} on Page~\pageref{prop:Hnx} states that for any $s=(s_0...s_{n-1})_2$
\[
H^{\otimes n}\ket{s} \,=\, \frac{1}{\sqrt{2^n}}\sum_{x=0}^{2^n-1}(-1)^{s\cdot x}\ket{x}.
\]
Then $\ket{\psi_2}$ can be written as
\[
\ket{\psi_2} \,=\,\big(H^{\otimes n}\ket{s}\big)\otimes\ket{-},
\]
which can be fully factorized as
\[
\ket{\psi_2} \,=\,\big(H\ket{s_0}\big)\otimes\cdots\otimes \big(H\ket{s_{n-1}}\big)\otimes\ket{-}.
\]
Note that $\ket{\psi_2}$ has no entanglement at all because it is the Kronecker product of single-qubit pure states~\cite{Mey00,Duetal01}.

\section{Circuit of the oracle}

In oracle-based algorithms, the oracle is not implemented by us---it is implemented by someone else. However, it is important to know how it is done in order to understand the whole process. Let us show an example with $n=4$ and $s=1011$, which is enough to understand the general case. In this particular case, $f$ is
\[
f(x)=s\cdot x= x_0\oplus x_2\oplus x_3 = x_0+x_2+x_3\mod 2
\]
and the action of $U_f$ on an arbitrary vector of the computational basis $\ket{x}\ket{j}$ is
\[
 U_f\ket{x_0x_1x_2x_3}\ket{j}=\ket{x_0x_1x_2x_3}\ket{j\oplus f(x)}=
 \ket{x_0x_1x_2x_3}\ket{j\oplus x_0 \oplus x_2 \oplus x_3}.
\]
Let CNOT$_{04}$ be the CNOT gate acting on qubits 0 and 4. Without showing qubits 1, 2, and 3, we have
\[
\text{CNOT}_{04}\ket{x_0}\ket{j}=\ket{x_0}X^{x_0}\ket{j}=\ket{x_0}\ket{j\oplus x_0}.
\]
If we use $\text{CNOT}_{04}$, $\text{CNOT}_{24}$, and $\text{CNOT}_{34}$, we generate the expected result $j\oplus x_0 \oplus x_2 \oplus x_3$ in the second register. Then,
\[
U_f\,=\,\text{CNOT}_{34}\cdot\text{CNOT}_{24}\cdot\text{CNOT}_{04},
\]
whose circuit is \vspace{-3pt}
\[
\Qcircuit @C=1.3em @R=0.9em {
\lstick{\ket{x_0}}  & \ctrl{4}&\qw     & \qw       &\rstick{\ket{x_0}}\qw \\
\lstick{\ket{x_1}}  & \qw     &\qw     & \qw       &\rstick{\ket{x_1}}\qw \\
\lstick{\ket{x_2}}  & \qw     &\ctrl{2}& \qw       &\rstick{\ket{x_2}}\qw \\
\lstick{\ket{x_3}}  & \qw     &\qw     & \ctrl{1}  &\rstick{\ket{x_3}}\qw \\
\lstick{\ket{j}}    & \targ   &\targ   & \targ     &\rstick{\ket{j\oplus x_0 \oplus x_2 \oplus x_3}.}\qw \\
}\vspace{0.2cm}
\]
From this example, we can generalize the implementation to an arbitrary $s=s_0\cdots s_{n-1}$. Just consider
\[
U_f\,=\,\left(\text{CNOT}_{0n}\right)^{s_0}\left(\text{CNOT}_{1n}\right)^{s_1}\,\cdots\,\left(\text{CNOT}_{{n-1},n}\right)^{s_{n-1}},
\]
where CNOT$_{ij}$ is controlled by qubit $i$ and the target is qubit $j$, and  $\left(\text{CNOT}_{ij}\right)^{s_k}$ is the identity operator if $s_k=0$, and the standard $\text{CNOT}_{ij}$ if $s_k=1$.
Note that the order of the CNOT gates is irrelevant. There is a CNOT for each bit 1 of $s$.

Let us finish the circuit of the Bernstein-Vazirani algorithm for our example. The whole circuit is
\[
\Qcircuit @C=1.3em @R=0.3em {
\lstick{\ket{0}}& \gate{H} & \ctrl{4}&\qw     & \qw      & \gate{H} & \meter  & \rstick{1} \cw \\
\lstick{\ket{0}}& \gate{H} & \qw     &\qw     & \qw      & \gate{H} & \meter  & \rstick{0} \cw \\
\lstick{\ket{0}}& \gate{H} & \qw     &\ctrl{2}& \qw      & \gate{H} & \meter  & \rstick{1} \cw \\
\lstick{\ket{0}}& \gate{H} & \qw     &\qw     & \ctrl{1} & \gate{H} & \meter  & \rstick{1} \cw \\
\lstick{\ket{1}}& \gate{H} & \targ   &\targ   & \targ    & \gate{H} &\rstick{\ket{1},}\qw \\
}\vspace{0.2cm}
\]
which is equivalent to
\[
\Qcircuit @C=1.3em @R=0.3em {
\lstick{\ket{0}}& \gate{H} & \ctrl{4}& \gate{H} &\qw&\qw &\qw &\qw &\qw  & \qw      & \meter  & \rstick{1} \cw \\
\lstick{\ket{0}}& \qw & \qw  &\qw   &\qw  &\qw   & \qw   &\qw  &\qw & \qw & \meter  & \rstick{0} \cw \\
\lstick{\ket{0}}& \qw    &\qw &\qw& \gate{H} &\ctrl{2}& \gate{H} & \qw  &\qw  &\qw  & \meter  & \rstick{1} \cw \\
\lstick{\ket{0}}& \qw  &\qw   &\qw &\qw &\qw &\qw  & \gate{H} & \ctrl{1} & \gate{H} & \meter  & \rstick{1} \cw \\
\lstick{\ket{1}}& \gate{H} & \targ &\gate{H} &\gate{H}  &\targ &\gate{H} &\gate{H}  & \targ    & \gate{H} &\rstick{\ket{1},}\qw \\
}\vspace{0.2cm}
\]
because $H^2=I$. Using that $(H\otimes H)\cdot \text{CNOT}_{ij}\cdot (H\otimes H) = \text{CNOT}_{ji}$, the last circuit simplifies to
\vspace{0pt}
\[
\Qcircuit @C=1.9em @R=0.3em {
\lstick{\ket{0}}  & \targ    &\qw      & \qw       & \meter  & \rstick{1} \cw \\
\lstick{\ket{0}}  & \qw      &\qw      & \qw       & \meter  & \rstick{0} \cw \\
\lstick{\ket{0}}  & \qw      &\targ    & \qw       & \meter  & \rstick{1} \cw \\
\lstick{\ket{0}}  & \qw      &\qw      & \targ     & \meter  & \rstick{1} \cw \\
& \\
\lstick{\ket{1}}  & \ctrl{-5}&\ctrl{-3}&\ctrl{-2}  &\rstick{\ket{1}.}  \qw \\
}\vspace{0.2cm}
\]
Using this example, we can easily derive the generic case, which can always be expressed as CNOTs controlled by the last qubit and with targets being the qubits corresponding to the bits of $s$ that are equal to 1. This simplification helps to understand that the Bernstein-Vazirani algorithm has no entanglement because $U_f$ is multiplied by $H^{\otimes (n+1)}$, which neither creates nor destroys entanglement. The depiction of $U_f$ in the circuit of the algorithm gives us the impression that $U_f$ creates entanglement, which is misleading.

\begin{exercise} \label{exe:BV-oracles} \mbox{}
What are the modifications to the Bernstein-Vazirani algorithm that are necessary so that the modified version finds $s \in \{0,1\}^n$ and $b \in \{0,1\}$ given the promise that the oracle is a Boolean function $f(x) = (s \cdot x) \oplus b$, where $s$ is an $n$-bit string and $b$ is a bit?
\end{exercise}

\section{Economical circuit of the Bernstein-Vazirani algorithm}

The Bernstein-Vazirani algorithm can be implemented with $n$ qubits instead of $n+1$. Indeed, we have learned that $U_f$ is a product of CNOT's
\[
U_f\,=\,\left(\text{CNOT}_{0n}\right)^{s_0}\,\cdots\,\left(\text{CNOT}_{{n-1},n}\right)^{s_{n-1}},
\]
where CNOT$_{ij}$ is controlled by qubit $i$ and the target is qubit $j$, and  $\left(\text{CNOT}_{ij}\right)^{s_k}$ is the identity if $s_k=0$, and $\text{CNOT}_{ij}$ if $s_k=1$.  In the algorithm, just before the action of $U_f$, the state of the $n$-th qubit is $\ket{-}$. The circuit that describes the action of $\text{CNOT}_{0n}$ is
\[
\Qcircuit @C=2.0em @R=1.6em {
\lstick{\ket{x_0}}       &\ctrl{1} \qw&   \rstick{(-1)^{x_0}\ket{x_0}} \qw \\
\lstick{\ket{-}}         & \targ  \qw  & \rstick{\ket{-},} \qw
}\vspace{5pt}
\]
where we have depicted only the first and the last qubits and we have placed $(-1)^{x_0}$ in the first qubit because this is mathematically allowed. This circuit is equivalent to
\[
\Qcircuit @C=2.0em @R=1.6em {
\lstick{\ket{x_0}}       &\gate{Z} \qw&   \rstick{(-1)^{x_0}\ket{x_0}.} \qw
}\vspace{5pt}
\]
We can convert all CNOT's  of $U_f$ into $Z$'s and then
\[
U'_f \,=\, Z^{s_0}\otimes\cdots\otimes  Z^{s_{n-1}}.
\]
If a bit $s_i$ of $s$ is 0, $Z^{s_i}=I$. Then
\[
U'_f\ket{x}\,=\, (-1)^{s\cdot x}\ket{x}.
\]
Function $f$ is the same as before. What changes is the way we implement $f$ as a unitary operator.

The circuit of the economical version of the Bernstein-Vazirani algorithm is \vspace{3pt}
\[
\Qcircuit @C=2.3em @R=0.9em {
\lstick{\ket{0}}        & \gate{H} & \multigate{3}{\,\,\,U'_f\,\,\,} & \gate{H} & \meter  & \rstick{s_0} \cw \\
\lstick{\vdots \,\,\,}  & {\vdots} &                                &{\vdots}&{\vdots} & \rstick{\vdots}  \\
\lstick{}               &          &                                &&         & \rstick{}  \\
\lstick{\ket{0}}        & \gate{H} & \ghost{\,\,\,U'_f\,\,\,}        & \gate{H}& \meter  & \rstick{s_{n-1}.}  \cw\\
}\vspace{0.0cm}
\]
As before, this circuit simplifies to
\[
H^{\otimes n}U'_fH^{\otimes n} \,=\, X^{s_0}\otimes\cdots\otimes  X^{s_{n-1}}
\]
because $HZH=X$. Now, it is straightforward to check that there is no entanglement in the Bernstein-Vazirani algorithm. $U'_f$ is not a genuine $n$-qubit gate, but instead is the tensor product of $n$ single-qubit gates. The depiction of the circuit is misleading.

\subsection*{Consulting the oracle}

In the non-economical circuit, if the input to the first register of $U_f$ is $\ket{x}$, $U_f$ returns $f(x)=s\cdot x$ when we perform a measurement of the last qubit. In the economical circuit, we obtain $f(x)$ using the circuit\vspace*{3pt}
\[
\Qcircuit @C=2.0em @R=1.6em {
\lstick{\ket{x_0}}        & \qw & \multigate{2}{U'_f}        & \qw&   \rstick{\ket{x_0}} \qw \\
\lstick{\vdots\,\,\,\,\,}        &  &      & &   \rstick{\,\,\,\,\vdots}  \\
\lstick{\ket{x_{n-1}}}        & \qw & \ghost{U'_f}        & \qw&   \rstick{\ket{x_{n-1}}} \qw \\
\lstick{\ket{0}}        & \gate{H} & \ctrl{-1} & \gate{H} & \meter  & \rstick{f(x).} \cw
}\vspace{5pt}
\]
In fact, the steps of this circuit are\vspace{1pt}
\begin{align*}
\ket{x}\ket{0} \xrightarrow[\text{}]{I\otimes H} \ket{x}\ket{+} \xrightarrow[\text{}]{\fbox{$U'_f$}-\bullet} \frac{\ket{x}\ket{0}+(-1)^{f(x)}\ket{x}\ket{1}}{\sqrt 2} \xrightarrow[\text{}]{I\otimes H} \ket{x}\ket{f(x)},
\end{align*}
where $I$ is the $2^n$-dimensional identity operator.
To calculate the last step, there are only two cases when we fix $x$: Either $f(x)=0$ or $f(x)=1$. Firstly, we suppose that $f(x)=0$, then $I\otimes H$ is applied to $\ket{x}\ket{+}$ resulting in $\ket{x}\ket{f(x)}$, and secondly, we suppose that $f(x)=1$, then $I\otimes H$ is applied to $\ket{x}\ket{-}$ resulting in $\ket{x}\ket{f(x)}$. After performing a measurement of the last qubit in the computational basis, we obtain $f(x)$ with probability 1.

We have just shown the following circuits are equivalent:
\[
\Qcircuit @C=1.0em @R=0.6em {
\lstick{}& {/}^{{n}}\qw&\multigate{2}{U_f}&\qw&\qw    &&&& & {/}^{{n}}\qw    &\gate{U'_f}&\qw     &\rstick{} \qw \\
                &&&&&&\equiv&&& \\
\lstick{}&\qw&\ghost{U_f}       &\qw &\qw   &&&&& \gate{H}&\ctrl{-2}  &\gate{H}&\rstick{,} \qw
}\vspace{5pt}
\]
where the notation ${/}^{{n}}$ on a wire represents a $n$-qubit register.

\chapter{Simon's Problem}\label{chap:Simon}

Simon's problem was presented at a conference in 1994~\cite{Sim94} together with Shor's algorithms~\cite{Sho94}, and the full paper was published in 1997~\cite{Sim97}. Simon's algorithm is exponentially faster than the best deterministic or randomized equivalent classical algorithms. This is a remarkable but underestimated scientific contribution to quantum computing. Simon's algorithm exploits not only quantum parallelism but also maximal entanglement. This algorithm and its generalizations are described in books~\cite{KLM07, Mer07, NO08, RP11, YM08} and papers~\cite{CQ18,MS03,YHLW21}.

\section{Problem formulation}

Let $f:\{0,1\}^n \longrightarrow \{0,1\}^n$ be a function from $n$-bit strings to $n$-bit strings with the following property: there exists a nonzero hidden bit string $s\in\{0,1\}^n$ such that
\[
f(x)=f(y) \iff x\oplus y\in\{0,s\}
\] for all $x,y\in\{0,1\}^n$. This means that each point in the image is the image of exactly two points in the domain, that is, the point $f(x)$ in the image is associated with $x$ and $x\oplus s$ in the domain because $f(x)=f(x\oplus s)$ for all $x\in\{0,1\}^n$. Therefore, $f$ is a two-to-one function. Consider the following computational problem: determine $s$ by querying $f$ as few times as possible.

Our goal is to find $s$ by evaluating $f$ without knowing its implementation details. In quantum computing, the operator implementing this function is called an \textit{oracle}, because an oracle reveals $f(x)$ without showing $s$ explicitly, and $f(x)$ can be used to determine $s$, although evaluating $f$ only once is not enough. When we build the circuit of the algorithm, the portion associated with $f$ is implemented by another person, unless there is a formula for $f$ that does not reveal $s$ beforehand.

In the classical version of this problem, we have to consult a \textit{classical oracle}, and to determine $s$ with high probability, the number of queries to the function $f$ grows exponentially with the number of bits $n$ when using a classical computer. Indeed, a naive deterministic algorithm would be to calculate $f(x')$ for a fixed $x'$, then systematically search for $x$ in the domain such that $f(x)=f(x')$. It requires $2^n$ evaluations in the worst case. A randomized classical strategy based on searching for collisions requires $\Omega(2^{n/2})$ evaluations to achieve constant success probability, by the same reasoning as in the birthday paradox. The proof is the same as in the two-to-one collision problem's proof~\cite{BHT98b} or in the \textit{birthday paradox}~\cite{Das05}.

In the quantum case, $f$ is implemented using the unitary operator $U_f$ of $2n$ qubits, defined as
\[
U_f\ket{x}\ket{y}=\ket{x}\ket{y\oplus f(x)},
\]
where $x$, $y$, and $f(x)$ are $n$-bit strings, and $\oplus$ is the bitwise XOR operation or bitwise sum modulo 2. This operator uses two registers, each containing $n$ qubits. $U_f$ is a $2^{2n}$-dimensional permutation matrix. The proof that $U_f$ is unitary is an extension of the proof presented in Proposition~\ref{prop:U_f_is_unitary} on Page~\pageref{prop:U_f_is_unitary}.

Simon's algorithm has two parts. The quantum part returns an $n$-bit string $x$ obeying $x\cdot s=0$, where
\[
x\cdot s=x_{0}s_{0}+ \cdots + x_{{n-1}}s_{{n-1}}\mod 2.
\]
Knowing such $x$ is not enough to determine $s$. It is necessary to run the quantum part many times, collect multiple bit strings $x$ obeying $x\cdot s=0$, and then run the classical part of the algorithm, which reveals $s$ with probability greater than 1/2.

\section{The algorithm}

\begin{algorithm} [!ht]
\caption{Simon's algorithm} \label{algo_SimonClassical}
\KwIn{Function $f:\{0,1\}^n\longrightarrow \{0,1\}^n$ with the promise that $f(x)=f(y) \iff x\oplus y\in\{0,s\}$.}
\KwOut{$s$ with probability greater than 1/2.} \BlankLine
Run the quantum part $n-1$ times (Algorithm~\ref{algo_SimonQuantum}, assume outputs $x^{(1)},...,x^{(n-1)}$\big)\;
Solve the system of linear equations $\{x^{(1)}\cdot s\equiv 0,...,x^{(n-1)}\cdot s\equiv 0\}\mod 2$ (assume solution for $s_1,...,s_{n-1}$)\;
Take $s_{0}=0$\;
If $f(s)=f(0)$ then return $0s_1...s_{n-1}$; otherwise, return $1s_1...s_{n-1}$.
\end{algorithm}

Simon's algorithm is described in Algorithm~\ref{algo_SimonClassical} and the quantum part is described in Algorithm~\ref{algo_SimonQuantum}. The circuit of the quantum part is \vspace{3pt}
\[
\Qcircuit @C=1.7em @R=0.9em {
\lstick{\ket{0}}            & \gate{H} & \multigate{5}{\,\,\,U_f\,\,\,} & \qw&\gate{H} &\qw& \meter  & \rstick{x_0} \cw \\
\lstick{\vdots \,\,\,}      & {\vdots} &                                &&{\vdots}&&{\vdots} & \rstick{\vdots}  \\
\lstick{}                   &          &                                &&&         && \rstick{}  \\
\lstick{\ket{0}}            & \gate{H} & \ghost{\,\,\,U_f\,\,\,}        &\qw& \gate{H}&\qw& \meter  & \rstick{x_{n-1}}  \cw\\
 & & & & {\hspace{1.4cm}{}^\ket{\psi_4}} \\
\lstick{\ket{0}^{\otimes n}}& {/^n}\qw & \ghost{\,\,\,U_f\,\,\,}        & \meter  &  \rstick{z_0...z_{n-1}.}\cw  \\
\lstick{}                   &          &                                &&         & \rstick{}  \\
\ustick{\hspace{0.7cm}\ket{\psi_0}} & \ustick{\hspace{1.3cm}\ket{\psi_1}}& \ustick{\hspace{1.9cm}\ket{\psi_2}} & \ustick{\hspace{1.4cm}\ket{\psi_3}}
}\vspace{0.0cm}
\]
The states at the bottom of the circuit are used in the analysis of the algorithm. They describe the state of the qubits after each step. State $\ket{\psi_4}$ refers to the first register only. The notation ``$/^n$'' over a wire denotes that it is an $n$-qubit register.

\begin{algorithm} [!ht]
\caption{Quantum part of Simon's algorithm} \label{algo_SimonQuantum}
\KwIn{A black box $U_f$ implementing function $f:\{0,1\}^n\longrightarrow \{0,1\}^n$ with the promise that $f(x)=f(y) \iff x\oplus y\in\{0,s\}$.}
\KwOut{Point $x\in\{0,1\}^n$ such that $x\cdot s=0$.} \BlankLine
Prepare the initial state $\ket{0}^{\otimes n}\ket{0}^{\otimes n}$\;
Apply $H^{\otimes n}$ to the first register\;
Apply $U_f$\;
Measure the second register in the computational basis (assume output $z_0...z_{n-1}$)\;
Apply $H^{\otimes n}$ to the first register\;
Measure the first register in the computational basis.
\end{algorithm}

\section{Analysis of the quantum part}

After the first step, the state of the qubits is
\[
\ket{\psi_0}=\ket{0}^{\otimes n}\ket{0}^{\otimes n}.
\]
After the second step, the state of the qubits is
\begin{eqnarray*}
\ket{\psi_1} &=& \big(H\ket{0}\big)^{\otimes n}\otimes\ket{0}^{\otimes n}\\
&=& \frac{1}{\sqrt{2^n}}\sum_{x=0}^{2^n-1}\ket{x}\otimes \ket{0}^{\otimes n},
\end{eqnarray*}
where $x$ is written in the decimal notation. After the third step, the state of the qubits is
\begin{eqnarray*}
\ket{\psi_2} &=& U_f\ket{\psi_1}\\
&=& \frac{1}{\sqrt{2^n}}\sum_{x=0}^{2^n-1}U_f(\ket{x}\otimes\ket{0\cdots 0}).
\end{eqnarray*}
To simplify $\ket{\psi_2}$, we use the definition of $U_f$ to obtain
\begin{eqnarray*}
\ket{\psi_2} &=& \frac{1}{\sqrt{2^n}}\sum_{x=0}^{2^n-1}\ket{x}\otimes\ket{f(x)},
\end{eqnarray*}
because $(0...0)\oplus f(x)$ (bitwise XOR) is $f(x)$.
The fourth step is a measurement of each qubit of the second register, which we assume has returned $z_0...z_{n-1}$. State $\ket{\psi_2}$ collapses to a superposition of only two terms because there are only two points in the domain such that $f(x)=z_0...z_{n-1}$. Let $x'$ be one of those points. Then, $f(x')=f(x'\oplus s)=z_0...z_{n-1}$ and state $\ket{\psi_3}$ is
\begin{eqnarray*}
\ket{\psi_3} &=&  \left(\frac{\ket{x'}+\ket{x'\oplus s}}{\sqrt{2}}\right)\otimes \ket{z_0...z_{n-1}}.
\end{eqnarray*}
Note that we have renormalized state $\ket{\psi_3}$ in accordance with the measurement postulate. Point $x'$ is unknown. It is selected uniformly at random among the points in the domain. The information we wish to acquire, $s$, remains hidden, as $x'\oplus s$ is a random point in the domain.

In the fifth step, we only consider the first register. After applying $H^{\otimes n}$ to the state of the first register, we obtain
\begin{eqnarray*}
\ket{\psi_4} &=&  \frac{1}{\sqrt{2}}\left(H^{\otimes n}\ket{x'}+H^{\otimes n}\ket{x'\oplus s}\right).
\end{eqnarray*}
To simplify $\ket{\psi_4}$, we use Proposition~\ref{prop:Hnx} on Page~\pageref{prop:Hnx}, which states that for any $x'=(x'_0...x'_{n-1})_2$
\[
H^{\otimes n}\ket{x'} \,=\, \frac{1}{\sqrt{2^n}}\sum_{x=0}^{2^n-1}(-1)^{x'\cdot x}\ket{x},
\]
where $x'\cdot x=x'_{0}x_{0}+ \cdots + x'_{{n-1}}x_{{n-1}}\mod 2$. Then
\begin{eqnarray*}
\ket{\psi_4} &=&  \frac{1}{\sqrt{2^{n+1}}}\sum_{x=0}^{2^n-1}\left((-1)^{x'\cdot x}+(-1)^{(x'\oplus s)\cdot x}\right)\ket{x}.
\end{eqnarray*}
Now we use the fact that
\begin{align*}
(x'\oplus s)\cdot x &=(x'_{0}+ s_0)x_{0}+ \cdots + (x'_{{n-1}}+ s_{n-1})x_{{n-1}}\mod 2\\
 &= (x'_{0}x_{0}+ \cdots + x'_{{n-1}}x_{{n-1}})+(s_0x_{0}+ \cdots + s_{n-1}x_{{n-1}})\mod 2\\
 &= (x'\cdot x) + (s\cdot x) \mod 2.
\end{align*}
We have used that $x'_0\oplus s_0=x'_0+s_0\mod 2$. State $\ket{\psi_4}$ simplifies to
\begin{eqnarray*}
\ket{\psi_4} &=&  \frac{1}{\sqrt{2^{n+1}}}\sum_{x=0}^{2^n-1}(-1)^{x'\cdot x}\Big(1+(-1)^{s\cdot x}\Big)\ket{x}.
\end{eqnarray*}
Now we use
\[
1+(-1)^{s\cdot x} = \begin{cases}
			2, & \text{if $s\cdot x=0$,}\\
            0, & \text{otherwise,}
		 \end{cases}
\]
to obtain
\begin{eqnarray*}
\ket{\psi_4} &=&  \frac{1}{\sqrt{2^{n-1}}}\sum_{\substack{x=0\\s\cdot x=0}}^{2^n-1}(-1)^{x'\cdot x}\ket{x}.
\end{eqnarray*}
The sum is over $x$ such that $x\cdot s=0$. Then, the measurement's output of the first register is $x$ such that $x\cdot s=0$ with probability $|(-1)^{x'\cdot x}|^2=1$. When we run the quantum part of Simon's algorithm, we obtain partial information about $s$. This means that we need to run the quantum part multiple times to collect enough information to determine $s$.

The probability of obtaining a specific $x$ such that $x\cdot s= 0$ is $1/2^{n-1}$, which is the square of the absolute value of the amplitude of state $\ket{x}$ in $\ket{\psi_4}$. The distribution is uniformly spread across all of $n$-bit strings $x$ that satisfy $x\cdot s= 0$. On the other hand, the probability of obtaining an arbitrary $x$ such that $x\cdot s= 0$ is 1, or equivalently the probability of obtaining $x$ such that $x\cdot s\neq 0$ is 0. That is, we are certain that the output $x$ satisfies $x\cdot s= 0$. We obtain partial information about $s$ unless $x=(0...0)_2$.

Point $x'$, which obeys $f(x')=z$, plays no role in the final result nor in the final calculation of the success probability. This is advantageous because $x'$ was hiding $s$ at an earlier stage. Following the fifth step, $x'$ becomes harmless.

\section{Analysis of the classical part}

Each time we run the quantum part of Simon's algorithm, the output is an $n$-bit string $x$ such that $x\cdot s=0$. Suppose we have run it twice and the outputs are $x$ and $x'$. This means that we have obtained a homogeneous system of linear equations
\begin{align*}
x_{0}s_{0}+ \cdots + x_{{n-1}}s_{{n-1}}&\equiv 0 \mod 2,\\
x'_{0}s_{0}+ \cdots + x'_{{n-1}}s_{{n-1}}&\equiv 0 \mod 2,
\end{align*}
where $s_0,...,s_{n-1}$ are the variables (unknowns) and $x$, $x'$ are known binary coefficients. Let us calculate the probability $p(2)$ that the system is independent. The probability that the first equation is nontrivial is
\[
p_1 \,=\, 1-\frac{1}{2^n}
\]
because there are $2^n$ strings $x$ and only one is 0. To calculate the probability that the second equation is independent, we think that $x$ is an $n$-dimensional vector in a binary vector space with $2^n$ vectors. The subspace spanned by $x$ has two vectors, $x$ itself and the null vector. There are $2^n-2$ vectors that are linearly independent of $x$. Then, the probability $p_2$ that the second equation is independent is
\[
p_2 \,=\, 1-\frac{2}{2^n}.
\]
Then, the probability $p(2)$ that the two equations are independent is
\[
p(2) \,=\, p_1p_2 \,=\, \left(1-\frac{1}{2^n}\right)\left(1-\frac{2}{2^n}\right).
\]
The probability that the next equation added to the system is independent is calculated as follows. Vectors $x$ and $x'$ span a subspace with four vectors: $x$, $x'$, $x\oplus x'$, and the null vector. There are $2^n-4$ vectors that are linearly independent of $x$ and $x'$. The probability $p_3$ that the third equation is independent is
\[
p_3 \,=\, 1-\frac{2^2}{2^n}.
\]
Then, the probability $p(3)$ that the three equations are independent is
\[
p(3) \,=\, p_1p_2p_3 \,=\, \left(1-\frac{1}{2^n}\right)\left(1-\frac{2}{2^n}\right)\left(1-\frac{2^2}{2^n}\right).
\]
We proceed in this fashion until we have $n-1$ independent equations with probability
\[
p(n-1) \,=\, \prod_{i=1}^{n-1}p_i \,=\, \prod_{i=0}^{n-2}\left(1-\frac{2^i}{2^n}\right).
\]
Calculating this product is challenging. Our aim now is to find a nontrivial lower bound. If we expand the product we obtain
\begin{align*}
\left(1-\frac{1}{2^n}\right)\cdots\left(1-\frac{2^{n-2}}{2^n}\right)&= 1 - \sum_{i=0}^{n-2} \frac{2^i}{2^{n}} + \cdots.
\end{align*}
The sum is calculated using the geometric series yielding $(1/2-1/2^n)$. The remaining terms include only higher-order terms ($1/(2^n)^2,1/(2^n)^3, ...$), and the following Proposition shows that they make a positive contribution. Then,
\begin{align*}
p(n-1)\ge  \frac{1}{2}+\frac{1}{2^n}.
\end{align*}

\begin{proposition}
Let $n\ge 2$ be an integer. Then
\[
\prod_{i=0}^{n-2}\left(1-\frac{2^i}{2^n}\right)\ge  \frac{1}{2}+\frac{1}{2^n}.
\]
\end{proposition}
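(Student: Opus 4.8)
The plan is to remove the hidden $n$-dependence that sits inside every factor by reindexing, and then to run a short induction on $n$. First I would substitute $k = n - i$, which turns each factor into
\[
1 - \frac{2^i}{2^n} = 1 - \frac{1}{2^{n-i}} = 1 - \frac{1}{2^k},
\]
so that the product becomes $P_n := \prod_{k=2}^{n}\bigl(1 - 2^{-k}\bigr)$. This reindexing is the crucial step: in the original form the quantity $2^n$ appears in the denominator of \emph{every} factor, so passing from $n$ to $n+1$ changes all factors simultaneously; after reindexing the individual factors no longer depend on $n$, and one has the simple telescoping relation $P_{n+1} = P_n\,(1 - 2^{-(n+1)})$, which is exactly what an induction needs.

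Next I would set up the induction on $n \ge 2$ with the claim $P_n \ge \tfrac12 + 2^{-n}$. The base case $n = 2$ is immediate, since $P_2 = 1 - \tfrac14 = \tfrac34$ while $\tfrac12 + 2^{-2} = \tfrac34$, so equality holds there.

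For the inductive step, assume $P_n \ge \tfrac12 + 2^{-n}$. Multiplying by the new factor $1 - 2^{-(n+1)}$, which is positive, preserves the inequality and gives
\[
P_{n+1} \ge \Bigl(\frac12 + \frac{1}{2^n}\Bigr)\Bigl(1 - \frac{1}{2^{n+1}}\Bigr).
\]
Expanding the right-hand side and comparing it with the target $\tfrac12 + 2^{-(n+1)}$ reduces, after clearing denominators, to the elementary inequality $2^n \ge 2$, which holds for all $n \ge 2$. Thus the inductive step closes and the bound follows.

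There is no genuine obstacle here beyond spotting the reindexing; once the factors are decoupled from $n$ the argument is purely mechanical. The only point demanding a little care is the algebra in the inductive step, namely keeping track of the cross term $-1/2^{2n+1}$ when expanding the product, but it collapses cleanly to $2^n \ge 2$, so no delicate estimate is needed. It is worth noting that equality holds at $n = 2$, which is a useful consistency check and shows that the constant $\tfrac12$ on the right cannot be replaced by anything larger.
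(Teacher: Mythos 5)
Your proof is correct and follows essentially the same route as the paper: induction on $n$, with the key identity $\prod_{i=0}^{n-1}\bigl(1-2^i/2^{n+1}\bigr)=\bigl(1-2^{-(n+1)}\bigr)\prod_{i=0}^{n-2}\bigl(1-2^i/2^{n}\bigr)$ obtained by an index shift (your reindexing to $\prod_{k=2}^{n}(1-2^{-k})$ is just a cleaner way of exhibiting it), followed by the same expansion of $\bigl(\tfrac12+2^{-n}\bigr)\bigl(1-2^{-(n+1)}\bigr)$. The only difference is cosmetic: the paper groups the leftover terms as $\tfrac{1}{2^{n+1}}\bigl(\tfrac12-\tfrac{1}{2^n}\bigr)\ge 0$, while you clear denominators down to $2^n\ge 2$; both close the step for $n\ge 2$.
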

\begin{proof}
By induction on $n$. The base case follows after replacing $n$ with 2. Let's prove the induction step. The left-hand expression for $n+1$ is
\[
\prod_{i=0}^{n-1}\left(1-\frac{2^i}{2^{n+1}}\right)=\left(1-\frac{2^0}{2^{n+1}}\right)\prod_{i=1}^{n-1}\left(1-\frac{2^i}{2^{n+1}}\right).
\]
By manipulating the dummy index $i$, we obtain
\[
\prod_{i=0}^{n-1}\left(1-\frac{2^i}{2^{n+1}}\right)=\left(1-\frac{1}{2^{n+1}}\right)\prod_{i=0}^{n-2}\left(1-\frac{2^i}{2^{n}}\right).
\]
Let us assume that the inequality is true for $n$. Then
\[
\prod_{i=0}^{n-1}\left(1-\frac{2^i}{2^{n+1}}\right)\ge \left(1-\frac{1}{2^{n+1}}\right)\left( \frac{1}{2}+\frac{1}{2^n}\right).
\]
Expanding the right-hand side, we obtain
\[
\prod_{i=0}^{n-1}\left(1-\frac{2^i}{2^{n+1}}\right) \ge
{\frac{1}{2}}+ \frac{1}{2^{n+1}}+{\frac {1}{{2}^{n+1}}
 \left( {\frac{1}{2}}- \frac{1}{2^n} \right) }
\ge \frac{1}{2}+\frac{1}{2^{n+1}}.
\]
This completes the proof.
\end{proof}

We are not done yet because with $n-1$ independent equations we can determine $n-1$ bits of $s$. The missing bit $s_i$ can be determined by guessing, for instance, by initially assuming that $s_i=0$ and then using the classical oracle to ask whether $f(s)=f(0)$. If true, we have successfully found $s$; otherwise, we set $s_i=1$. The computational cost of running the classical part is basically the cost of solving a system of $n-1$ linear equations with $n$ variables with polynomial cost, for instance $O(n^3)$ using Gaussian elimination.

The total cost of the algorithm is $n-1$ calls of $U_f$ and a single call of $f$ plus $O(n^2)$ steps to solve the system of linear equations. The success probability is greater than 1/2.


\section{Analysis of the entanglement}

From the circuit of the algorithm, we realize that the only operator that creates or destroys entanglement is $U_f$. Then, it is enough to analyze $\ket{\psi_2}$ or $\ket{\psi_3}$.  It is simpler to analyze $\ket{\psi_3}$. There is entanglement in Simon's algorithm if and only if $\ket{\psi_3}$ is either totally or partially entangled. The state of the first register of $\ket{\psi_3}$ is
\[
\ket{\psi}\,=\,\frac{\ket{x}+\ket{x\oplus s}}{\sqrt{2}},
\]
where $x$ is a random $n$-bit string and $s$ is a fixed nonzero $n$-bit string. If $s=1...1$ and $x=1...1$, $\ket{\psi}$ is the well-known Greenberger–Horne–Zeilinger state of $n$ qubits, defined as
\[
\ket{\text{GHZ}}\,=\,\frac{\ket{0\cdots 0}+\ket{1\cdots 1}}{\sqrt{2}}.
\]
It is known that the GHZ state is genuinely multipartite entangled.\footnote{\url{https://en.wikipedia.org/wiki/Greenberger-Horne-Zeilinger_state}}
If $s=1...1$, state $\ket{\psi}$ is non-biseparable\footnote{A pure state $\ket{\psi}$ of $n$ qubits is called biseparable, if one can find a partition of the qubits in two registers $A$ and $B$ such that $\ket{\psi}=\ket{\psi_A}\otimes\ket{\psi_B}$.} for any $x$ because
\[
\ket{\psi}\,=\, X^{x_0}\otimes\cdots\otimes X^{x_{n-1}}\ket{\text{GHZ}},
\]
and $X^{x_0}\otimes\cdots\otimes X^{x_{n-1}}$ does not create or destroy entanglement.

On the other hand, we can factor state $\ket{\psi}$ for each bit 0 of $s$. Suppose that $s=01...1$, then
\[
\ket{\psi}\,=\,\ket{x_0}\otimes\frac{\ket{x_1...x_{n-1}}+\ket{\bar x_1...\bar x_{n-1}}}{\sqrt{2}},
\]
where $\bar x_i=x_i\oplus 1$. This state is not maximally entangled but it is still partially entangled if $n>2$. If $s=0...01$, then
\[
\ket{\psi}\,=\,\ket{x_0}\otimes\cdots\ket{x_{n-2}}\otimes\frac{\ket{x_{n-1}}+\ket{\bar x_{n-1}}}{\sqrt{2}},
\]
which has no entanglement at all.

In summary, if the Hamming weight of $s$ is greater than 1, there is entanglement in Simon's algorithm. The degree of entanglement increases with the Hamming weight of $s$, and the state of the first register before the measurement becomes maximally entangled when the Hamming weight of $s$ equals $n$.

\section{Circuit of the oracle}

In oracle-based algorithms, the oracle is indeed implemented by someone else, but understanding its implementation can help us grasp the whole process better. As an example, let's consider the case where $n=3$ and $s=110$, which is enough to understand the general case. Let us take $f$ as the following two-to-one function
\[
\begin{tabular}{|c|c|}
\hline
$x_0\,x_1\,x_2$ & $f(x)$ \\
\hline
$\begin{array}{c}
0\,\,\,\,0\,\,\,\,0\\
1\,\,\,\,1\,\,\,\,0
\end{array}$  &  000\\
\hline
$\begin{array}{c}
0\,\,\,\,0\,\,\,\,1\\
1\,\,\,\,1\,\,\,\,1
\end{array}$& 001 \\
\hline
$\begin{array}{c}
0\,\,\,\,1\,\,\,\,0\\
1\,\,\,\,0\,\,\,\,0
\end{array}$ & 010 \\
\hline
$\begin{array}{c}
0\,\,\,\,1\,\,\,\,1\\
1\,\,\,\,0\,\,\,\,1
\end{array}$ & 100 \\
\hline
\end{tabular}
\]
To build the circuit we need to write down the explicit 3-output truth table, which is
\[
\begin{tabular}{|c|c|c|c|}
\hline
$x_0\,x_1\,x_2$ & $f_0(x)$&$f_1(x)$&$f_2(x)$ \\
\hline
0\,\,\,\,0\,\,\,\,0 & 0 & 0 & 0 \\
\hline
0\,\,\,\,0\,\,\,\,1 & 0 & 0 & 1 \\
\hline
0\,\,\,\,1\,\,\,\,0 & 0 & 1 & 0 \\
\hline
0\,\,\,\,1\,\,\,\,1 & 1 & 0 & 0 \\
\hline
1\,\,\,\,0\,\,\,\,0 & 0 & 1 & 0 \\
\hline
1\,\,\,\,0\,\,\,\,1 & 1 & 0 & 0 \\
\hline
1\,\,\,\,1\,\,\,\,0 & 0 & 0 & 0 \\
\hline
1\,\,\,\,1\,\,\,\,1 & 0 & 0 & 1 \\
\hline
\end{tabular}
\]
Note that $f(x)=f_0(x)f_1(x)f_2(x)$, where $f_0$ to $f_2$ are Boolean functions. The truth table of $f_0$ is obtained by considering only the first column of the output, and the truth tables of $f_1$ and $f_2$ by considering the second and third columns, respectively.
Now we focus on all bits 1 in the first column of the output denoted by $f_0(x)$. There are two of them, corresponding to inputs 011 and 101. We add to the circuit two multi-controlled NOT gates, the first with controls activated by 011 and the second activated by 101, with the target on the 4th qubit, as shown in Fig.~\ref{fig:SimonOracle}.
Then, we focus on all bits 1 in the second column of the output denoted by $f_1(x)$. There are two of them, corresponding to the inputs 010 and 100. The multi-controlled NOT gates are activated by 010 and 100, respectively, with the target on the 5th qubit, as shown in Fig.~\ref{fig:SimonOracle}. The last column, denoted by $f_2(x)$, requires multi-controlled NOT gates activated by 001 and 111 with the target on the 6th qubit, as shown in Fig.~\ref{fig:SimonOracle}.

\begin{figure}[!ht]
\[
\Qcircuit @C=2.3em @R=0.8em {
\lstick{\ket{x_0}}& \ctrlo{1} & \ctrl{1}& \ctrlo{1} & \ctrl{1}& \ctrlo{1} & \ctrl{1} &  \rstick{\ket{x_0}} \qw \\
\lstick{\ket{x_1}}& \ctrl{1} & \ctrlo{1}& \ctrl{1} & \ctrlo{1}& \ctrlo{1} & \ctrl{1} &  \rstick{\ket{x_1}} \qw \\
\lstick{\ket{x_2}}& \ctrl{1} & \ctrl{1}& \ctrlo{2} & \ctrlo{2}& \ctrl{3} & \ctrl{3} & \rstick{\ket{x_2}} \qw \\
\lstick{\ket{0}}  & \targ  & \targ  & \qw& \qw& \qw& \qw& \rstick{\ket{f_0(x_0x_1x_2)}}  \qw \\
\lstick{\ket{0}}  &\qw&\qw& \targ  & \targ& \qw& \qw  &  \rstick{\ket{f_1(x_0x_1x_2)}}  \qw \\
\lstick{\ket{0}}  &\qw&\qw&\qw&\qw& \targ  & \targ  &  \rstick{\ket{f_2(x_0x_1x_2)}}  \qw \\
}\vspace*{-0.2cm}
\]
\caption{Oracle of Simon's algorithm.}\label{fig:SimonOracle}
\end{figure}

The only trivial simplification that can be immediately seen is that the first two multi-controlled NOT gates can be simplified into only one Toffoli gate with empty control on qubit 2, full control on qubit 3 and target on qubit 4.

\section{Final remarks}

The formulation of Simon's problem in the original paper~\cite{Sim97} is slightly different from the one presented here. Simon posed the problem of determining whether $f$ is one-to-one (injective) or a special kind of two-to-one characterized by a $n$-bit string $s$ such that $f(x')=f(x)$ if and only if $x'=x\oplus s$. In the latter case, we have to find $s$. This formulation goes along the line of the Deutsch-Jozsa algorithm, in which we have the promise that the oracle is either balanced or constant. We have to determine which is the case. Note that if we run the quantum part of Simon's algorithm with a one-to-one function, the output is a random $n$-bit string. Simon used this fact to prove that there exists an algorithm for a quantum Turing machine that solves Simon's problem with zero error probability in expected time $O(nT_f(n)+G(n))$, where $T_f(n)$ is the time required to compute $f(x)$, and $G(n)$ is the time required to solve an $n\times n$ linear system of equations over $\mathbb{Z}_2$.

\chapter{Shor’s Integer Factoring Algorithm}\label{chap:Shor}

Shor's algorithms were presented at a conference in 1994~\cite{Sho94}. The full paper was published in 1997~\cite{Sho97}, and reviewed in 1999~\cite{Sho99}. It describes two quantum algorithms for integer factoring and discrete logarithms that run in polynomial time. The best-known classical algorithms run in sub-exponential time. Shor's algorithms exploit not only quantum parallelism but also entanglement, being a remarkable and celebrated scientific contribution to quantum computing. The algorithm for factoring integers is the focus of this Chapter and is described in many books~\cite{Hid19, KLM07, Mer07, NC00, NO08, RP11, Sch19, SS08, YM08}. Some knowledge of group theory is helpful.

\section{Problem formulation}

Let $N$ be a composite natural number. The computational challenge is to find a nontrivial factor of $N$. Since $N$ is composite, there exist natural numbers $n_1$ and $n_2$ such that $N = n_1n_2$, with $1 < n_1, n_2 < N$. The objective is to first identify $n_1$, and then determine $n_2$ by computing $N/n_1$.

If an algorithm exists that can efficiently find a nontrivial factor of $N$, then it can also efficiently find all prime factors of $N$, because the total number of prime factors of $N$, counted with multiplicity, is at most $\log_2 N$.

\section{Preliminaries on number theory}

Although factoring integers is of primary interest due to its significant impact on breaking cryptographic methods such as RSA, we can shift our focus to the problem of finding the multiplicative order of a natural number $a$ modulo $N$. This is because efficiently solving the latter problem also leads to an efficient solution for the factoring problem.

In number theory, the problem of finding the multiplicative order of a natural number $a$ modulo $N$ aims to determine the smallest positive integer $r$ such that
\[
a^r\equiv 1 \mod N.
\]
For example, let $N=21$, which is the largest composite number factored on a quantum computer so far using Shor's algorithm~\cite{ST21}. Now pick at random a number $a$ such that $1<a<N$. Let us say $a=2$. Then, we obtain the following sequence if we keep multiplying each line by $a$ and simplifying using modular arithmetic:
\begin{align*}\hspace{2.5cm}
\begin{split}
a &\equiv 2 \,\,\,\mod 21\\
a^2 &\equiv 4 \,\,\,\mod 21\\
a^3 &\equiv 8 \,\,\,\mod 21\\
a^4 &\equiv 16 \mod 21\\
a^5 &\equiv 11 \mod 21\\
a^6 &\equiv 1 \,\,\,\mod 21
\end{split}
\begin{split}
a^7 &\equiv 2 \,\,\,\mod 21\\
a^8 &\equiv 4 \,\,\,\mod 21\\
a^9 &\equiv 8 \,\,\,\mod 21\\
a^{10} &\equiv 16 \mod 21\\
a^{11} &\equiv 11 \mod 21\\
a^{12} &\equiv 1 \,\,\,\mod 21
\end{split}
\begin{split}
\hspace{1.2cm}&\cdots \hspace{2.5cm}
\end{split}
\end{align*}
The multiplicative order of 2 modulo 21 is 6 because 6 is the smallest positive integer such that $2^6\equiv 1\mod 21$. This can be observed at the end of the first column. If we continue this sequence, the results 2, 4, and so on will repeat again and again, as $a^{r+1}\equiv a^1\equiv 2$, $a^{r+2}\equiv a^2\equiv 4$, and so on. Consequently, we have an $r$-periodic sequence.

If $r$ is even, then
\begin{equation*}
\begin{split}
\big(a^\frac{r}{2}+1\big)\big(a^\frac{r}{2}-1\big) &\equiv\, a^r-1\\
&\equiv\, 0
\end{split}
\begin{split}
&\mod N\\
&\mod N.
\end{split}
\end{equation*}
We find two numbers, $a^{r/2}+1$ and $a^{r/2}-1$, whose product is a multiple of $N$. If these numbers are neither zero nor multiples of $N$, then $a^{r/2}+1$ and $a^{r/2}-1$ must have factors whose product is $N$. We conclude that in this case, $\text{gcd}\big(a^{r/2}+1, N\big) > 1$ and $\text{gcd}\big(a^{r/2}-1, N\big) > 1$, where gcd stands for the \textit{greatest common divisor}.

For example, when $a=2$ and $r=6$, we have $a^{{r/2}}+1=9$ and $a^{{r/2}}-1=7$, and in both cases the gcd returns a nontrivial factor of 21. The method fails when $a=5$ because the multiplicative order of 5 modulo 21 is $r=6$ and $a^{{r/2}}+1=126$ and $a^{{r/2}}-1=124$. In the first case, 126 is a multiple of 21, and in the second case, gcd$(124,21)=1$.

In Shor's algorithm, we start with the number $N$, then we pick uniformly at random a number $a$ such that $1<a<N$. Before calculating the order of $a$, we check whether gcd$(a,N)>1$ because (1)~if gcd$(a,N)>1$ then there is no $r$ such that $a^r\equiv 1 \mod N$ and (2)~if gcd$(a,N)>1$ then gcd$(a,N)$ is a nontrivial factor of $N$, and then we are done because the calculation of gcd is efficient using the Euclidean algorithm.\footnote{\url{https://en.wikipedia.org/wiki/Euclidean_algorithm}} To better understand what is going on here, we split the set of numbers $\{1,...,N-1\}$ into two subsets:
\begin{align*}
S_1&=\{a: 1<a<N \text{ and gcd}(a,N)>1\},\\
S_2&=\{a: 1\le a < N \text{ and gcd}(a,N)=1\}.
\end{align*}
In our example with $N=21$, we have
\begin{align*}
S_1&=\{3, 6, 7, 9, 12, 14, 15, 18\},\\
S_2&=\{1, 2, 4, 5, 8, 10, 11, 13, 16, 17, 19, 20\}.
\end{align*}
When we pick a number $a$ at random, if $a\in S_1$, we quickly find a factor of $N$ by calculating gcd$(a,N)$ using the Euclidean algorithm. The question now is what is the cardinality of $S_1$? Is it larger than the cardinality of $S_2$? To answer this question we use the following two facts about $S_2$ for an arbitrary $N$, which is denoted by $\mathbb{Z}_N^\times$ in number theory~\cite{HW75,NZM91}:

\

\noindent
\textbf{Fact 1} $\mathbb{Z}_N^\times$ is a finite multiplicative group modulo $N$.

\

\noindent
Fact~1 is the theoretical basis that guarantees the existence of $r$ such that $a^r\equiv 1\mod N$ for any $a\in \mathbb{Z}_N^\times$. It also guarantees that the function $f(\ell)=a^\ell \mod N$ is $r$-periodic. The quantum part of Shor's algorithm is fundamentally a period-finding algorithm whose goal is to determine the period $r$. Therefore, Shor's factoring algorithm can be viewed as a special case of a quantum algorithm for finding the period of a periodic oracle. In this sense, Shor's algorithm belongs to the same class of oracle-based quantum algorithms discussed in previous chapters, where the quantum Fourier transform is used to extract the period.

\

\noindent
\textbf{Fact 2} The cardinality of $\mathbb{Z}_N^\times$ is Euler's totient function $\varphi(N)$.

\

\noindent
Fact~2 concerns the definition of Euler's totient function $\varphi(N)$, which has been widely studied in number theory. It has been established that $\varphi(N)$ is always nearly $N$ (Hardy and Wright~\cite{HW75}). Given that the cardinality of $S_1$ is $N-\varphi(N)-1$, the probability of choosing $a\in S_1$ is much smaller than the probability of choosing $a\in \mathbb{Z}_N^\times$ when $N$ is large.

Not all $a$'s in $\mathbb{Z}_N^\times$ are suitable because the order of $a$ may be odd or $a^{r/2}+1\equiv 0\mod N$. If we select an unsuitable $a$, we have to discard it and randomly pick another one. The question now is: how many $a$'s in $\mathbb{Z}_N^\times$ have even order and, at the same time, satisfy $a^{{r/2}}+1\not\equiv 0\mod N$? To answer this question we use the following fact:

\

\noindent
\textbf{Fact 3} (Theorem A4.13 of~\cite{NC00}) Suppose $N=p_1^{\alpha_1}\cdots p_m^{\alpha_m}$ is the prime factorization of an odd composite positive integer. Let $a$ be chosen uniformly at random from $\mathbb{Z}_N^\times$, and let $r$ be the order of $a$ modulo $N$. Then the probability that $r$ is even and $a^{r/2}+1\not\equiv 0\mod N$ is at least $1- 1/2^m\ge 3/4$.

\

\noindent
Fact~3 states that the probability of selecting a good $a$ is at least 3/4. Note that the case $a^{r/2}-1\equiv 0 \mod N$ never happens because by definition $r$ is the smallest integer such that $a^{r}-1\equiv 0 \mod N$. Then, if $r$ is even and $a^{r/2}+1\not\equiv 0\mod N$, we have \text{gcd}$\big(a^{{r/2}}+1,N\big)>1$ and at the same time gcd$\big(a^{{r/2}}-1,N\big)>1$; the results of these gcd's are nontrivial factors of $N$.

In summary, a  nontrivial divisor of $N$ is readily obtained as soon as we find a nontrivial integer solution to the equation
\[
x^2\equiv 1 \mod N,
\]
such that $x \not \equiv \pm 1 \mod N$ ($x$ plays the role of $a^r$ here). This is because $x^2 - 1 \equiv 0$ implies $(x+1)(x-1) \equiv 0$, which in turn implies that there exists an integer $k$ such that $kN = (x+1)(x-1)$. If $p_1$ is a prime and divides $N$, then $p_1$ divides $x+1$ or $x-1$. If $p_1$ divides only $x-1$, then there exists another prime $p_2$ that divides $N$ such that $p_2$ also divides $x+1$. In any case, $\gcd(x+1,N) > 1$. The goal of Shor's algorithm is to find an integer $x$ that satisfies the equation above by guessing an integer $a$ with an even order $r$, under the restriction that $a^{r/2} \not \equiv \pm 1 \mod N$. The order $r$ is determined by running a quantum subroutine after $a$ is randomly chosen. Once suitable $a$ and $r$ have been found, we set $x = a^{r/2}$. By calculating the greatest common divisor between $x+1$ and $N$ using the Euclidean algorithm, Shor's algorithm then outputs a non-trivial factor of $N$.

\begin{exercise}
Assume that $N=p_1p_2$, where $p_1$ and $p_2$ are prime numbers of approximately the same size. This is the most interesting case in cryptography because it is the hardest case for classical factoring algorithms. In practical applications, those primes usually have 1024 bits, which is close to 308 decimal figures.
\begin{itemize}
\item[(a)] Show that the set $S_1$ is given by $S_1=\{p_1,2p_1,\ldots,(p_2-1)p_1,\,p_2,2p_2,\ldots,(p_1-1)p_2\}$.

\item[(b)] Deduce that
$|S_1|=p_1+p_2-2$,
and show that this quantity is approximately $2\sqrt{N}$.

\item[(c)] Using this result, explain why, for large $N$, the probability of choosing at random an integer $a$ such that $\gcd(a,N)>1$ is small, and therefore why a random choice of $a$ is likely to belong to $\mathbb{Z}_N^\times$.

\end{itemize}
\end{exercise}

\section{Quantum operator for modular exponentiation}

The unitary operator $U^{(a)}_N$, which calculates the exponentiation of integer $a$ modulo $N$, is defined as
\[
U^{(a)}_N\ket{\ell}\ket{y}=\ket{\ell}\ket{y\oplus \big(a^\ell\mod N\big)},\,
\]
for $0\le \ell<q$, $0\le y< 2^n$, where $n=\lceil \log_2 N\rceil$, $\oplus$ is the bitwise XOR operation or bitwise sum modulo 2, and $q$ is the smallest power of 2 such that $q>N^2$. $U^{(a)}_N$ acts on two registers with sizes $\log_2 q$ and $n$, and it is a permutation matrix of dimension $2^nq$. The inputs of the algorithm, $a$ and $N$, come through $U^{(a)}_N$. It replaces the oracle $U_f$ in Simon's algorithm, but $U^{(a)}_N$ is no oracle because it is our task to implement it. In Shor's algorithm, $U^{(a)}_N$ is used many times with a different $a$ each time. The proof that $U^{(a)}_N$ is unitary  is an extension of the proof presented in  Proposition~\ref{prop:U_f_is_unitary} on Page~\pageref{prop:U_f_is_unitary}.

To implement $U^{(a)}_N$ efficiently, it is necessary to use the \textit{repeated squaring method}, which is an efficient algorithm to calculate modular exponentiation. For instance, if we want to calculate $3^{16} \mod 7$, we would naively multiply $3\times 3\times ...\times 3$ sixteen times, obtain a large number, and then calculate the remainder after dividing by 7. In the exponentiation by squaring, we calculate the square of 3 modulo 7, then we calculate the square of the result modulo 7, and so on four times, which requires a logarithmic number of multiplications and each result is never too large. When we calculate $a^\ell\mod N$, $\ell$ is not a power of 2 in general, but the repeated squaring method still can be used. Using this method, it is possible to find an efficient circuit of $U^{(a)}_N$ by converting classical irreversible circuits into reversible ones~\cite{MS12, NO08, PG14}.

\section{Fourier transform and its inverse}\label{seq:shor_fourier_transform}

The $q$-dimensional Fourier Transform $F_q$ is a linear operator whose action on the computational basis is
\[
F_{q}\ket{k} \,=\, \frac{1}{\sqrt{q}}\sum_{\ell=0}^{q-1}\omega^{k\ell} \ket{\ell},
\]
where $0\le k<q$ and $\omega=\e^{2\pi\ii/q}$. Note that the $(k,\ell)$-entry of $F_q$ is
\[
\big(F_q\big)_{k\ell}=\frac{\omega^{k\ell}}{\sqrt{q}}.
\]
Then, $F_q$ is a symmetric matrix. To find $F_{q}^\dagger$, we simply take the complex conjugate of each entry, which is ${\omega^{-k\ell}}/{\sqrt{q}}$ because the complex conjugate of $\omega$ is $\omega^{-1}$. Then
\[
F_{q}^\dagger\ket{k} \,=\, \frac{1}{\sqrt{q}}\sum_{\ell=0}^{q-1}\omega^{-k\ell} \ket{\ell},
\]
where $0\le k<q$.

Let us show that $F_q$ is unitary. Using the definitions of $F_q$ and $F_{q}^\dagger$, we have
\begin{align*}
\bra{k'}F_{q}^\dagger F_{q}\ket{k} &= \left(\frac{1}{\sqrt{q}}\sum_{\ell'=0}^{q-1}\omega^{-k'\ell'} \bra{\ell'}\right)\left(\frac{1}{\sqrt{q}}\sum_{\ell=0}^{q-1}\omega^{k\ell} \ket{\ell}\right)\\
  &= \frac{1}{q}\sum_{\ell=0}^{q-1}\omega^{(k-k')\ell}.
\end{align*}
Now we use the closed-form formula for the geometric series with $q$ terms, which is
\[
\sum_{k=0}^{q-1} {s}^k \,=\, \frac{1-{s}^{q}}{1-{s}},
\]
if ${s}\neq 1$. When ${s}=1$, the left-hand sum is equal to $q$. In our case ${s}=\omega^{(k-k')}$. From the definition of $\omega$, we have $\omega^q=1$, and then $\omega^{(k-k')q}=1$.  Combining those results, we obtain
\[
\frac{1}{q}\sum_{\ell=0}^{q-1}\omega^{(k-k')\ell} \,=\,
\begin{cases}
			1, & \text{if $k=k'$,}\\
            0, & \text{otherwise.}
		 \end{cases}
\]
We have just shown that
\begin{align*}
\bra{k'}F_{q}^\dagger F_{q}\ket{k} &= \delta_{kk'},
\end{align*}
that is, $F_{q}^\dagger F_{q}=I$.

In Shor's algorithm, given $N$, $q$ is the smallest power of 2 such that $q> N^2$. $F_q$ is applied only to the first register. The number of qubits of the first register is $\log_2 q$, which is at most $2n$ and at least $2n-1$, while the number of qubits of the second register is exactly $n=\lceil \log_2 N\rceil$. $F_q$ plays a role similar to the Hadamard gates at the end of Simon's algorithm.
The circuit of $F_q$ in terms of CNOT and one-qubit gates is described in Section~\ref{sec:Shor_Fourier_Circ}.

\section{The algorithm}\label{sec_subsection_shor_algorithm}

\begin{algorithm} [!ht]
\caption{Shor's algorithm} \label{algo_ShorClassical}
\KwIn{Composite integer $N$.}
\KwOut{A nontrivial factor of $N$.} \BlankLine
If $N$ is even, return $2$; otherwise, continue\;
If $N$ is a power of some prime number $p$, return $p$; otherwise, continue\;
Pick uniformly at random an integer $a$ such that $1<a<N$\label{step-3}\;
If gcd$(a,N)>1$, return gcd$(a,N)$; otherwise, continue\;
Run the quantum part with inputs $a$ and $N$ \big(Algorithm~\ref{algo_ShorQuantum}, assume output $\ell_0,...,\ell_{m-1}$\big)\label{step-5}\;
If $\ell=0$, go to Step \ref{step-5}\;
Calculate $b = \ell/q$ (the same $q$ used in the quantum part)\;
Find the convergent of the continued fraction expansion of $b$ with the largest denominator $r'$ such that $r'<N$\;
If $r'$ is even, calculate $n_1=\text{gcd}(a^{r'/2} +1,N)$; otherwise, go to Step \ref{step-3}\; 
If $1<n_1<N$, return $n_1$; otherwise, go to Step \ref{step-3}.
\end{algorithm}

Shor's algorithm is described in Algorithm~\ref{algo_ShorClassical} and the quantum part is described in Algorithm~\ref{algo_ShorQuantum}. The circuit of the quantum part is \vspace{3pt}
\[
\Qcircuit @C=1.7em @R=0.9em {
\lstick{\ket{0}}            & \gate{H} & \multigate{5}{\,\,\,U^{(a)}_N\,\,\,} & \qw&\multigate{3}{F_{q}^\dagger} &\qw& \meter  & \rstick{\ell_0} \cw \\
\lstick{\vdots \,\,\,}      & {\vdots} &                                &&{}&&{\vdots} & \rstick{\vdots}  \\
\lstick{}                   &          &                                &&&         && \rstick{}  \\
\lstick{\ket{0}}            & \gate{H} & \ghost{\,\,\,U^{(a)}_N\,\,\,}        &\qw& \ghost{F_{q}^\dagger}&\qw& \meter  & \rstick{\ell_{{m}-1}}  \cw\\
 & & & & {\hspace{1.7cm}{}^\ket{\psi_4}} \\
\lstick{\ket{0}^{\otimes {n}}}& {/^{n}}\qw & \ghost{\,\,\,U^{(a)}_N\,\,\,}        & \meter  &  \rstick{z_0...z_{{n}-1},}\cw  \\
\lstick{}                   &          &                                &&         & \rstick{}  \\
\ustick{\hspace{0.7cm}\ket{\psi_0}} & \ustick{\hspace{1.3cm}\ket{\psi_1}}& \ustick{\hspace{1.9cm}\ket{\psi_2}} & \ustick{\hspace{1.4cm}\ket{\psi_3}}
}\vspace{0.0cm}
\]
where $q$ is the least power of 2 such that $q>N^2$, $m=\log_2 q$, $n=\lceil \log_2 N\rceil$, and $a$ is an integer such that gcd$(a,N)=1$. The first register has at least $2n-1$ qubits (at most $2n$) and the second has exactly $n$ qubits. The states at the bottom of the circuit are used in the analysis of the algorithm. They describe the states of the qubits after each step. State $\ket{\psi_4}$ refers to the first register only. The notation ``$/^{n}$'' over a wire denotes that it is a $n$-qubit register.

\begin{algorithm} [!ht]
\caption{Quantum part of Shor's algorithm} \label{algo_ShorQuantum}
\KwIn{A composite integer $N$ and integer $1<a<N$ such that gcd$(a,N)=1$.}
\KwOut{$m$-bit string $\ell$ that is the nearest integer to a multiple of $q/r$ with probability greater than $3/\pi^2$, where $q=2^m$ is the smallest power of 2 such that $q>N^2$.} \BlankLine
Prepare the initial state $\ket{0}^{\otimes m}\ket{0}^{\otimes n}$, where $m=\lceil 2\log_2 N\rceil$ and $n=\lceil\log_2 N\rceil$\;
Apply $H^{\otimes \log_2 q}$ to the first register\;
Apply $U^{(a)}_N$ to both registers\;
Measure the second register in the computational basis (assume output $z_0...z_{n-1}$)\;
Apply $F^\dagger_{q}$ to the first register\;
Measure the first register in the computational basis and return the result.
\end{algorithm}

We have presented Shor's algorithm as a Las Vegas algorithm, which means that the output is always correct and the expected runtime is finite. With a small modification, it can be presented as a Monte Carlo algorithm, which means that the output may be incorrect. Then, we focus on the quantum part and we have to show that the probability of returning a nearest integer $\ell$ to a multiple of $q/r$ is lower bounded by $3/\pi^2$ because this $\ell$ allows us to obtain $r$ using a continued fraction expansion of $\ell/q$.

\section{Analysis of the quantum part}\label{sec:shor_analysis}

\subsection*{Calculation of $\ket{\psi_0}$}
After the first step, the state of the qubits is
\[
\ket{\psi_0}=\ket{0}^{\otimes m}\ket{0}^{\otimes n},
\]
where $m=\log_2 q=\lceil 2\log_2 N\rceil$.
\subsection*{Calculation of $\ket{\psi_1}$}
After the second step, the state of the qubits is
\begin{eqnarray*}
\ket{\psi_1} &=& \big(H\ket{0}\big)^{\otimes m}\otimes\ket{0}^{\otimes n}\\
&=& \frac{1}{\sqrt{q}}\sum_{\ell=0}^{q-1}\ket{\ell}\otimes \ket{0}^{\otimes n},
\end{eqnarray*}
where $\ell$ is written in the decimal notation and $q=2^m$.
\subsection*{Calculation of $\ket{\psi_2}$}
After the third step, the state of the qubits is
\begin{eqnarray*}
\ket{\psi_2} &=& U^{(a)}_N\ket{\psi_1}\\
&=& \frac{1}{\sqrt{q}}\sum_{\ell=0}^{q-1}U^{(a)}_N(\ket{\ell}\otimes\ket{0\cdots 0}).
\end{eqnarray*}
To simplify $\ket{\psi_2}$, we use the definition of $U^{(a)}_N$ to obtain
\begin{eqnarray*}
\ket{\psi_2} &=& \frac{1}{\sqrt{q}}\sum_{\ell=0}^{q-1}\ket{\ell}\otimes\ket{a^\ell\mod N},
\end{eqnarray*}
because $(0...0)\oplus a^\ell$ (bitwise XOR) is $a^\ell$. From now on, we drop the notation modulo~$N$ inside the second ket because we have no XOR operation and there is no danger in failing to recognize the correct arithmetic.

It is really important to understand the structure of $\ket{\psi_2}$ before proceeding. Expanding the sum, we obtain
\begin{equation*}
\begin{split}
\sqrt{q}\ket{\psi_2} \,\,\,\,\,\,=\,\,\,\,\, \ket{0}&\ket{1}\,\,\,+\,\,\,\\
               \ket{r}&\ket{1}\,\,\,+\,\,\,\\
               \ket{2r}&\ket{1}\,\,\,+\,\,\,\\
                  &\vdots\\
               \ket{(c-1)r}&\ket{1}\,\,\,+\,\,\,\\
\end{split}
\begin{split}
\ket{1}&\ket{a} \,\,\,+\,\,\,\\
\ket{r+1}&\ket{a} \,\,\,+\,\,\,\\
\ket{2r+1}&\ket{a} \,\,\,+\,\,\,\\
&\vdots\\
\ket{(c-1)r+1}&\ket{a} \,\,\,+\,\,\,\\
\end{split}
\begin{split}
\ket{2}&\ket{a^2} \,\,\,+\,\,\,\\
\ket{r+2}&\ket{a^2} \,\,\,+\,\,\,\\
\ket{2r+2}&\ket{a^2} \,\,\,+\,\,\,\\
&\vdots\\
\ket{(c-1)r+2}&\ket{a^2} \,\,\,+\,\,\,\\
\end{split}
\begin{split}
\cdots& \,\,\,+\,\,\,\\
\cdots& \,\,\,+\,\,\,\\
\cdots& \,\,\,+\,\,\,\\
&\color{white}{\vdots}\\
.\,.\,.\,,& \,\,\,\,\,\,\\
\end{split}
\begin{split}
\ket{r-1}&\ket{a^{r-1}} \,\,\,+\,\,\,\\
\ket{2r-1}&\ket{a^{r-1}} \,\,\,+\,\,\,\\
\ket{3r-1}&\ket{a^{r-1}} \,\,\,+\,\,\,\\
&\vdots\\
\\
\end{split}
\end{equation*}
where $c=\lceil q/r \rceil$. Indeed, we have $q=(c-1)r+r_0$, where $r_0$ is the remainder of $q$ divided by $r$. The last line has $r_0$ terms and is incomplete unless $r_0=0$.\footnote{The last line is complete if $r$ divides $q$. In this case, $r$ is a power of 2.} The possible values inside the second ket are $1$, $a$, $a^2$, ..., $a^{r-1}$. We have split $\ket{\psi_2}$ into columns that have the same second ket. The first columns have $c$ terms and the last columns have $c-1$ terms.

\subsection*{Calculation of $\ket{\psi_3}$}
The fourth step is a measurement of each qubit of the second register, which we assume has returned $z_0,...,z_{n-1}$. Then there exists $r_1$ such that $a^{r_1}=z$, where $0\le r_1<r$. State $\ket{\psi_2}$ collapses to a superposition of the first register with all terms $\ket{\ell}$ such that $a^\ell\equiv a^{r_1} \mod N$, that is, $\ell=kr+r_1$, for $0\le k<c$, yielding
\begin{eqnarray*}
\ket{\psi_3} &=&  \left(\frac{1}{\sqrt{c}}\sum_{k=0}^{c-1}\ket{k r+r_1}\right)\ket{a^{r_1}},
\end{eqnarray*}
where $c=\lceil q/r \rceil$ if $a^{r_1}$ belongs to one of the first columns $(r_1< r_0)$, and $c=\lfloor q/r \rfloor$ if $a^{r_1}$ belongs to one of the last columns $(r_1\ge r_0)$. The analysis of the algorithm uses parameter $c$ extensively. It is important to memorize its definition and be aware that it is an integer.

We have renormalized the state $\ket{\psi_3}$ as required by the measurement postulate. We do not know $r_1$ because it is randomly selected from 0 to $r-1$. The information we want to acquire, $r$, is concealed because $kr+r_1$ is a random value from 0 to $q-1$. Performing a measurement of the first register at this point would be futile. However, the probability distribution of the first register is an $r$-periodic function. This motivates the application of the inverse Fourier transform because the result is another (almost) periodic function, with a period close to $q/r$.

\subsection*{Calculation of $\ket{\psi_4}$: last state}
In the fifth step, we only consider the first register. After applying $F_q^\dagger$ to the state of the first register,\footnote{The algorithm uses $F^\dagger_q$ and not $F_q$ at this point because the goal is to convert a superposition state into a state of the computational basis, not the other way around.} we obtain
\begin{eqnarray*}
\ket{\psi_4} &=&  \frac{1}{\sqrt{c}}\sum_{k=0}^{c-1}F_q^\dagger\ket{k r+r_1}\\
             &=& \frac{1}{\sqrt{c\,q}}\sum_{k=0}^{c-1}\sum_{\ell=0}^{q-1}\omega^{-\ell(kr+r_1)}\ket{\ell}.
\end{eqnarray*}
Rearranging the order of the sums, we obtain
\begin{eqnarray*}
\ket{\psi_4} &=&  \frac{1}{\sqrt{q}}\sum_{\ell=0}^{q-1}\omega^{-\ell r_1}\left(\frac{1}{\sqrt{c}}\sum_{k=0}^{c-1}\omega^{-\ell k r}\right)\ket{\ell}.
\end{eqnarray*}
To calculate the sum inside the parentheses, we use again the closed-form formula for the geometric series, which is
\[
\sum_{k=0}^{c-1} {s}^k \,=\, \frac{1-{s}^{c}}{1-{s}},
\]
if ${s}\neq 1$. When ${s}=1$, the left-hand sum is equal to $c$.
For the sum inside the parentheses,  ${s}=\omega^{-\ell r}=\e^{-2\pi\ii \ell r/q}$.  Then
\[
\sum_{k=0}^{c-1}\omega^{-\ell k r} \,=\,
\begin{cases}
			{c}, & \text{if $q\mid (\ell r)$,}\\
            \frac{1-\omega^{-\ell c r}}{1-\omega^{-\ell r}}, & \text{otherwise,}
		 \end{cases}
\]
where the notation $q\mid (\ell r)$ means that $q$ divides $\ell r$.
State $\ket{\psi_4}$ simplifies to
\begin{eqnarray}\label{eq:shor_psi_4_final}
\ket{\psi_4} &=&   \frac{\sqrt{c}}{\sqrt{q}}\sum_{\substack{\ell=0\\ q \mid (\ell r)}}^{q-1}\, \omega^{-\ell r_1}\ket{\ell}
+
\frac{1}{\sqrt{q c}}\sum_{\substack{\ell=0\\ q \nmid (\ell r)}}^{q-1}\omega^{-\ell r_1} \frac{1-\omega^{-\ell c r}}{1-\omega^{-\ell r}} \ket{\ell},
\end{eqnarray}
where the notation $q \nmid (\ell r)$ means that $q$ does not divide $\ell r$. The first sum is over $\ell$ such that $q$ is a divisor of $(\ell r)$. The second sum is over the remaining $\ell$'s.

\subsection*{Calculation of the output}

The probability of obtaining $0\le\ell< q$ is
\begin{eqnarray*}
p(\ell) &=&
\begin{cases}
			\frac{c}{q}, & \text{if $q\mid (\ell r)$,}\\
            \frac{1}{{q\,c}}\left|\frac{1-\omega^{-\ell c r}}{1-\omega^{-\ell r}}\right|^2, & \text{otherwise,}
		 \end{cases}
\end{eqnarray*}
where $c$ was defined when we calculated $\ket{\psi_3}$.
Using the definition of $\omega$ and $\left|1-\e^{2\pi \ii \theta}\right|^2=4\sin^2(\pi\theta)$, $p(\ell)$ simplifies to
\begin{eqnarray}\label{shor_prob(l)}
p(\ell) &=&
\begin{cases}
			\frac{c}{q}, & \text{if $q\mid (\ell r)$,}\\
            \frac{\sin^2 \frac{\pi \ell r c }{q}}{{q\,c\,}\sin^2 \frac{\pi \ell r}{q}}, & \text{otherwise.}
		 \end{cases}
\end{eqnarray}
Fig.~\ref{fig:Shorprob} depicts an example of the probability distribution $p(\ell)$.

\begin{figure}[!ht]
\centering
\includegraphics[scale=0.3]{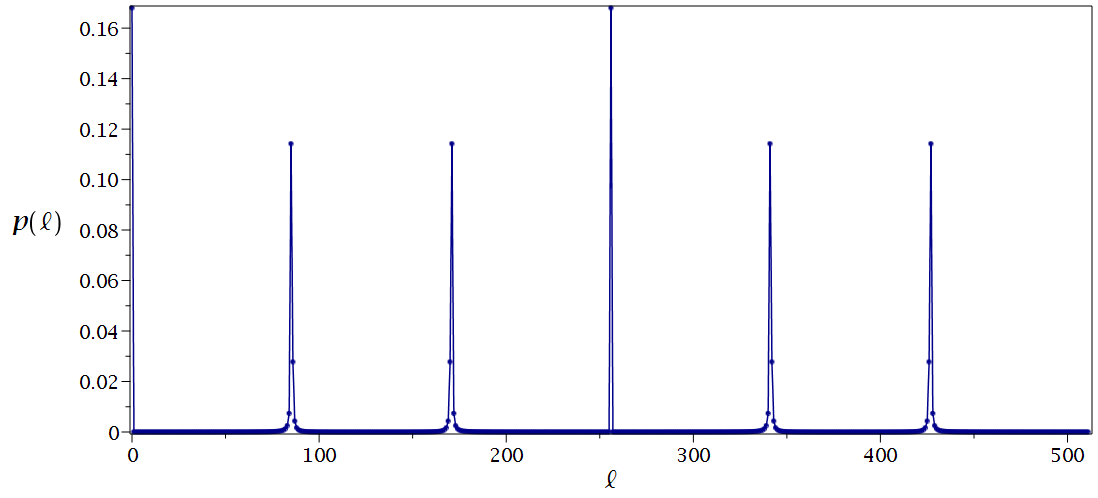}
\caption{Probability distribution $p(\ell)$ as a function of $\ell$ given by Eq.~(\ref{shor_prob(l)}) when $N=21$ ($q=2^9$, $a=2$, $r=6$, $c=85$). The dots at the tops of the peaks correspond to $\ell=$ 0, 85, 171, 256, 341, 427.}\label{fig:Shorprob}
\end{figure}

When $q \nmid (\ell r)$, it is likely that a measurement outputs $\ell$ satisfying
\[
\sin^2 \frac{\pi \ell r}{q} \approx 0
\]
because $p(\ell)$ is large when the denominator is close to zero. This implies that $\pi \ell r/q$ must be close to a multiple of $\pi$ (say $k\pi$), and then the $\ell$'s with the highest chances are
\[
\ell \approx \frac{kq}{r},
\]
where $k$ runs from 0 to $r-1$. If $\ell$ is zero or an exact multiple of $q/r$ then $p(\ell)=c/q\approx 1/r$ --- see the definition of $p(\ell)$ in Eq.~(\ref{shor_prob(l)}). This analysis explains why the peaks of Fig.~\ref{fig:Shorprob} correspond to $\ell$'s that are close to ${kq}/{r}$.

In summary, the output of the quantum part is a $m$-bit string $\ell$ such that $\ell$ is close to a multiple of $q/r$, where  $m=\log_2 q$ (number of qubits of the first register).

\subsection*{Details about the probability distribution}

In the previous analysis, it is missing to show that the numerator of $p(\ell)$ when $q \nmid (\ell r)$, the term
\[
\sin^2 \frac{\pi \ell r c }{q},
\]
is not too small when $\ell\approx kq/r$ for $0\le k< r-1$. Since this analysis is too long when $\ell$ is a discrete variable, we take an alternative route.

Let us look at $p(\ell)$ as a continuous function in terms of $\ell$ in the domain $[0,q]$. By using trigonometric identities and the fact that $c$ is an integer, it is straightforward to show that
\[
p\left(\ell + \frac{q}{r}\right) = p(\ell).
\]
When we look at $p(\ell)$ as a continuous function, we are able to show that it is a truly periodic function, while the function depicted in Fig.~\ref{fig:Shorprob} is not.

Now let us obtain the shape of $p(\ell)$. Since $p(\ell)$ is $(q/r)$-periodic, let us consider the interval $\ell\in [0,q/r]$, and besides let us restrict to $\ell$'s such that $q \nmid (\ell r)$. We use the expression
\begin{equation}\label{eq:p_ell_part}
p(\ell) \,=\, \frac{\sin^2 \frac{\pi \ell r c }{q}}{{q\,c\,}\sin^2 \frac{\pi \ell r}{q}}.
\end{equation}
The numerator of $p(\ell)$, $\sin^2(\pi \ell r c /q)$, is the square of a sinusoidal function, which is $(q/rc)$-periodic. Note that $q/rc$ is exactly 1 if $r\mid q$ and is close to 1 if $r \nmid q$ because $c$ is either $\lceil q/r\rceil$ or $\lfloor q/r\rfloor$. An example of the numerator of $p(\ell)$ is depicted in Fig.~\ref{fig:Shorp123}(a) for $\ell\in [0,q/r]$.

\begin{figure}[!ht]
\centering
\includegraphics[scale=0.125]{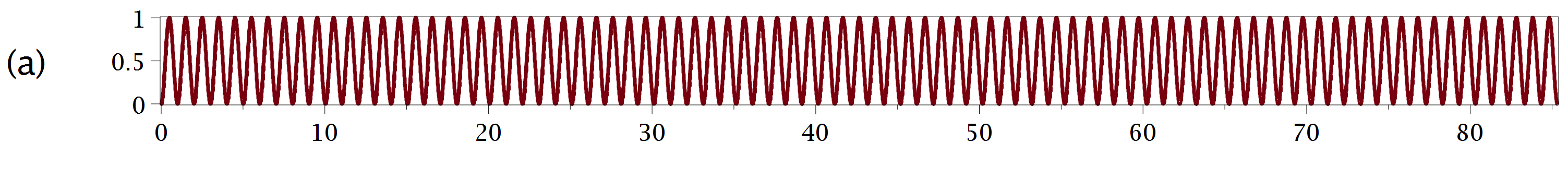}
\includegraphics[scale=0.125]{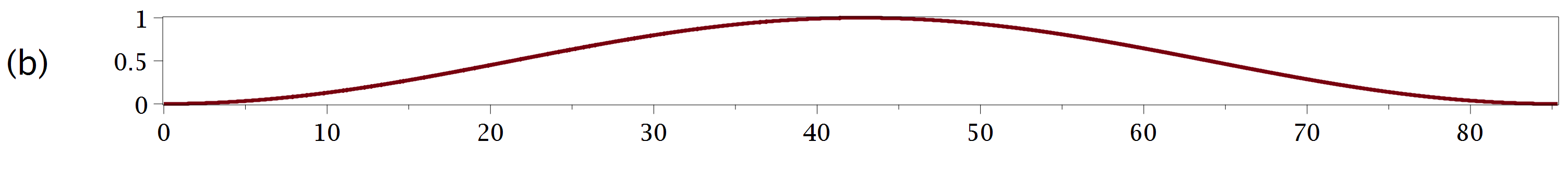}
\includegraphics[scale=0.125]{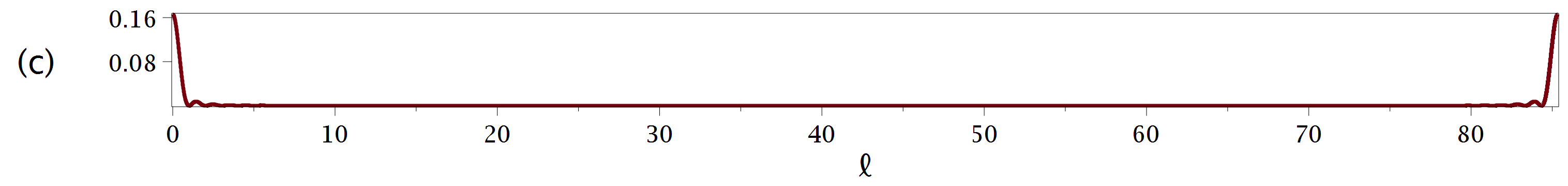}
\caption{(a) Numerator of $p(\ell)$ as a continuous function of $\ell$ for  $\ell\in [0,q/r]$ when $N=21$ ($q=2^9$, $r=6$, $c=85$). (b)~Denominator of $p(\ell)$ as a continuous function of $\ell$ without $qc$. (c)~$p(\ell)$ as a continuous function of $\ell$ for $\ell\in [0,q/r]$. Plot (c) is obtained by dividing (a) by (b) by $qc$.}\label{fig:Shorp123}
\end{figure}

The denominator of $p(\ell)$ (without $qc$), $\sin^2(\pi \ell r /q)$, is also the square of a sinusoidal function, which is zero only at $\ell=0$ and $\ell=q/r$. An example of the denominator of $p(\ell)$ (without $qc$) for the same values of $N$, $q$, and $r$ is depicted in Fig.~\ref{fig:Shorp123}(b) for $\ell\in [0,q/r]$.

If we divide the plot shown in Fig.~\ref{fig:Shorp123}(a) by the plot shown in Fig.~\ref{fig:Shorp123}(b) and divide the result by $qc$, we obtain the plot shown in Fig.~\ref{fig:Shorp123}(c), which is the continuous version of the plot shown in Fig.~\ref{fig:Shorprob} for $0\le \ell< q/r$. It is straightforward to check two facts: (1)~the largest values of $p(\ell)$ are close to $\ell=0$ and $\ell=q/r$ because the denominator of $p(\ell)$ is close to zero; and (2)~the smallest values of $p(\ell)$ are in the middle ($\ell$ close to $q/2r$) because the denominator of $p(\ell)$ is close to $qc$, which is large, far from zero. The numerator of $p(\ell)$ oscillates quickly between 0 and 1. Many more facts about $p(\ell)$ can be obtained from this analysis; for instance, it is easy to find the number of zeroes of $p(\ell)$. Using our knowledge of calculus, we check that
\[
\lim_{\ell=0^+} \frac{\sin^2 \frac{\pi \ell r c }{q}}{\sin^2 \frac{\pi \ell r}{q}}\,=\,
\lim_{\ell=\frac{q}{r}^-} \frac{\sin^2 \frac{\pi \ell r c }{q}}{\sin^2 \frac{\pi \ell r}{q}}\,=\, c^2.
\]
Then, $p(0)=p(q/r)=c/q\approx 1/r$. With this analysis, we have obtained the shape of $p(\ell)$ in the whole domain because $p(\ell)$ is $(q/r)$-periodic, that is, if we put 6 plots of Fig.~\ref{fig:Shorp123}(c) side by side, we obtain the continuous version of the plot of Fig.~\ref{fig:Shorprob}.

\begin{figure}[!ht]
\centering
\includegraphics[scale=0.3]{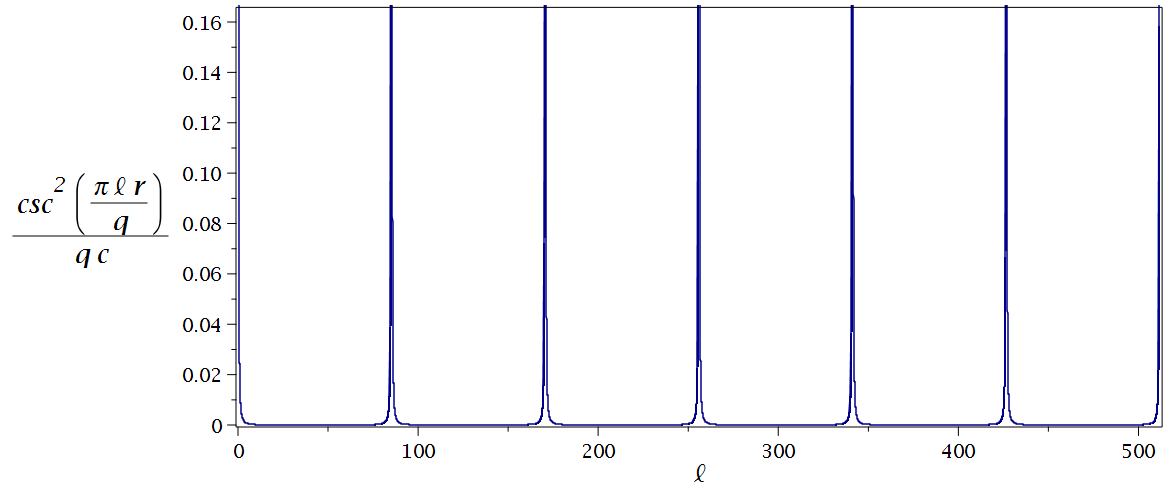}
\caption{The upper part of the envelope of $p(\ell)$ as a function of $\ell$ in the domain $[0,q]$  when $N=21$ ($q=2^9$, $r=6$, $c=85$), which is obtained using the square of the cosecant function ($\csc x=1/\sin x$).}\label{fig:Shorprob_envelope}
\end{figure}

There is an alternative route to obtain the shape of $p(\ell)$ in the domain $[0,q]$. The numerator of $p(\ell)$, $\sin^2 \frac{\pi \ell r c }{q}$, is the square of a sinusoidal function quickly oscillating inside an envelope, the lower part of which is the $\ell$-axis and the upper part is the function
\begin{equation*}
\frac{\csc^2\left(\frac{ \pi \ell r}{q}\right)}{qc},
\end{equation*}
which is obtained from Eq.~(\ref{eq:p_ell_part}) by replacing the numerator of $p(\ell)$ with 1. Fig.~\ref{fig:Shorprob_envelope} depicts the upper part of the envelope in the domain $[0,q]$, which must be compared with Fig.~\ref{fig:Shorprob}. Note that the envelope has vertical asymptotes for each $\ell$ multiple of $q/r$.

\begin{figure}[!ht]
\centering
\includegraphics[scale=0.2]{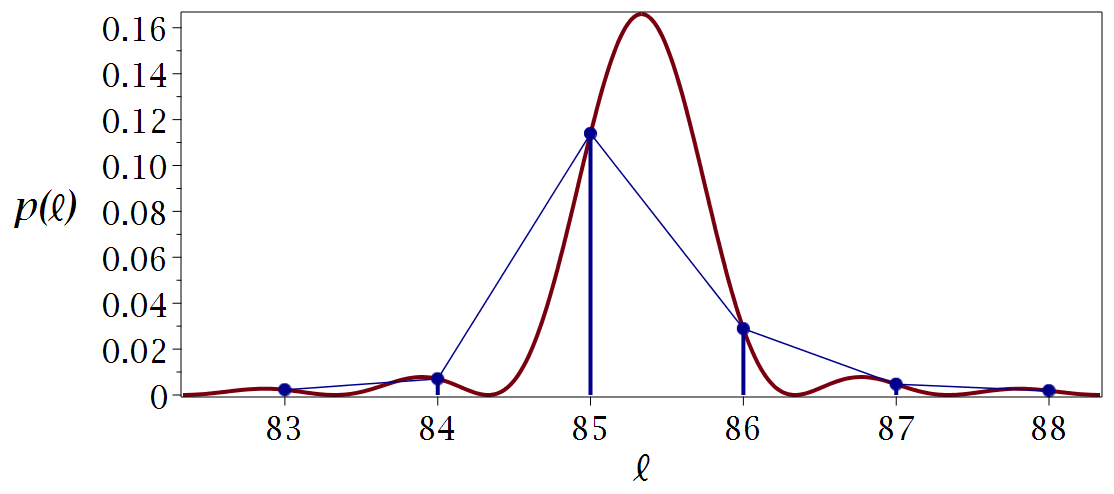}
\caption{First peak of the probability distribution $p(\ell)$ as a function of $\ell$ when $N=21$ ($q=2^9$, $r=6$, $c=85$). The red curve is the continuous version of $p(\ell)$ and the height of the peak is $1/r$. The values of $p(\ell)$ for integer $\ell$'s are highlighted in blue. The blue plot is a stretched version of the first peak of Fig.~\ref{fig:Shorprob}.}\label{fig:Shorp4}
\end{figure}

Let us set aside the envelope for now and focus on the shape of the first peak of $p(\ell)$. Fig.~\ref{fig:Shorp4} depicts the first peak in detail when $N=21$, $q=2^9$, and $r=6$, showing not only the continuous version in red but also the discrete values in blue when $\ell$ is an integer. The points in blue are the same ones as depicted in the first peak of Fig.~\ref{fig:Shorprob}, and at least four of them are clearly recognizable. Note that the first peak of Fig.~\ref{fig:Shorprob} does not reach $c/q$ because the peak is cut off before reaching the underlying summit. The underlying summit is reached only for $\ell$'s such that $q \mid (\ell r)$.

The particular case $r\mid q$ ($r$ is a power of 2) is noteworthy. In this case, $c$ is exactly equal to $q/r$, which is an integer, and $p(\ell)$ reaches the summit, whose height is exactly $1/r$, for all $\ell$'s that are multiples of $q/r$ and $p(\ell)$ goes to the bottom of the valley ($p(\ell)=0$) for all other integer values of $\ell$. In Eq.~(\ref{eq:shor_psi_4_final}), $\ket{\psi_4}$ contains only the first sum because all terms of the second sum vanish, since $\omega^{-\ell c r} = \omega^{-\ell q} = 1$. Here, the discrete version of the probability distribution is truly periodic with a period of $q/r$. If the result of the measurement is $\ell$, we obtain a candidate for $r$ by taking the denominator of the reduced fraction of $\ell/q$. This particular case arises because we are using qubits, and thus the dimension of the Hilbert space is a power of 2. Looking forward, in the general case, the concept of continued fraction expansion is necessary to accurately determine $r$.

\subsection*{Success probability of the quantum part}

The main focus of the analysis is the calculation of the success probability after one round of the quantum part of the algorithm. We have already gathered enough information about the probability distribution; we are now ready for the calculation. The trigonometric inequalities
\[
{4\alpha ^2} \le \sin^2 (\pi \alpha ) \le \pi^2\alpha ^2\,\text{ for }\, |\alpha |\le \frac{1}{2},
\]
can be proved using basic calculus. Using both sides of that inequality, we prove the following proposition:
\begin{proposition} \label{prop:shor_sin_2}
If $|\alpha |\le {1}/{2c}$ and $c\ge 1$, then
\[
\frac{\sin^2 {\pi \alpha  c }}{\sin^2 {\pi \alpha }} \ge \frac{4c^2}{\pi^2}.
\]
\end{proposition}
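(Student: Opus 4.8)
The plan is to apply the two provided trigonometric bounds separately to the numerator and the denominator, using opposite sides of the inequality in each case. The given estimate states $4\alpha^2 \le \sin^2(\pi\alpha) \le \pi^2\alpha^2$ whenever $|\alpha|\le 1/2$, and the key observation is that a \emph{lower} bound on the ratio is obtained by lower-bounding the numerator while upper-bounding the denominator.

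First I would verify that both arguments fall inside the valid domain $|\cdot|\le 1/2$. Since $c\ge 1$, the hypothesis $|\alpha|\le 1/(2c)$ immediately gives $|\alpha|\le 1/2$, so the denominator obeys the right-hand bound $\sin^2(\pi\alpha)\le \pi^2\alpha^2$. For the numerator I would instead apply the bound to the argument $\alpha c$: the hypothesis $|\alpha|\le 1/(2c)$ is exactly the statement $|\alpha c|\le 1/2$, so the left-hand side of the inequality yields $\sin^2(\pi\alpha c)\ge 4(\alpha c)^2 = 4c^2\alpha^2$.

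Combining the two bounds and cancelling the common factor $\alpha^2$ (legitimate for $\alpha\neq 0$, the only case in which the ratio is defined) gives
\[
\frac{\sin^2(\pi\alpha c)}{\sin^2(\pi\alpha)} \ge \frac{4c^2\alpha^2}{\pi^2\alpha^2} = \frac{4c^2}{\pi^2},
\]
which is precisely the claim. There is essentially no hard step here; the only point requiring care is keeping straight which half of the two-sided estimate is applied to which quantity—the numerator needs the lower bound and the denominator the upper bound—and checking that the single hypothesis $|\alpha|\le 1/(2c)$ together with $c\ge 1$ secures the domain restriction for both $\alpha$ and $\alpha c$ at once.
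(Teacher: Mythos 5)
Your argument is correct and is exactly the one the paper intends: it invokes the two-sided estimate $4\alpha^2 \le \sin^2(\pi\alpha) \le \pi^2\alpha^2$ for $|\alpha|\le 1/2$ and explicitly says the proposition follows by ``using both sides of that inequality,'' i.e., the lower bound applied to $\alpha c$ in the numerator and the upper bound applied to $\alpha$ in the denominator. Your domain checks ($|\alpha c|\le 1/2$ and $|\alpha|\le 1/2$ from $|\alpha|\le 1/(2c)$, $c\ge 1$) make this fully rigorous.
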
\vspace{5pt}

Using this proposition, we are able to find an interesting lower bound on the success probability. Let us start by calculating a lower bound on the probability $p(\ell)$ when $\ell$ is the nearest integer to a multiple of $q/r$. Suppose that $q \nmid (\ell r)$ and take $\ell=\lfloor kq/r\rceil$ for $1\le k<r$, where $\lfloor\,\rceil$ is the notation for the nearest integer. Then,
\[
p\left(\left\lfloor \frac{kq}{r}\right\rceil \right) \,=\, \frac{\sin^2 \left(\pi \left\lfloor \frac{kq}{r}\right\rceil\frac{ r c }{q}\right)}{{q\,c\,}\sin^2 \left(\pi \left\lfloor \frac{kq}{r}\right\rceil\frac{ r}{q}\right)}.
\]
Now we use the trigonometric identity $\sin^2 \alpha=\sin^2(\pi k'-\alpha)$ valid for any integer $k'$ to obtain
\[
p\left(\left\lfloor \frac{kq}{r}\right\rceil \right) \,=\, \frac{\sin^2 \pi\alpha c}{{q\,c\,}\sin^2 \pi\alpha },
\]
where
\[
\alpha \,=\, k  - \left\lfloor \frac{kq}{r}\right\rceil\frac{ r }{q}.
\]
Using that
\begin{equation}\label{eq:shor_bound_cont_frac}
\left|\, \frac{kq}{r}- \left\lfloor \frac{kq}{r}\right\rceil \,   \right| \le \frac{1}{2},
\end{equation}
we obtain
\[
|\alpha| \le \frac{r}{2q}  \le \frac{1}{2\left\lfloor\frac{q}{r}\right\rfloor}=\frac{1}{2c},
\]
where $c=\lfloor q/r \rfloor$ (we will not address the case $c=\lceil q/r \rceil$). Using Proposition~\ref{prop:shor_sin_2}, we obtain
\[
p\left(\left\lfloor \frac{kq}{r}\right\rceil \right) \ge \frac{4c}{\pi^2q}.
\]
Using $c\ge q/r-1$ and $q>N^2>rN$, we obtain
\begin{equation}\label{eq_shor_lb}
p\left(\left\lfloor \frac{kq}{r}\right\rceil \right) > \frac{4}{\pi^2 r}\left(1-\frac{1}{N}\right).
\end{equation}
We can visualize this result by noticing that there are two blue bars inside the peak of $p(\ell)$ in Fig.~\ref{fig:Shorp4}. Assuming that $N>4$, we have just shown that the height of the tallest bar is larger than ${3}/{(\pi^2 r)}$, that is, it is never smaller than 30\% of the height of the peak of the continuous version.

In the quantum part of the algorithm, there are $r$ possible $k$'s. Then, a lower bound on the success probability is
\[
p_\text{succ} >  \frac{3}{\pi^2},
\]
which means that the quantum part has at least a 30\% chance of returning an $m$-bit string $\ell$ that is the nearest integer to a multiple of $q/r$. Numerical calculations with $N$ up to four digits show that the lower bound can be improved to at least 70\%. Note that this success probability includes the trivial result $\ell=0$, which is a multiple of $q/r$. If $r$ is a power of 2, the success probability is 1 in the sense that the output is an exact multiple of $q/r$.

\section{Analysis of the classical part}

Before running the quantum part of the algorithm, we need to do a classical checklist. It is necessary to check whether $N$ is a composite number, which can be completed efficiently using primality-testing algorithms\footnote{\url{https://en.wikipedia.org/wiki/Primality_test}}. We quickly check whether $N$ is even.  Furthermore, there are efficient classical algorithms to check whether $N$ is a power of some prime number $p$~\cite{Ber98}. After confirming that $N$ is a composite odd number and not a power of a prime number, we randomly choose an integer $a$ such that $1<a<N$ and check whether gcd$(a,N)=1$, which can be done efficiently using the Euclidean algorithm.

After running the quantum part of the algorithm, we assume that the output is a string $\ell$ such that $\ell$ is the nearest integer to a multiple of $q/r$, that is
\[
\ell = \left\lfloor \frac{kq}{r}\right\rceil
\]
for some $k$ such that $0\le k <r$. If $\ell=0$, which happens with a negligible probability when $N$ is not too small, we have to rerun the quantum part of the algorithm. Define
\[
b=\frac{\ell}{q},
\]
which obeys $0< b<1$. Now we use the method of continued fraction approximation to obtain the desired information $r$. A continued fraction expansion of a positive rational number $b<1$ is
\[
b=\frac{1}{b_1+\frac{1}{b_2+\frac{1}{\ddots\,\,+\frac{1}{b_z}}}},
\]
where $b_1$ to $b_z$ are positive integers, and $z$ is a positive integer. The notation for the continued fraction is $[b_1,b_2,...,b_z]$, and the successive convergents are $[b_1]$, $[b_1,b_2]$, $[b_1,b_2,b_3]$, and so on, each one closer to $b$, until the last one, which is equal to $b$. Each convergent is converted into a rational number by truncating the continued fraction expansion. In the algorithm, we have to find the convergent $[b_1,b_2,...,b_j]$ that has the largest $j$ such that the denominator of the equivalent rational number is less than $N$. The denominator of this convergent is the candidate for $r$.

For example, the successive convergents of $\ell/q$ for $\ell=85$ and $q=2^9$ are $[6]=1/6$, $[6,42]=42/253$, and $[6,42,2]=85/512=\ell/q$. When $N=21$ and $a=2$, we have to select the convergent $[6]$ and then the candidate for $r$ is the denominator of $1/6$. Luckily, $a^6\equiv 1\mod N$. Our luck would fail if we picked $\ell=171$ (see Fig.~\ref{fig:Shorprob}) because we would conclude that $r=3$.

Now we can see why we have to demand that $q>N^2$. It is a consequence of a theorem proved at the end of the chapter on continued fraction expansion in Hardy and Wright's book~\cite{HW75}.
\begin{theorem}\label{theo:hardy_wright} (Hardy and Wright)
If $k$ and $r$ are positive integers and $b$ is a positive real number and
\[
\left|\, \frac{k}{r}- b \,   \right| < \frac{1}{2r^2},
\]
then $k/r$ is a convergent of the continued fraction expansion of $b$.
\end{theorem}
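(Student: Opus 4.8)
The plan is to reduce the statement to the classical best-approximation property of continued fraction convergents and prove it by explicitly exhibiting $k/r$ as one of them. First I would reduce to lowest terms: if $d=\gcd(k,r)>1$, replacing $k/r$ by its reduced form $k'/r'$ only shrinks the denominator, so the hypothesis survives as $|k'/r'-b| < 1/(2r^2)\le 1/(2r'^2)$, and it suffices to treat the coprime case. I would also dispose of the degenerate case $b=k/r$ immediately, since then $k/r$ is literally the last convergent of its own finite expansion.

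With $\gcd(k,r)=1$, write the error as $b-k/r=\eta\,\theta/r^2$ with $\eta\in\{+1,-1\}$ and $0<\theta<1/2$. I would then take a continued fraction expansion $k/r=[a_0,a_1,\dots,a_n]$ and exploit the fact that every rational has exactly two such expansions, of opposite length parity ($[a_0,\dots,a_n]=[a_0,\dots,a_n-1,1]$); this lets me choose $n$ so that $(-1)^n=\eta$. Denoting by $p_j/q_j$ the convergents of this finite expansion, so that $p_n=k$ and $q_n=r$, I introduce the real number $\omega$ defined by
\[
b=\frac{\omega\,p_n+p_{n-1}}{\omega\,q_n+q_{n-1}}.
\]

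The heart of the argument is a short computation: the determinant identity $p_nq_{n-1}-p_{n-1}q_n=(-1)^{n-1}$ gives
\[
b-\frac{p_n}{q_n}=\frac{(-1)^n}{q_n(\omega q_n+q_{n-1})},
\]
and comparing this with $b-k/r=\eta\,\theta/r^2$ (using $(-1)^n=\eta$ and $q_n=r$) yields $\omega=1/\theta-q_{n-1}/q_n$. Since $0<\theta<1/2$ forces $1/\theta>2$, while the monotonicity of convergent denominators gives $0\le q_{n-1}/q_n<1$, I conclude $\omega>1$. A value $\omega>1$ means the continued fraction expansion of $b$ agrees with $[a_0,\dots,a_n]$ through index $n$ and then continues with the valid expansion of $\omega$, so $k/r=p_n/q_n$ is precisely the $n$-th convergent of $b$, as claimed.

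The main obstacle is that the entire argument rests on the standard machinery of continued fractions: the recurrences $p_j=a_jp_{j-1}+p_{j-2}$ and $q_j=a_jq_{j-1}+q_{j-2}$, the determinant identity $p_nq_{n-1}-p_{n-1}q_n=(-1)^{n-1}$, strict monotonicity of the $q_j$, and the characterization of when $p_n/q_n$ is a convergent of $b$ in terms of the sign and size of $\omega$. Since the theorem is quoted from Hardy and Wright, I would either invoke these facts as standard or insert a preliminary lemma establishing the convergent recurrences and the determinant identity; the parity bookkeeping (matching $(-1)^n$ to $\eta$) and the final identification of $\omega>1$ with ``$p_n/q_n$ is a convergent'' are the only genuinely delicate points.
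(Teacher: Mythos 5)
Your proof is correct. The paper itself does not prove this theorem---it simply cites the result from the chapter on continued fractions in Hardy and Wright---and what you have written is precisely the classical argument from that source (Theorem~184 there): reduce to lowest terms, choose the expansion $[a_0,\dots,a_n]$ of $k/r$ with $(-1)^n$ matching the sign of $b-k/r$, define $\omega$ by $b=(\omega p_n+p_{n-1})/(\omega q_n+q_{n-1})$, use the determinant identity to get $\omega=1/\theta-q_{n-1}/q_n>1$, and invoke the standard lemma that $\omega>1$ forces $p_n/q_n$ to be a convergent of $b$. The computation of $\omega$ checks out, the parity bookkeeping is handled correctly, and the auxiliary facts you flag (the recurrences, the determinant identity, and the $\omega>1$ characterization) are exactly the standard machinery one would either cite or establish in a preliminary lemma, so there is no gap.
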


\noindent
In Eq.~(\ref{eq:shor_bound_cont_frac}), we have shown that the output $\ell$ of the quantum part obeys
\begin{equation}\label{eq:shor_r_q_2q}
\left|\, \frac{k}{r}- \frac{\ell}{q} \,   \right| \le \frac{1}{2q},
\end{equation}
for some integer $k$ such that $0\le k < r$, which is unknown to us as well as $r$. In order to use Theorem~\ref{theo:hardy_wright}, we have to demand that $q>N^2$ because $N$ is an upper bound for $r$, that is, $1/q<1/N^2<1/r^2$. Then, we can use the theorem and be sure that $k/r$ is a convergent of $\ell/q$.

To understand why we have to pick the convergent $[b_1,b_2,...,b_j]$ that has the largest $j$ such that the denominator of the equivalent rational number is less than $N$, we need to consider the following facts. If we pick a convergent $[b_1,b_2,...,b_{j'}]=k'/r'$ that obeys Eq.~(\ref{eq:shor_r_q_2q}) and $r'<N$ and it does not have the largest $j'$, then the next convergent, let us say $[b_1,b_2,...,b_{j'+1}]=k''/r''$, also obeys Eq.~(\ref{eq:shor_r_q_2q}) and $r''<N$. We obtain a contradiction because on the one hand using Eq.~(\ref{eq:shor_r_q_2q}) twice we have
\[
\left| \frac{k'}{r'}-\frac{k''}{r''} \right| = \left| \left(\frac{k'}{r'}-b\right)-\left(\frac{k''}{r''}-b\right) \right| \le \frac{1}{q} < \frac{1}{N^2},
\]
and on the other hand we have
\[
\left| \frac{k'}{r'}-\frac{k''}{r''} \right| = \frac{ \left| k'r''-k''r' \right| }{r'r''}> \frac{1}{N^2}.
\]
The last inequality follows from the inequalities $\left|k'r''-k''r'\right|\ge 1$ and $r'<N$ and $r''<N$. In conclusion, there is only one convergent $[b_1,b_2,...,b_j]$ that obeys Eq.~(\ref{eq:shor_r_q_2q}) such that the denominator of the equivalent fraction is less than $N$; it is the one with the largest $j$.

The analysis is not complete yet because it may happen that gcd$(k,r)>1$. In this case, when we look at the denominator of $k/r$, we obtain a factor of $r$, not $r$ itself. In the example above with $\ell=171$, we have $k=2$ and $r=6$ and the denominator of $k/r$ yields a wrong result. We have to discard those cases and we ask how many $k$'s relatively prime with $r$ are there. The proportion of good $k$'s is $\varphi(r)/r$, where $\varphi(r)$ is Euler's totient function. There is a lower bound for $\varphi(r)$ given by~\cite{HW75,RS62}
\[
\varphi(r) > \frac{r}{4\ln(\ln (r))},\, \text{ for }\, r\ge 7.
\]
Then, a lower bound for the probability that the output $\ell$ is the nearest integer to a multiple of $q/r$ and gcd$(k,r)=1$ is
\[
p\left(\ell=\left\lfloor \frac{kq}{r}\right\rceil \text{ and gcd}(k,r)=1\right)\,>\,\frac{3}{4\pi^2\ln(\ln (r))},
\]
for $r\ge 7$.

If we consider the factoring algorithm as a Monte Carlo algorithm, which is obtained by removing the ``go to'' statements of Algorithm~\ref{algo_ShorClassical}, the algorithm will run only one time and a lower bound on the overall success probability is
\[
\frac{9}{16\pi^2\ln(\ln (r))},
\]
which is obtained from the analysis of this Section and Fact~3 \big($r$ must be even and gcd$(a^{r/2}+1,N)>1$\big). If we think of the factoring algorithm as a Las Vegas algorithm, the way it is described in Algorithm~\ref{algo_ShorClassical}, the success probability is 1 but we have to calculate an upper bound for the average number of times the quantum part of the algorithm will run until finding a factor of $N$.  Suppose that an algorithm has probability $0<p<1$ of outputting the correct result in one run of the algorithm. The probability of outputting the correct result after exactly $n$ runs is $(1-p)^{n-1}p$ because it must fail $n-1$ times before succeeding. The average number of times the algorithm will run is\footnote{The sum is calculated using that
\[
\sum_{n=1}^\infty nq^n\,=\,q\,\frac{\text{d}}{\text{d}q}\left(\sum_{n=1}^\infty q^n\right),
\]
where $q=1-p$. The sum inside the derivative is the geometric series, whose value is $q/(1-q)$ when $|q|<1$.}
\[
\sum_{n=1}^\infty n\,(1-p)^{n-1}p \,=\, \frac{1}{p}.
\]
Then, an upper bound on the average number of times that the quantum part will run in Algorithm~\ref{algo_ShorClassical} is $16\pi^2\ln(\ln (r))/9$ for $r\ge 7$. Now we are done.

\section{Circuit of the modular exponentiation}

It is easy to build the circuit of $U^{(a)}_N$ if the order of $a$ is 2 modulo $N$. In this case
\[
U^{(a)}_N\ket{\ell}\ket{0} = \begin{cases}
			\ket{\ell}\otimes\ket{1}, & \text{if $\ell$ even,}\\
            \ket{\ell}\otimes\ket{a}, & \text{if $\ell$ odd,}
		 \end{cases}
\]
because $a^\ell\equiv 1 \mod N$ if $\ell$ is even and  $a^\ell\equiv a \mod N$ if $\ell$ is odd. Besides, $\ell$ is even if $\ell_{m-1}=0$ and odd if $\ell_{m-1}=1$. Then, if $\ell$ is even, we have
\[
U^{(a)}_N\ket{\ell}\ket{0} = \ket{\ell}\otimes (\ket{0}\cdots\ket{0}X^{\ell_{m-1}\oplus 1}\ket{0}),
\]
which is implemented using a CNOT with empty control on the $m$-th qubit and target on the last qubit. If $\ell$ is odd, we have
\[
U^{(a)}_N\ket{\ell}\ket{0} = \ket{\ell}\otimes (X^{\ell_{m-1}\cdot a_0}\ket{0}\cdots X^{\ell_{m-1}\cdot a_{n-1}}\ket{0}),
\]
where $a=(a_0,...,a_{n-1})_2$. This result is obtained by applying CNOTs with all controls on the $m$-th qubit and one target for each bit 1 of $a$. The number of CNOTs is the Hamming weight of $a$.

For example, if $N=21$ and $a=13$, the circuit of $U^{(a)}_N$ is\vspace{-5pt}
\[
\Qcircuit @C=1.7em @R=1.1em {
\lstick{\text{qubit}_1\,:\,\ket{\ell_0}}         & \qw       & \qw       & \qw       & \qw        & \rstick{}\qw \\
\lstick{\vdots \,\,\,}&  &                 &&  \rstick{\vdots}\\
\lstick{\text{qubit}_{m}\,:\,\ket{\ell_{m-1}}}       & \ctrlo{5} & \ctrl{2}& \ctrl{3}& \ctrl{5} & \rstick{}\qw\\
\lstick{\ket{0}}        & \qw          & \qw        & \qw       & \qw       & \qw  &  \rstick{}  \\
\lstick{\ket{0}}        & \qw          & \targ     & \qw       & \qw       & \qw  &  \rstick{}  \\
\lstick{\ket{0}}        & \qw          & \qw       & \targ     & \qw       & \qw  &  \rstick{}  \\
\lstick{\ket{0}}        & \qw          & \qw       & \qw       & \qw       & \qw  &  \rstick{}  \\
\lstick{\ket{0}}        & \targ        & \qw       & \qw       & \targ    & \qw  &  \lstick{}  \\
}\vspace{0.0cm}
\]
because $13=(01101)_2$. This technique can be applied for any
$a\in\{4, 11, 14\}$  and $N = 15$;
$a\in\{8, 13, 20\}$  and $N = 21$;
$a\in\{10, 23, 32\}$  and $N = 33$;
$a\in\{6, 29, 34\}$  and $N = 35$;
$a\in\{14, 25, 38\}$  and $N = 39$; and so on.

The circuit of $U^{(a)}_N$ when $a^2\not\equiv 1\mod N$ can be implemented with the methods described in~\cite{MS12, NO08, PG14}.

\section{Circuit of the Fourier transform}\label{sec:Shor_Fourier_Circ}

The first decomposition of the Fourier transform in terms of basic gates can be traced back to an IBM report from 1994, which became widely available in 2002~\cite{Cop02}. A description of this decomposition based on the classical FFT is available in~\cite{MPS10}.

The basic block of the circuit of the Fourier transform $F_{2^n}$, where $n$ is the number of qubits, is the controlled gate $C(R_k)$ for $k\ge 0$, where
\[
R_k\,=\,
\left[\begin{array}{cc}
1 & 0 \\
0 & \exp \left( \frac{2\pi\ii}{2^k} \right)
\end{array} \right].
\]
The matrix representation of $C(R_k)$ is
\[
{C(R_k)} =
\begin{bmatrix}
    I_2 &  \\
    & R_k \\
\end{bmatrix}
=
\begin{bmatrix}
                 1 & 0 & 0 & 0 \\
                 0 & 1 & 0 & 0 \\
                 0 & 0 & 1 & 0 \\
                 0 & 0 & 0 & \exp \left( \frac{2\pi\ii}{2^k} \right)
               \end{bmatrix}.
\]
The set of gates $R_k$ has many special subcases, which are
\begin{align*}
R_0 &=I_2,\\
R_1 &=Z,\\
R_2 &=S,\\
R_3 &=T,
\end{align*}
where $Z$, $S$, and $T$ are the Pauli $Z$ gate, phase gate, and $\pi/8$ or $T$ gate, respectively. Note that, $R_{k+1}=\sqrt{R_k}$. Then, the sequence above means the next gate is the square root of the previous one. The same idea applies to $C(R_k)$, that is, $C(R_{k+1})=\sqrt{C(R_k)}$, but in the latter case we are calculating the square root of $(4\times 4)$-matrices.

\begin{figure}[!ht]
\centering
\includegraphics[scale=0.18]{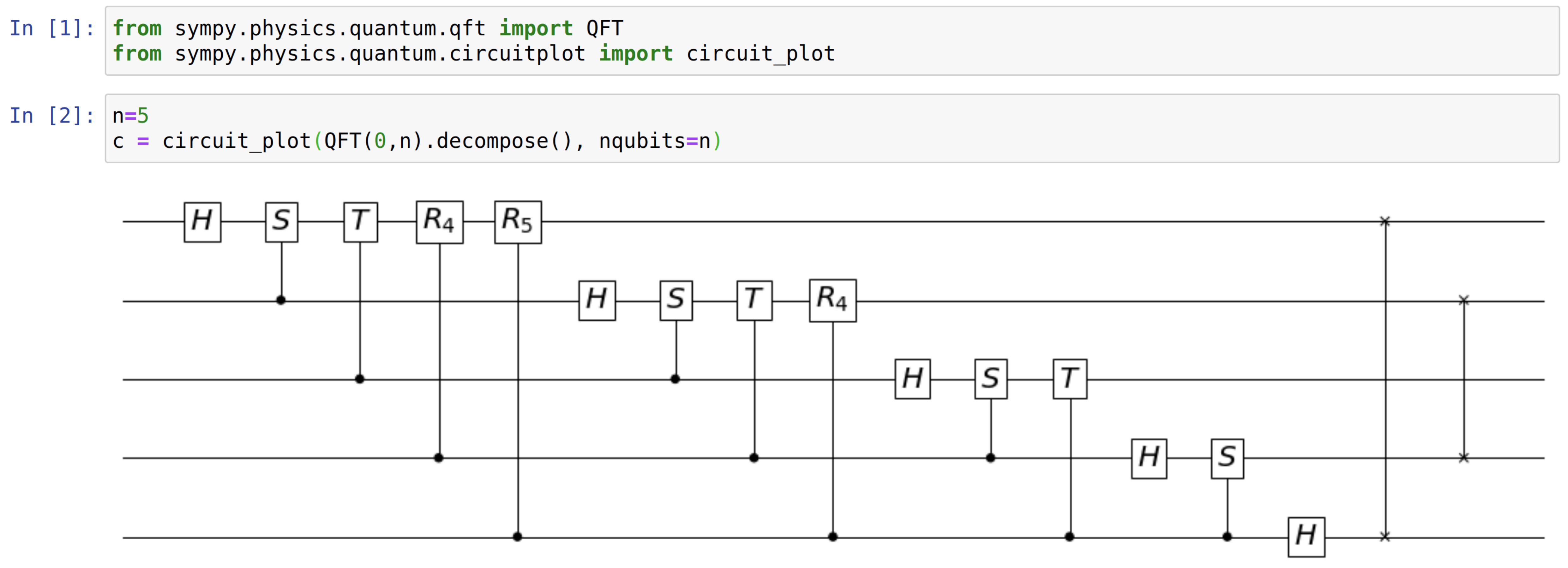}
\caption{Snapshot of a Jupyter notebook showing the decomposition of the Fourier transform $F_{2^5}$ using Python commands \emph{QFT} and \emph{circuit\_plot} of the Sympy library.}\label{fig:ShorFourierDecomp}
\end{figure}

Fig.~\ref{fig:ShorFourierDecomp} depicts the circuit of $F_{2^n}$ in terms of the Hadamard, $C(R_k)$, and swap gates when $n=5$. The structure of the circuit can be easily grasped from this example. The circuit has $n+1$ blocks. From left to right, the first block has $n$ gates, starting with $H$ and then $R_2$, $R_3$,..., $R_n$ acting on qubit 1 and controlled by qubit 2, 3,..., $n$, respectively. The next block starts again with $H$ and then $R_2$, $R_3$,..., $R_{n-1}$ acting on qubit 2 and controlled by qubit 3, 4,..., $n$, respectively. This goes on until we reach the last qubit, on which a single $H$ ($n$-th block) is applied. The last block is made of $\lfloor n/2\rfloor$ swap gates and has a simple symmetric structure. If $n$ is odd, the central qubit does not swap. The number of gates is
\[
n+(n-1)+\cdots+1+\left\lfloor \frac{n}{2}\right\rfloor\,\,=\,\,\frac{n(n+1)}{2}+\left\lfloor \frac{n}{2}\right\rfloor.
\]

Let us show that a circuit with the structure depicted in Fig.~\ref{fig:ShorFourierDecomp} implements the Fourier transform when we have $n$ qubits. Suppose that the input is $\ket{\ell}=\ket{\ell_1}\otimes\cdots\otimes\ket{\ell_n}$. When the input is a state of the computational basis, the output is an unentangled state $\ket{\psi_1}\otimes\cdots\otimes\ket{\psi_n}$. Let us start by calculating the output $\ket{\psi_1}$ of the first qubit. Since there is a swap gate inverting the states of the first and last qubit, we have
\[
\ket{\psi_1} = H\ket{\ell_n}= \frac{\ket{0}+\e^{2\pi\ii\frac{\ell_n}{2}}\ket{1}}{\sqrt 2}= \frac{\ket{0}+\e^{2\pi\ii\frac{\ell}{2}}\ket{1}}{\sqrt 2}.
\]
The second equation follows from the fact that if $\ell_n$ is 0, the output is $\ket{+}$, and if $\ell_n=1$, the output is $\ket{-}$ (because $\e^{\pi\ii}$ is $-1$). The last equation follows from the decomposition $\ell=2^{n-1}\ell_1+ 2^{n-2}\ell_2+\cdots+ 2\ell_{n-1} + \ell_n$ and from $\e^{2\pi\ii k}=1$ if $k$ is an integer.

The output of the second qubit is
\[
\ket{\psi_2} = R_2^{\ell_n}H\ket{\ell_{n-1}} =
R_2^{\ell_n}\left( \frac{\ket{0}+\e^{2\pi\ii\frac{\ell_{n-1}}{2}}\ket{1}}{\sqrt 2}\right)=
\frac{\ket{0}+\e^{2\pi\ii\left(\frac{\ell_{n-1}}{2}+\frac{\ell_{n}}{2^2}\right)}\ket{1}}{\sqrt 2}=
\frac{\ket{0}+\e^{2\pi\ii\frac{\ell}{2^2}}\ket{1}}{\sqrt 2}.
\]
The first equation is obtained from Fig.~\ref{fig:ShorFourierDecomp} using that $C(R_2)\ket{\ell_n}\ket{\ell_{n-1}}=\ket{\ell_n}\big(R_2^{\ell_n}\ket{\ell_{n-1}}\big)$. The second equation uses the same calculation described before for the first qubit. The third equation follows from
\begin{align*}
R_2^{\ell_n}\ket{0} &= \ket{0},\\
R_2^{\ell_n}\ket{1} &= \e^{2\pi \ii\frac{ \ell_n}{2^2}}\ket{1}.
\end{align*}
The last equation uses the same decomposition of $\ell$ described before.

The output of the last qubit is
\[
\ket{\psi_n} = R_n^{\ell_n}\cdots R_2^{\ell_2} H\ket{\ell_{1}} =
\frac{\ket{0}+\e^{2\pi\ii\left(\frac{\ell_{1}}{2}+\frac{\ell_{2}}{2^2}+\cdots +\frac{\ell_{n}}{2^n}\right)}\ket{1}}{\sqrt 2}=
\frac{\ket{0}+\e^{2\pi\ii\frac{\ell}{2^n}}\ket{1}}{\sqrt 2}.
\]
The first equation is obtained from Fig.~\ref{fig:ShorFourierDecomp} using that the output of the last qubit is obtained from the action of $H$, $R_2^{\ell_2}$, ..., $R_n^{\ell_n}$ on the first qubit. The second and third equations are obtained with the same sort of calculations described before for the first and second qubits.

Then, the output $\ket{\psi_1}\otimes\cdots\otimes\ket{\psi_n}$ of the circuit of Fig.~\ref{fig:ShorFourierDecomp} with $n$ qubits is
\[
\frac{\ket{0}+\e^{2\pi\ii\frac{\ell}{2}}\ket{1}}{\sqrt 2}\otimes
\frac{\ket{0}+\e^{2\pi\ii\frac{\ell}{2^2}}\ket{1}}{\sqrt 2}\otimes \cdots \otimes
\frac{\ket{0}+\e^{2\pi\ii\frac{\ell}{2^n}}\ket{1}}{\sqrt 2}.
\]
Now we convert each term into a sum
\[
\frac{1}{\sqrt 2}\sum_{k_1=0}^1\e^{2\pi\ii k_1\frac{\ell}{2}}\ket{k_1}\otimes
\frac{1}{\sqrt 2}\sum_{k_2=0}^1\e^{2\pi\ii k_2\frac{\ell}{2^2}}\ket{k_2}\otimes \cdots \otimes
\frac{1}{\sqrt 2}\sum_{k_n=0}^1\e^{2\pi\ii k_n\frac{\ell}{2^n}}\ket{k_n}.
\]
Pushing all sums to the right-hand side and combining the exponentials, we obtain
\[
\frac{1}{\sqrt{2^n}}\sum_{k_1,...,k_n=0}^1 \e^{2\pi\ii\ell\left(\frac{k_1}{2} +\cdots + \frac{k_n}{2^n} \right)} \ket{k_1,...,k_n},
\]
which is equivalent to
\[
\frac{1}{\sqrt{2^n}}\sum_{k=0}^{2^n-1} \e^{\frac{2\pi\ii\ell k}{2^n} } \ket{k}.
\]
Using the definition of the Fourier transform given in Section~\ref{seq:shor_fourier_transform}, we recognize that the last expression is $F_{2^n}\ket{\ell}$.

\begin{exercise}
Draw the circuit for the two-qubit Fourier transform based on Fig.~\ref{fig:ShorFourierDecomp} and show that the unitary matrix corresponding to this circuit is
\[
F_4=\frac{1}{2} \begin{bmatrix}
1 & \,\,\,\,1 & \,\,\,\,1 & \,\,\,\,1 \\
1 & \,\,\,\,\ii  & -1 & -\ii  \\
1 & -1 & \,\,\,\,1 & -1 \\
1 & -\ii  & -1 & \,\,\,\,\ii
\end{bmatrix}.
\]
\end{exercise}

\subsection*{Decomposition of $C(R_k)$}

The circuit that decomposes $C(R_k)$ into CNOT and single-qubit gates $R_{k+1}$ is
\[
\Qcircuit @C=1.5em @R=0.9em {
                 & \ctrl{2}  &          \qw   & & &\qw& \ctrl{2}&\qw &\ctrl{2}&\gate{R_{k+1}} & \qw   \\
                 &           & \rstick{\,\,\equiv} & &  & & & &\\
\lstick{}        & \gate{R_k}  &  \qw       & & &\gate{R_{k+1}}& \targ &\gate{R^\dagger_{k+1}} &\targ&\qw  & \rstick{.}\qw
}\vspace*{0.1cm}
\]
This is not a full decomposition in terms of universal gates because $R_{k+1}$ still needs to be expressed using a finite set of single-qubit gates. However, this step is generally handled by quantum computer compilers, so it is usually unnecessary to address it manually. The gate $R_k$ can be implemented using $R_z$, a rotation around the $z$-axis of the Bloch sphere. When $k$ is large, errors may prevent these gates from functioning correctly unless error-correcting codes are applied.

Now we show that the decomposition of $C(R_k)$ in terms of CNOT and single-qubit gates $R_{k+1}$ is correct. If the input to $C(R_k)$ is $\ket{j}\ket{\ell}$ then the output is
\begin{align*}
\ket{j}\ket{\ell} \xrightarrow[\text{}]{\bullet-\fbox{$R_k$}} \ket{j}R_k^j\ket{\ell}.
\end{align*}
If $\ket{j\ell}$ is either $\ket{00}$, $\ket{01}$, or $\ket{10}$, the output remains $\ket{j\ell}$ because either the control qubit is inactive or $R_k\ket{0}=\ket{0}$. The only nontrivial output occurs when the input is $\ket{11}$. In this case, the output becomes $\exp(2\pi\ii /2^k)\ket{11}$.

Next, let us analyze the decomposition (right-hand circuit above) to verify that it produces the same result. For inputs $\ket{00}$, $\ket{01}$, or $\ket{10}$, either the CNOT gates are inactive or they cancel out. Additionally, since $R_k\ket{0}=\ket{0}$ and $R_{k+1}^\dagger R_{k+1}=I$, the output is the same as the input in these cases. The only remaining case is the input $\ket{11}$. Let us go through the steps, using the facts that $R_{k+1}\ket{0}=\ket{0}$ and $R_{k+1}\ket{1}=\exp(2\pi\ii /2^{k+1})\ket{1}$:
\begin{align*}
&\ket{1}\ket{1} \xrightarrow[\text{}]{I\otimes R_{k+1}} \e^{{2\pi\ii}/{2^{k+1}}}\ket{1}\ket{1} \xrightarrow[\text{}]{\bullet-\oplus}  \e^{{2\pi\ii}/{2^{k+1}}}\ket{1}\ket{0} \xrightarrow[\text{}]{I\otimes R_{k+1}^\dagger} \e^{{2\pi\ii}/{2^{k+1}}}\ket{1}\ket{0} \rightarrow\xrightarrow[\text{}]{\bullet-\oplus}\\
\mbox{}\\
\vspace{-10pt}
 &\e^{{2\pi\ii}/{2^{k+1}}}\ket{1}\ket{1}  \xrightarrow[\text{}]{ R_{k+1} \otimes I} \e^{{2\pi\ii}/{2^{k+1}}} \e^{{2\pi\ii}/{2^{k+1}}}\ket{1}\ket{1}=\e^{{2\pi\ii}/{2^{k}}}\ket{1}\ket{1}.
\end{align*}
Thus, the output is $\exp(2\pi\ii /2^k)\ket{11}$, which matches the output of the left-hand circuit.

\section{Final remarks}

In his original paper~\cite{Sho99}, Shor demonstrated that the expected number of repetitions required to find a solution is $O(\log\log r)$, and the asymptotic lower bound on the success probability is $4/\pi^2$, as derived from Eq.~(\ref{eq_shor_lb}) when $N$ is large. The number of repetitions of the quantum component can be reduced by optimizing the classical post-processing step~\cite{Eke22}.

\chapter{Shor's Algorithm for the Discrete Logarithm Problem}\label{chap:DL}



The full paper on Shor's algorithms was published in 1997~\cite{Sho97} and describes not only an algorithm for integer factoring but also an exponentially faster algorithm for discrete logarithms. While fast factoring threatens RSA schemes, efficient algorithms for discrete logarithms threaten the security of cryptographic protocols such as the Diffie--Hellman key exchange. Shor's work is a remarkable and celebrated contribution to quantum computing, showing that quantum computers can solve important number-theoretic problems much more efficiently than the best known classical algorithms. Although the factoring algorithm is widely discussed in the literature, the discrete logarithm algorithm has received less attention in books~\cite{KLM07}. Some papers explore cryptographic applications of this algorithm~\cite{HSZDC20,PZ03}. In this Chapter, we focus on the discrete logarithm algorithm and explain how its efficiency arises from the periodic structure of an appropriate function. Knowledge of group theory is important.

\section{Preliminaries on number theory}

Over the integers, the \textit{logarithm} of $b$ to base $a$ is the number of times that $a$ is multiplied by itself to obtain $b$; that is, it is the integer $s$ such that $a^s = b$. Here we assume that $a$, $b$, and $s$ are positive integers. For instance, $\log_2 8 = 3$ because $2 \times 2 \times 2 = 8$. Since $s = \log_a b$, we write
\[
a^{\log_a b} = b.
\]
There are choices of $a$ and $b$ for which the logarithm $\log_a b$ does not exist; for instance, $\log_2 7$ does not exist because there is no integer $s$ such that $2^s = 7$. Two important properties of the logarithm to base $a$, when all logarithms involved exist, are
\begin{itemize}
\item $\log_a (b c)=\log_a b+\log_a c$;
\item $\log_a (b^c) = c \log_a b$.
\end{itemize}
There are two drawbacks, in the context of quantum algorithms, to defining the logarithm over the set of positive integers $\mathbb{Z}^+$: (1) depending on the choice of $a$ and $b$, $\log_a b$ may not exist, and (2) if $\log_a b$ does exist, it can be computed efficiently using the standard real Napierian logarithm or natural logarithm $\ln$, since
$$
s = \log_a b = \frac{\ln b}{\ln a},
$$
and this quantity is an integer whenever the logarithm exists.

Let us now define the \textit{discrete logarithm}. Instead of using $\mathbb{Z}^+$, we work with a subset of $\mathbb{Z}_N=\{0,1,\dots,N-1\}$, equipped with modular addition and multiplication. We are primarily interested in multiplication modulo $N$, where $N>1$. An element $a\in \mathbb{Z}_N$ has a multiplicative inverse $a^{-1}$ (that is, $aa^{-1}\equiv 1 \mod N$) if and only if $\gcd(a,N)=1$. Let $r$ be the order of $a$, that is, the smallest positive integer such that $a^r \equiv 1 \mod N$. The set of all such elements forms the multiplicative group $\mathbb{Z}_N^\times$. If $a\in \mathbb{Z}_N^\times$, the set
\begin{equation}\label{eq:G_a}
G_a=\{a^s \bmod N : 0\le s < r\}
\end{equation}
is a cyclic subgroup of $\mathbb{Z}_N^\times$, generated by $a$. For any $b \in G_a$, the discrete logarithm $\log_a b$ is defined as the integer $s \in \{0,1,\dots,r-1\}$ such that $a^s \equiv b \mod N$. In this context, no efficient classical algorithm is known for computing $\log_a b$ (in polynomial time) for arbitrary $b \in G_a$.

For example, for $N=34$ and $a=15$ we have in increasing order of $s$ in Eq.~\eqref{eq:G_a}
$$G_{15}=\{1,15, 21, 9, 33, 19, 13, 25\}.$$
Then, $\log_{15} 1=0$, $\log_{15} 15=1$, $\log_{15} 21=2$, $\log_{15} 9=3$, $\log_{15} 33=4$, and so on.

\section{Problem formulation and the main strategy}

The discrete logarithm problem can be formulated as follows. Let $G$ be a finite cyclic group generated by an element $a$, and let $b$ be an element in the subgroup generated by $a$. The goal is to determine an integer $s$ such that
\[
a^s=b.
\]
If the order of $a$ is $r$, then $s$ is defined modulo $r$, and we usually seek the unique solution in the range $0\le s<r$. In the multiplicative group $\mathbb{Z}_N^\times$, this means that, given $a$ and $b$, we want to find $s$ such that
\[
a^s \equiv b \mod N.
\]
While this problem is believed to be hard for classical computers when the group is large, Shor's quantum algorithm solves it efficiently by exploiting the periodic structure of a suitable function associated with $a$ and $b$.

Shor's order-finding algorithm can efficiently compute $\log_a b$ in some cases because the value $\mathrm{ord}(a)/\mathrm{ord}(b)$ may coincide with $\log_a b$ (see Exercise~\ref{exe:log-special-case}). Before attempting to compute $\log_a b$ using a general discrete-logarithm algorithm, we may compute $s=\mathrm{ord}(a)/\mathrm{ord}(b)$ efficiently using Shor's order-finding algorithm and then check whether $b=a^s$. If this is not the case, we use a quantum algorithm that efficiently computes $\log_a b$ for an arbitrary $b \in G_a$, as described below.

The main strategy in the part of Shor's factoring algorithm that computes the order of $a \in \mathbb{Z}_N$ (chosen at random) is to define the function
\begin{align*}
  f \colon \mathbb{Z}_{2^m} &\to \mathbb{Z}_N\\
  x &\mapsto a^x \bmod N,
\end{align*}
where $n=\left\lceil\log_2 N\right\rceil$, $m$ is an integer close to $2n$, and then exploit the fact that $f$ is $r$-periodic. We know that this function can be implemented on a quantum computer with about $3n$ qubits using the unitary operator $U_f$, defined by $U_f\ket{x}\ket{y}=\ket{x}\ket{y\oplus (a^x \bmod N)}$. It is possible to determine $r$ efficiently by applying the inverse discrete Fourier transform $F_{2^m}^\dagger$ to the first register after $U_f$ has been applied to a superposition of all computational basis states of the first register (the second register is initially set to $\ket{0}$). The method works because $f$ is periodic with period $r$.

A natural question is whether the same strategy can be used to compute $s$ from the equation $a^s \equiv b \mod N$. That is, can the discrete logarithm problem be reduced to finding the period of a suitable periodic function? The difficulty is that there is no $s$-periodic function $f(x)$ with domain $\mathbb{Z}_{2^m}$ and codomain $\mathbb{Z}_N$ that uses parameters $a$ and $b$ only. The way out is to use a function $f(x,y)$ with two variables defined as
\begin{align}\label{eq:f-x-y-log}
  f \colon \mathbb{Z}_{2^m}\times\mathbb{Z}_{2^m} &\to \mathbb{Z}_N \nonumber \\
  (x,y) &\mapsto a^x b^y \mod N.
\end{align}
The function $f$ is periodic, but the periodicity of $f$ is two-dimensional rather than one-dimensional, as we can see as follows. Using that $b=a^s$, we have
\begin{equation}\label{eq:f-x-y-log2}
f(x,y)=a^{x+sy} \mod N.
\end{equation}
This shows that the periodicity occurs in the exponent, and therefore
\[
f(x,y)=f(x',y') \Longleftrightarrow x+ sy \equiv x'+ sy' \mod r.
\]
Note that we use modulo $r$ because $a^k \equiv a^{k'}$ modulo $N$ whenever
$k \equiv k'$ modulo $r$, where $r=\textrm{ord}(a)$.

The condition $(x+ sy) \equiv (x'+ sy')$ modulo $r$ is equivalent to
\[
(x+ sy) - (x'+ sy') = kr
\]
for some integer $k$. To describe all integer solutions $(x',y')$, we introduce a free integer parameter $\ell$ and set $y' = y + \ell$. Substituting into the equation, we obtain
\begin{align}\label{eq:x'y'}
x' &= x + kr - \ell s , \nonumber \\
y' &= y + \ell,
\end{align}
for integers $k,\ell$. Conversely, every pair $(x',y')$ of the form given in Eq.~\eqref{eq:x'y'} satisfies $f(x',y') = f(x,y)$. To confirm the periodicity,
\[
f(x,y)=f(x+kr-\ell s,y+\ell),
\]
where $k,\ell\in \mathbb{Z}$, we use Eqs.~\eqref{eq:x'y'} and $a^r\equiv 1$ to compute
\[
f(x',y')\equiv  a^{x'+sy'}
\equiv a^{x+kr-\ell s +s(y+\ell)}
\equiv a^{x+sy}\equiv f(x,y) \mod N.
\]

Eqs.~\eqref{eq:x'y'} can be written in vector notation as
\[
(x',y')=(x + kr - \ell s , y + \ell)= (x,y)+k\vec{\textbf{r}}+\ell\vec{\textbf{s}},
\]
where \vspace*{-10pt}
\begin{align*}
\vec{\textbf{r}}&=(r,0),\\
\vec{\textbf{s}}&=(-s,1).
\end{align*}
Therefore, all pairs $(x',y')$ that satisfy
$f(x',y') = f(x,y)$ are obtained by adding integer combinations of the vectors $\vec{\textbf{r}}$ and $\vec{\textbf{s}}$. This shows that the periodic points of $f$ form a lattice (see next Section) in the plane generated by the two-dimensional vectors $\vec{\textbf{r}}$ and $\vec{\textbf{s}}$, where the linear combinations are built with scalars $k$ and $\ell$ in $\mathbb{Z}$. Finding the discrete logarithm $s$ is therefore equivalent to finding the lattice generated by the vectors $\vec{\textbf{r}}$ and $\vec{\textbf{s}}$.

\begin{exercise}\label{exe:log-special-case}
Let $G_a=\{a^k \bmod N : k\in \mathbb{Z}\}$, where $r$ is the order of $a$ modulo $N$. Let $b \in G_a$ with $b=a^s$ for some $0 \le s < r$.

\begin{itemize}
\item[(a)] Show that
\[
\mathrm{ord}(b)=\frac{r}{\gcd(r,s)}.
\]

\item[(b)] Prove that
\[
\log_a b = \frac{\mathrm{ord}(a)}{\mathrm{ord}(b)}
\]
if and only if $s \mid r$.

\item[(c)] Give an explicit example in $\mathbb{Z}_N^\times$ where $\mathrm{ord}(b)$ divides $\mathrm{ord}(a)$ but
\[
\log_a b \ne \frac{\mathrm{ord}(a)}{\mathrm{ord}(b)}.
\]

\item[(d)] Conclude that knowing only $\mathrm{ord}(a)$ and $\mathrm{ord}(b)$ is not sufficient, in general, to determine $\log_a b$.
\end{itemize}
\end{exercise}

\begin{exercise}\label{exe:L_r}
Let $G_a=\{a^k \bmod N : k\in \mathbb{Z}\}$, where $r$ is the order of $a$ modulo $N$.
Let $b \in G_a$ with $b=a^s$ for some $0 \le s < r$. Consider the function~\eqref{eq:f-x-y-log}.

\begin{itemize}

\item[(a)] Prove that the set
\[
L_r=\{(x,y)\in \mathbb{Z}_r^2 : x+sy \equiv 0 \mod r\}
\]
is a subgroup of $\mathbb{Z}_r^2$ with componentwise addition modulo $r$ as group operation.

\item[(b)] Show that $f(x,y)=f(x',y')$ if and only if $(x-x',y-y') \in L_r$.

\item[(c)] Show that $L_r=\{k\vec{\textbf{s}}\mod r: k\in \mathbb{Z}_r\}$ and conclude that $L_r$ is a cyclic group with exactly $r$ elements.

\item[(d)] Explain how, given $r$, a generator $(u,v)$ of $L_r$ can be used to determine $s$ modulo $r$, if $\gcd(v,r)=1$.
\end{itemize}
\end{exercise}

\section{Lattice and group structure of the periodicity}

Consider the infinite additive group $\mathbb{Z}^2=\mathbb{Z} \times \mathbb{Z}$, whose elements are pairs $(x,y)$ with addition defined componentwise:
\[
(x,y)+(x',y')=(x+x',\,y+y').
\]
We identify vectors and points in the plane, so that a vector $(x,y)$ can also be viewed as a point with coordinates $(x,y)$. Thus, the group $\mathbb{Z}^2$ can be represented geometrically by the set of all lattice points in the Euclidean plane, that is, the points whose coordinates are both integers. In this representation, the group operation corresponds to vector addition. Here, the word vector is used only in a geometric sense, since $\mathbb{Z}^2$ is an additive group rather than a vector space.

A two-dimensional lattice $L$ in $\mathbb{Z}^2$, by definition, is a set
\[
L=\{k\vec{\textbf{r}}+\ell\vec{\textbf{s}}: k,\ell\in \mathbb{Z}\},
\]
where $\vec{\textbf{r}}$ and $\vec{\textbf{s}}$ are linearly independent vectors in $\mathbb{Z}^2$. The set $\{\vec{\textbf{r}},\vec{\textbf{s}}\}$ is called a basis of the lattice. Different bases may generate the same lattice. The group $\mathbb{Z}^2$ itself is a two-dimensional lattice generated by the canonical basis $\vec{\textbf{e}}_0=(1,0)$ and $\vec{\textbf{e}}_1=(0,1)$. Since $L$ is generated by $\vec{\mathbf{r}}$ and $\vec{\mathbf{s}}$ under componentwise addition, it is a subgroup of $\mathbb{Z}^2$.
Fig.~\ref{fig:logdisfig0} shows an example of a two-dimensional lattice generated by the vectors $\vec{\textbf{r}}=(16,0)$ and $\vec{\textbf{s}}=(5,1)$ in $\mathbb{Z}^2$.

\begin{figure}[!ht]
\centering
\includegraphics[scale=0.8]{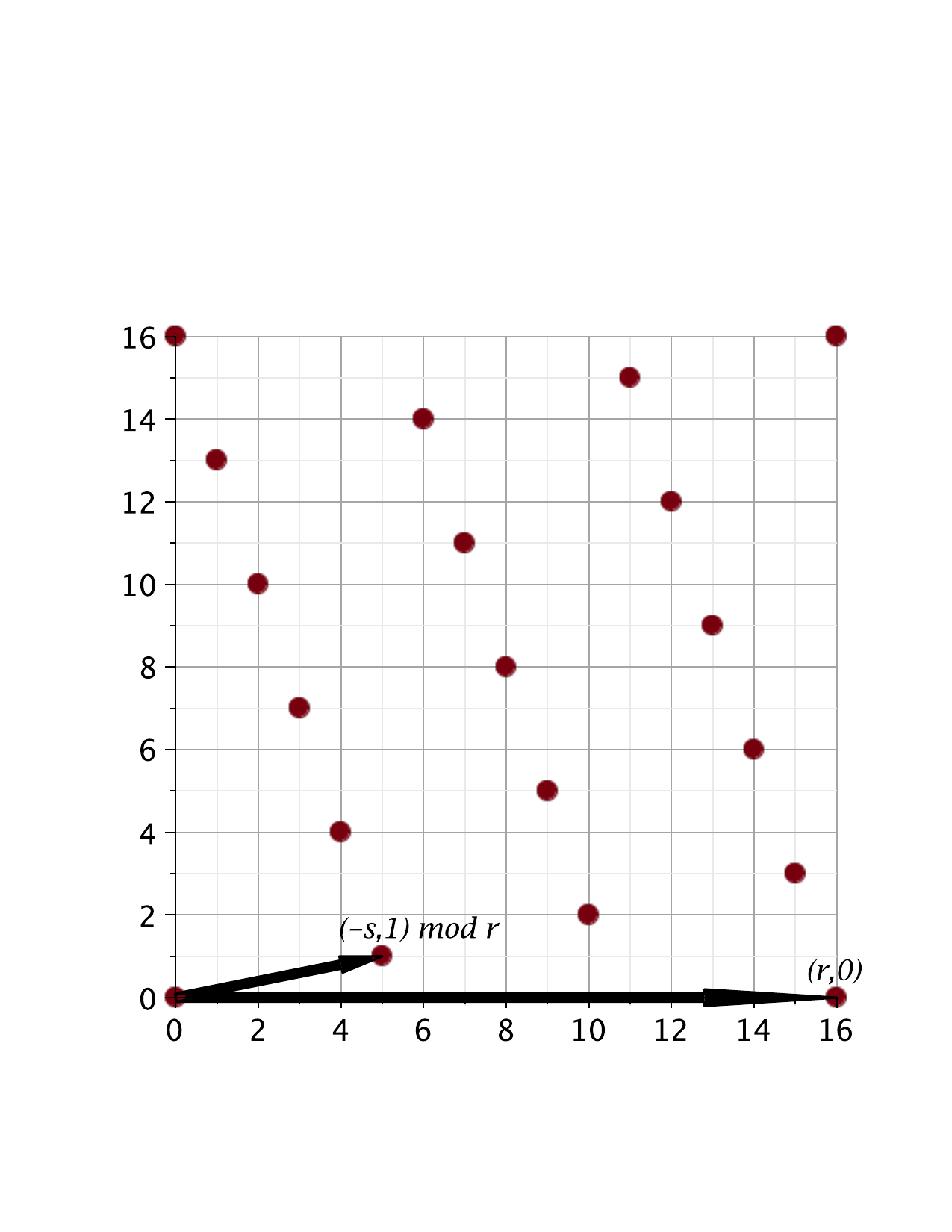}
\caption{A two-dimensional lattice generated by $(16,0)$ and $(5,1)$. Only the points with coordinates between $0$ and $16$ are shown; the lattice extends infinitely in all directions. For every point $(x,y)$ in the lattice, we have $f(x,y)=1$. For any point outside the lattice, $f(x,y)\neq 1$.}
\label{fig:logdisfig0}
\end{figure}

As an example, consider again $N=34$, but now take $a=27$. Then the cyclic group in Eq.~\eqref{eq:G_a} is
\[
G_{27}=\langle a \rangle=\{1, 27, 15, 31, 21, 23, 9, 5, 33, 7, 19, 3, 13, 11, 25, 29\}.
\]
Suppose we want to compute the discrete logarithm of $b=3$ to the base $a$, that is, $\log_a b=\log_{27} 3$. The answer is $s=11$, since $3$ appears in the 11th position of $G_{27}$ when the counting starts at $s=0$. Since the order of $a$ is $r=16$ (see the order of $G_{27}$), the two-dimensional lattice $L$ shown in Fig.~\ref{fig:logdisfig0}, with basis vectors $\vec{\textbf{r}}=(r,0)=(16,0)$ and $\vec{\textbf{s}}=(-s,1)=(-11,1)$, represents all points $(x,y)\in\mathbb{Z}^2$ such that $f(x,y)=f(0,0)=1$, where $f(x,y)=27^x 3^y \bmod 34$, according to the definition in Eq.~\eqref{eq:f-x-y-log}.


The two-dimensional lattice $L$ provides an interesting way to visualize the periodicity of the function $f(x,y)$. To understand how Shor's algorithm determines the discrete logarithm, however, it is more convenient to work with the finite group associated with this lattice. Consider the additive group
\begin{equation}\label{eq:L_r}
L_r=\{(x,y)\in \mathbb{Z}_r^2 : x+sy \equiv 0 \mod r\},
\end{equation}
which is a subgroup of $\mathbb{Z}_r^2$ (see Exercise~\eqref{exe:L_r}). The condition $x+sy \equiv 0 \mod r$ means that the exponent in Eq.~\eqref{eq:f-x-y-log2} is congruent to $0$ modulo $r$. Since $a$ has order $r$, we obtain
\[
a^{x+sy}\equiv 1 \mod N,
\]
and therefore $f(x,y)=1$ for every $(x,y)\in L_r$. Thus, $L_r$ consists exactly of the points of $L$ whose coordinates satisfy $0\le x<r$ and $0\le y<r$. The group $L_r$ is cyclic and is generated by $(-s,1)$. In contrast, the vector $(r,0)$, which is essential in describing the infinite lattice $L$, plays no role as a generator of $L_r$, because in $\mathbb{Z}_r^2$ it coincides with the identity element.

Shor's algorithm uses three registers, $\ket{x}\ket{y}\ket{f(x,y)}$, to store the point $(x,y)$ and the corresponding value $f(x,y)$. By quantum parallelism, the algorithm creates a superposition of all possible values of $x$ and $y$. To simplify the resulting linear combination, we group together the terms that have the same value of $f(x,y)$. Thus, we must partition the set of points $(x,y)$ according to their image under $f$.

This partition arises naturally from the subgroup $L_r$ of the additive group $\mathbb{Z}_r^2$. Indeed, $\mathbb{Z}_r^2$ can be decomposed into equivalence classes, namely the cosets of $L_r$ in $\mathbb{Z}_r^2$. A convenient choice of representatives is
\[
(0,0),\ (1,0),\ \ldots,\ (r-1,0).
\]
Among these representatives, only $(0,0)$ belongs to $L_r$, since $(x,0)\in L_r$ implies $x\equiv 0 \mod r$. Moreover, the values of $f$ at these representatives are all distinct: if $f(x,0)=f(x',0)$, then $a^x\equiv a^{x'} \mod N$, hence $a^{x-x'}\equiv 1 \mod N$, and since $a$ has order $r$, we must have $x\equiv x' \mod r$. Because $0\le x,x'<r$, it follows that $x=x'$. Therefore, the value of $f$ at any point $(x,y)\in\mathbb{Z}_r^2$ is equal to the value of $f$ at exactly one of these representatives. More precisely, for each fixed $x$ with $0\le x<r$, the points $(x-\ell s,\ell)$, with $0\le \ell<r$, form the coset represented by $(x,0)$, and all satisfy
\[
f(x-\ell s,\ell)=f(x,0).
\]
Geometrically, this corresponds to translating the lattice leftward by $(x,0)$ modulo $r$, that is, with cyclic boundary conditions. Therefore, when $x$ is fixed and $\ell$ runs from $0$ to $r-1$, we obtain exactly the coset represented by $(x,0)$.

\begin{exercise}
Consider the discrete logarithm problem with $N=34$, $a=27$, and $b=3$. Recall that $a$ has order $r=16$ and that the discrete logarithm of $b$ to the base $a$ is $s=11$.
\begin{itemize}
\item[(a)] List all elements of $L_r$.
\item[(b)] Find the coset representatives of $L_r$ in $\mathbb{Z}_{16}^2$ and write the corresponding cosets explicitly.
\item[(c)] Verify that $f(x,y)=27^x 3^y \bmod 34$ is constant on each coset.
\end{itemize}
\end{exercise}

\begin{exercise}
Let $G_a$ and $L_r$ be the groups defined in Eqs.~\eqref{eq:G_a} and~\eqref{eq:L_r}, respectively.
\begin{itemize}
\item[(a)] For a fixed $x\in\mathbb{Z}_r$, show that
\[
(x,0)+L_r=\{(x-\ell s,\ell): \ell\in\mathbb{Z}_r\}.
\]
Describe geometrically the elements of this coset.

\item[(b)] Show that the cosets
\[
(x,0)+L_r,\qquad x\in\mathbb{Z}_r,
\]
form a partition of $\mathbb{Z}_r^2$.

\item[(c)] Define the map
\[
f:\mathbb{Z}_r^2 \to G_a,
\qquad
f(x,y)=a^x b^y.
\]
Show that $f$ is a group homomorphism and that its kernel is $L_r$.

\item[(d)] Use the First Isomorphism Theorem to prove that
\[
\mathbb{Z}_r^2/L_r \cong G_a.
\]
Interpret this result in terms of the cosets of $L_r$ and the values of $f$.
\end{itemize}
\end{exercise}

\section{Special case: the order of \textit{a} is a power of 2}\label{sec:dislog_special_case}

Let $N$, $a$, and $b$ be known positive integers, and let $s$ be a positive integer such that $a^s \equiv b \mod N$ and gcd$(a,N) = 1$. Our goal is to find $s$, given $N$, $a$, and $b$ as input. Let $r$ be the order of $a$ modulo $N$, which can be efficiently determined using Shor's order-finding algorithm. In this Section, we address the case where $r = 2^m$ for some integer $m$, meaning that $r$ is a power of 2. In this case, there is an efficient classical algorithm called the \textit{Pohlig-Hellman algorithm} that can compute $s$ in polynomial time. We describe the quantum algorithm for this case because the Fourier transform $F_r$ can be implemented in a straightforward way in a qubit-based quantum computer, and the analysis of the algorithm is easier than in the general case.

Let $f$ be a two-variable function with domain $\mathbb{Z}_{r}\times \mathbb{Z}_{r}$ and codomain $\mathbb{Z}_N$ defined as
\begin{align*}
  f(x,y) = a^xb^y\mod N.
\end{align*}
We have shown that $f$ is periodic in the following way:
\[
f(x,y)=f(x+kr-\ell s,y+\ell).
\]
Using $f$, we define a 3-register unitary operator
\begin{equation}\label{eq:U_f_log_dis}
U_f\ket{x}\ket{y}\ket{z}\,=\,\ket{x}\ket{y}\ket{z\oplus f(x,y)},
\end{equation}
where the first and second registers have $m$ qubits each and the third register has $n=\lceil \log_2 N\rceil$ qubits. The arithmetic with the variables of the first and second registers is performed modulo $r$. The arithmetic to calculate the image $f(x,y)$ is performed modulo $N$. The symbol $\oplus$ represents the bitwise xor operation.
The algorithm that calculates the discrete logarithm when $r$ is a power of 2 is described in Algorithm~\ref{algo_discrete_log} and the circuit is depicted in Fig.~\ref{circuit:logdis}.

\begin{algorithm}[!ht]
\caption{Discrete logarithm algorithm when $r$ is a power of 2}\label{algo_discrete_log}
\KwIn{$N$, $a$, $b$, and $r$ (order of $a$).}
\KwOut{$s=\log_a b$ with probability $1/2$, where $a^s\equiv b \mod N$} \BlankLine
Prepare the initial state $\ket{0}^{\otimes m}\ket{0}^{\otimes m}\ket{0}^{\otimes n}$, where $m=\log_2 r$ and $n=\lceil\log_2 N\rceil$\vspace{3pt}\;
Apply $H^{\otimes m}\otimes H^{\otimes m}$ to the first and second registers\vspace{3pt}\;
Apply $U_f$\vspace{2pt} as defined in Eq.~(\ref{eq:U_f_log_dis})\;
Measure the third register in the computational basis\vspace{1pt}\;
Apply $F_{r}^\dagger\otimes F_{r}^\dagger$ to the first and second registers\vspace{3pt}\;
Measure the first and second registers in the computational basis, where $r_1$ and $r_2$ are the results\vspace{2pt}\;
If $\gcd(r_1,r)=1$, return $s\equiv r_2/r_1 \mod r$; otherwise, Fail.
\end{algorithm}

\begin{figure}
\begin{equation*}
\Qcircuit @C=2.1em @R=1.1em {
\lstick{\ket{0}^{\otimes m}} &\qw /^m&\gate{H^{\otimes m}}&\multigate{3}{U_f}&\qw&\gate{F_{2^m}^\dagger}&\meter &\rstick{r_1}\cw&   \\
\lstick{\ket{0}^{\otimes m}} &\qw /^m&\gate{H^{\otimes m}}&\ghost{U_f}       &\qw&\gate{F_{2^m}^\dagger}& \meter &\rstick{r_2}\cw&   \\
 & & & & & {\hspace{1.7cm}{}^\ket{\psi_4}} \\
\lstick{\ket{0}^{\otimes n}} &\qw /^n&\qw                 &\ghost{U_f}       & \meter &\cw \\
& \\
\ustick{\hspace{1.4cm}\ket{\psi_0}} & & \ustick{\hspace{1.8cm}\ket{\psi_1}}& \ustick{\hspace{1.6cm}\ket{\psi_2}} & \ustick{\hspace{1.5cm}\ket{\psi_3}}
}\vspace{0.1cm}
\end{equation*}
\caption{Circuit of the discrete logarithm algorithm when $r$ is a power of 2, where $m=\log_2 r$, $n=\lceil\log_2  N\rceil$, and $U_f$ is defined by Eq.~(\ref{eq:U_f_log_dis}). If $\gcd(r_1,r)=1$, return $s\equiv r_2/r_1 \mod r$; otherwise, Fail. The probability of returning the correct result is 1/2.}\label{circuit:logdis}
\end{figure}

\subsection*{Analysis of the algorithm}
In {step 2}, we apply $H^{\otimes m}\otimes H^{\otimes m}\otimes I$ to $\ket{0}\ket{0}\ket{0}$ obtaining
\[
\ket{\psi_1}=\frac{1}{r}\sum_{x,y=0}^{r-1}\ket{x}\ket{y}\ket{0}.
\]
In {step 3}, we apply $U_f$ to $\ket{\psi_1}$ obtaining
\[
\ket{\psi_2}=\frac{1}{r}\sum_{x,y=0}^{r-1}\ket{x}\ket{y}\ket{f(x,y)}.
\]
Before proceeding, we simplify $\ket{\psi_2}$ as much as possible using the periodicity of the function $f$. Since the image of $f$ on all points $(x-\ell s,\ell)$ for $0\le \ell < r$ is equal to the image of $(x,0)$, we can collect the third register as
\[
\ket{\psi_2}=\frac{1}{r}\sum_{x=0}^{r-1}\left(\sum_{\ell=0}^{r-1}\ket{x-\ell s}\ket{\ell}\right)\ket{f(x,0)}.
\]
In {step 4}, we measure the third register in the computational basis, obtaining the image of one of the representatives of the cosets of $L_r$ in $\mathbb{Z}_{r}^2$, as follows:
\[
\ket{\psi_3}=\frac{1}{\sqrt r}\left(\sum_{\ell=0}^{r-1}\ket{x_0-\ell s}\ket{\ell}\right)\ket{f(x_0,0)}.
\]
In {step 5}, we apply $F_{r}^\dagger\otimes F_{r}^\dagger$ to the first and second registers (we disregard the third register) obtaining
\[
\ket{\psi_4}=\frac{1}{\sqrt r}\sum_{\ell=0}^{r-1}\left(F_{r}^\dagger\ket{x_0-\ell s}\right)\left(F_{r}^\dagger\ket{\ell}\right).
\]
We now simplify $\ket{\psi_4}$ as much as possible. Using the definition of the Fourier transform $F_{r}^\dagger$, we obtain
\[
\ket{\psi_4}=\frac{1}{\sqrt r}\sum_{\ell=0}^{r-1}\left(\frac{1}{\sqrt {r}}\sum_{x=0}^{{r}-1}\omega_{r}^{x({x_0-\ell s})}\ket{x}\right)\left(\frac{1}{\sqrt {r}}\sum_{y=0}^{{r}-1}\omega_{r}^{y\ell}\ket{y}\right),
\]
where $\omega_r=\exp(2\pi\ii/r)$. Changing the order of the sums by pushing $\sum_{x,y}$ to the left, and by pushing $\sum_\ell$ and the term $\omega_{r}^{-x\ell s}$ to the second register, we obtain
\[
\ket{\psi_4}=\frac{1}{ r\sqrt{r}}\sum_{x,y=0}^{{r}-1}\omega_{r}^{xx_0}\ket{x}\,\sum_{\ell=0}^{r-1}\omega_{r}^{\ell(-xs+y)}\ket{y}.
\]
Using that
\[
\frac{1}{{r}}\sum_{\ell=0}^{r-1}\left(\omega_{r}^{-xs+y}\right)^\ell
 =
\begin{cases}
1, & \mbox{if $y=xs$,} \\
0, & \mbox{otherwise,}
\end{cases}
\]
we obtain
\[
\ket{\psi_4}=\frac{1}{ \sqrt{r}}\sum_{x,y=0}^{{r}-1}\omega_{r}^{xx_0}\ket{x}\,\delta_{y,xs}\ket{y},
\]
and by simplifying the sum over $y$, we obtain
\[
\ket{\psi_4}=\frac{1}{\sqrt r}\sum_{x=0}^{{r}-1}\omega_{r}^{xx_0}\ket{x}\,\ket{xs}.
\]

In {step 6}, we measure the first and second registers in the computational basis and obtain two results: $r_1$ and $r_2\equiv r_1s \mod r$, where $r_1$ is chosen in the interval $[0,r-1]$ uniformly at random.

In {step 7}, if $\gcd(r_1,r)=1$, we calculate $s\equiv r_2/r_1 \mod r$. Since $s<r<N$, $s$ satisfies $a^s\equiv b\mod N$. For any odd $r_1$, $\gcd(r_1,r)=1$. Since half of the values of $r_1$ are odd, the success probability is 1/2.

\section{Final remarks}

In this Chapter, we assumed for simplicity that the order $r$ is a power of $2$. In the general case, one applies the quantum Fourier transform over a larger register of size $q=2^m$, chosen so that $q$ is sufficiently large compared with $r$. Then the measurement outcomes are no longer exact multiples of $q/r$, but are concentrated near them. The value of $r$ can then be recovered by classical post-processing, for instance by using continued fractions.

Another simplifying assumption made in the exposition is that the measurement outcomes determine the desired quantity in a single run. In practice, the algorithm is probabilistic, and one usually has to repeat it a few times in order to obtain enough information to reconstruct the solution with high probability.

Finally, the algebraic structures introduced in this Chapter, such as the subgroup $L_r$ and its cosets, show that the discrete logarithm problem can be interpreted in terms of periodicity in a finite abelian group. This viewpoint is fundamental in the analysis of Shor's algorithm and connects the discrete logarithm problem with the broader framework of quantum algorithms based on the quantum Fourier transform.

\chapter{Grover's Algorithm}\label{chap:Grover}

Grover's algorithm~\cite{Gro96,Gro97} is a search algorithm initially developed for unstructured data. It can also be described in terms of an oracle, which is a function with some promise or property that can be evaluated as many times as we want, and our goal is to determine the property that the function has. This chapter follows the latter description with a focus on the circuit model. The analysis of the algorithm is based on a geometric interpretation, and as an application example, we solve an instance of a SAT problem. Grover's algorithm is optimal, that is, it cannot be improved~\cite{BBBV97,Zal99}, has already been used to create new quantum algorithms~\cite{LMOP22,MSS07}, and is described in many books~\cite{Bar09,BCS07,BH13,HIKKO14,Hir10,KLM07,KSV02,LR14,Mer07,NC00,Por18book,RP11,SS08,Wil08}.

\section{Problem formulation in terms of an oracle}

Let $N$ be a power of 2 for some integer $n$, that is, $N=2^n$. Suppose that $f:\big\{0,\ldots,N-1\big\}\rightarrow \{0,1\}$ is a Boolean function such that $f(x)=1$ if and only if $x=x_0$ for some fixed value $x_0$, that is,
\begin{equation*}\label{ag_f_x}
    f(x) = \left\{
  \begin{array}{l@{\quad}l}
    1, & \hbox{if \ensuremath{x=x_0},} \\
    0, & \hbox{otherwise.}
  \end{array}
\right.
\end{equation*}
Suppose that $x_0$ is unknown to us. How can we find $x_0$ by evaluating $f$? From a computational point of view, we want to evaluate $f$ as few as possible.

Classically, the most efficient algorithm queries this function $N$ times in the worst case, that is, the complexity of the classical algorithm in terms of the number of queries is $O(N)$. How is it done in practice? We ask someone else to generate a $n$-bit random number $x_0$. This person hides $x_0$ from us and makes a compiled subroutine of $f$. We can use the subroutine as many times as we want, but we cannot hack the code in search for $x_0$. The classical algorithm that solves this problem is an iteration that queries $f(x)$ for $x$ from 0 to $2^n-1$. As soon as $f(x)$ is 1, the program returns $x_0$.

\textit{Quantumly}, it is possible to improve the query complexity to $O(\sqrt N)$. How is it done in practice? We have to ask someone again to generate a $n$-bit random number $x_0$ and define $f$. This person, the \textit{oracle}, implements $f$ through a unitary matrix $U_{f}$ in a quantum computer. We can use $U_{f}$, but we cannot see the details of the implementation of $U_{f}$. Each time we use $U_{f}$, we add a unit to the count. We can use additional gates that obviously don't depend on $x_0$.

We have the same problem that can be solved by algorithms executed on two different machines. In the first case, a classical computer with $O(n)$ bits is used and the solution is found after $O(N)$ evaluations of $f$. In the second case, a quantum computer with $O(n)$ qubits is used and the solution is found after $O(\sqrt{N})$ evaluations of $f$. This improvement in complexity motivates the investment in quantum hardware and the development of quantum algorithms, which necessarily make use of state superposition. In the case of Grover's algorithm, applying $U_f$ to a superposition, together with additional quantum operations, allows the desired element to be identified with fewer evaluations of $f$ than are required classically.

\section{How to implement the oracle on a quantum computer} \label{subsec:implementar_f}

The first step in developing a quantum algorithm that solves Grover's problem is the implementation of the function $f$. Since $f$ is a Boolean function whose truth table has a single row with output 1, $f$ can be implemented with a multi-controlled NOT gate activated by $x_0$, as described in Section~\ref{sec:circ_Boolean_fcn} on Page~\pageref{sec:circ_Boolean_fcn}. This gate has an associated unitary matrix $U_{f}$, which is defined by its action on the computational basis as
\begin{equation*}
    U_{f} \ket{x}\ket{i}=\ket{x}\ket{i\oplus f(x)},
\end{equation*}
where $x$ is a $n$-bit string and $i$ is a bit. The first register has $n$ qubits and the second register has one qubit. If we take $i=0$, the above equation reduces to
\begin{equation*}
    U_{f}\ket{x}\ket{0}=\left\{
                             \begin{array}{l@{\quad}l}
                               \ket{x_0}\ket{1}, & \hbox{if \ensuremath{x=x_0},} \\
                               \ket{x}\ket{0}, & \hbox{otherwise,}
                             \end{array}
                           \right.
\end{equation*}
which describes the output of a multi-controlled NOT gate activated by $x_0$ (and only by $x_0$) when the input is $\ket{x}\ket{0}$. The result of the calculation of  $f(x)$ is stored in the second register while the state of the first register remains unchanged.

For example, the circuit that implements $U_{f}$ in the case $N=8$ and $x_0=6$, that is, $f(110)=1$ and $f(j)=0$ if $j\neq 110$, is
\[
\Qcircuit @C=2.3em @R=1.3em {
\lstick{\ket{1}}    & \ctrl{1} &  \rstick{\ket{1}} \qw \\
\lstick{\ket{1}}    & \ctrl{1} &  \rstick{\ket{1}} \qw \\
\lstick{\ket{0}}    & \ctrlo{1} &  \rstick{\ket{0}} \qw \\
\lstick{\ket{0}}  & \targ    &  \rstick{\ket{1}.}  \qw
}\vspace*{0.2cm}
\]
The first register has three qubits. Note that the state of the second register changes from $\ket{0}$ to $\ket{1}$ only if the input to the first register is $\ket{110}$ because 110 activates the three controls and all other 3-bit strings do not.

In Grover's algorithm, the state of the second register is always
\[
\ket{-}=\frac{\ket{0}-\ket{1}}{\sqrt 2}.
\]
Using linearity, the action of $U_{f}$ is given by
\begin{equation*}
    U_{f}\ket{x}\ket{-}=\left\{
                             \begin{array}{l@{\quad}l}
                               -\ket{x_0}\ket{-}, & \hbox{if \ensuremath{x=x_0},} \\
                               \,\,\,\,\ket{x}\ket{-}, & \hbox{otherwise.}
                             \end{array}
                           \right.
\end{equation*}
The same result is obtained from  Proposition~\ref{prop:U_fxminus} on Page~\pageref{prop:U_fxminus}, which states that
\[
U_{f}\ket{x}\ket{-}=(-1)^{f(x)}\ket{x}\ket{-}.
\]

\section{The algorithm}

Grover's algorithm uses an additional operator defined as
\begin{equation*}
    G=\big(2\, \ket{\textrm{d}}\bra{\textrm{d}} - I_N\big)\otimes I_2,
\end{equation*}
where
\[
\ket{\textrm{d}}=\frac{1}{\sqrt{N}}\sum_{j=0}^{N-1}\ket{j}.
\]
The notation ``$\ket{\textrm{d}}\bra{\textrm{d}}$'' denotes the \textit{outer product} between the vector $\ket{\textrm{d}}$ (an $N\times 1$ matrix) and the dual vector $\bra{\textrm{d}}$ (a $1\times N$ matrix). This outer product coincides with the usual matrix product. Multiplying an $N\times 1$ matrix by a $1\times N$ matrix yields an $N\times N$ matrix. Therefore, $\ket{\textrm{d}}\bra{\textrm{d}}$ is an $N\times N$ matrix, given by
\begin{equation*}
\ket{\textrm{d}}\bra{\textrm{d}} = \frac{1}{N}\begin{bmatrix}
                   1 & 1 & \cdots & 1 \\
                   1 & 1 & \cdots & 1\\
                   \vdots & \vdots & \ddots & \vdots \\
                   1 & 1 &\cdots & 1 \\
                 \end{bmatrix},
\end{equation*}
and
\begin{equation*}
\big(2\, \ket{\textrm{d}}\bra{\textrm{d}} - I_N\big) \,=\, \frac{1}{N}\begin{bmatrix}
                   (2-N) & 2 & \cdots & 2 \\
                   2 & (2-N) & \cdots & 2\\
                   \vdots & \vdots & \ddots & \vdots \\
                   2 & 2 &\cdots & (2-N) \\
                 \end{bmatrix}
\end{equation*}
which is called Grover matrix (or Grover operator).

Grover's algorithm is described in Algorithm~\ref{ag_XYZ}.

\begin{algorithm}[!ht]
\caption {Grover's algorithm} \label{ag_XYZ}
\KwIn{An integer $N$ and a function $f:\{0,...,N-1\}\rightarrow \{0,1\}$ such that $f(x)=1$ only for one point $x=x_0$ in the domain.}
\KwOut{$x_0$ with probability greater than or equal to $1-\frac{1}{N}$.} \BlankLine
Prepare the initial state $\ket{\textrm{d}}\ket{-}$  using $n+1$ qubits\;
Apply ${\left(G \, U_{f}\right)}^{t}$, where $t=\Big\lfloor\frac{\pi}{4}\sqrt N\Big\rfloor$ \;
Measure the first register in the computational basis.
\end{algorithm}

\section{Non-economical circuit of Grover's algorithm}

The goal of this Section is to find the circuit that implements the Grover operator using our understanding of implementing Boolean functions. The circuit initially uses more qubits than needed, but we will later demonstrate how to obtain a more resource-efficient version of it.

To obtain the circuit, we have to do an algebraic manipulation with the expression of the Grover matrix
$\big(2\, \ket{\textrm{d}}\bra{\textrm{d}} - I_N\big)$. Note that
\[
\ket{\textrm{d}}=H^{\otimes n}\ket{0}
\]
where $\ket{0}$ is in the decimal notation and $H^{\otimes n}=H\otimes\dots \otimes H$. Transposing the above equation, we obtain
\[
\bra{\textrm{d}}=\bra{0}H^{\otimes n}.
\]
Using $(H^{\otimes n})\cdot(H^{\otimes n})=(H\cdot H)^{\otimes n}=(I_2)^{\otimes n}=I_N$, we obtain
\[
\big(2\, \ket{\textrm{d}}\bra{\textrm{d}} - I_N\big)=
H^{\otimes n}\big(2\, \ket{0}\bra{0} - I_N\big)H^{\otimes n},
\]
where
\begin{equation*}
\big(2\, \ket{0}\bra{0} - I_N\big) = \begin{bmatrix}
                   1 & 0 & \cdots & 0 \\
                   \,\,\,\,0 & -1 & \cdots & 0\\
                   \,\,\,\,\vdots & \vdots & \ddots & \vdots \\
                   \,\,\,\,0 & 0 &\cdots & -1 \\
                 \end{bmatrix}.
\end{equation*}
Matrix $\big(2\, \ket{0}\bra{0} - I_N\big)$ acts only on the first register. However, it is simpler to implement this matrix using both registers. Let us show that it is implemented by a multi-controlled NOT gate activated by 0. Indeed, the action of $\big(2\, \ket{0}\bra{0} - I_N\big)$ on $\ket{x}$, where $x$ is a $n$-bit string, is
\[
\big(2\, \ket{0}\bra{0} - I_N\big)\ket{x}=\left\{
                             \begin{array}{l@{\quad}l}
                               \,\,\,\,\ket{0}, & \hbox{if \ensuremath{x=0},} \\
                               -\ket{x}, & \hbox{otherwise.}
                             \end{array}
                           \right.
\]
Therefore, the action of $\big(2\, \ket{0}\bra{0} - I_N\big)$ on the first register is the same as the action of $(-U_{f'})$ on both registers when $x_0=0$ (the state of the second register must be $\ket{-}$), where
\begin{equation*}
    f'(x) = \left\{
  \begin{array}{l@{\quad}l}
    1, & \hbox{if \ensuremath{x=0},} \\
    0, & \hbox{otherwise.}
  \end{array}
\right.
\end{equation*}
The minus sign in $(-U_{f'})$ changes neither the result of the algorithm nor the final probability. That is, using $G$ or $-G$ in Grover's algorithm does not change the final result.

Using these algebraic results, we conclude that a circuit that implements Grover's algorithm is
\[
\Qcircuit @C=1.9em @R=1.0em {
& & \hspace{90pt} \mbox{repeat $\big\lfloor\frac{\pi}{4}\sqrt{N}\big\rfloor$ times} & & & \\
\lstick{\ket{0}_1}        &\gate{H}                                & \ghost{U_{f}}   & \gate{H}  & \ctrlo{1}  & \gate{H} &   \meter    & \cw & \lstick{i_1}\\
& &  &   &  \controlo{1}  &  & &  \\
\lstick{\vdots\hspace{0.3cm}} & \vdots &  & \vdots  & & \vdots & \vdots & \rstick{\,\,\vdots}\\
& &    &   &  \controlo{1}  &  &   &  \\
\lstick{\ket{0}_n}        &\gate{H}                               & \ghost{U_{f}}     & \gate{H}   & \ctrlo{-1} \ctrlo{1}  & \gate{H} &    \meter    & \cw & \lstick{i_n} \\
\lstick{\ket{-}}        &\qw                                     & \multigate{-5}{U_{f}} &\qw &\targ{-1}  &   \qw & \qw & \lstick{\ket{-},}
\gategroup{2}{3}{7}{6}{.7em}{--}
}\vspace{0.3cm}
\]
where bits $i_1$, ..., $i_n$ are the outputs of the measurements. Those bits are the bits of $x_0$, that is $x_0=(i_1\dots i_n)_2$, with high probability.

Note that the depth of the circuit implementing Grover's algorithm is $\Omega(\sqrt{2^n})$. Since we still have to decompose the multi-controlled NOT gate in each iteration, this introduces an overhead of $O(\log n)$, resulting in a final depth of $O(\log n \sqrt{2^n})$ when using the decomposition of $C^n(X)$ provided in Refs.~\cite{CZFDPP24,KG25}.

\section{Economical circuit of Grover's algorithm} \label{sec:Grover_eco}

The second register of Grover's algorithm can be discarded since it is possible to make a more economical implementation of the oracle~\cite{MOJ18}. Let us start by showing how to implement the operator $\big(2\, \ket{0}\bra{0} - I_N\big)$ (modulo a global phase), which enables us to implement operator $G$ using only the first register. Let us show the equivalence of the following circuits:
$$
\Qcircuit @C=0.99em @R=1.0em {
{\ket{k_1}}    & & \ctrlo{1}  & \qw & {\ket{k_1}} &&&&&&&& &{\ket{k_1}}     & &\qw &\ctrlo{1}  & \qw & \qw& {\ket{k_1}}  \\
{\ket{k_2}}    & & \ctrlo{1}  & \qw & {\ket{k_2}} &&&&&&&& &{\ket{k_2}}     & &\qw &\ctrlo{1}  & \qw & \qw &{\ket{k_2}} \\
{\vdots}       & &            & & \vdots          &&&&&&&& &{\vdots} &       &     &   &    & &\vdots \\
\lstick{}      & &            &                   &&&&&&\equiv&& \lstick{}      & & &   &        & \\
{\ket{k_{n-1}}}\hspace{0.3cm}& & \ctrlo{-1} & \qw & \hspace{0.3cm}{\ket{k_{n-1}}} &&&&&&&& &{\ket{k_{n-1}}} &  & \qw & \ctrlo{-1}\ctrlo{1} & \qw & \qw & {\ket{k_{n-1}}}\\
{\ket{k_n}}    & & \ctrlo{-1}\ctrlo{1}  & \qw & {\ket{k_{n}}} &&&&&&&& {\ket{k_n}}    & &\gate{X} &\gate{H} & \targ{-1}  & \gate{H}& \gate{X} & \qw & {\hspace{1.7cm}(-1)^{\bar{k}_1\cdots\bar{k}_n}\ket{k_{n}}}.  \\
{\ket{-}}      & & \targ{-1}  &  \qw  & \hspace{1.5cm}{(-1)^{\bar{k}_1\cdots\bar{k}_n}\ket{-}}  &&&&&&& {}
}\vspace{0.3cm}
$$

\noindent
The output of the left-hand circuit is the $(n+1)$-qubit state
\[
\ket{k_1}\otimes\cdots\otimes\ket{k_{n}}\otimes\left((-1)^{\bar{k}_1\cdots\bar{k}_n}\ket{-}\right),
\]
which is obtained from the definition of the multi-controlled NOT gate active only when qubits $k_1, \dots, k_{n}$ are set to 0. The Kronecker product has the property
\[
\ket{v_1}\otimes(a\ket{v_2})=(a\ket{v_1})\otimes\ket{v_2}=a(\ket{v_1}\otimes\ket{v_2}),
\]
for any vectors $\ket{v_1}$, $\ket{v_2}$ and any scalar $a$. Then, we move the scalar term $(-1)^{\bar{k}_1\cdots\bar{k}_n}$ to the first register whose output is
\[
\ket{k_1}\otimes\cdots\otimes\ket{k_{n-1}}\otimes\left((-1)^{\bar{k}_1\cdots\bar{k}_n}\ket{k_n}\right)\,=\,(-1)^{\bar{k}_1\cdots\bar{k}_n}\ket{k_1}\otimes\cdots\otimes\ket{k_n}.
\]
The state of the second register is $\ket{-}$ and this register will be discarded at the end of the process.

Now let us show that the output of the right-hand circuit is the same. We use the fact that $XHXHX=-Z$. Therefore, the right-hand circuit is equivalent to
$$
\Qcircuit @C=0.99em @R=1.0em {
\lstick{\ket{k_1}}   &  &\ctrlo{1}  & \qw & \rstick{\ket{k_1}}  \\
\lstick{\ket{k_2}}    &  &\ctrlo{1}  & \qw& \rstick{\ket{k_2}} \\
{\vdots} &     &     &    & &\vdots \\
\lstick{}      & & &         & \\
\lstick{\ket{k_{n-1}}} &   & \ctrlo{-1}\ctrlo{1} & \qw & \rstick{\ket{k_{n-1}}}\\
\lstick{\ket{k_n}}    &  & \gate{-Z}  & \qw & \rstick{(-1)^{\bar{k}_1\cdots\bar{k}_n}\ket{k_{n}}.}  \\
}\vspace{0.3cm}
$$
The output is obtained using that $(-Z)\ket{k_n}=(-1)^{\bar k_n}\ket{k_n}$, and since $-Z$ is active only when qubits $k_1, \dots, k_{n-1}$ are set to 0, we obtain the overall output $(-1)^{\bar k_1\cdots \bar k_{n-1}\bar k_n}\ket{ k_1 \cdots k_{n-1}k_n}$. In conclusion, the result of the first circuit (after the elimination of the second register) is the same as the result of the second circuit. Since we are going to measure only the first register, this completes the proof that we can replace the first circuit with the second in Grover's algorithm without any loss. This replacement is not valid in all algorithms. So far we have shown that $G$ (modulo a global phase) can be implemented using only the first register.

Let us consider the oracle. If the oracle chooses $x_0=0$, the circuit of $U_f$ is a multi-controlled NOT gate active only when all qubits of the first register are set to 0. In this case, we have already shown how to implement $U_f$ using only the first register. The oracle would use the right-hand circuit depicted at the beginning of this Section. If the oracle chooses $x_0=1$, the circuit of $U_f$ is a multi-controlled NOT gate that is active only when all qubits of the first register are set to 0 except the $n$-th qubit, which is set to 1. In this case, we have the following circuit equivalence:\vspace{5pt}\\
$\mbox{\hspace{1.5cm}}\Qcircuit @C=1.17em @R=1.0em {
{\ket{k_1}}    & & \ctrlo{1}  & \qw & {\ket{k_1}} &&&&&&&& &{\ket{k_1}}     & &\qw &\ctrlo{1}  & \qw & \qw& {\ket{k_1}}  \\
{\ket{k_2}}    & & \ctrlo{1}  & \qw & {\ket{k_2}} &&&&&&&& &{\ket{k_2}}     & &\qw &\ctrlo{1}  & \qw & \qw &{\ket{k_2}} \\
{\vdots}       & &            & & \vdots          &&&&&&&& &{\vdots}        &     &   &    & &&\vdots \\
\lstick{}      & &            &                   &&&&&&\equiv&& \lstick{}      & & &   &        & \\
{\ket{k_{n-1}}}\hspace{0.3cm}& & \ctrlo{-1} & \qw & \hspace{0.3cm}{\ket{k_{n-1}}} &&&&&&&& &{\ket{k_{n-1}}}\hspace{0.2cm} &  & \qw & \ctrlo{-1}\ctrlo{1} & \qw & \qw & \hspace{0.2cm}{\ket{k_{n-1}}}\\
{\ket{k_n}}    & & \ctrl{-1}\ctrlo{1}  & \qw & {\ket{k_{n}}} &&&&&&&& &{\ket{k_n}}     & &\gate{H} & \targ{-1}  & \gate{H}&  \qw & &{\hspace{1.7cm}(-1)^{\bar{k}_1\cdots\bar{k}_{n-1}k_n}\ket{k_{n}}.}  \\
{\ket{-}}      & & \targ{-1}  &  \qw  & \hspace{2.2cm}{(-1)^{\bar{k}_1\cdots\bar{k}_{n-1}k_n}\ket{-}}  &&&&&&& {}
}\vspace{10pt}
$

\noindent
The equivalence check is similar to the previous case but now it is obtained using that $HXH=Z$ and $Z\ket{k_n}=(-1)^{k_n}\ket{k_n}$, and since $Z$ is active only when qubits $k_1, k_2,\dots, k_{n-1}$ are set to 0, we obtain the overall output $(-1)^{\bar k_1\bar k_2\cdots\bar k_{n-1}k_n}\ket{ k_1 k_2\cdots k_n}$.  This concludes the proof that the oracle would use a $n$-qubit circuit if $x_0=1$.

The remaining cases, $x_0\ge 2$, are obtained from the previous results. If the rightmost (last) bit of $x_0$ is 0, we use the circuit equivalence described at the beginning of this Section in the following way: if any control (except the $n$-th) on the left-hand circuit changes from an empty circle to a full circle, the same must happen to the corresponding controls on the right-hand circuit. If the rightmost bit of $x_0$ is 1, we use the second circuit equivalence of this Section, and in the same fashion, if any control (except the $n$-th) on the left-hand circuit changes from an empty circle to a full circle, the same must happen to the corresponding controls on the right-hand circuit. The expression for the output changes accordingly.

Thus, not only $G$ but also $U_{f}$ can be implemented with $n$ qubits by eliminating the second register and introducing two Hadamard gates plus two Pauli $X$ gates if the $n$-th qubit is activated by 0, and only two Hadamard gates if the $n$-th qubit is activated by 1.

\subsection*{Economical circuit when $N=4$}
The circuit of Grover's algorithm in the economical form when $N=4$ and $x_0=11$ is
\[\mbox{\hspace{20pt}}
\Qcircuit @C=1.5em @R=1.4em {
\lstick{\ket{0}} & \gate{H} & \qw     &\ctrl{1}&  \qw     & \gate{H} &   \qw& \qw& \ctrlo{1} &\qw &\qw & \gate{H} &   \meter    & \cw & \lstick{1\,\,}\\
\lstick{\ket{0}} & \gate{H} & \gate{H}&\targ   & \gate{H} & \gate{H} & \gate{X} & \gate{H}   & \targ  &  \gate{H} & \gate{X} & \gate{H}  &    \meter    & \cw & \lstick{1.}
\gategroup{1}{3}{2}{5}{.7em}{--}\gategroup{1}{7}{2}{11}{.7em}{--}
}
\]
The gates inside the first dashed box implement the oracle and the gates inside the second dashed box implement $\big(2\, \ket{0}\bra{0} - I_N\big)$ (modulo a global phase). This circuit can be simplified by substituting $HXH$ in the second qubit with $Z$ in two places.

The goal of Grover's algorithm is to determine $x_0$ by querying the oracle, that is, using the first dashed box, without looking at its implementation details. We have to pretend that the first dashed box is a black box. When $N=4$, there are four possible black boxes, case $x_0=11$ is one of them. The gates that implement the oracle when $x_0=00$, $x_0=01$, and $x_0=10$ are $(X\otimes XH)\text{CNOT}(X\otimes HX)$, $(X\otimes H)\text{CNOT}(X\otimes H)$, and $(I\otimes XH)\text{CNOT}(I\otimes HX)$, respectively. When $N=4$, the output is the correct one with probability 1.

\subsection*{Economical circuit for arbitrary $N$}
For an arbitrary $N$ ($N=2^n$ and $n\ge 2$), the circuit of Grover's algorithm with only $n$ qubits is
\[
\Qcircuit @C=1.8em @R=1.0em {
								&         & \hspace{90pt} \mbox{repeat $\big\lfloor\frac{\pi}{4}\sqrt{N}\big\rfloor$ times} & & & \\
\lstick{\ket{0}_1\,\,\,\,\,\,}  &\gate{H} &\ghost{u_{f}}& \gate{H}  & \ctrlo{1}  & \gate{H} &   \meter    & \cw & {i_1}\\
								&         &                    &   &  \controlo{1}  &  & &  \\
\lstick{\vdots\hspace{0.3cm}} 	& \vdots  &                    & \vdots  & & \vdots & \vdots & & {\,\,\vdots}\\
\lstick{\hspace{0.3cm}} 		&         &                    &   & &  &  & \rstick{\,\,}\\
\lstick{\ket{0}_{n-1}}        	&\gate{H} &\ghost{u_{f}}& \gate{H}  &  \ctrlo{1}\ctrlo{-1}  & \gate{H} &      \meter    & \cw & {i_{n-1}}\\
\lstick{\ket{0}_n\,\,\,\,\,}  	&\gate{H} &\multigate{-5}{u_{f}}& \gate{Z}   & \targ{-1} \ctrlo{-1}  & \gate{Z} &    \meter    & \cw & {i_n,}
\gategroup{2}{3}{7}{6}{.7em}{--}}\vspace{0.3cm}
\]
where the matrix $u_{f}$ is the economical version of $U_{f}$ and the circuit for $u_{f}$ for an arbitrary $x_0=(i_1\dots i_n)_2$ is
\[
\Qcircuit @C=1.1em @R=1.0em {
&\ghost{u_f}&\qw&&& &\gate{X^{i_1}}&\gate{X}& \ctrl{1}           &\gate{X}&\gate{X^{i_1}}&\qw\\
&& &&&                    &&&   \\
\vdots &&\vdots  &  &&& \vdots  & \vdots & & \vdots  & {\,\,\vdots}\\
&& &&\equiv&                    &&&   \\
&\ghost{u_f}&\qw&&& &\gate{X^{i_{n-1}}}&\gate{X}& \ctrl{1}\ctrlo{-1} &\gate{X}&\gate{X^{i_{n-1}}}&\qw\\
&\multigate{-5}{u_f}&\qw&&&&\gate{X^{1-i_n}}&\gate{H}& \targ              &\gate{H}&\gate{X^{1-i_n}}& \qw & .
}\vspace{0.3cm}
\]

\section{Analysis of the algorithm}\label{sec:ana-Grover}

Why does Grover's algorithm work correctly? We answer this question using the economical form of the algorithm. The operators in this case are $u_{f}$, which is defined as
\[
u_f \,=\, \sum_x (-1)^{f(x)}\ket{x}\bra{x},
\]
and
\[
g \,=\, 2\, \ket{\textrm{d}}\bra{\textrm{d}} - I_N.
\]
Operator $u_f$ is the economical version of $U_f$, that is, $u_f$ is a $N$-dimensional operator whose action on the computational basis is
\begin{equation*}
    u_{f}\ket{x}=\left\{
                             \begin{array}{l@{\quad}l}
                               -\ket{x_0}, & \hbox{if \ensuremath{x=x_0},} \\
                               \,\,\,\,\ket{x}, & \hbox{otherwise.}
                             \end{array}
                           \right.
\end{equation*}
In its turn, $g$ is the economical version of operator $G$. The economical version of Grover's algorithm is described in Algorithm~\ref{ag_economical}.

\begin{algorithm}[!ht]
\caption {Grover's algorithm (economical version~)} \label{ag_economical}
\KwIn{An integer $N$ (power of 2) and a function $f:\{0,...,N-1\}\rightarrow \{0,1\}$ such that $f(x)=1$ only for one point $x=x_0$ in the domain.}
\KwOut{$x_0$ with probability greater than or equal to $1-\frac{1}{N}$.} \BlankLine
Prepare the initial state $\ket{\textrm{d}}$ using $n$ qubits \;
Apply ${\left(g \, u_{f}\right)}^{t}$, where $t=\Big\lfloor\frac{\pi}{4}\sqrt N\Big\rfloor$ \;
Measure all qubits in the computational basis.
\end{algorithm}

The goal of the algorithm is to find $x_0$, which is a $n$-bit string. It will be accomplished if the state of the qubits just before the measurement is $\ket{x_0}$ because the measurement in this case returns $x_0$. The analysis of the algorithm that we now start to describe is based on a geometric interpretation of vector reflections~\cite{Aha98}. At the beginning of the algorithm, the state of the qubits is $\ket{\textrm{d}}$. For large $N$, $\ket{\textrm{d}}$ is almost orthogonal to $\ket{x_0}$. Fig.~\ref{fig:Grover_analysis} shows vectors $\ket{\textrm{d}}$ and $\ket{x_0}$, where $\theta/2$ is the angle between $\ket{\textrm{d}}$ and the horizontal axis. Any other representation of those vectors can be used in the analysis of the algorithm provided that $\ket{\textrm{d}}$ is almost orthogonal to $\ket{x_0}$.

\begin{figure}[ht!]
\centering
\begingroup%
  \makeatletter%
  \providecommand\color[2][]{%
    \errmessage{(Inkscape) Color is used for the text in Inkscape, but the package 'color.sty' is not loaded}%
    \renewcommand\color[2][]{}%
  }%
  \providecommand\transparent[1]{%
    \errmessage{(Inkscape) Transparency is used (non-zero) for the text in Inkscape, but the package 'transparent.sty' is not loaded}%
    \renewcommand\transparent[1]{}%
  }%
  \providecommand\rotatebox[2]{#2}%
  \ifx\svgwidth\undefined%
    \setlength{\unitlength}{198.83047861bp}%
    \ifx\svgscale\undefined%
      \relax%
    \else%
      \setlength{\unitlength}{\unitlength * \real{\svgscale}}%
    \fi%
  \else%
    \setlength{\unitlength}{\svgwidth}%
  \fi%
  \global\let\svgwidth\undefined%
  \global\let\svgscale\undefined%
  \makeatother%
  \begin{picture}(1,0.61197067)%
    \put(0,0){\includegraphics[width=\unitlength,page=1]{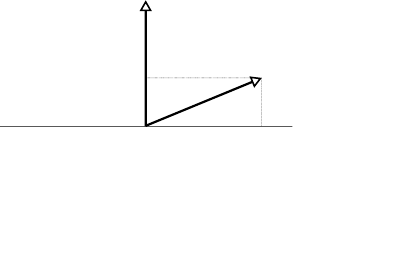}}%
    \put(0.64597669,0.42806614){\color[rgb]{0,0,0}\makebox(0,0)[lb]{\smash{$\ket{\text{d}}$}}}%
    \put(0.37016459,0.53009315){\color[rgb]{0,0,0}\makebox(0,0)[lb]{\smash{$\ket{x_0}$}}}%
    \put(0,0){\includegraphics[width=\unitlength,page=2]{Grover_analysis.pdf}}%
    \put(0.47603825,0.31705346){\color[rgb]{0,0,0}\makebox(0,0)[lb]{\smash{${\theta}/{2}$}}}%
  \end{picture}%
\endgroup%
\caption{Depiction of vectors $\ket{x_0}$ and $\ket{\textrm{d}}$.}
\label{fig:Grover_analysis}
\end{figure}

The angle $\theta$ is very small for large $N$, and in this case, $\theta/2$ is a good approximation of $\sin(\theta/2)$.
In addition, the sine of an angle is equal to the cosine of the complement, that is,
\[
\frac{\theta}{2}\approx \sin\frac{\theta}{2}=\cos\left(\frac{\pi}{2}-\frac{\theta}{2}\right).
\]
As $(\pi-\theta)/2$ is the angle between $\ket{x_0}$ and $\ket{\textrm{d}}$, by definition of the inner product, $\cos \,(\pi-\theta)/2$ is the inner product of vectors $\ket{x_0}$ and $\ket{\textrm{d}}$, the result of which is
\[
\frac{\theta}{2}\approx \sin\frac{\theta}{2}=\cos\left(\frac{\pi}{2}-\frac{\theta}{2}\right)=\braket{x_0}{\textrm{d}}=\frac{1}{\sqrt{N}}.
\]
Therefore,
\[
\theta\approx \frac{2}{\sqrt{N}}.
\]

The first step of Algorithm~\ref{ag_economical} is the preparation of the initial state $\ket{\textrm{d}}$. The next step is to apply $u_{f}$ to $\ket{\textrm{d}}$. The action of $u_{f}$ on $\ket{\textrm{d}}$ (written in the computational basis) inverts the sign of the amplitude of $\ket{x_0}$ and does not change the other amplitudes. The amplitude of $\ket{x_0}$ is the orthogonal projection of $\ket{\text{d}}$ on the vertical axis---see Fig.~\ref{fig:Grover_analysis}, which is inverted by the action of $u_{f}$. Geometrically, the action of $u_{f}$ is represented by a reflection of $\ket{\textrm{d}}$ about the horizontal axis. The angle between the vectors $\ket{\textrm{d}}$ and $(u_{f}\ket{\textrm{d}})$ is $\theta$, as shown in Fig.~\ref{fig:Grover_analysis_2}.

\begin{figure}[!ht]
\centering
\begingroup%
  \makeatletter%
  \providecommand\color[2][]{%
    \errmessage{(Inkscape) Color is used for the text in Inkscape, but the package 'color.sty' is not loaded}%
    \renewcommand\color[2][]{}%
  }%
  \providecommand\transparent[1]{%
    \errmessage{(Inkscape) Transparency is used (non-zero) for the text in Inkscape, but the package 'transparent.sty' is not loaded}%
    \renewcommand\transparent[1]{}%
  }%
  \providecommand\rotatebox[2]{#2}%
  \newcommand*\fsize{\dimexpr\f@size pt\relax}%
  \newcommand*\lineheight[1]{\fontsize{\fsize}{#1\fsize}\selectfont}%
  \ifx\svgwidth\undefined%
    \setlength{\unitlength}{288.97106456bp}%
    \ifx\svgscale\undefined%
      \relax%
    \else%
      \setlength{\unitlength}{\unitlength * \real{\svgscale}}%
    \fi%
  \else%
    \setlength{\unitlength}{\svgwidth}%
  \fi%
  \global\let\svgwidth\undefined%
  \global\let\svgscale\undefined%
  \makeatother%
  \begin{picture}(1,0.44099056)%
    \lineheight{1}%
    \setlength\tabcolsep{0pt}%
    \put(0,0){\includegraphics[width=\unitlength,page=1]{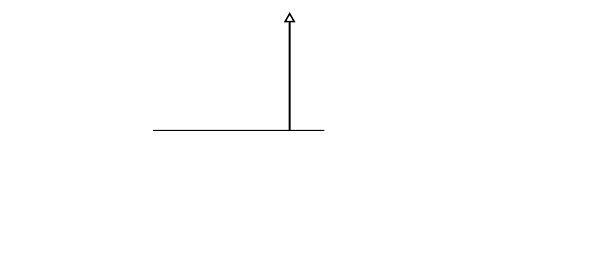}}%
    \put(0.67092912,0.30262045){\color[rgb]{0,0,0}\makebox(0,0)[lt]{\lineheight{0}\smash{\begin{tabular}[t]{l}$\ket{\text{d}}$\end{tabular}}}}%
    \put(0.49301386,0.36843396){\color[rgb]{0,0,0}\makebox(0,0)[lt]{\lineheight{0}\smash{\begin{tabular}[t]{l}$\ket{x_0}$\end{tabular}}}}%
    \put(0,0){\includegraphics[width=\unitlength,page=2]{Grover_analysis_2.pdf}}%
    \put(0.56130869,0.21024733){\color[rgb]{0,0,0}\makebox(0,0)[lt]{\lineheight{0}\smash{\begin{tabular}[t]{l}${\theta}$\end{tabular}}}}%
    \put(0,0){\includegraphics[width=\unitlength,page=3]{Grover_analysis_2.pdf}}%
    \put(0.67056725,0.13027895){\color[rgb]{0,0,0}\makebox(0,0)[lt]{\lineheight{0}\smash{\begin{tabular}[t]{l}$u_{f}\ket{\text{d}}$\end{tabular}}}}%
    \put(0.1021699,0.21087724){\color[rgb]{0,0,0}\makebox(0,0)[lt]{\lineheight{0}\smash{\begin{tabular}[t]{l}$\mbox{}$\end{tabular}}}}%
    \put(0,0){\includegraphics[width=\unitlength,page=4]{Grover_analysis_2.pdf}}%
  \end{picture}%
\endgroup%
\caption{Vector $u_{f}\ket{\textrm{d}}$ is a reflection of $\ket{\textrm{d}}$ about the horizontal axis.}
\label{fig:Grover_analysis_2}
\end{figure}

The next step is to apply $g=\big(2\, \ket{\textrm{d}}\bra{\textrm{d}} - I_N\big)$. Let us show that the action of $g$ is a reflection about the axis defined by $\ket{\textrm{d}}$. This proof is done in two parts. First, we show that $\ket{\textrm{d}}$ is invariant under the action of $g$. Second, we show that the action of $g$ on $\ket{\textrm{d}^\perp}$ inverts the sign of $\ket{\textrm{d}^\perp}$, where $\ket{\textrm{d}^\perp}$ is any vector orthogonal to $\ket{\textrm{d}}$. The first step follows from
\[
g\ket{\textrm{d}}=\big(2\, \ket{\textrm{d}}\bra{\textrm{d}} - I_N\big)\ket{\textrm{d}}=2\, \ket{\textrm{d}}\braket{\textrm{d}}{\textrm{d}} - \ket{\textrm{d}}= \ket{\textrm{d}},
\]
because $\braket{\textrm{d}}{\textrm{d}}=1$. The second step follows from
\[
g\ket{\textrm{d}^\perp}=\big(2\, \ket{\textrm{d}}\bra{\textrm{d}} - I_N\big)\ket{\textrm{d}^\perp}=2\, \ket{\textrm{d}}\braket{\textrm{d}}{\textrm{d}^\perp} - \ket{\textrm{d}^\perp}=- \ket{\textrm{d}^\perp},
\]
because $\braket{\textrm{d}}{\textrm{d}^\perp}=0$.

\begin{figure}[ht!]
\centering
\begingroup%
  \makeatletter%
  \providecommand\color[2][]{%
    \errmessage{(Inkscape) Color is used for the text in Inkscape, but the package 'color.sty' is not loaded}%
    \renewcommand\color[2][]{}%
  }%
  \providecommand\transparent[1]{%
    \errmessage{(Inkscape) Transparency is used (non-zero) for the text in Inkscape, but the package 'transparent.sty' is not loaded}%
    \renewcommand\transparent[1]{}%
  }%
  \providecommand\rotatebox[2]{#2}%
  \newcommand*\fsize{\dimexpr\f@size pt\relax}%
  \newcommand*\lineheight[1]{\fontsize{\fsize}{#1\fsize}\selectfont}%
  \ifx\svgwidth\undefined%
    \setlength{\unitlength}{290.47107028bp}%
    \ifx\svgscale\undefined%
      \relax%
    \else%
      \setlength{\unitlength}{\unitlength * \real{\svgscale}}%
    \fi%
  \else%
    \setlength{\unitlength}{\svgwidth}%
  \fi%
  \global\let\svgwidth\undefined%
  \global\let\svgscale\undefined%
  \makeatother%
  \begin{picture}(1,0.43871326)%
    \lineheight{1}%
    \setlength\tabcolsep{0pt}%
    \put(0,0){\includegraphics[width=\unitlength,page=1]{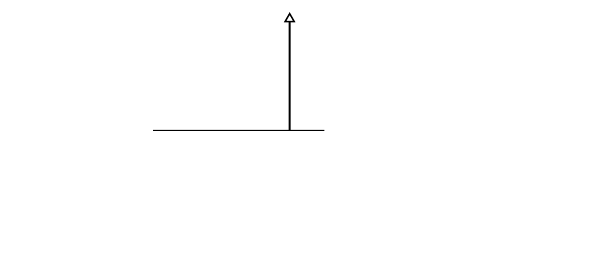}}%
    \put(0.64240179,0.3510683){\color[rgb]{0,0,0}\makebox(0,0)[lt]{\smash{\begin{tabular}[t]{l}$g\, u_{f}\ket{\text{d}}$\end{tabular}}}}%
    \put(0.49046792,0.36653135){\color[rgb]{0,0,0}\makebox(0,0)[lt]{\smash{\begin{tabular}[t]{l}$\ket{x_0}$\end{tabular}}}}%
    \put(0,0){\includegraphics[width=\unitlength,page=2]{Grover_analysis_3.pdf}}%
    \put(0.55757465,0.21223707){\color[rgb]{0,0,0}\makebox(0,0)[lt]{\smash{\begin{tabular}[t]{l}${\theta}$\end{tabular}}}}%
    \put(0.67226845,0.15542629){\color[rgb]{0,0,0}\makebox(0,0)[lt]{\smash{\begin{tabular}[t]{l}$u_{f}\ket{\text{d}}$\end{tabular}}}}%
    \put(0,0){\includegraphics[width=\unitlength,page=3]{Grover_analysis_3.pdf}}%
    \put(0.68217813,0.26588947){\color[rgb]{0,0,0}\makebox(0,0)[lt]{\smash{\begin{tabular}[t]{l}$\ket{\text{d}}$\end{tabular}}}}%
    \put(0.54982856,0.25096721){\color[rgb]{0,0,0}\makebox(0,0)[lt]{\smash{\begin{tabular}[t]{l}${\theta}$\end{tabular}}}}%
    \put(0,0){\includegraphics[width=\unitlength,page=4]{Grover_analysis_3.pdf}}%
  \end{picture}%
\endgroup%
\caption{Vector $g\,u_{f}\ket{\textrm{d}}$ is a reflection of $u_{f}\ket{\textrm{d}}$ about $\ket{\textrm{d}}$. }
\label{fig:Grover_analysis_3}
\end{figure}

Fig.~\ref{fig:Grover_analysis_3} depicts $g\, u_{f} \ket{\textrm{d}}$ and shows that the action of $g u_{f}$ rotates the initial state by $\theta$ degrees towards $\ket{x_0}$. Since $\theta$ is a small angle, this improvement is modest but promising. It is easy to see that the second application of $g u_{f}$ repeats the process of rotating by $\theta$ degrees towards $\ket{x_0}$. We want to know how many iterations $r$ are needed so that $r \theta=\pi/2$. The number of iterations is
\[
r=\left\lfloor\frac{\pi}{2\theta}\right\rfloor=\left\lfloor\frac{\pi}{4}\sqrt{N}\right\rfloor.
\]

\begin{figure}[ht!]
\centering
\begingroup%
  \makeatletter%
  \providecommand\color[2][]{%
    \errmessage{(Inkscape) Color is used for the text in Inkscape, but the package 'color.sty' is not loaded}%
    \renewcommand\color[2][]{}%
  }%
  \providecommand\transparent[1]{%
    \errmessage{(Inkscape) Transparency is used (non-zero) for the text in Inkscape, but the package 'transparent.sty' is not loaded}%
    \renewcommand\transparent[1]{}%
  }%
  \providecommand\rotatebox[2]{#2}%
  \newcommand*\fsize{\dimexpr\f@size pt\relax}%
  \newcommand*\lineheight[1]{\fontsize{\fsize}{#1\fsize}\selectfont}%
  \ifx\svgwidth\undefined%
    \setlength{\unitlength}{284.58403681bp}%
    \ifx\svgscale\undefined%
      \relax%
    \else%
      \setlength{\unitlength}{\unitlength * \real{\svgscale}}%
    \fi%
  \else%
    \setlength{\unitlength}{\svgwidth}%
  \fi%
  \global\let\svgwidth\undefined%
  \global\let\svgscale\undefined%
  \makeatother%
  \begin{picture}(1,0.44778868)%
    \lineheight{1}%
    \setlength\tabcolsep{0pt}%
    \put(0,0){\includegraphics[width=\unitlength,page=1]{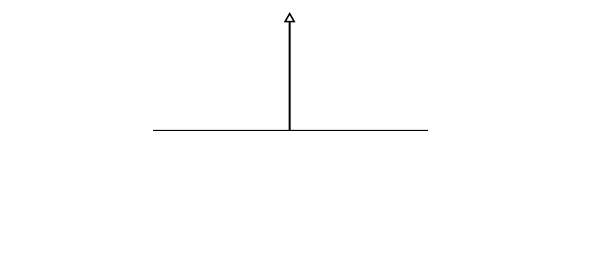}}%
    \put(0.50061396,0.37411358){\color[rgb]{0,0,0}\makebox(0,0)[lt]{\lineheight{0}\smash{\begin{tabular}[t]{l}$\ket{x_0}$\end{tabular}}}}%
    \put(0,0){\includegraphics[width=\unitlength,page=2]{Grover_analysis_4.pdf}}%
    \put(0.42717343,0.32197165){\color[rgb]{0,0,0}\makebox(0,0)[lt]{\lineheight{0}\smash{\begin{tabular}[t]{l}$\,\,\,\frac{{\theta}}{2}$\end{tabular}}}}%
    \put(0,0){\includegraphics[width=\unitlength,page=3]{Grover_analysis_4.pdf}}%
    \put(0.31654627,0.40661015){\color[rgb]{0,0,0}\makebox(0,0)[lt]{\lineheight{0}\smash{\begin{tabular}[t]{l}$\ket{\psi}$\end{tabular}}}}%
    \put(0,0){\includegraphics[width=\unitlength,page=4]{Grover_analysis_4.pdf}}%
    \put(0.40541062,0.26764944){\color[rgb]{0,0,0}\makebox(0,0)[lt]{\lineheight{0}\smash{\begin{tabular}[t]{l}$a$\end{tabular}}}}%
    \put(0,0){\includegraphics[width=\unitlength,page=5]{Grover_analysis_4.pdf}}%
    \put(0.55367373,0.24027354){\color[rgb]{0,0,0}\makebox(0,0)[lt]{\lineheight{0}\smash{\begin{tabular}[t]{l}${\theta}/2$\end{tabular}}}}%
    \put(0.65346851,0.34775932){\color[rgb]{0,0,0}\makebox(0,0)[lt]{\lineheight{0}\smash{\begin{tabular}[t]{l}$\ket{\text{d}}$\end{tabular}}}}%
  \end{picture}%
\endgroup%
\caption{Vector $\ket{\psi}$ is the final state before measurement and $a$ is the norm of the projection of $\ket{x_0}$ on $\ket{\psi}$. The angle between $\ket{\psi}$ and $\ket{x_0}$ is less than or equal to $\theta/2$. }
\label{fig:Grover_analysis_4}
\end{figure}

It remains to calculate the success probability. After $r$ iterations, the state of the qubits is
\[
\ket{\psi}={\left(g \, u_{f}\right)}^{\big\lfloor\frac{\pi}{4}\sqrt N\big\rfloor}\ket{\textrm{d}}.
\]
Vector $\ket{\psi}$ is almost orthogonal to $\ket{\textrm{d}}$ at this point, as depicted in Fig.~\ref{fig:Grover_analysis_4}. The angle between $\ket{\psi}$ and $\ket{x_0}$ is less than or equal to $\theta/2$.
The success probability is greater than or equal to the absolute square of the amplitude of $\ket{x_0}$ in the decomposition of $\ket{\psi}$ in the computational basis. This amplitude is $a$ as shown in Fig.~\ref{fig:Grover_analysis_4}. The orthogonal projection of $\ket{x_0}$ on the final state is at least $\cos(\theta/2)$. Therefore, the success probability $p=|a|^2$ satisfies
\begin{eqnarray*}
p \,\,\ge\,\,  \cos^2\frac{\theta}{2}
\,\,\ge\,\,  {1-\sin^2\frac{\theta}{2}}
\,\,\ge\,\,  {1-\frac{1}{N}}.
\end{eqnarray*}
The case $N=4$ is special because $\theta=60^\circ$, since $\sin(\theta/2)=1/\sqrt{N}$. With one application of $gu_{f}$, the vector $\ket{\textrm{d}}$ rotates by $60^\circ$ and coincides with $\ket{x_0}$. In this case, the success probability is exactly $p=1$.

\begin{exercise}
Show that in the two-dimensional invariant subspace spanned by $\ket{x_0}$ and $\ket{\textrm{d}}$, the operator $g u_f$ acts as a rotation matrix
\[
R_y(4\theta)=
\begin{bmatrix}
\cos(2\theta) & -\sin(2\theta) \\
\sin(2\theta) & \cos(2\theta)
\end{bmatrix}.
\]
\end{exercise}

\begin{exercise}
Grover~\cite{Gro97} called the operator $g=2\ket{\mathrm d}\bra{\mathrm d}-I$ an ``inversion about average''. This means that, if
\[
\bar a=\frac{1}{N}\sum_{j=0}^{N-1} a_j
\]
is the average of the amplitudes of
\[
\ket{\psi}=\sum_{j=0}^{N-1} a_j\ket{j},
\]
then each amplitude \(a_j\) is transformed into \(2\bar a-a_j\).

\begin{enumerate}
    \item[(a)] Show that
    \[
    \bar a=\frac{\braket{\mathrm d}{\psi}}{\sqrt{N}}
    \]
    and
    \[
    g\ket{\psi}
    =
    \sum_{j=0}^{N-1}(2\bar a-a_j)\ket{j}.
    \]
    Hence justify Grover's expression ``inversion about average''.

    \item[(b)] In Grover's algorithm, after the oracle is applied, the amplitude of the marked state has its sign reversed, while the other amplitudes remain unchanged. Using the interpretation of \(g\) as an inversion about average, explain qualitatively why this step tends to increase the amplitude of the marked state.

    \item[(c)] Compare this interpretation with the geometric description of Grover's algorithm presented in this chapter.
\end{enumerate}
\end{exercise}

\begin{exercise} \label{exer-grover-m} (Generalization to multiple marked elements)
In the analysis of Grover’s algorithm, we assumed that there is a single marked element $x_0$. Suppose now that there are $m$ marked elements, that is, the Boolean function $f$ satisfies $f(x)=1$ for exactly $m$ distinct values of $x$.
\begin{itemize}
\item[(a)] Show that the state space relevant to the evolution of the algorithm is the two-dimensional subspace spanned by
\[
\ket{w}=\frac{1}{\sqrt{m}}\sum_{x:\,f(x)=1}\ket{x}
\quad \text{and} \quad
\ket{r}=\frac{1}{\sqrt{N-m}}\sum_{x:\,f(x)=0}\ket{x}.
\]

\item[(b)] Show that if the initial state is $\ket{\textrm{d}}$, then
\[
\sin^2\theta=\frac{m}{N},
\]
where $\theta$ is defined as in the single-marked case.

\item[(c)] Prove that each Grover iteration rotates the state vector by an angle $2\theta$ in the subspace spanned by $\ket{w}$ and $\ket{r}$.

\item[(d)] Determine the number of iterations $t$ that maximizes the success probability and show that
\[
t=\left\lfloor \frac{\pi}{4}\sqrt{\frac{N}{m}}\right\rfloor.
\]

\item[(e)] Compute the corresponding success probability.
\end{itemize}
\end{exercise}

\section{Solving SAT with Grover's algorithm}\label{sec:SAT}

The Boolean satisfiability problem (SAT) is the problem of determining if there exists an assignment of values that satisfies a given Boolean formula. For example, consider the Boolean formula
\[
f(a,b,c)\,=\, a\wedge (c\vee (\bar b\wedge c)).
\]
This formula is satisfiable because the assignment $abc=101$ evaluates to True, that is, $f(1,0,1)=1$. If no such assignment exists, the formula is unsatisfiable. In general, it is hard to decide whether the formula is satisfiable or not because SAT is an NP-complete problem.

Since we can implement a circuit that evaluates the formula for any bit string, as shown in Section~\ref{sec:circ_Boolean_fcn}, we use this circuit as the oracle in Grover's algorithm. If we know beforehand the number $m$ of satisfying assignments of the formula, the number of iterations in the multi-marked version of Grover's algorithm~\cite{BBHT98,Por18book} is
\[
t=\left\lfloor\frac{\pi}{4}\sqrt{\frac{N}{m}}\right\rfloor
\]
(see Exercise~\ref{exer-grover-m}).

To implement the circuit that evaluates the formula $f(a,b,c)$, we use the techniques of Section~\ref{sec:circ_Boolean_fcn}. Since $f(a,b,c)$ has exactly two assignments that evaluate to True, $abc=101$ and $abc=111$, the number of iterations is $t=1$. Then, the circuit
\[
\Qcircuit @C=1.3em @R=0.7em {
\lstick{\ket{0}}&\gate{H}&\qw      & \qw     &\ctrl{4}& \qw     &\qw         &\gate{H}&\ctrlo{1}&\gate{H}&\meter&\rstick{i_1} \cw \\
\lstick{\ket{0}}&\gate{H}&\ctrlo{1}& \qw     &\qw     & \qw     &\ctrlo{1}   &\gate{H}&\ctrlo{1}&\gate{H}&\meter&\rstick{i_2} \cw \\
\lstick{\ket{0}}&\gate{H}&\ctrl{1} &\ctrlo{1}&\qw     &\ctrlo{1}&\ctrl{1}    &\gate{Z}&\targ    &\gate{Z}&\meter&\rstick{i_3} \cw \\
\lstick{\ket{0}}&\qw     & \targ   &\ctrlo{1}&\qw     &\ctrlo{1}& \targ      &\qw&\rstick{\ket{0}} \qw \\
\lstick{\ket{1}}&\qw     & \qw     & \targ   &\ctrl{1}& \targ   & \qw        &\qw&\rstick{\ket{1}} \qw \\
\lstick{\ket{-}}&\qw     & \qw     & \qw     &\targ   & \qw     & \qw        &\qw&\rstick{\ket{-}.} \qw
}\vspace*{0.cm}
\]
returns a satisfying assignment $i_1i_2i_3$ with probability 1, that is, $f(i_1,i_2,i_3)=1$. In general, the output has a success probability strictly smaller than 1, but greater than or equal to $1-1/N$. The only exception is the case $t=1$, which happens when $\theta=\pi/3$, or equivalently when
\[
\sin\frac{\theta}{2}=\sqrt{\frac{m}{N}}=\frac{1}{2}.
\]
The angle between vectors $\ket{\text{d}}$ and $\ket{x_0^\perp}$ is $\pi/6$ and after one rotation, the state of the algorithm is $\pi/6+\pi/3=\pi/2$, exactly equal to the marked state $\ket{x_0}$, as can be seen from the analysis of Section~\ref{sec:ana-Grover}.

If we do not know beforehand the number $m$ of satisfying assignments, we can rerun Grover's algorithm with a varying number of iterations and check after each run whether the measured output is a satisfying assignment. A strategy with provable performance for this case is described in~\cite{BBHT98}.

\begin{exercise}
Consider the Boolean formula
\[
f(a,b,c,d)=(a\vee b)\wedge(\bar a\vee c)\wedge(\bar b\vee d).
\]
\begin{itemize}
\item[(a)] Determine the satisfying assignments.
\item[(b)] Determine the number $m$ of satisfying assignments.
\item[(c)] Compute the required number of Grover iterations.
\item[(d)] Draw a quantum circuit that uses Grover's algorithm to solve this SAT instance, taking as oracle a circuit that computes $f(a,b,c,d)$.
\end{itemize}
\end{exercise}


\section{Final remarks}

The same technique for implementing the oracle using only $n$ qubits, analyzed in this Chapter, can be applied to the implementation of the Deutsch-Jozsa algorithm with only $n$ qubits. Note that the state of the second register in the Deutsch-Jozsa circuit before applying $U_f$ is $\ket{-}$. What we have to do is discard the second register and replace $U_f$ in the Deutsch-Jozsa circuit with $u_f$, described at the end of Section~\ref{sec:Grover_eco}.

\chapter{Phase Estimation and Applications}\label{chap:QPE}

Kitaev published the quantum phase estimation algorithm as a preprint in 1995~\cite{Kit95}, and later as a section in a book in Russian, which was translated into English~\cite{KSV02}. Kitaev's method is based on a procedure for measuring an eigenvalue of a unitary
operator; that is, given a unitary operator $U$ and one of its eigenvectors $\ket{\psi}$, the algorithm finds the eigenvalue $\exp(2\pi\ii \phi)$, so that $U\ket{\psi}=\exp(2\pi\ii \phi)\ket{\psi}$, where $\phi$ is the phase of the eigenvalue. This algorithm provides an alternative way of factoring integers and calculating discrete logarithms. Not only that, it is used in many applications such as quantum counting.
This algorithm has been described in many books~\cite{BCS07,BH13,Hid19,KLM07,NC00, SS08}.


\section{Quantum phase estimation algorithm}

Suppose we have a ${{n}}$-qubit unitary operator $U$ and we know one of its eigenvectors $\ket{\psi}$. We do not know the eigenvalue associated with $\ket{\psi}$, but we know that its analytical expression is  $\e^{2\pi \ii\phi}$, where $0\le \phi<1$ ($\phi$ is unknown), because $U$ is unitary. We assume for now that $\phi=0.\phi_1\cdots\phi_{m}$ for some integer ${m}$, where $\phi_1$, ..., $\phi_{m}$ are bits, that is, the phase of the eigenvalue $\e^{2\pi \ii \phi}$ is a rational multiple of $2\pi$. The goal of the phase estimation algorithm is to determine $\phi$ using $U$ as an oracle and $\ket{\psi}$ as an input.

\subsection*{Basic block}

The basic block of the circuit of the quantum phase estimation algorithm depends on an integer $0\le j< m$ and is given by\vspace*{3pt}
\[
\Qcircuit @C=2.0em @R=1.6em {
\lstick{\ket{0}}        & \gate{H} & \ctrl{1}  &  \rstick{\frac{\ket{0}+\e^{2\pi \ii \phi 2^j}\ket{1}}{\sqrt 2}} \qw \\
\lstick{\ket{\psi}}     & {/}^{{n}}\qw      & \gate{U^{2^j}}     &  \rstick{\ket{\psi}.} \qw
}\vspace{0.2cm}
\]
To verify the correctness of the output of the basic block, we have to use
\[
U\ket{\psi}\,=\,\e^{2\pi \ii \phi}\ket{\psi},
\]
and also
\[
U^{2^j}\ket{\psi}\,=\,\e^{2\pi \ii \phi\,2^j}\ket{\psi}.
\]
The output of the basic block is obtained by applying the controlled $U^{2^j}$ operator on $(H\ket{0})\otimes \ket{\psi}$, that is,
\[
C\big(U^{2^j}\big)\left(\frac{\ket{0}\ket{\psi}+\ket{1}\ket{\psi}}{\sqrt 2}\right)\,=\,
\frac{\ket{0}\ket{\psi}+\ket{1}U^{2^j}\ket{\psi}}{\sqrt 2}\,=\,
\frac{\ket{0}+\e^{2\pi \ii \phi 2^j}\ket{1}}{\sqrt 2}\otimes\ket{\psi}.
\]
Here we see an example of the \textit{phase kickback process} \index{phase kickback} because the phase was produced by the action of $U$ on the second register but it appears as a relative phase of the first qubit after $\ket{\psi}$ has been collected.

There is an alternative way of writing the eigenvalue of $U^{2^j}$ associated with $\ket{\psi}$. Using that $\phi=0.\phi_1\cdots\phi_{m}$ in binary, then
\[
\phi\,=\,\frac{\phi_1}{2}+\frac{\phi_2}{2^2}+\cdots+\frac{\phi_{m}}{2^{m}}.
\]
Multiplying by $2^j$, we obtain
\[
\phi\,2^j\,=\,2^{j-1}\phi_1+\cdots+2\phi_{j-1}+\phi_j+\frac{\phi_{j+1}}{2}+\cdots+\frac{\phi_{m}}{2^{{m}-j}}.
\]
It is straightforward to check that
\[
\exp\big(2\pi\ii \phi\,2^j\big)\,=\,\exp\left(2\pi\ii \left(\frac{\phi_{j+1}}{2}+\cdots+\frac{\phi_{m}}{2^{{m}-j}}\right)\right)
\]
because $\exp\big(2\pi\ii \,2^{j-1}\phi_1\big)=\dots=\exp\big(2\pi\ii \phi_j\big)=1$. Then,
\[
\exp\big(2\pi\ii \phi\,2^j\big)\,=\,\exp\left(2\pi\ii \,0.\phi_{j+1}\cdots\phi_{m}\right).
\]
Note that the first digits of $\phi$ were eliminated.

The implementation of $U^{2^j}$ is not necessarily performed by $2^j$ applications of $U$. This method is inefficient if $m$ is large. The implementation depends on specific applications of the phase estimation algorithm. For instance, if $U$ performs modular arithmetic, the \textit{repeated squaring method} is employed.


The circuit of the quantum phase estimation algorithm has two blocks. The first is made of $m$ basic blocks and the second is the inverse Fourier transform. Let us start by describing the first block.

\subsection*{First block}

The circuit of the first block comprises $m$ basic blocks with a common second register, as depicted in Fig.~\ref{fig:interblockpe}. The first register has ${m}$ qubits with input $\ket{0}^{\otimes {m}}$ and the second register has ${{n}}$ qubits with input $\ket{\psi}$. The output is a direct consequence of each basic block, which uses $U^{2^j}$, where $j$ runs from 0 to ${m}-1$. The order of the controlled operations is irrelevant, but $j$ must be 0 for the ${m}$-th qubit, $j$ must be 1 for the $({m}-1)$-th qubit, and so on.

\begin{figure}[!ht]
\hspace{1cm}
\Qcircuit @C=1.2em @R=1.0em {
\lstick{\ket{0}}    &\qw & \qw            &\qw& \qw            & \qw & \cdots &&\qw     & \qw            &\gate{H}& \ctrl{6}           & \rstick{\frac{\ket{0}+\e^{2\pi \ii \phi 2^{{m}-1}}\ket{1}}{\sqrt 2}} \qw\\
\lstick{\ket{0}}    &\qw      & \qw            &\qw& \qw            & \qw & \cdots &&\gate{H}& \ctrl{5}       &\qw& \qw                & \rstick{\frac{\ket{0}+\e^{2\pi \ii \phi 2^{{m}-2}}\ket{1}}{\sqrt 2}} \qw\\
\lstick{\vdots\ \ } & \vdots  &                &&                &     &        &&        &                &&                    &&& \rstick{\vdots}\\
\lstick{ }          &         &                &&                &     &        &&        &                &&                    &       \\
\lstick{\ket{0}}    & \qw     &\qw&\gate{H}        & \ctrl{2}       & \qw & \cdots &&\qw     & \qw            &\qw& \qw                & \rstick{\frac{\ket{0}+\e^{2\pi \ii \phi 2^1}\ket{1}}{\sqrt 2}} \qw\\
\lstick{\ket{0}}    &\gate{H} & \ctrl{1}       &\qw& \qw            & \qw & \cdots &&\qw     & \qw            &\qw& \qw                & \rstick{\frac{\ket{0}+\e^{2\pi \ii \phi 2^0}\ket{1}}{\sqrt 2}} \qw\\
\lstick{\ket{\psi}} &/^{{n}}\qw & \gate{U^{2^0}} &\qw& \gate{U^{2^1}} & \qw & \cdots &&\qw& \gate{U^{2^{{m}-2}}}&\qw& \gate{U^{2^{{m}-1}}} & \rstick{\ket{\psi}}\qw
\gategroup{6}{2}{7}{3}{0.7em}{--}\gategroup{5}{4}{7}{5}{0.7em}{--}\gategroup{2}{9}{7}{10}{0.7em}{--}\gategroup{1}{11}{7}{12}{0.7em}{--}}
\caption{The first block of the phase estimation circuit is made of $m$ basic blocks sharing the second register.}
\label{fig:interblockpe}
\end{figure}

The output of the first register of the first block is
\[
\frac{\ket{0}+\e^{2\pi \ii \phi 2^{{m}-1}}\ket{1}}{\sqrt 2}\otimes
\frac{\ket{0}+\e^{2\pi \ii \phi 2^{{m}-2}}\ket{1}}{\sqrt 2}\otimes \cdots \otimes
\frac{\ket{0}+\e^{2\pi \ii \phi 2^{0}}\ket{1}}{\sqrt 2}.
\]
This output can be simplified into a very neat expression. In order to do so, let us replace each term with an equivalent term using a binary sum and collecting the denominators
\[
\frac{1}{\sqrt{2^{m}}} \sum_{\ell_1=0}^1\e^{2\pi \ii \phi 2^{{m}-1}\ell_1}\ket{\ell_1}\otimes
\sum_{\ell_2=0}^1\e^{2\pi \ii \phi 2^{{m}-2}\ell_2}\ket{\ell_2}\otimes \cdots \otimes
\sum_{\ell_m=0}^{1}\e^{2\pi \ii \phi 2^{0}\ell_{m}}\ket{\ell_{m}}.
\]
Pushing all sums to the beginning of the expression and combining all exponentials, we obtain
\[
\frac{1}{\sqrt{2^{m}}} \sum_{\ell_1,...,\ell_{m}=0}^1\e^{2\pi \ii \phi (2^{{m}-1}\ell_1+\cdots+2^{0}\ell_{m})}\ket{\ell_1}\otimes...\otimes\ket{\ell_{m}}.
\]
Converting binary numbers into the decimal notation, we obtain
\[
\frac{1}{\sqrt{2^{m}}} \sum_{\ell=0}^{2^{m}-1}\e^{2\pi \ii \,\phi\,\ell}\ket{\ell}.
\]
This is the neat expression we were looking for. Let us summarize the first block:
\[
\ket{0}^{\otimes {m}}\otimes\ket{\psi}\xrightarrow[\text{ block }]{\text{first}}
\left(\frac{1}{\sqrt{2^{m}}} \sum_{\ell=0}^{2^{m}-1}\e^{2\pi \ii \,\phi\,\ell}\ket{\ell}\right)\otimes\ket{\psi},
\]
where $m$ is the number of qubits of the first register, $\ket{\psi}$ is an eigenvector of $U$ with eigenvalue $\exp(2\pi \ii \,\phi)$, and $\phi=0.\phi_1...\phi_m$. In the next subsection, we show that the output of the first register is
\[
F_{2^{m}}\ket{\phi_1,\dots,\phi_{m}},
\]
where $F_{2^{m}}$ is the Fourier transform, defined in Section~\ref{seq:shor_fourier_transform}. Then, we write
\[
\ket{0}^{\otimes {m}}\otimes\ket{\psi}\xrightarrow[\text{ block }]{\text{first}}
\left(F_{2^{m}}\ket{\phi_1,\dots,\phi_{m}}\right)\otimes\ket{\psi},
\]
where $0.\phi_1\cdots\phi_m$ is the phase of the eigenvalue associated with eigenvector $\ket{\psi}$ of $U$.

\subsection*{Full circuit of the quantum phase estimation (QPE)}

Now we show that the second block of the QPE algorithm is the inverse Fourier transform. In the last Subsection, we have shown that the output of the first register of the first block is
\[
\frac{1}{\sqrt{2^{m}}} \sum_{\ell=0}^{2^{m}-1}\e^{2\pi \ii \,\phi\,\ell}\ket{\ell}.
\]
On the other hand, the action of the Fourier transform $F_{2^{m}}$ on a generic state $\ket{j}$ of the computational basis is
\[
F_{2^{m}}\ket{j}\,=\,
\frac{1}{\sqrt{2^{m}}} \sum_{\ell=0}^{2^{m}-1}\e^{\frac{2\pi \ii j\ell}{2^{m}}}\ket{\ell}.
\]
Inverting the equation, taking $j=\left(\phi_1...\phi_{m}\right)_2=(2^{m}\phi)_{10}$ and $\ket{j}=\ket{\phi_1}\otimes...\otimes\ket{\phi_{m}}$, we obtain
\[
F_{2^{m}}^\dagger\left(\frac{1}{\sqrt{2^{m}}} \sum_{\ell=0}^{2^{m}-1}\e^{2\pi \ii \,\phi\,\ell}\ket{\ell}\right)\,=\,\ket{2^m\phi}\,=\,\ket{\phi_1}\otimes\cdots\otimes\ket{\phi_{m}}.
\]
If we apply the inverse Fourier transform to the output of the first block, the result is a state of the computational basis equal to $\ket{\phi_1}\otimes...\otimes\ket{\phi_{m}}$. This means that a measurement in the computational basis reveals with certainty each fractional bit of $\phi$ because we are assuming that $\phi$ is represented with $m$ bits. In the general case, the result of the algorithm is a good $m$-bit estimate of $\phi$, which is denoted by $\tilde\phi$, that is, $\tilde\phi\approx \phi2^m$.

The full circuit of the phase estimation algorithm is depicted in Fig.~\ref{fig:phaseestimation}. The algorithm is described in Algorithm~\ref{algo_phase_estimation}.

\begin{figure}[!ht]
\centering
\begin{align*}
\Qcircuit @C=1.0em @R=1.0em {
\lstick{\ket{0}}    & \gate{H}  & \qw            & \qw            & \qw & \cdots && \qw               & \ctrl{5}           & \multigate{4}{F_{2^{m}}^\dagger} & \qw    & \meter & \cw &\tilde{\phi}_1\\
\lstick{\ket{0}}    & \gate{H}  & \qw            & \qw            & \qw & \cdots && \ctrl{4}          & \qw                & \ghost{F_{2^{m}}^\dagger}	& \qw  & \meter & \cw &\tilde{\phi}_2\\
\lstick{\vdots\ \ } & \vdots    &                &                &     & \iddots&&                   &                    &\pureghost{F_{2^{m}}^\dagger}    &        & \vdots &     \\
\lstick{\ket{0}}    & \gate{H}  & \qw            & \ctrl{2}       & \qw & \cdots && \qw               & \qw                & \ghost{F_{2^{m}}^\dagger}	& \qw  & \meter & \cw &\,\,\,\,\,\,\tilde{\phi}_{{m}-1}\\
\lstick{\ket{0}}    & \gate{H}  & \ctrl{1}       & \qw            & \qw & \cdots && \qw               & \qw                & \ghost{F_{2^{m}}^\dagger}	& \qw  & \meter & \cw &\tilde{\phi}_{m}\\
\lstick{\ket{\psi}} & /^{{n}} \qw   & \gate{U^{2^0}} & \gate{U^{2^1}} & \qw & \cdots && \gate{U^{2^{{m}-2}}}& \gate{U^{2^{{m}-1}}} & \rstick{\ket{\psi}}\qw
}
\end{align*}
\caption{Full circuit of the QPE algorithm.}
\label{fig:phaseestimation}
\end{figure}

\begin{algorithm}[!ht]
\caption{Quantum phase estimation algorithm}\label{algo_phase_estimation}
\KwIn{Eigenvector $\ket{\psi}$ of $U$.}
\KwOut{Number $\tilde{\phi}$, where $\exp(2\pi\ii \phi)$ is the eigenvalue of $\ket{\psi}$ and $\tilde\phi\approx \phi2^m$.} \BlankLine
Prepare the initial state $\ket{0}^{\otimes {m}}\otimes\ket{\psi}$\vspace{3pt}\;
Apply $H^{\otimes {m}}$ to the first register\;
For $\ell$ in $[0,{m}-1]$ apply the controlled operation $C^{{m}-\ell}\left(U^{2^\ell}\right)$, where the control qubit is ${m}-\ell$ and the target is the 2nd register\vspace{2pt}\;
Apply $F_{2^{m}}^\dagger$ to the first register\vspace{3pt}\;
Measure the first register in the computational basis.
\end{algorithm}


\section{Application to order-finding}

Let $N$ and $a$ be positive integers so that $1<a<N$ and $\gcd(a,N)=1$. The multiplicative order of $a$ modulo $N$ is the smallest positive integer $r$ that obeys
\[
a^r\equiv 1\mod N.
\]
Given $a$ and $N$, order-finding is the problem of calculating $r$. In this Section, we show how to solve the order-finding problem efficiently using the phase estimation algorithm, thereby providing an alternative to Shor's factoring algorithm.

The strategy is to replace $\ket{\psi}$ in Algorithm~\ref{algo_phase_estimation} by $\ket{1}$ (the second vector of the computational basis of the second register) and to choose $U$ as the unitary operator that multiplies the input by $a$ modulo $N$, that is,
\begin{equation}\label{eq:U_order_finding_QPE}
U\ket{y}\,=\,\ket{ay\mod N},
\end{equation}
where $0\le y <N$ and $U\ket{y}=\ket{y}$ otherwise. The input is a vector $\ket{y}$ of the computational basis of the second register. We may think that $y$ is represented in the decimal system. The output is also a vector $\ket{y'}$ of the computational basis of the second register, which is obtained by calculating $ ay\equiv y'$ modulo $N$. $U$ is a unitary operator because $\gcd(a,N)=1$. $U^\dagger$ is defined accordingly using $a^{-1}$ modulo $N$, that is,
\[
U^\dagger\ket{y}\,=\,\ket{a^{-1}y\mod N}.
\]
The motivation for using $U$ here is that repeated application of $U$ produces successive powers of $a$; in fact, $U^j\ket{y}=\ket{a^jy}$. The number of qubits ${{n}}$ in the second register must be large enough to represent the states on which $U$ acts, so we take ${{n}}=\lceil \log_2 N\rceil$.

In order to understand order-finding as a phase estimation algorithm, let us find the eigenvectors of $U$. It is straightforward to obtain a 1-eigenvector because the set $\{a^0,a^1,...a^{r-1}\}$, where $r$ is the order of $a$ modulo $N$, is invariant under multiplication by $a$. Then, the normalized vector
\[
\ket{\psi_0}=\frac{1}{\sqrt r}\sum_{\ell=0}^{r-1}\ket{a^\ell}
\]
is a 1-eigenvector of $U$. The remaining ones are constructed using the entries of the Fourier transformation $F^\dagger_r$. Define
\begin{equation}\label{eq:fourierphaseest}
\ket{\psi_k}\,=\,\frac{1}{\sqrt r}\sum_{\ell=0}^{r-1}\e^{-\frac{2\pi\ii k\ell}{r}}\ket{a^\ell}.
\end{equation}
Now let us check that each $\ket{\psi_k}$ is an eigenvector of $U$. In fact,
\begin{eqnarray*}
U\ket{\psi_k} &=& \frac{1}{\sqrt r}\sum_{\ell=0}^{r-1}\e^{-\frac{2\pi\ii k\ell}{r}}\ket{a^{\ell+1}}\\
&=& \frac{1}{\sqrt r}\sum_{\ell=0}^{r-1}\e^{-\frac{2\pi\ii k(\ell-1)}{r}}\ket{a^{\ell}}\\
&=& \e^{\frac{2\pi\ii k}{r}}\ket{\psi_k}.
\end{eqnarray*}
We conclude that $\ket{\psi_k}$ is an eigenvector of $U$ with eigenvalue $\exp({2\pi\ii k}/{r})$ for $0\le k<r$, whose phase is $k/r$. If we are able to prepare the input to the second register of the phase estimation algorithm as $\ket{\psi_k}$ for some $k$, we will obtain an approximation of $k/r$ as the output, and then find a candidate for the order of $a$ using the continued fraction expansion. If the input to the second register is $\ket{\psi_k}$, the output of the first block is
\begin{equation}\label{eq:outputintblock}
\ket{0}\otimes\ket{\psi_k}\xrightarrow[\text{ block }]{\text{first}}\frac{1}{\sqrt{2^{m}}} \sum_{\ell=0}^{2^{m}-1}\e^\frac{2\pi \ii k\ell}{r}\ket{\ell}\ket{\psi_k}.
\end{equation}

We cannot prepare $\ket{\psi_k}$ as an input to the phase estimation algorithm but we can find a known vector that is spanned by the set of vectors $\{\ket{\psi_0},...\ket{\psi_{r-1}}\}$. Using Eq.~(\ref{eq:fourierphaseest}) and the geometric series, we show that
\[
\ket{a^\ell}\,=\,\frac{1}{\sqrt r}\sum_{k=0}^{r-1}\e^{\frac{2\pi\ii k\ell}{r}}\ket{\psi_k}.
\]
The simplest choice is $\ket{a^0}=\ket{1}$, which is given by
\begin{equation}\label{eq:ket-1}
\ket{1}\,=\,\frac{1}{\sqrt r}\sum_{k=0}^{r-1}\ket{\psi_k}.
\end{equation}
Using transformation~(\ref{eq:outputintblock}) for each $k$, the output of the first block is
\[
\ket{0}\otimes\ket{1}\xrightarrow[\text{ block }]{\text{first}}\frac{1}{\sqrt{r2^{m}}}\sum_{k=0}^{r-1} \sum_{\ell=0}^{2^{m}-1}\e^\frac{2\pi \ii k\ell}{r}\ket{\ell}\ket{\psi_k}.
\]
To simplify the output, we use Eq.~(\ref{eq:fourierphaseest}). Then,
\begin{eqnarray*}
\text{output} &=&\frac{1}{\sqrt{r2^{m}}}\sum_{k=0}^{r-1} \sum_{\ell=0}^{2^{m}-1}\e^\frac{2\pi \ii k\ell}{r}\ket{\ell}\left(\frac{1}{\sqrt r}\sum_{\ell'=0}^{r-1}\e^{-\frac{2\pi\ii k\ell'}{r}}\ket{a^{\ell'}}\right).
\end{eqnarray*}
Inverting the order of the sums and combining the exponents, we obtain
\begin{eqnarray*}
\text{output}
&=&\frac{1}{\sqrt{2^{m}}} \sum_{\ell=0}^{2^{m}-1}\sum_{\ell'=0}^{r-1}\left(\frac{1}{r}\sum_{k=0}^{r-1}\e^\frac{2\pi \ii k(\ell-\ell')}{r}\right)\ket{\ell}\ket{a^{\ell'}}.
\end{eqnarray*}
Using the geometric series, the expression inside the parenthesis is 1 if $\ell=\ell'$ and 0 otherwise. This means that the output of the first block is
\begin{eqnarray*}
\ket{0}^{\otimes {m}}\otimes\ket{0}^{\otimes {(n-1)}}\ket{1}\xrightarrow[\text{ block }]{\text{first}}\frac{1}{\sqrt{2^{m}}} \sum_{\ell=0}^{2^{m}-1}\ket{\ell}\ket{a^{\ell}}.
\end{eqnarray*}
This is the same state as in the standard Shor's factoring algorithm (state $\ket{\psi_2}$ of Section~\ref{sec:shor_analysis}) just before applying the inverse Fourier transform $F_{2^{m}}$, that is, we consider the part of the algorithm where the input is $\ket{0}^{\otimes {m}}\ket{0}$ and then the Hadamard gate is applied on each qubit of the first register and then $U_f$, where $f(x)=a^x\mod N$:
\begin{eqnarray*}
\ket{0}^{\otimes {m}}\otimes\ket{0}^{\otimes n}\xrightarrow[\hspace{1cm}]{\,\,U_f\cdot (H^{\otimes m}\otimes I)\,\,}\frac{1}{\sqrt{2^{m}}} \sum_{\ell=0}^{2^{m}-1}\ket{\ell}\ket{a^{\ell}}.
\end{eqnarray*}
This means that the phase estimation version yields the same result and the analysis of the success probability is exactly the same as in Shor's factoring algorithm if we choose $m$ so that $m=\lceil 2\log_2 N\rceil$. The number of qubits of the first register must be close to twice the number of qubits of the second register. Fig.~\ref{fig:Shor_QPE} depicts the quantum  part of Shor's algorithm using the quantum phase estimation, where $U$ is given by Eq.~(\ref{eq:U_order_finding_QPE}) and $a$ is picked uniformly at random in $\mathbb{Z}_N^\times$. Note that the input $\ket{1}$ to the second register is the second vector of the computational basis. This circuit replaces Algorithm~\ref{algo_ShorQuantum} of Section~\ref{sec_subsection_shor_algorithm}.
\begin{figure}[!ht]
\centering
\[
\Qcircuit @C=1.0em @R=1.0em {
\lstick{\ket{0}}    & \gate{H}  & \qw            & \qw            & \qw & \cdots &  & \ctrl{4}           & \multigate{3}{F_{2^{m}}^\dagger} & \qw    & \meter & \cw &\ell_0\\
\lstick{\vdots\ \ } & \vdots    &                &                &     &   \iddots &    &                    &\pureghost{F_{2^{m}}^\dagger}    &        & \vdots &     \\
\lstick{\ket{0}}    & \gate{H}  & \qw            & \ctrl{2}       & \qw & \cdots & & \qw                & \ghost{F_{2^{m}}^\dagger}	& \qw  & \meter & \cw &\,\,\,\,\,\,\ell_{{m}-2}\\
\lstick{\ket{0}}    & \gate{H}  & \ctrl{1}       & \qw            & \qw & \cdots & & \qw                & \ghost{F_{2^{m}}^\dagger}	& \qw  & \meter & \cw & \,\,\,\,\,\,\ell_{m-1}\\
\lstick{\ket{0...01}} & /^{{n}} \qw   & \gate{U^{2^0}} & \gate{U^{2^1}} & \qw & \cdots && \gate{U^{2^{{m}-1}}} & \rstick{}\qw
}
\]
\caption{Quantum part of Shor's algorithm based on QPE. $U$ is given by Eq.~(\ref{eq:U_order_finding_QPE}). The output is the same as the one described in Algorithm~\ref{algo_ShorQuantum} of Section~\ref{sec_subsection_shor_algorithm}.}
\label{fig:Shor_QPE}
\end{figure}

There is an interesting special case. If we somehow know that the order $r$ is a power of 2, we can take $m=n=\lceil \log_2 N\rceil$. If $r$ is a power of 2, the phase of the eigenvalue $\exp({2\pi\ii k}/{r})$ is a rational multiple of $2\pi$, and the phase estimation algorithm returns an exact value $k2^m/r$. In this case, we do not need to calculate the continued fraction expansion of the result. Instead, we simply divide the output by $2^m$ and select the denominator as the candidate for the order of $a$.

It is simpler to check that Shor's factoring algorithm works correctly when we use Kitaev's version. Let's suppose that the order $r$ is a power of 2. After compacting the circuit of Fig.~\ref{fig:Shor_QPE}, the output of the QPE algorithm when the input is $\ket{0}\ket{1}$ (in the decimal notation) is shown below
\[
\Qcircuit @C=1.7em @R=0.9em {
\lstick{\ket{0}}   &{/}^m \qw &\multigate{2}{\,\,\text{QPE}_U\,\,} &  \meter& \rstick{\tilde{\phi}_k=\frac{k2^m}{r}}\cw \\
\lstick{}          &          &                                    &     & \lstick{}  \\
\lstick{\ket{1}}&{/}^n \qw &\ghost{\,\,\text{QPE}_U\,\,}        & \qw& \lstick{\ket{\psi_k},} \\
& & & & {\hspace{1.7cm}}
}\vspace{0pt}
\]
where $0\le k<r$ is selected uniformly at random. The result of the last circuit is obtained by using Eq.~(\ref{eq:ket-1}) and the fact that the phase of the eigenvalue associated with $\ket{\psi_k}$ is $\phi_k=k/r$. After applying the QPE algorithm, we obtain
$$\ket{0}\ket{1} \xrightarrow[\text{ }]{\,\,\,\,\text{QPE}_U\,\,\,\,} \frac{1}{\sqrt r} \sum_{k=0}^{r-1}  |{\tilde{\phi}_k}\rangle\ket{\psi_k},$$
where $\tilde{\phi}_k=\frac{k2^m}{r}$. Then, after a measurement of the first register, the output is necessarily $\tilde{\phi}_k$ for some $0\le k<r$ picked uniformly at random, and the outcome of the second register is necessarily the eigenvector $\ket{\psi_k}$. If the order $r$ is not a power of 2, the output $\tilde{\phi}_k$ is a good approximation for $\frac{k2^m}{r}$, and it is likely that ${\tilde{\phi}_k}$ be a nearest integer to a multiple of $k/r$.  A candidate $r'$ for the multiplicative order of $a$ modulo $N$ is obtained by selecting the convergent of the continued fraction expansion of $\tilde{\phi}_k/2^m$ that has the largest denominator $r'$ such that $r'<N$. The lower bound for the success probability of the quantum part determined in Section~\ref{sec:shor_analysis} is valid here.

How do we implement $U^{2^j}$ efficiently for $0\le j<m$? Ref.~\cite{MS12} addresses this question. Note that
\[
U^{2^j}\ket{y}\,=\,\ket{a^{2^j}y\mod N}.
\]
Since $a^{2^j}$ can be calculated efficiently in $O(n^2)$ steps using the repeated squaring method, instead of applying $U$ repeatedly $2^j$ times, for each $j$ we implement an operator $U_j\ket{y}=\ket{zy}$ after calculating $z=a^{2^j}$ using the repeated squaring method. In this case, the first block can be computed in $O(n^3)$ steps.

\section{Application to discrete logarithm}

Let $N$, $a$, and $b$ be known positive integers and let $s$ be a positive integer such that $a^s\equiv b \mod N$ and gcd$(a,N)=1$. Our goal is to find $s$ given $N$, $a$, and $b$ as input. This is the same problem addressed in Section~\ref{sec:dislog_special_case} on Page~\pageref{sec:dislog_special_case}. Now we show how to solve the discrete logarithm problem using the phase estimation algorithm, thereby providing an alternative version of Shor's algorithm for the discrete logarithm problem.


The strategy we use here is the same one used in the order-finding algorithm based on phase estimation. Recall that, when we described the order-finding algorithm, the output of the first block is the same as in the original Shor's factoring algorithm right before the action of the inverse Fourier transform. Now, the state of the qubits right before the action of $F_r^\dagger\otimes F_r^\dagger$ in Shor's discrete logarithm algorithm described in Section~\ref{sec:dislog_special_case} is
\[
\frac{1}{r}\sum_{x,y=0}^{r-1}\ket{x}\ket{y}\ket{a^xb^y \mod N}.
\]
If we wish to produce this state using the phase estimation algorithm, we need to use three registers and two unitary operators:
\begin{eqnarray*}
U_a\ket{x}&=&\ket{ax \mod N},\\
U_b\ket{y}&=&\ket{by \mod N}.
\end{eqnarray*}
$U_b$ is a unitary operator because $\gcd(b,N)=1$. Indeed, the inverse of $b$ is $a^{r-s}$, where $r$ is the order of $a$ modulo $N$. In the new algorithm, the action of $U_a$ is controlled by the first register, and the action of $U_b$ is controlled by the second register, as described in Fig.~\ref{fig:logdis_phes}. Note that $U_a$ and $U_b$ act on the same register.

\begin{figure}[!ht]
\centering
\begin{align*}
\Qcircuit @C=1.0em @R=0.5em {
\lstick{\ket{0}}    & \gate{H}  &\qw      &  \qw  &\ctrl{6}&\qw     &\qw    & \qw    &\qw& \multigate{2}{F_{2^{m}}^\dagger} & \qw    & \meter & \cw &\tilde \phi_1\\
\lstick{\vdots\ \ } & \vdots    &         &\iddots&        &        &       &        &   & \pureghost{F_{2^{m}}^\dagger}    &        & \vdots &     \\
\lstick{\ket{0}}    & \gate{H}  & \ctrl{4}&  \qw  &\qw     &\qw     &\qw    & \qw    &\qw& \ghost{F_{2^{m}}^\dagger}	& \qw  & \meter & \cw &\tilde \phi_m\\
\lstick{\ket{0}}    & \gate{H}  & \qw     &  \qw  &\qw     &\qw     &\qw    &\ctrl{3}&\qw& \multigate{2}{F_{2^{m}}^\dagger}& \qw &\meter &\cw&\tilde \phi'_1\\
\lstick{\vdots\ \ } & \vdots    &         &       &        &        &\iddots&        &   & \pureghost{F_{2^{m}}^\dagger}    &        & \vdots &     \\
\lstick{\ket{0}}    & \gate{H}  & \qw     &  \qw  &\qw     &\ctrl{1}&\qw    & \qw    &\qw& \ghost{F_{2^{m}}^\dagger}	& \qw  & \meter & \cw &\tilde \phi'_m\\
\lstick{\ket{\psi_k}} & /^{{n}} \qw &\gate{U_a^{2^{0}}}&\qw& \gate{U_a^{2^{{m}-1}}}  & \gate{U_b^{2^0}}& \qw  & \gate{U_b^{2^{m-1}}} & \rstick{\ket{\psi_k}}\qw
}
\end{align*}
\caption{Circuit of the discrete logarithm algorithm based on QPE, where $\ket{\psi_k}$ is a common eigenvector of $U_a$ and $U_b$. Since we are not usually able to prepare $\ket{\psi_k}$ as the input to the third register, we use $\ket{1}$ in its place, which can be represented as a linear combination of $\ket{\psi_k}$, for $0\le k<r$.}
\label{fig:logdis_phes}
\end{figure}

Let us check that the vectors
\begin{equation*}
\ket{\psi_k}\,=\,\frac{1}{\sqrt r}\sum_{\ell=0}^{r-1}\e^{-\frac{2\pi\ii k\ell}{r}}\ket{a^\ell},
\end{equation*}
which are eigenvectors of $U_a$, are also eigenvectors of $U_b$. In fact, using that $b=a^s$, we have
\begin{eqnarray*}
U_b\ket{\psi_k} &=& \frac{1}{\sqrt r}\sum_{\ell=0}^{r-1}\e^{-\frac{2\pi\ii k\ell}{r}}\ket{a^{\ell+s}}\\
&=& \frac{1}{\sqrt r}\sum_{\ell=0}^{r-1}\e^{-\frac{2\pi\ii k(\ell-s)}{r}}\ket{a^{\ell}}\\
&=& \e^{\frac{2\pi\ii ks}{r}}\ket{\psi_k}.
\end{eqnarray*}
We conclude that $\ket{\psi_k}$ is an eigenvector of $U_b$ with eigenvalue $\exp({2\pi\ii ks}/{r})$ for $0\le k<r$. Better yet, $\ket{\psi_k}$ is an eigenvector of $U_a$ and $U_b$ simultaneously.  If we are able to prepare the input to the third register of the circuit depicted in Fig.~\ref{fig:logdis_phes} as $\ket{\psi_k}$ for some $k$, we will obtain an estimate of $\tilde \phi\approx k/r$ as the output of the first register and an estimate of $\tilde \phi'\approx ks/r$ as the output of the second register modulo $N$. If the input to the third register is $\ket{\psi_k}$, the output of the first block is
\begin{equation*}
\ket{0}^{\otimes {m}}\ket{0}^{\otimes {m}}\ket{\psi_k}\xrightarrow[\text{ block }]{\text{first}}
\left(\frac{1}{\sqrt{2^{m}}} \sum_{\ell=0}^{2^{m}-1}\e^\frac{2\pi \ii k\ell}{r}\ket{\ell}\right)
\left(\frac{1}{\sqrt{2^{m}}} \sum_{\ell'=0}^{2^{m}-1}\e^\frac{2\pi \ii k s\ell' }{r}\ket{\ell'}\right)
\ket{\psi_k},
\end{equation*}
where $m$ is the number of qubits of the first and second registers, and ${{n}}=\lceil \log_2 N\rceil$ of the third register.  The first term inside the parentheses is obtained by replacing $\phi$ with $k/r$ and the second term by replacing $\phi$ with $ks/r$ in the output of the first block of the phase estimation algorithm. Simplifying the output, we have
\begin{equation*}
\ket{0}^{\otimes {m}}\ket{0}^{\otimes {m}}\ket{\psi_k}\xrightarrow[\text{ block }]{\text{first}}
\left(\frac{1}{2^{m}} \sum_{\ell,\ell'=0}^{2^{m}-1}\e^\frac{2\pi \ii k(\ell+s\ell')}{r}\ket{\ell}\ket{\ell'}\right)
\ket{\psi_k}.
\end{equation*}

Usually, we are not able to prepare $\ket{\psi_k}$ as an input to the phase estimation algorithm but we can use
\[
\ket{1}\,=\,\frac{1}{\sqrt r}\sum_{k=0}^{r-1}\ket{\psi_k}
\]
again instead of $\ket{\psi_k}$.
In this case, the input to and output of the first block are
\begin{equation*}
\ket{0}^{\otimes {m}}\ket{0}^{\otimes {m}}\ket{1}\xrightarrow[\text{ block }]{\text{first}}\frac{1}{\sqrt r}\sum_{k=0}^{r-1}
\left(\frac{1}{2^{m}} \sum_{\ell,\ell'=0}^{2^{m}-1}\e^\frac{2\pi \ii k(\ell+s\ell')}{r}\ket{\ell}\ket{\ell'}\right)
\ket{\psi_k},
\end{equation*}
where the input $\ket{1}$ is written in the decimal notation. We use the definition of $\ket{\psi_k}$ given by Eq.~(\ref{eq:fourierphaseest}) to simplify the output. We start writing
\begin{eqnarray*}
\frac{1}{\sqrt r}\sum_{k=0}^{r-1}
\left(\frac{1}{2^{m}} \sum_{\ell,\ell'=0}^{2^{m}-1}\e^\frac{2\pi \ii k(\ell+s\ell')}{r}\ket{\ell}\ket{\ell'}\right)
\left(\frac{1}{\sqrt r}\sum_{{k'}=0}^{r-1}\e^{-\frac{2\pi\ii k{k'}}{r}}\ket{a^{{k'}}}\right),
\end{eqnarray*}
and by pushing the sums over $\ell$, $\ell'$, ${k'}$ to the left and combining the exponents, we obtain
\begin{eqnarray*}
\frac{1}{2^{m}} \sum_{\ell,\ell'=0}^{2^{m}-1}\sum_{{k'}=0}^{r-1}\left(\frac{1}{r}\sum_{k=0}^{r-1}\e^\frac{2\pi \ii k(\ell+ s\ell' -{k'})}{r}\right)\ket{\ell}\ket{\ell'}\ket{a^{{k'}}}.
\end{eqnarray*}
The expression inside the parenthesis is 1 if $\ell+ s\ell' ={k'}$ and 0 otherwise. This means that the action of the first block is
\begin{eqnarray*}
\ket{0}^{\otimes {m}}\ket{0}^{\otimes {m}}\ket{1}\xrightarrow[\text{ block }]{\text{first}}\frac{1}{2^{m}} \sum_{\ell,\ell'=0}^{2^{m}-1}\ket{\ell}\ket{\ell'}\ket{a^{\ell+ s\ell' }}.
\end{eqnarray*}
Using that $a^s=b\mod N$, $\ell\rightarrow x$, and $\ell'\rightarrow y$, we have
\begin{eqnarray*}
\ket{0}^{\otimes {m}}\ket{0}^{\otimes {m}}\ket{1}\xrightarrow[\text{ block }]{\text{first}}\frac{1}{2^{m}} \sum_{x,y=0}^{2^{m}-1}\ket{x}\ket{y}\ket{a^{x}b^{y}}.
\end{eqnarray*}
The output of the first block is the same state as in the standard Shor's discrete-logarithm algorithm just before applying the inverse Fourier transforms $F_{2^{m}}\otimes F_{2^{m}}$ (state $\ket{\psi_2}$ of Algorithm~\ref{algo_discrete_log} on Page~\pageref{algo_discrete_log}). This means that the phase estimation version yields the same result and the analysis of the success probability is exactly the same as in Shor's algorithm if we choose $m$ appropriately.

\section{Application to quantum counting}

In the context of Grover's algorithm, we have an oracle $f:\{0,...,N-1\}\rightarrow \{0,1\}$ that is a Boolean function defined as
\begin{equation*}\label{eq:grover_f_x}
    f(x) = \left\{
  \begin{array}{l@{\quad}l}
    1, & \hbox{if \ensuremath{x\in M},} \\
    0, & \hbox{otherwise,}
  \end{array}
\right.
\end{equation*}
where $M$ is a subset of the domain and $N=2^n$. We say that $x$ is marked if $x\in M$. The optimal number of steps of Grover's algorithm depends on $|M|$; indeed, it is given by  $\frac{\pi}{4}\sqrt{N/|M|}$. If the cardinality of $M$ is unknown, it is possible to find a marked element by repeatedly guessing the runtime of Grover's algorithm~\cite{BBHT98}. An alternative method is by solving the quantum counting problem~\cite{BHMT02}.

The quantum counting problem asks what is the cardinality of $M$ given function $f$ as an oracle. A classical solution cannot perform better than $\Omega(N)$ queries to the oracle because all domain elements must be checked. The quantum algorithm can find the solution in $O\big(\sqrt{|M|\,N}\big)$ queries to the oracle.

Before addressing the quantum counting problem, let us review some key points of Grover's algorithm with many marked elements, which is an extension of the algorithm presented in Chapter~\ref{chap:Grover}. There is an economical version of the algorithm, which uses only one $n$-qubit register. The initial state is the uniform superposition of the computational basis given by
\[
\ket{\text{d}}\,=\,\frac{1}{\sqrt N}\sum_{j=0}^{N-1}\ket{j},
\]
and the algorithm consists of $\frac{\pi}{4}\sqrt{N/|M|}$ applications of the evolution operator
\[
U\,=\,G\,U_f,
\]
where
\[
G\,=\,2\ket{\text{d}}\bra{\text{d}}-I
\]
and
\[
U_f\,=\,\sum_{x=0}^{N-1}(-1)^{f(x)}\ket{x}\bra{x}.
\]
The analysis of the algorithm is performed by using the  $\e^{\pm{\ii \theta}}$-eigenvectors of $U$, which are given in terms of the superposition of marked states\footnote{Some references call $\ket{M}$ as ``good state'' and $\ket{M^\perp}$ as ``bad state''.} $\ket{M}$ and the superposition of unmarked states $\ket{M^\perp}$:
\begin{equation}\label{eq:eigenvecsphase}
\ket{\psi^\pm}\,=\,\frac{\ket{M}\pm \,\ii \,\ket{M^\perp}}{\sqrt 2},
\end{equation}
where
\[
\sin\frac{\theta}{2}\,=\,\sqrt{\frac{|M|}{N}},
\]
and
\begin{eqnarray*}
\ket{M}&=&\frac{1}{\sqrt{|M|}}\sum_{x\in M}\ket{x},\\
\ket{M^\perp}&=&\frac{1}{\sqrt{N-|M|}}\sum_{x\not\in M}\ket{x}.
\end{eqnarray*}
It is straightforward to check that $\braket{M^\perp}{M}=0$, and
\[
\ket{\text{d}}\,=\,\sqrt{\frac{|M|}{N}}\,\ket{M}+\sqrt{1-\frac{|M|}{N}}\,\,\ket{M^\perp}.
\]
Using the equation above, the definition of $\sin(\theta/2)$, and Eq.~(\ref{eq:eigenvecsphase}), we obtain
\[
\ket{\text{d}}\,=\,\frac{\e^{\frac{\ii \theta}{2}}\ket{\psi^+}-\e^{-\frac{\ii \theta}{2}}\ket{\psi^-}}{\ii \sqrt 2}.
\]

Now we come back to the quantum counting problem using the phase estimation algorithm. Since the eigenvalue of $\ket{\psi^+}$ is $\exp(\ii \theta)$, where $\sin(\theta/2)=\sqrt{|M|/N}$, we would obtain an approximation for $|M|$ if we use $\ket{\psi^+}$ as the input to the second register of the phase estimation algorithm with $U=GU_f$. If we do not know how to prepare $\ket{\psi^+}$, then the strategy is to replace $\ket{\psi}$ in Algorithm~\ref{algo_phase_estimation} by a known vector that belongs to the subspace spanned by $\ket{\psi^+}$ and $\ket{\psi^-}$. The best candidate is $\ket{\text{d}}$, which can be easily prepared by applying $H^{\otimes n}$ to $\ket{0}^{\otimes n}$. In this case, the number of qubits of the second register must be ${{n}}=\log_2 N$ and the output of the first block is
\[
\ket{0}^{\otimes {m}}\ket{\text{d}}\xrightarrow[\text{ block }]{\text{first}}\frac{\e^{\frac{\ii \theta}{2}}}{\ii \sqrt{2^{m+1}}}{\sum_{\ell=0}^{2^{m}-1}\e^{{2\pi\ii \phi^+\ell}}\ket{\ell}\ket{\psi^+}-\frac{\e^{-\frac{\ii \theta}{2}}}{\ii \sqrt{2^{m+1}}}\sum_{\ell=0}^{2^{m}-1}\e^{{2\pi\ii \phi^-\ell}}\ket{\ell}\ket{\psi^-}},
\]
where $\phi^+=\theta/2\pi$ for the first term and  $\phi^-=(2\pi-\theta)/2\pi$ for the second term. After applying the inverse Fourier transform $F_{2^m}^\dagger$, we obtain the following output of the full circuit
\[
\frac{\e^{\frac{\ii \theta}{2}}}{\ii \sqrt 2}\,\ket{2^m\tilde\phi^+}-
\frac{\e^{-\frac{\ii \theta}{2}}}{\ii \sqrt 2}\,\ket{2^m\tilde\phi^-},
\]
where $\tilde\phi$ is an $m$-bit estimate of $\phi$. After a measurement in the computational basis, we learn an estimate of $\phi^+$ or $\phi^-$ with equal probability. Let $\tilde\phi$ be the measurement result. Using that $\sin (\theta/2)=\sqrt{|M|/N}$, $\phi^+=\theta/2\pi$, and $\phi^-=(2\pi-\theta)/2\pi$, the estimate of $|M|$ is $N\sin^2(\pi\tilde\phi)$ because in the first case we obtain an estimate of $|M|$ using $|\widetilde M|=N\sin^2(\pi\tilde\phi^+)$, and in the second case $|\widetilde M|=N\sin^2(\pi-\pi\tilde\phi^-)=N\sin^2(\pi\tilde\phi^-)$.

How many qubits does the first register have? This is the tricky part. Note that $m$ cannot be equal to $n$ because the number of applications of $U$ would be $2^0+\cdots +2^{n-1}=2^n-1$. Then, the number of queries to $f$ would be $O(N)$. If we choose $m=n/2$, the number of queries to $f$ would be $\sqrt{N}$, but in this case we obtain an estimate $|\widetilde M|$ such that
\begin{equation}\label{chap10_eq_tildeM_M}
\big|\,|\widetilde M|-|M|\,\big|\,=\,O\big(\sqrt{|M|}\big).
\end{equation}
This estimate is not good. For instance, suppose that $|M|$ is around $N/2$. If we wish to know the number of marked elements, and we obtain $|\widetilde M|$ with an error as big as $O(\sqrt{N})$, we don't have a good result. To understand what is the problem here, which does not arise in the factoring and discrete logarithm algorithms, we have to analyze carefully the range of values of angle $\theta$ we are trying to estimate.

For this analysis, let us assume that $0<|M|\ll N$, or more formally, $|M|=o(\sqrt{N})$. The expression $\sin (\theta/2)=\sqrt{|M|/N}$ can be written asymptotically as
\[
\theta = \frac{2\sqrt{|M|}}{\sqrt{N}}+O\left(\frac{|M|}{N}\right).
\]
This means that $\theta/2\pi$ represented in terms of binary digits is of the form $0.0\cdots 01\cdots$, where the number of 0's before the first 1 is around $n/2-\log_2(|M|)/2$. If we choose the size of the first register so that $m$ is less than $n/2-\log_2(|M|)/2$, it is likely that we obtain a 0 as the output of the counting algorithm, which is wrong. If we choose $m=n/2$, we will obtain around $\log_2(|M|)/2$ correct significant bits of $\theta/2\pi$. This is an imprecise estimate of $|M|$ compatible with Eq.~(\ref{chap10_eq_tildeM_M}). In fact, $|M|$ has $\log_2(|M|)$ bits, and we need to know most of the significant bits in order to have a good estimate of $|M|$.

\chapter{HHL Algorithm}\label{chap:HHL}

The HHL algorithm finds an approximate solution to systems of linear equations by leveraging quantum phase estimation~(QPE). In specific cases it offers an exponential speedup over classical methods, although in many situations classical algorithms remain competitive. Proposed by Harrow, Hassidim, and Lloyd in 2009~\cite{HHL09}, the algorithm encodes the solution vector in a quantum state, making it particularly useful for applications in machine learning and optimization~\cite{SP21}. When the system is expressed in matrix form, the algorithm assumes that the matrix is well-conditioned, sparse, and admits efficient Hamiltonian simulation, and that the input vector can be efficiently prepared. The output of the algorithm is not the explicit solution in the classical sense; rather, it provides access to a quantum-encoded representation of the solution. As a result, the HHL algorithm is most useful in scenarios where one wishes to extract global properties of the solution, such as expectation values or inner products. It can also be used as a subroutine in larger quantum algorithms.

Applications of the HHL algorithm include quantum versions of support vector machines and other machine learning methods~\cite{WBL12,Eza22,PK24}. Following the publication of the algorithm, several experimental implementations were reported using different quantum computing platforms~\cite{Cai13,Bar14,Pan14,ZSCX17,LJL19} and some extensions to the algorithm~\cite{Amb10,WZP18}. A key ingredient in the implementation of HHL is Hamiltonian simulation, which has been extensively studied and significantly improved over the years~\cite{Lloy96,AT03,BACS07,BCCKS15,Low19,ALL23}. Knowledge of how Hamiltonians appear in quantum mechanics is important for this topic. More recently, alternative quantum algorithms for solving systems of linear equations have been proposed that are better suited for noisy intermediate-scale quantum (NISQ) devices~\cite{BLCSCS23,Cer21}. Detailed descriptions and analyses of the HHL algorithm can be found in several research papers and textbooks~\cite{Lin22,LR14,Her25}.

\section{Problem formulation}

\subsection*{The classical problem}

A system of $N$ linear equations with $N$ variables $x_1$, ..., $x_N$ (the unknowns) can be written in the standard form
$$A\vec{x}\,=\,\vec{b},$$
where $A$ is an $N \times N$ matrix with entries in $\mathbb{C}$, representing the coefficients of the system, and the entries of the column vector $\vec{b}$ are known constants in $\mathbb{C}$. The problem is to find $\vec{x}$ given $A$ and $\vec{b}$, assuming that $A$ is non-singular (i.e., it has no zero eigenvalues). In this case, the solution is
$$\vec{x}=A^{-1}\,\vec{b}.$$

An important parameter in the analysis of algorithms for solving linear systems is the \emph{condition number} of the matrix $A$, denoted by $\kappa$, which measures how sensitive the solution is to perturbations in the input data. It is defined as
\[
\kappa = \frac{|\lambda_{\max}|}{|\lambda_{\min}|},
\]
where $\lambda_{\max}$ and $\lambda_{\min}$ are the eigenvalues of $A$ with the largest and smallest magnitudes, respectively. A well-conditioned matrix has $\kappa \approx 1$, meaning that the eigenvalues have similar magnitudes and the system is numerically stable. If $\kappa \gg 1$, the system is said to be ill-conditioned, meaning that the solution may be highly sensitive to small perturbations in the data.

On a classical computer, a standard method for solving such systems is Gauss--Jordan elimination, which requires $O(N^3)$ steps. More efficient algorithms have been developed that reduce the asymptotic complexity of matrix inversion. In particular, using fast matrix multiplication techniques, the solution can be computed in $O(N^{2.373})$ steps.\footnote{\url{https://en.wikipedia.org/wiki/Computational_complexity_of_mathematical_operations}} However, these algorithms are mainly of theoretical interest and are generally not practical for most applications.

In many situations the matrix $A$ has additional structure that can be exploited algorithmically. An important example occurs when the matrix is \emph{sparse}. A matrix is said to be $s$-sparse if each row (and column) contains at most $s$ non-zero elements, where typically $s \ll N$. In such cases, iterative methods can solve the system more efficiently than direct matrix inversion. Examples include the Conjugate Gradient method and other Krylov subspace algorithms~\cite{GV13}. For these methods, the computational complexity typically is $O(N s \kappa \log(1/\epsilon))$, where $s$ is the sparsity parameter of the matrix, $\kappa$ is the condition number of $A$, and $\epsilon$ is the desired precision of the solution. These algorithms avoid explicitly computing $A^{-1}$ and instead iteratively approximate the solution vector.

\subsection*{The quantum problem}\label{sec:qprob}

The quantum version of the problem is slightly different. We assume that $N=2^n$ and that we have $n$ error-free qubits that can encode the vector $\vec b$ into a quantum state $\ket{b}$ (after normalization if necessary). Suppose $A$ is an $N$-dimensional Hermitian matrix, which allows us to simulate the unitary evolution $\e^{-\ii A t}$ for suitable values of $t$. Assume the unknown vector $\ket{x}$ satisfies
$$
A\ket{x}=\ket{b}.
$$
The goal is to prepare a quantum state proportional to $\ket{x}$, where
$$
\ket{x}=A^{-1}\ket{b}.
$$

There is an important issue to address. Even if $\ket{b}$ has norm $1$, the vector $\ket{x}$ will generally not have norm $1$ because $A$ is not unitary unless $A^2 = I$. Therefore $\ket{x}$ cannot be the direct output of a quantum circuit. The best we can achieve is a normalized version of $\ket{x}$, namely
\[
\frac{\ket{x}}{\big\|\,\ket{x}\,\big\|}.
\]
The goal of the algorithm is not to output the vector $\vec x$ explicitly, but rather to prepare a quantum state proportional to the solution of the linear system. To remain close to the classical formulation, if we are given a matrix $A$ and a vector $\vec{b}$, the first step is to check whether $\vec{b}$ has norm $1$. If not, we encode
\[
\ket{b} = \frac{\vec{b}}{\|\vec{b}\|}.
\]

The second step is to check whether $A$ is Hermitian. The algorithm assumes that $A$ is Hermitian so that it can be used as a Hamiltonian in the phase estimation procedure. If $A$ is non-Hermitian, we can instead use the extended Hermitian matrix
\[
\left[\begin{array}{cc}
 0 & A \\
 A^\dagger & 0
\end{array}\right],
\]
which requires only one additional qubit. The inverse of this matrix can be written as
\[
\ket{0}\bra{1}\otimes (A^\dagger)^{-1}+\ket{1}\bra{0}\otimes A^{-1}.
\]
Using this embedding, the linear system $A \vec x = \vec b$ can be rewritten as an equivalent Hermitian system. In this case we take $\ket{0}\ket{b}$ as input, and the goal is to produce a normalized state proportional to $\ket{1}A^{-1}\ket{b}$.

It is important to note that the entries of $\ket{x}$ are not obtained directly. Instead, the output of the circuit is, at best, a normalized state proportional to $\ket{x}$. Furthermore, since the HHL algorithm relies on the QPE algorithm, certain approximations are introduced during the process, resulting in a state that approximates the normalized $\ket{x}$. The authors of the original paper~\cite{HHL09} propose using this approach to compute the expectation value of an observable $M$, so that the quantum computer outputs $\bra{x} M \ket{x}$. At the beginning of this Chapter, we provided references to several other applications of the HHL algorithm.

It is not practical to use the HHL algorithm to determine each entry of $\ket{x}$ in order to solve the linear system, since this would require an expensive quantum computer and at least as many steps as classical algorithms. The entries of a quantum state cannot be obtained from a single copy. Recovering all entries of $\ket{x}$ classically would require quantum state tomography, which generally needs many copies of the state and removes the potential speedup.

In the original HHL algorithm, the runtime is $O\!\left({\kappa^{2}s^{2}\,\text{polylog}(N)/\epsilon}\right)$, where $s$ is the sparsity, $\kappa$ is the condition number, and $\epsilon$ is the desired precision. Consequently, if $\kappa$ is large, the potential quantum speedup may be significantly reduced because higher precision is required in the phase estimation step.

\section{Review of QPE algorithm}\label{sec:reviewQPE}

QPE is the quantum phase estimation algorithm, whose details are described in Chapter~\ref{chap:QPE}. Suppose we have a ${{n}}$-qubit unitary operator $U$ and we know one of its eigenvectors $\ket{\psi}$. We do not know the eigenvalue associated with $\ket{\psi}$, but we know that its analytical expression is  $\e^{2\pi \ii\phi}$, where $0\le \phi<1$ ($\phi$ is unknown), because $U$ is unitary. We assume for now that $\phi=0.\phi_1\cdots\phi_{m}$ for some integer ${m}$, where $\phi_1$, ..., $\phi_{m}$ are bits, that is, the phase $\phi$ of the eigenvalue $\e^{2\pi \ii \phi}$ is a rational number. The goal of the phase estimation algorithm is to determine $\phi$ using $U$ as an oracle and $\ket{\psi}$ as an input.

In practical applications of QPE, we are not able to input $\ket{\psi}$ directly. Instead, we input a known state that is easy to prepare and analyze the algorithm using the eigenvectors of $U$. We assume that $\{\ket{u_j}: j=0,\ldots,2^n-1\}$ is an orthonormal eigenbasis of $U$, so that
\[
U\ket{u_j}=\e^{2\pi \ii\phi_j}\ket{u_j},
\]
where $0\le \phi_j<1$ is the phase of the eigenvalue $\e^{2\pi \ii\phi_j}$ associated with the eigenvector $\ket{u_j}$. To simplify the presentation, we assume that the phases of all eigenvalues are rational numbers that can be represented with at most $m$ bits. This is not a severe restriction, since the rational numbers form a dense subset of $\mathbb{R}$, and any real number can be obtained as the limit of a sequence of rational numbers~\cite{Rud76}. Fig.~\ref{fig:interblockpe2} summarizes the algorithm by showing the output when the input to the second register is an eigenvector of $U$. The output of the first register is $\ket{2^m \phi_j}$, denoted by $\ket{\tilde{\phi}_j}$, where $\tilde{\phi}_j$ is an integer. This occurs because multiplying a binary fraction with $m$ bits by $2^m$ shifts the binary point $m$ places to the right. Hence, the output belongs to the $m$-qubit computational basis.


\begin{figure}[h!]
\[
\Qcircuit @C=1.7em @R=0.9em {
\lstick{\ket{0}}   &{/}^m \qw &\multigate{2}{\,\,\text{QPE}_U\,\,} & \qw & \rstick{\hspace{-21pt}|{\tilde\phi_j}\rangle} \\
\lstick{}          &          &                                    &     & \lstick{}  \\
\lstick{\ket{u_j}}&{/}^n \qw &\ghost{\,\,\text{QPE}_U\,\,}        & \qw& \rstick{\hspace{-20pt}\ket{u_j}} \\
& & & & {\hspace{1.7cm}}
}\vspace{-10pt}
\]
\caption{Schematic representation of the quantum phase estimation (QPE) algorithm, where $U\ket{u_j} = \exp(2\pi \ii \phi_j)\ket{u_j}$ and $\tilde{\phi}_j = 2^m \phi_j$. When $\phi_j$ is not a rational number, $\tilde{\phi}_j$ (with $m$ bits) serves as a good approximation of the integer part of $2^m \phi_j$.}
\label{fig:interblockpe2}
\end{figure}

Suppose that the input to the second register is the state $\ket{b}$, as described in Section~\ref{sec:qprob}. Let $b_j$ denote the coefficients of $\ket{b}$ in the eigenbasis of $U$. Then,
\[
\ket{b}\,=\,\sum_{j=0}^{2^n-1} b_j \ket{u_j}.
\]
Using linearity, the output of QPE$_U$ in this case is
\[
\ket{0}\ket{b} \xrightarrow[\text{ }]{\,\,\,\,\text{QPE}_U\,\,\,\,} \sum_{j=0}^{2^n-1} b_j\, |{\tilde{\phi}_j}\rangle\ket{u_j}.
\]
A measurement of the first register at this point would return $\tilde{\phi}_j$ with probability $|b_j|^2$. If the phase of the $j$-th eigenvalue is not a rational number, QPE returns a $m$-bit estimate of this phase. As a side effect of this process, the state of the second register becomes $\ket{u_j}$, which means that we are able to prepare the $j$-th eigenvector state of $U$ with probability $|b_j|^2$ in the second register.

In the HHL algorithm, there is no measurement at this stage, and QPE is used in the following way. Since $A$ is Hermitian, we take
\begin{equation}\label{eq:iAt-U}
U=\e^{-\ii A t}.
\end{equation}
Since $A$ and $U$ commute and are normal operators\footnote{An operator $M$ is called normal if it commutes with its adjoint, that is, $MM^\dagger = M^\dagger M$. Hermitian and unitary operators are examples of normal operators.}, their eigenvectors can be chosen to form a common orthonormal eigenbasis.\footnote{A standard result in linear algebra states that commuting normal operators can be simultaneously diagonalized, that is, they admit a common orthonormal eigenbasis.} Let $\{\ket{u_j}:j=0,\ldots,2^n-1\}$ be a common eigenbasis of $U$ and $A$, so that
\[
A\ket{u_j}=\lambda_j\ket{u_j}
\qquad\text{and}\qquad
U\ket{u_j}=\e^{-\ii\lambda_j t}\ket{u_j}
=\e^{2\pi \ii\phi_j}\ket{u_j}.
\]
Therefore,
\[
2\pi\phi_j \equiv -\lambda_j t \mod{2\pi}.
\]
Hence, QPE determines the quantity $-\lambda_j t/(2\pi)$ only modulo $1$. In this sense, $t$ acts as a scale factor that converts eigenvalues into phases, determining how the spectrum of $A$ is mapped into the interval $[0,1)$ used by QPE. In order to recover $\lambda_j$ from the estimated phase, $t$ must be chosen so that distinct eigenvalues in the relevant spectral range produce distinct phases. Otherwise, if two different eigenvalues $\lambda_j$ and $\lambda_k$ satisfy
\[
\lambda_j t \equiv \lambda_k t \mod{2\pi},
\]
then they produce the same phase and QPE cannot distinguish them.

On the other hand, choosing $t$ very small is not a good general strategy to avoid this ambiguity. Indeed, if $t$ is too small, then the phases associated with the eigenvalues also become small, and the separation between nearby phases is reduced by the same factor. In particular, the eigenvalues with the smallest magnitude, which are the most important for the matrix inversion step, produce phases of order $|\lambda_{\min}|t$. As a consequence, phase estimation requires higher precision, and therefore more bits, in order to resolve these phases accurately and recover the corresponding eigenvalues. Thus, $t$ must be chosen as a compromise: large enough that the relevant phases can be resolved efficiently, but not so large that distinct eigenvalues become indistinguishable because of the modulo $2\pi$ ambiguity.

The choice of $t$ is related to a rescaling of $A$ because the operator used in QPE has the form $\e^{-\ii At}$. At this stage, rescaling $t$ or rescaling $A$ is equivalent, since both simply replace $A$ by the effective matrix $tA$. Later, however, the re-scaling of $A$ must also be taken into account in the matrix inversion step.

\begin{exercise}
Using Eq.~\eqref{eq:iAt-U}:
\begin{enumerate}
\item[(a)] Prove that $A$ and $U$ commute. \textit{Hint:} Use the power-series expansion
\[
\e^{-\ii At}=\sum_{k=0}^{\infty}\frac{(-\ii tA)^k}{k!}.
\]

\item[(b)] Prove that, for every $j$,
\[
U(A\ket{u_j})=\e^{2\pi \ii\phi_j}\,A\ket{u_j}.
\]

\item[(c)] Assume that the eigenvalues $\e^{2\pi \ii\phi_j}$ of $U$ are nondegenerate. Prove that, for every $j$, there exists $\lambda_j\in\mathbb{R}$ such that
\[
A\ket{u_j}=\lambda_j\ket{u_j}.
\]

\item[(d)] Under the assumptions of item (c), prove that
\[
\e^{-\ii \lambda_j t}=\e^{2\pi \ii\phi_j}.
\]
\end{enumerate}
\end{exercise}

\section{Implementing the circuit of $U$}\label{sec:impl-circ-U}

The circuit that implements QPE$_U$ is described in Chapter~\ref{chap:QPE}, but that procedure assumes that a circuit for $U$ is already available. We presented a method for decomposing any unitary operator $U$ into universal gates in Section~\ref{sec:decomp-Univ} on Page~\pageref{sec:decomp-Univ}. However, as discussed in that Section, the method is inefficient unless there is an efficient way to express $U$ as a product of two-level matrices. Since the input to the HHL algorithm is $A$, not $U$, we still face the additional problem of computing $U$.

Fortunately, there are alternative ways to implement $\e^{-\ii A t}$, which are collectively referred to as Hamiltonian simulation. If $A$ is the Hamiltonian of a physical system, this means that the energy levels of the system correspond to the eigenvalues of $A$. If we implement a circuit for $U=\e^{-\ii A t}$ on a quantum computer, we say that the quantum computer simulates the physical system whose energy levels are described by $A$, that is, if the initial state is $\ket{\psi_0}$, the state at time $t$ is $\e^{-\ii A t}\ket{\psi_0}$.

The algorithm assumes that the unitary $\e^{-\ii A t}$ can be implemented efficiently. This is possible when the matrix $A$ is $k$-local or sparse and efficiently row-computable, using Hamiltonian simulation techniques, which are detailed in Section~\ref{sec:HS}.

\section{Operator for Inverting Numbers}\label{sec:op-inv-num}

The HHL algorithm employs an additional operator defined as
\begin{equation}\label{eq:A-inv}
A_\text{inv} \ket{0}\ket{\lambda}
=
\left(\sqrt{1-\frac{C^2}{\lambda^2}}\,\ket{0}
+\frac{C}{\lambda}\,\ket{1}\right)\ket{\lambda},
\end{equation}
where the first register consists of a single qubit and the second register consists of $m$ qubits. The state $\ket{\lambda}$ stores, in binary form, the estimate of an eigenvalue $\lambda$ obtained from the quantum phase estimation procedure. At this point, $\lambda$ may refer either to an eigenvalue of $A$ itself or to an eigenvalue of a rescaled matrix used in the phase estimation step. This distinction does not affect the preparation of the final normalized output state, but it must be taken into account when discussing the inversion step.

The parameter $C$ is a real constant chosen at the beginning of the algorithm. On the one hand, a larger value of $C$ increases the success probability of the algorithm. On the other hand, $C$ cannot exceed the smallest eigenvalue in magnitude among those encoded in the second register, since the term $C/\lambda$ must remain within the interval $[-1,1]$. For this reason, it is customary to rescale the matrix used in the phase estimation step so that its eigenvalues lie in a convenient range, typically
\begin{equation}\label{eq:lambda-range}
\frac{1}{\kappa}\le |\lambda|\le 1,
\end{equation}
where $\kappa$ is the condition number. Under this convention, we must choose
\begin{equation}\label{eq:C-range}
0<C\le \frac{1}{\kappa}.
\end{equation}
The role of $\kappa$ in the performance of the HHL algorithm will become clearer in the analysis of the success probability. After this rescaling, one typically chooses $t\in(0,\pi)$ so that the eigenvalues are mapped to phases without introducing ambiguity modulo $2\pi$.

\begin{exercise}
In Eq.~\eqref{eq:A-inv}, the operator $A_{\mathrm{inv}}$ is defined by its action on $\ket{0}\ket{\lambda}$.
\begin{itemize}
\item[(a)] Explain why the expression above does not fully define a linear operator on the joint Hilbert space of the ancilla qubit and the eigenvalue register.

\item[(b)] Define $A_{\mathrm{inv}}$ as a $\lambda$-controlled rotation acting on the ancilla qubit:
\[
A_{\mathrm{inv}}
=
\sum_{\lambda}
R_y\!\left(2\pi \phi\right)\otimes
\ket{\lambda}\!\bra{\lambda},
\]
where $\phi = \frac{1}{\pi}\arcsin\!\left(\frac{C}{\lambda}\right)$.
Show that $A_{\mathrm{inv}}$ is unitary.

\item[(c)] Verify that the action of this operator on $\ket{0}\ket{\lambda}$ reproduces Eq.~\eqref{eq:A-inv}.
\end{itemize}
\end{exercise}

\section{The algorithm}

Having described the problem of solving linear systems and reviewed the quantum phase estimation (QPE) technique, we now present the HHL algorithm. This algorithm leverages QPE$_U$ to encode the eigenvalues of the Hermitian matrix $A$, followed by controlled operations to approximate matrix inversion. The procedure consists of state preparation, eigenvalue estimation, conditional rotations, measurement, and uncomputation steps, as outlined in Algorithm~\ref{algo_HHL}. Fig.~\ref{circuit:HHL} illustrates the quantum circuit implementing this process.

\begin{algorithm}[!ht]
\caption {HHL algorithm} \label{algo_HHL}
\KwIn{A non-singular Hermitian matrix $A$ of dimension $2^n$, the operator $U=\e^{-\ii A t}$, and an $n$-qubit input state $\ket{b}=\sum_{\ell=0}^{2^n-1}c_\ell\e^{\ii \alpha_\ell}\ket{\ell}$.}
\KwOut{The quantum state $\frac{\ket{x}}{\|\ket{x}\|}$, where $\ket{x} = A^{-1}\ket{b}$.}
\BlankLine
Prepare the initial state $\ket{0}\otimes\ket{0}^{\otimes m}\otimes \ket{b}$ in a 3-register quantum circuit with $1+m+n$ qubits (Section~\ref{sec:state-preparation})\;
Apply QPE$_U$ to the second and third registers (Sections~\ref{sec:reviewQPE},~\ref{sec:impl-circ-U},~\ref{sec:HS})\;
Apply $A_\text{inv}$ to the first and second registers (Sections~\ref{sec:op-inv-num} and~\ref{sec:impl-op-inv})\;
Measure the first register in the computational basis. If the result is 0, go to Line 1, otherwise continue\;
Apply QPE$_U^\dagger$ to the second and third registers\;
\end{algorithm}

\begin{figure}[h!]
\begin{equation*}
\Qcircuit @C=1.9em @R=1.9em {
\lstick{\ket{0}}&\qw	/^1								&  \qw                                             &\multigate{1}{A_\text{inv}}   &\meter &\rstick{\begin{cases}
0, & \text{restart, }\\
1, & \text{continue.}
\end{cases}}\cw&   \\
\lstick{\ket{0}^{\otimes m}} &\qw /^m&\multigate{1}{\text{QPE}_U}&\ghost{A_\text{inv}}                 &\multigate{1}{\text{QPE}^\dagger_U} & \rstick{\ket{0}^{\otimes m}}\qw&   \\
\lstick{\ket{b}} &\qw /^n							& \ghost{\text{QPE}_U}            & \qw                               & \ghost{\text{QPE}^\dagger_U}             &\rstick{\frac{1}{\left\|\ket{x}\right\|}\,\ket{x}}\qw \\
\ustick{\hspace{1.4cm}\ket{\psi_0}} & & \ustick{\hspace{2.3cm}\ket{\psi_1}}& \ustick{\hspace{1.8cm}\ket{\psi_2}} & \ustick{\hspace{2.3cm}\ket{\psi_3}}
}\vspace{0.1cm}
\end{equation*}
\caption{Circuit of the HHL algorithm. }\label{circuit:HHL}
\end{figure}

\section{Analysis}

If each eigenvalue can be represented with at most $m$ bits, the output of QPE$_U$ is exact. However, the HHL algorithm still introduces approximations due to an additional operator used to implement $A_\text{inv}$, which will be described later. We now calculate each quantum state in the circuit shown in Fig.~\ref{circuit:HHL}.

\subsection*{Calculation of $\ket{\psi_0}$}

The initial state of the algorithm is
\begin{align*}
\ket{\psi_0} &= \ket{0} \otimes \ket{0}^{\otimes m} \otimes \ket{b}.
\end{align*}
Using the expansion $\ket{b} = \sum_j b_j \ket{u_j}$ in the eigenbasis of $U$ (and $A$), we obtain
\begin{align*}
\ket{\psi_0} &= \ket{0} \otimes \sum_{j=0}^{2^n-1} b_j \ket{0}^{\otimes m} \ket{u_j}.
\end{align*}

\subsection*{Calculation of $\ket{\psi_1}$}

Applying QPE$_U$ with $U = \e^{-\ii A t}$, we obtain
\begin{equation*}
\ket{\psi_1} \,=\, \ket{0} \otimes \sum_{j=0}^{2^n-1} b_j \,\ket{\tilde{\phi}_j} \ket{u_j},
\end{equation*}
where $\tilde{\phi}_j \approx \phi_j 2^m$, in the notation of Section~\ref{sec:reviewQPE}.
The quantity $\tilde{\phi}_j$ is an $m$-bit integer representing an approximation of the phase of the eigenvalue $\e^{2\pi \ii \phi_j}$. From $\phi_j$ we can recover $\lambda_j$ (the eigenvalue of $A$) through the relation $2\pi\phi_j=-\lambda_j t \mod 2\pi$. For simplicity of notation, in what follows we denote the state $\ket{\tilde{\phi}_j}$ by $\ket{\lambda_j}$. Then
\begin{equation*}
\ket{\psi_1} \,=\, \ket{0} \otimes \sum_{j=0}^{2^n-1} b_j \,\ket{\lambda_j} \ket{u_j}.
\end{equation*}
The conversion of $\phi_j$ into $\lambda_j$ is addressed later in the implementation of $A_\text{inv}$.

\subsection*{Calculation of $\ket{\psi_2}$}

Applying $A_\text{inv}$ to the first and second registers of $\ket{\psi_1}$, we obtain
\begin{equation*}
\ket{\psi_2} \,=\, \sum_{j=0}^{2^n-1} b_j \left(\sqrt{1 - \frac{C^2}{{\lambda}_j^2}}\,\ket{0} + \frac{C}{{\lambda}_j}\,\ket{1} \right) \ket{{\lambda}_j} \ket{u_j}.
\end{equation*}

\subsection*{Measurement}

The next step is to perform a measurement of the first qubit in the computational basis. The probability of obtaining the result $1$ is
\begin{align*}
p_1 &= \bra{\psi_2}\big(\ket{1}\bra{1} \otimes I \otimes I\big)\ket{\psi_2}\\
   &= C^2\sum_{j=0}^{2^n-1}\frac{|b_j|^2}{{\lambda}_j^2}.
\end{align*}
Let us establish a connection between $p_1$ and the norm of $\ket{x}$. Using the fact that $\ket{x} = A^{-1}\ket{b}$ and expanding both $A$ and $\ket{b}$ in the eigenbasis $\{\ket{u_j}\}$ of $A$, we obtain
\begin{equation*}
\ket{x} \,=\, \sum_{j=0}^{2^n-1}\frac{b_j}{{\lambda}_j}\ket{u_j}.
\end{equation*}
The norm of $\ket{x}$ is
\begin{equation}\label{eq:C-p1}
\left\|\ket{x}\right\|
\,=\,
\left\|A^{-1}\ket{b}\right\|
=
\sqrt{\sum_{j=0}^{2^n-1}\frac{|b_j|^2}{{\lambda}_j^2}}
= \frac{\sqrt{p_1}}{C},
\end{equation}
Recall that QPE$_U$ returns an approximation of $\lambda_j$, therefore this expression is an approximation. If we repeat the algorithm up to this point multiple times, we can estimate $p_1$. Since $C$ can be chosen within a certain range, this also allows us to estimate $\|\ket{x}\|$.

The algorithm requires that after performing the measurement, we check the outcome. If it is $0$, we rerun the algorithm. If it is $1$, the unnormalized state of the quantum computer will be
\begin{equation*}
C\,\sum_{j=0}^{2^n-1}\frac{b_j}{{\lambda}_j}\,\ket{1}\ket{{\lambda}_j}\ket{u_j},
\end{equation*}
whose norm is
\begin{equation*}
C\,\sqrt{\sum_{j=0}^{2^n-1}\left(\frac{|b_j|}{{\lambda}_j}\right)^2}
= \sqrt{p_1}.
\end{equation*}
Using Eq.~\eqref{eq:C-p1}, the state of the qubits after a successful measurement (outcome $1$) is
\begin{equation*}
\ket{\psi'_2} = \frac{1}{\left\|\ket{x}\right\|}
\sum_{j=0}^{2^n-1}\frac{b_j}{{\lambda}_j}\,\ket{1}\ket{{\lambda}_j}\ket{u_j}.
\end{equation*}
Note that $\ket{\psi'_2}$ does not depend on $C$, whereas the success probability $p_1$ does.

The role of the condition number becomes apparent at this point. After rescaling $A$, we assume that the restrictions \eqref{eq:lambda-range} and \eqref{eq:C-range} are satisfied. If we take the largest possible value $C=1/\kappa$, the success probability is
\[
p_1
=
\frac{1}{\kappa^2}
\sum_{j=0}^{2^n-1}\frac{|b_j|^2}{{\lambda}_j^2}.
\]
Since $1/\kappa \le |\lambda_j| \le 1$ and $\sum_j |b_j|^2=1$, it follows that
\[
\frac{1}{\kappa^2} \le p_1 \le 1.
\]
Thus the success probability can be as small as $1/\kappa^2$. If $\kappa$ is large, the algorithm may need many repetitions before the outcome $1$ is obtained. In particular, since the success probability scales as $1/\kappa^2$, the expected number of repetitions required to obtain the desired outcome is $O(\kappa^2)$. Although the post-measurement state does not depend on $C$, the number of repetitions required to obtain it does, which is one of the ways in which $\kappa$ affects the performance of the HHL algorithm.

\subsection*{Calculation of $\ket{\psi_3}$}

The current state of the circuit is $\ket{\psi'_2}$. Using the fact that QPE$_U\ket{0}^{\otimes m} \ket{u_j} = \ket{{\lambda}_j} \ket{u_j}$, when we apply QPE$^\dagger_U$ to the second and third registers, we obtain
\begin{align*}
\ket{\psi_3} &\,=\,\frac{1}{\left\|\ket{x}\right\|}\sum_{j=0}^{2^n-1}\frac{b_j}{{\lambda}_j}\,\ket{1}\ket{0}^{\otimes m}\ket{u_j}\\
             &\,=\,\ket{1}\otimes\ket{0}^{\otimes m}\otimes \frac{\ket{x}}{\left\|\ket{x}\right\|}\,.
\end{align*}
The result in the third register is not exactly $\ket{x}/\left\|\ket{x}\right\|$ because QPE$_U$ does not return an exact value for each $\lambda_j$ in the general case.


\section{Implementing the operator for inverting numbers}\label{sec:impl-op-inv}

The implementation of $A_\text{inv}$ is based on a unitary operator $U_y$, which acts on $m+1$ qubits as follows
\begin{equation}\label{eq:Uphi}
U_y\ket{j}\ket{\phi_1\cdots\phi_m}=R_y(2\pi\phi)\ket{j}\ket{\phi_1\cdots\phi_m},
\end{equation}
where $\phi = 0.\phi_1...\phi_m$, meaning that $\phi_1, \dots, \phi_m$ are the binary digits of a value in the range $0 \leq \phi < 1$. Alternatively, we can express $\phi$ as
\[
\phi\,=\,\frac{\phi_1}{2^1}+\cdots+\frac{\phi_m}{2^{m}}.
\]
The operator $U_y$ can be implemented using the circuit shown in Fig.~\ref{fig:Uphi}.

\begin{figure}[!h]
\[
\Qcircuit @C=1.9em @R=1.5em {
\lstick{\ket{j}}     & \gate{R_y\left(\frac{\pi}{2^0}\right)} &\gate{R_y\left(\frac{\pi}{2^1}\right)}&\qw&\qw&\gate{R_y\left(\frac{\pi}{2^{m-1}}\right)}    &\rstick{R_y(2\pi\phi)\ket{j}} \qw \\
\lstick{\ket{\phi_1}}&\ctrl{-1}        &\qw                                 &\qw&\qw\,\,\,&\qw                                 &\rstick{\ket{\phi_1}} \qw \\
\lstick{\ket{\phi_2}}&\qw              &\ctrl{-2}                           &\qw&\qw&\qw                                 &\rstick{\ket{\phi_2}} \qw \\
\lstick{\vdots\,\,\,}&                 &                                    &   &\ddots& &\rstick{\,\,\,\vdots}\\
\lstick{\ket{\phi_m}}&\qw              &\qw                                 &\qw&\qw&\ctrl{-4}                                 &\rstick{\ket{\phi_m}} \qw
}
\]
\caption{Circuit that implements $U_y$ given by Eq.~(\ref{eq:Uphi}).}\label{fig:Uphi}
\end{figure}

To show that the circuit functions as intended, observe that the action of $R_y(\frac{\pi}{2^0})$ on $\ket{j}$, controlled by $\ket{\phi_1}$, is equivalent to applying $R_y(\frac{\pi\phi_1}{2^0})$ on $\ket{j}$. This follows because if $\phi_1 = 0$, then $R_y(0)$ is the identity operator, leaving the output as $\ket{j}$, whereas if $\phi_1 = 1$, the output is $R_y(\frac{\pi}{2^0})\ket{j}$. The same reasoning applies to the other controlled gates in the circuit. Consequently, the final output of the first qubit is $R_y(2\pi\phi)\ket{j}$ because
\begin{align*}
R_y(2\pi\phi)\ket{j}&=R_y\left(\frac{2\pi\phi_1}{2^1}+\cdots+\frac{2\pi\phi_m}{2^{m}}\right)\ket{j}\\
&=R_y\left(\frac{\pi\phi_m}{2^{m-1}}\right)\cdots R_y\left(\frac{\pi\phi_2}{2^{1}}\right)R_y\left(\frac{\pi\phi_1}{2^{0}}\right)\ket{j}.
\end{align*}
This confirms that the circuit in Fig.~\ref{fig:Uphi} correctly implements the operator $U_y$ as defined in Eq.~\eqref{eq:Uphi}.

Using that
\[
R_y(2\pi\phi) = \left[\begin{array}{cc}
 \cos(\pi\phi) &  -\sin(\pi\phi) \vspace{2pt}\\
 \sin(\pi\phi)   &  \,\,\,\,\cos(\pi\phi)
\end{array}\right],
\]
we obtain
\[
U_y\ket{0}\ket{\phi} \,=\, \cos(\pi\phi)\ket{0}\ket{\phi}+\sin(\pi\phi)\ket{1}\ket{\phi}.
\]
If we choose
\begin{equation}\label{eq:phi-lambda}
\phi =\frac{1}{\pi} \arcsin\frac{C}{\lambda},
\end{equation}
the resulting state matches the effect of applying $A_\text{inv}$ to $\ket{0}\ket{\lambda}$ (see Eq.~\eqref{eq:A-inv}), when considering only the first register. This condition implies that $C$ must be smaller than or equal to the smallest eigenvalue of $A$, since otherwise $\arcsin(C/\lambda)$ would be undefined. On the other hand, $C$ should not be too small, as this would result in a very small probability $p_1$ (see Eq.~\eqref{eq:C-p1}), assuming that $\left\|\ket{x}\right\|$ is not too large. At this point, we must choose a specific value for $C$, which will be used to define the next operator, $U_\text{inv}$.

To obtain $A_\text{inv}$ from $U_y$, we introduce another unitary operator, $U_\text{inv}$, which acts as
\begin{equation*}
U_\text{inv}\ket{\lambda}=\ket{\phi},
\end{equation*}
where $\phi$ is given by Eq.~\eqref{eq:phi-lambda}. We assume that $\lambda$ and $\phi$ are represented using the same number of bits. If a higher precision is needed for $\phi$, we can introduce an additional register, allowing $U_\text{inv}$ to act as $U_\text{inv}\ket{0}^a\ket{\lambda}=\ket{\phi}$, where $a$ represents the number of extra bits used to store $\phi$. Under the assumption that $\lambda$ and $\phi$ have the same number of bits, we can express $A_\text{inv}$ as
\[
A_\text{inv}=(I\otimes U_\text{inv}^{-1})\,U_y\,(I\otimes U_\text{inv}).
\]

The circuit for $U_\text{inv}$ is designed using classical computation. To implement this operator, we compute $\phi = \frac{1}{\pi} \arcsin (C / \lambda)$ classically, ensuring that the process is reversible. This computation can be performed efficiently using polynomial approximation (such as a Chebyshev series) or Newton's method. To make it quantum-compatible, we construct a reversible circuit for $U_\text{inv}$ that takes $\lambda$ as input and produces $\phi$ using reversible arithmetic operations such as controlled additions and multiplications. Although the design of $U_\text{inv}$ is based on classical computational techniques, its reversible implementation on a quantum computer allows it to operate correctly on quantum states, including those in superposition or entangled with other registers.

\section{Hamiltonian Simulation}\label{sec:HS}

Hamiltonian simulation is the task of implementing the unitary time-evolution operator generated by a Hamiltonian $H$, namely
\[
U(t)=\e^{-\ii Ht}.
\]
This problem plays a central role in quantum computing because the dynamics of quantum systems are governed by Hamiltonians, and many quantum algorithms rely on the ability to simulate such dynamics efficiently. Notable examples include algorithms for quantum chemistry, condensed matter physics, and the HHL algorithm for solving systems of linear equations.

In this Section, we discuss several approaches to Hamiltonian simulation. We begin with the special case of diagonal Hamiltonians, where the implementation is particularly simple. We then show how to simulate exponentials of arbitrary Pauli strings, which allows us to treat more general Hamiltonians expressed in the Pauli basis. Finally, we introduce product-formula approximations, such as the Trotter--Suzuki formulas, which make it possible to simulate the time evolution generated by sums of non-commuting terms.

\subsection*{Diagonal Hamiltonians}

Suppose that the Hamiltonian to be simulated is diagonal and has the form
\[
H=\operatorname{diag}(h_x)_{x\in\{0,1\}^n},
\]
where $n$ is the number of qubits. For instance, for two qubits we have
\[
H=\operatorname{diag}(h_{00},\,h_{01},\,h_{10},\,h_{11})=
\begin{bmatrix}
h_{00} & 0 & 0 & 0 \\
0 & h_{01} & 0 & 0 \\
0 & 0 & h_{10} & 0 \\
0 & 0 & 0 & h_{11}
\end{bmatrix}.
\]

Any diagonal Hamiltonian can be written as
\begin{equation}\label{eq:hamil-H}
H=\sum_{s\in\{0,1\}^n}\alpha_s\, Z^s,
\end{equation}
where $Z^s$ (that is, $Z$ raised to the bit string $s=s_1\ldots s_n$) means
\[
Z^s = Z^{s_1}\otimes Z^{s_2}\otimes \cdots \otimes Z^{s_n},
\]
and, naturally, $Z^0=I$.

\begin{exercise}
\begin{enumerate}
\item[(a)] Let $\ket{x}$ be a state of the computational basis, where $x=x_1\ldots x_n$ and $x_j\in\{0,1\}$. Show that
\[
Z^s\ket{x}=(-1)^{s\cdot x}\ket{x},
\]
where
\[
s\cdot x=s_1x_1+\cdots+s_nx_n \mod 2.
\]

\item[(b)] Using the expansion \eqref{eq:hamil-H}, show that, for every $x\in\{0,1\}^n$,
\[
h_x=\sum_{s\in\{0,1\}^n}\alpha_s(-1)^{s\cdot x}.
\]

\item[(c)] Prove that
\[
\sum_{x\in\{0,1\}^n} (-1)^{(s+r)\cdot x}
=
\begin{cases}
2^n, & s=r,\\
0, & s\neq r.
\end{cases}
\]

\item[(d)] Deduce that
\[
\alpha_s=\frac{1}{2^n}\sum_{x\in\{0,1\}^n} h_x\,(-1)^{s\cdot x}.
\]
\end{enumerate}
\end{exercise}\vspace{10pt}

An alternative notation for Eq.~\eqref{eq:hamil-H} is
\[
H=\sum_{S\subseteq \{1,\dots,n\}} \alpha_S \prod_{j\in S} Z_j,
\]
where we sum over all subsets of $\{1,\dots,n\}$, and $Z_j$ means a Pauli $Z$ acting on qubit $j$ and identity on the others. For instance, $Z_3=I\otimes I\otimes Z$ when $n=3$. Note that $Z^{011}=Z^0\otimes Z^{1}\otimes Z^{1}=Z_2 Z_3$.

When we exponentiate $H$, we obtain the time-evolution operator
\[
U(t)=\e^{-\ii Ht}.
\]
Since $H$ is diagonal, all terms in its expansion commute with each other. Therefore,
\begin{equation}\label{eq:exp-i-H-t}
\e^{-\ii Ht}
=
\prod_{s\in\{0,1\}^n} \e^{-\ii \alpha_s t Z^s},
\end{equation}
because $\e^{A+B}=\e^A\e^B$ when $A$ and $B$ commute. Eq.~\eqref{eq:exp-i-H-t} is a product of unitary operators, and the circuit is built sequentially. Thus, the problem reduces to implementing each factor
\[
\e^{-\ii \alpha_s t Z^s}.
\]

There are three cases to consider: (1)~If $s$ has Hamming weight $0$ (that is, $s=0\ldots 0$), then $Z^s=I$ and
\[
\e^{-\ii \alpha t Z^s}=\e^{-\ii \alpha t}\, I.
\]
Usually this term is ignored because it contributes only a global phase. In the HHL algorithm, however, the Hamiltonian simulation is used inside the $\mathrm{QPE}_U$ procedure, and this phase plays a non-trivial role there; (2)~If $s$ has Hamming weight $1$ (that is, $s$ has a single $1$), then $Z^s$ is simply a Pauli operator $Z_j$ acting on qubit $j$. Using
\[
R_z(\phi)=\e^{-\ii\phi Z/2},
\]
we obtain
\[
\e^{-\ii \alpha t Z_j}=R_z(2\alpha t),
\]
which acts only on the $j$-th qubit; and (3)~If $s$ has Hamming weight $k\ge 2$, then $Z^s$ is a tensor product of $Z$ operators acting on those $k$ qubits, for example
\[
\e^{-\ii \alpha t\, Z_{j_1}Z_{j_2}\cdots Z_{j_k}}.
\]

\begin{exercise}
Let $\phi\in\mathbb{R}$ and let $U$ be a single-qubit unitary operator. Denote by $C(U)$ the controlled-$U$ operation with the first qubit as control and the second qubit as target.

\begin{enumerate}
\item[(a)] Show that, in general,
\[
C(\e^{\ii\phi}U)\neq \e^{\ii\phi}\,C(U).
\]
Compute both operators explicitly using the definition
\[
C(U)=|0\rangle\langle0|\otimes I + |1\rangle\langle1|\otimes U.
\]

\item[(b)] Explain why multiplying $U$ by the global phase $\e^{\ii\phi}$ does not affect the action of $U$ on a single qubit, but becomes observable when the operation is controlled.

\item[(c)] Show that
\[
C(\e^{\ii\phi}U)=(|0\rangle\langle0|+\e^{\ii\phi}|1\rangle\langle1|)\otimes I \; C(U),
\]
and interpret the first factor as a phase gate acting on the control qubit.

\item[(d)] In the Hamiltonian simulation discussed above, the case where $s$ has Hamming weight $0$ produces a term $\e^{-\ii\alpha t}I$, which is usually considered a global phase. Explain why this phase cannot be ignored when the unitary $\e^{-\ii Ht}$ is used inside the Quantum Phase Estimation (QPE) procedure of the HHL algorithm.
\end{enumerate}
\end{exercise}

\begin{exercise}\label{ex:R_z-phi}
Recall that
\[
R_z(\phi)=
\begin{bmatrix}
\e^{-\ii\phi/2} & 0 \\
0 & \e^{\ii\phi/2}
\end{bmatrix}.
\]
Show that for $x\in\{0,1\}$,
\[
R_z(\phi)\ket{x}=\e^{-\ii \frac{\phi}{2} (-1)^x}\ket{x}.
\]
\end{exercise}

\begin{figure}[!h]
\[
\Qcircuit @C=1.3em @R=1.4em {
\lstick{\ket{x_1}}&\qw& \ctrl{3} & \qw      & \qw       & \qw & \qw & \qw & \qw             & \qw & \ctrl{3} & \qw \\
\lstick{\ket{x_2}}&\qw& \qw      & \ctrl{2} & \qw       & \qw & \qw & \qw & \qw             & \ctrl{2} & \qw & \qw \\
\lstick{\ket{x_3}}&\qw& \qw      & \qw      & \ctrl{1}  & \qw & \qw & \qw & \ctrl{1}        & \qw      & \qw & \qw \\
\lstick{\ket{x_4}}&\qw& \targ    & \targ    & \targ     & \qw & \gate{R_z(\phi)}   & \qw & \targ    & \targ& \targ & \qw\\
& \ket{\psi_0} &&&& \ket{\psi_1} && \ket{\psi_2}&&&& \ket{\psi_3}
}
\]
\caption{Implementation of $\e^{-\ii \alpha t\, Z_1 Z_2 Z_3 Z_4}$ when $\phi=2\alpha t$.}\label{fig:exp-i-H}
\end{figure}

The circuit that implements $\exp(-\ii \alpha t\, Z_{j_1}Z_{j_2}\cdots Z_{j_k})$ has gates only on qubits $j_1$, $j_2$, \ldots, $j_k$. An example with four qubits is sufficient to illustrate the general idea. Suppose we want to implement $U(t)=\exp(-\ii \alpha t\, Z_{1}Z_{2}Z_{3}Z_{4})$ when $n=4$. Consider the circuit in Fig.~\ref{fig:exp-i-H}. The input is a computational-basis state:
\[
\ket{\psi_0}= \ket{x_1x_2x_3x_4}.
\]
After applying the sequence of $n-1$ CNOTs, the output is
\[
\ket{\psi_1}= \ket{x_1x_2x_3}\otimes\ket{x_1\oplus x_2\oplus x_3\oplus x_4}.
\]

Now we calculate the action of $R_z(\phi)$ on the $n$-th qubit:
\[
\ket{\psi_2}= \ket{x_1x_2x_3}\otimes R_z(\phi)\ket{x_1\oplus x_2\oplus x_3\oplus x_4}.
\]
Using Exercise~\ref{ex:R_z-phi}, we obtain
\begin{align*}
\ket{\psi_2} &= \ket{x_1x_2x_3}\otimes\e^{-\ii \frac{\phi}{2} (-1)^{x_1\oplus x_2\oplus x_3\oplus x_4}}\ket{x_1\oplus x_2\oplus x_3\oplus x_4}\\
             &= \e^{-\ii \frac{\phi}{2} (-1)^{x_1}(-1)^{x_2}(-1)^{x_3}(-1)^{x_4}}\ket{x_1x_2x_3}\otimes \ket{x_1\oplus x_2\oplus x_3\oplus x_4}.
\end{align*}
In the second equality, we used the identity
\[
(-1)^{x_1\oplus \cdots \oplus x_n}
=
(-1)^{x_1}\cdots(-1)^{x_n},
\qquad x_j\in\{0,1\}.
\]

In the last step, we apply the same $n-1$ CNOTs in reverse order, which produces the output
\[
\ket{\psi_3} = \e^{-\ii \frac{\phi}{2} (-1)^{x_1}(-1)^{x_2}(-1)^{x_3}(-1)^{x_4}}\ket{x_1x_2x_3x_4}.
\]
Since
\[
Z_1Z_2Z_3Z_4\ket{x_1x_2x_3x_4}
=
(-1)^{x_1}(-1)^{x_2}(-1)^{x_3}(-1)^{x_4}\ket{x_1x_2x_3x_4},
\]
we can rewrite $\ket{\psi_3}$ as
\[
\ket{\psi_3}=\e^{-\ii \frac{\phi}{2} Z_1Z_2Z_3Z_4}\ket{x_1x_2x_3x_4}.
\]
The implementation of the desired operator $\e^{-\ii \alpha t\, Z_1 Z_2 Z_3 Z_4}$ is obtained by taking $\phi=2\alpha t$.

In the worst case, the implementation of a diagonal Hamiltonian requires an exponential number of gates. Indeed, a general diagonal Hamiltonian on $n$ qubits has $2^n$ independent parameters, which appear as the coefficients $\alpha_s$ in the expansion~\eqref{eq:hamil-H}. After exponentiation, the time-evolution operator is written as the product~\eqref{eq:exp-i-H-t}, which contains up to $2^n$ unitary factors of the form $\e^{-\ii \alpha_s t Z^s}$. Each factor can be implemented using $O(n)$ elementary gates (a sequence of CNOTs together with one $R_z$ gate). Therefore, the total number of gates required to implement $\e^{-\ii Ht}$ is $O(n2^n)$ in the worst case.

In many relevant situations, however, the Hamiltonian has a special structure.
A diagonal Hamiltonian is called $k$-local if each term in its decomposition acts nontrivially on at most $k$ qubits, where $k$ is a constant independent of $n$.
In the expansion~\eqref{eq:hamil-H}, this means that the coefficients $\alpha_s$ are nonzero only for bit strings $s$ whose Hamming weight is at most $k$. Consequently, in the product representation~\eqref{eq:exp-i-H-t}, only those factors of the form $\e^{-\ii \alpha_s t Z^s}$ with Hamming weight at most $k$ appear. The number of such terms grows only polynomially with $n$, and each of them can be implemented using $O(n)$ elementary gates. Therefore, the unitary operator $\e^{-\ii Ht}$ can be implemented using a number of gates that is polynomial in $n$, making the simulation efficient. Many physically motivated Hamiltonians arising in condensed matter physics and quantum chemistry are $k$-local.

\begin{exercise} (Circuit for diagonal Hamiltonians) \label{exe:PPTrick}
Let
\[
U=\prod_{j=1}^{n-1}\mathrm{CNOT}_{j,n}.
\]

\begin{enumerate}

\item[(a)] Show that for any computational-basis state $\ket{x_1\ldots x_n}$,
\[
U\ket{x_1\ldots x_n}
=
\ket{x_1\ldots x_{n-1}}\otimes
\ket{x_1\oplus x_2\oplus\cdots\oplus x_n}.
\]

\item[(b)] Show that
\[
U^\dagger Z_n U = Z_1Z_2\cdots Z_n.
\]

\item[(c)] Using the definition
\[
R_z(\phi)=\e^{-\ii \phi Z/2},
\]
show that
\[
U^\dagger R_z(\phi)_n U
=
\e^{-\ii \frac{\phi}{2} Z_1Z_2\cdots Z_n}.
\]

\item[(d)] Conclude that the circuit
\[
U^\dagger R_z(\phi)_n U
\]
implements the unitary operator
\[
\e^{-\ii \frac{\phi}{2} Z_1Z_2\cdots Z_n}.
\]

\end{enumerate}
\end{exercise}

\subsection*{Non-diagonal Hamiltonians}

In the previous subsection we considered the simulation of diagonal Hamiltonians, whose eigenbasis coincides with the computational basis. In many relevant situations, however, the Hamiltonian contains non-diagonal terms, that is, terms involving Pauli operators $X$ or $Y$. Such Hamiltonians cannot be directly written as functions of $Z$ operators only, and therefore their implementation requires additional steps.

A general Hamiltonian acting on $n$ qubits can always be expanded as a linear combination of Pauli strings,
\begin{equation}\label{eq:H-beta-P}
H=\sum_{j} \beta_j P_j,
\end{equation}
where each $P_j$ is a tensor product of Pauli operators
\[
P_j\in\{I,X,Y,Z\}^{\otimes n}.
\]
This representation is often called the \emph{Pauli decomposition} of the Hamiltonian (see Exercise~\ref{exe:Pauli-basis}). Each term $P_j$ acts nontrivially only on the qubits where the Pauli operators differ from the identity.

To implement the time-evolution operator
\[
U(t)=\e^{-\ii Ht},
\]
we cannot in general write the exponential as a product of exponentials as in Eq.~\eqref{eq:exp-i-H-t}, because different Pauli strings may not commute. A common approach is to approximate the evolution using product formulas such as the \emph{Trotter--Suzuki decomposition}. Before discussing this approximation, we first show how to implement the unitary operator
\[
\e^{-\ii \phi P},
\]
where $P$ is an arbitrary Pauli string.

Any Pauli string
\[
P=P_1\otimes P_2\otimes\cdots\otimes P_n,
\]
with $P_j\in\{I,X,Y,Z\}$, can be transformed into a tensor product of $Z$ operators by a suitable change of basis applied independently to each qubit. Specifically, we use the identities
\begin{equation}\label{eq:HZH-SH}
H Z H = X,
\qquad
S H\, Z\, H S^\dagger = Y.
\end{equation}
Therefore, if $P_j=X$ we conjugate qubit $j$ by a Hadamard gate $H$, and if $P_j=Y$ we conjugate qubit $j$ by the unitary $HS^\dagger$. Under these transformations, the Pauli string $P$ is mapped to a product of $Z$ operators acting on the same set of qubits.

More precisely, let $V$ be the unitary operator that applies the appropriate basis change to each qubit so that
\[
V P V^\dagger = Z_{j_1}Z_{j_2}\cdots Z_{j_k},
\]
where $\{j_1,\ldots,j_k\}$ are the qubits on which $P$ acts nontrivially. The remaining qubits are associated with wires on which no gates act, since the corresponding factors of the Pauli string are identities. Then
\[
\e^{-\ii \phi P}
=
V^\dagger
\left(\e^{-\ii \phi Z_{j_1}Z_{j_2}\cdots Z_{j_k}}\right)
V.
\]
The central unitary operator in this expression is exactly of the type studied in the previous subsection, and it can therefore be implemented using a sequence of CNOT gates and a single $R_z$ rotation.

Consequently, the implementation of $\e^{-\ii \phi P}$ proceeds in three steps. First, we apply the basis-change circuit $V$ that maps the Pauli string $P$ to a product of $Z$ operators. Second, we implement the unitary $\e^{-\ii \phi Z_{j_1}\cdots Z_{j_k}}$ using the circuit described in Exercise~\ref{exe:PPTrick}. Finally, we apply $V^\dagger$ to undo the basis change. This procedure reduces the simulation of arbitrary Pauli strings to the implementation of diagonal operators.

In general, the Hamiltonian $H=\sum_j \beta_j P_j$ is a sum of Pauli strings that do not necessarily commute with each other. In this case we cannot write $\e^{-\ii Ht}$ as a product of terms of the form $\e^{-\ii \beta_j t P_j}$, because the Pauli strings $P_j$ do not necessarily commute with each other. The identity $\e^{A+B}=\e^A\e^B$ holds only when $A$ and $B$ commute. Therefore, for a generic Hamiltonian we need an approximation that expresses the exponential of the sum in terms of a product of exponentials that can be implemented individually.

\begin{exercise} (Pauli strings as an operator basis) \label{exe:Pauli-basis}
Let $\mathcal{P}_n=\{I,X,Y,Z\}^{\otimes n}$ be the set of $n$-qubit Pauli strings.

\begin{enumerate}
\item[(a)] Show that $\mathcal{P}_n$ contains $4^n$ operators.

\item[(b)] Show that the vector space of all $2^n\times2^n$ complex matrices has dimension $4^n$.

\item[(c)] For any operators $B$ and $C$ acting on $n$ qubits, show that Pauli strings are orthogonal with respect to the Hilbert--Schmidt inner product
\[
\langle B,C\rangle=\operatorname{Tr}(B^\dagger C),
\]
that is,
\[
\operatorname{Tr}(P_iP_j)=2^n\delta_{ij}.
\]

\item[(d)] Consider the vector space of all $2^n\times2^n$ complex matrices endowed with the Hilbert--Schmidt inner product. Conclude that $\mathcal{P}_n$ forms an orthogonal basis of this vector space.

\item[(e)] Using the orthogonality of the Pauli strings, deduce that any operator $C$ acting on $n$ qubits can be written as
\[
C=\sum_{P\in\mathcal{P}_n}\alpha_P P,
\qquad
\alpha_P=\frac{1}{2^n}\operatorname{Tr}(PC).
\]

\item[(f)] Conclude that any Hamiltonian $H$ admits the expansion~\eqref{eq:H-beta-P}.
Show that the fact that both $H$ and $P_j$ are Hermitian implies that the coefficients $\beta_j$ in~\eqref{eq:H-beta-P} are real.
\end{enumerate}
\end{exercise}

\begin{exercise}
Consider the unitary operator
\[
U(\phi)=\e^{-\ii \phi X_1Y_2Z_3}.
\]

\begin{enumerate}

\item[(a)] Using the identities~\eqref{eq:HZH-SH}, show that the Pauli string $X_1Y_2Z_3$ can be converted into a product of $Z$ operators by a suitable change of basis. In particular, define
\[
V=H_1H_2S^\dagger_2
\]
and prove that
\[
\e^{-\ii \phi X_1Y_2Z_3}
=
V^\dagger\, \e^{-\ii \phi Z_1Z_2Z_3}\, V.
\]

\item[(b)] The circuit below is proposed to implement the unitary operator $\e^{-\ii \phi X_1Y_2Z_3}$:
\[
\Qcircuit @C=1.2em @R=1.2em {
\lstick{\ket{x_1}} & \gate{H} & \qw              &\ctrl{2} & \qw      & \qw                 & \qw      & \ctrl{2} & \qw      & \gate{H}   & \qw \\
\lstick{\ket{x_2}} & \gate{H} & \gate{S^\dagger} & \qw     & \ctrl{1} & \qw                 & \ctrl{1} & \qw      & \gate{S} & \gate{H}   & \qw \\
\lstick{\ket{x_3}} & \qw      & \qw              & \targ   & \targ    & \gate{R_z(2\phi)} & \targ    & \targ    & \qw      & \qw        & \qw
}
\]

Show that this circuit indeed implements the operator $\e^{-\ii \phi X_1Y_2Z_3}$.

\end{enumerate}
\end{exercise}

\subsection*{Trotter-Suzuki approximation}

A widely used method is the \emph{Trotter--Suzuki approximation}. The basic idea is to split the time evolution into many small steps and approximate the exponential of the sum by a product of exponentials of the individual terms. If
\[
H=\sum_{j=1}^m H_j,
\]
then the first-order Trotter formula gives
\begin{equation}\label{eq:Trotter}
\e^{-\ii Ht}
=
\left(
\prod_{j=1}^m
\e^{-\ii H_j t/r}
\right)^r
+O\!\left(\frac{t^2}{r}\right),
\end{equation}
where $r$ is the number of Trotter steps. As $r$ increases, the approximation becomes more accurate.

In our context, each term $H_j$ is proportional to a Pauli string, that is, $H_j=\beta_j P_j$ as in Eq.~\eqref{eq:H-beta-P}. Therefore, each unitary operator $\e^{-\ii H_j t/r}$ can be implemented using the techniques described above: a change of basis that converts the Pauli string into a product of $Z$ operators, followed by the circuit described in Exercise~\ref{exe:PPTrick}, and then the inverse change of basis. Consequently, the Trotter formula allows us to approximate the time evolution generated by an arbitrary Hamiltonian using a sequence of circuits that implement exponentials of Pauli strings.

Let us explore the meaning of the Trotter formula~\eqref{eq:Trotter} by proving that
\begin{equation}\label{eq:H-expansion}
e^{-\ii H\delta} =\prod_{j=1}^m e^{-\ii H_j\delta}+O(\delta^2).
\end{equation}
For convenience, we write $\delta=t/r$. The order term $O(\delta^2)$ in this expression means that the difference between the two operators has norm of order $\delta^2$. More precisely, there exist constants $\delta_0>0$ and $c>0$ such that, for all $0<\delta<\delta_0$,
\[
\left\|
\prod_{j=1}^m e^{-\ii H_j\delta}
-
e^{-\ii H\delta}
\right\|
\le c\,\delta^2,
\]
where $c$ may depend on the operators $H_1,\ldots,H_m$ and $\|M\|$ is the norm of operator $M$.\footnote{The \emph{operator norm} (or spectral norm) of a matrix $M$ is defined as
\[
\|M\| = \max_{\|\ket{\psi}\|=1} \|M\ket{\psi}\|.
\]
Equivalently, $\|M\|$ is the largest singular value of $M$.}

One way to justify the approximation~\eqref{eq:H-expansion} is to expand the exponentials as power series. For each term we have
\[
\e^{-\ii H_j \delta}
=
I-\ii H_j\delta+O(\delta^2).
\]
Multiplying the $m$ factors on the right-hand side gives
\[
\prod_{j=1}^m
\left(I-\ii H_j\delta+O(\delta^2)\right)
=
I-\ii\delta\sum_{j=1}^m H_j+O(\delta^2).
\]
On the other hand, since $H=\sum_{j=1}^m H_j$, we have
\[
\e^{-\ii H\delta}
=
I-\ii H\delta+O(\delta^2)
=
I-\ii\delta\sum_{j=1}^m H_j+O(\delta^2).
\]
Therefore the two operators agree up to first order in $\delta$, which implies Eq.~\eqref{eq:H-expansion}.

To obtain the Trotter formula~\eqref{eq:Trotter}, we repeat the approximation in Eq.~\eqref{eq:H-expansion} for each of the $r$ time slices. Since the total evolution over time $t$ can be written as
\[
\e^{-\ii Ht}
=
\left(\e^{-\ii H\delta}\right)^r,
\]
we obtain
\[
\e^{-\ii Ht}
\approx
\left(
\prod_{j=1}^m \e^{-\ii H_j \delta}
\right)^r.
\]
The error in a single time slice is $O(\delta^2)$. After $r$ steps, these local errors combine to give a total error, which is obtained by multiplying $O(\delta^2)$ by $O(r)$. This yields the first-order Trotter formula~\eqref{eq:Trotter}.

To illustrate the procedure, consider the three-qubit Hamiltonian
\[
H = Z_1Z_2 + X_2X_3 .
\]
This Hamiltonian is the sum of two Pauli strings that do not commute because both terms act nontrivially on qubit $2$ with different Pauli operators. As a result, we cannot split $\e^{-\ii Ht}$ into only two separate exponentials without introducing a large error. Therefore, the Trotter approximation is required. Using the first-order Trotter formula with $r$ time steps, we obtain
\[
\e^{-\ii Ht}
\approx
\left(
\e^{-\ii Z_1Z_2\, t/r}
\,\e^{-\ii X_2X_3\, t/r}
\right)^r .
\]
Thus, the evolution generated by $H$ is approximated by repeatedly applying the two simpler unitaries $\e^{-\ii Z_1Z_2\, t/r}$ and $\e^{-\ii X_2X_3\, t/r}$.

Each factor in this product can be implemented using the techniques described in the previous sections. The unitary operator $\e^{-\ii Z_1Z_2\, t/r}$ is diagonal in the computational basis and can therefore be implemented using the circuit for exponentials of products of $Z$ operators. In particular, we apply a CNOT gate with qubit $1$ as control and qubit $2$ as target, then apply the single-qubit rotation $R_z(2t/r)$ on qubit $2$, and finally apply the same CNOT gate again to uncompute the parity.

The unitary operator $\e^{-\ii X_2X_3\, t/r}$ is not diagonal in the computational basis, but it can be reduced to the previous case by a change of basis. Using the identity $HXH=Z$, we first apply Hadamard gates on qubits $2$ and $3$, which transforms the operator $X_2X_3$ into $Z_2Z_3$. We then implement the unitary $\e^{-\ii Z_2Z_3\, t/r}$ using the same CNOT–$R_z$–CNOT construction described above. Finally, we apply the Hadamard gates again on qubits $2$ and $3$ to return to the original basis. The circuit that implements $\e^{-\ii (Z_1Z_2+X_2X_3)t}$ is described in Fig.~\ref{fig:non-comm-Hamil}.

\begin{figure}[!h]
\centering
\[
\Qcircuit @C=1.em @R=1.em {
\lstick{\ket{x_1}} & \qw      & \qw      & \qw        &\qw       &\qw     & \qw      & \ctrl{1} & \qw               & \ctrl{1} & \qw  &&\lstick{\quad ...}\\
\lstick{\ket{x_2}} & \gate{H} & \ctrl{1} & \qw                    & \ctrl{1} & \gate{H} & \qw& \targ             & \gate{R_z(2t/r)} & \targ  & \qw&&&&&&&\lstick{\text{repeat }r\text{ times}}\\
\lstick{\ket{x_3}} & \gate{H} & \targ    & \gate{R_z(2t/r)}       & \targ    & \gate{H} & \qw&\qw               & \qw & \qw & \qw&&\lstick{\quad ...}
\gategroup{1}{2}{3}{6}{1.em}{--}
\gategroup{1}{8}{3}{10}{1.3em}{--}
}
\]
\caption{Circuit implementing one Trotter step for the Hamiltonian $H=Z_1Z_2+X_2X_3$. The left dashed box implements $\e^{-\ii X_2X_3\,t/r}$. The right dashed box implements $\e^{-\ii Z_1Z_2\,t/r}$. Repeating this step $r$ times approximates the evolution operator $\e^{-\ii Ht}$.}\label{fig:non-comm-Hamil}
\end{figure}

\begin{exercise}
\begin{enumerate}

\item[(a)]
Let $U$ and $A$ be unitary operators acting on the target system. Show that
\[
C(A^\dagger U A) = (I\otimes A^\dagger)\, C(U)\, (I\otimes A),
\]
where $C(U)$ denotes the controlled version of $U$ (see Chapter~\ref{chap:2} for more details).

\item[(b)]
Show that the controlled version of the first-order Trotter approximation of $\e^{-\ii Ht}$ (circuit in Fig.~\ref{fig:non-comm-Hamil}) is
\[
\Qcircuit @C=1.em @R=0.6em {
\lstick{\ket{q_0}} & \qw      & \qw      & \ctrl{4}   &\qw       &\qw     & \qw      & \qw & \ctrl{3}               & \qw & \qw  &&\lstick{\quad ...}\\
& & \\
\lstick{\ket{x_1}} & \qw      & \qw      & \qw        &\qw       &\qw     & \qw      & \ctrl{1} & \qw               & \ctrl{1} & \qw  &&&&&&&\lstick{\text{repeat }r\text{ times}}\\
\lstick{\ket{x_2}} & \gate{H} & \ctrl{1} & \qw                    & \ctrl{1} & \gate{H} & \qw& \targ             & \gate{R_z(2t/r)} & \targ  & \qw&&\lstick{\quad ...}\\
\lstick{\ket{x_3}} & \gate{H} & \targ    & \gate{R_z(2t/r)}       & \targ    & \gate{H} & \qw&\qw               & \qw & \qw & \qw&&\lstick{.}
\gategroup{1}{2}{5}{6}{1.2em}{--}
\gategroup{1}{8}{5}{10}{1.2em}{--}
}
\]
\end{enumerate}
\end{exercise}

\subsection*{Second-order Trotter--Suzuki approximation}

The first-order Trotter formula provides an approximation whose error scales as $O(t^2/r)$. This error arises because the exponential of a sum of non-commuting operators cannot be written exactly as a product of exponentials of the individual terms. A better approximation can be obtained using a symmetric product formula, known as the \emph{second-order Trotter--Suzuki approximation}~\cite{SS99}.

Let again
\[
H=\sum_{j=1}^m H_j.
\]
The second-order Trotter--Suzuki formula approximates the time evolution operator as
\begin{equation}\label{eq:TS2}
\e^{-\ii Ht}
=
\left(
\prod_{j=1}^{m-1}
\e^{-\ii H_j t/(2r)}
\;
\e^{-\ii H_m t/r}
\;
\prod_{j=m-1}^{1}
\e^{-\ii H_j t/(2r)}
\right)^r
+
O\!\left(\frac{t^3}{r^2}\right).
\end{equation}
The sequence of exponentials is applied first in the forward order and then in the reverse order. This symmetric structure cancels the leading error terms that appear in the first-order approximation, resulting in a smaller overall error.

To understand the idea, consider the case of two terms, $H = A + B$. The second-order formula takes the form
\[
\e^{-\ii (A+B)t}
\approx
\left(
\e^{-\ii A t/(2r)}
\e^{-\ii B t/r}
\e^{-\ii A t/(2r)}
\right)^r.
\]
Compared with the first-order formula
\[
\e^{-\ii (A+B)t}
\approx
\left(
\e^{-\ii A t/r}
\e^{-\ii B t/r}
\right)^r,
\]
the symmetric arrangement reduces the approximation error. In particular, the error of the second-order formula scales as $O(t^3/r^2)$, which converges faster as $r$ increases.

In the context of Hamiltonian simulation, each operator $\e^{-\ii H_j t/(2r)}$ is again the exponential of a Pauli string. Therefore the same techniques described earlier can be used to implement these operators: a change of basis that converts the Pauli string into a product of $Z$ operators, followed by the circuit implementing $\e^{-\ii \phi Z_{j_1}\cdots Z_{j_k}}$, and finally the inverse change of basis. The second-order Trotter--Suzuki formula is often preferred in practice because it achieves higher accuracy without significantly increasing the complexity of each Trotter step.

In many applications the Hamiltonian has additional structure that makes the Trotter--Suzuki approximation particularly useful. Suppose that the Hamiltonian admits a Pauli decomposition
\[
H=\sum_{j=1}^m \beta_j P_j,
\]
where each $P_j$ is a Pauli string acting on $n$ qubits. In this case the operators that appear in the second-order Trotter--Suzuki formula are exponentials of Pauli strings of the form
\[
\e^{-\ii \beta_j P_j t/(2r)}.
\]
As discussed earlier, each such unitary can be implemented by applying a suitable change of basis that converts the Pauli string into a product of $Z$ operators, followed by a circuit implementing $\e^{-\ii \phi Z_{j_1}\cdots Z_{j_k}}$, and finally undoing the basis change.

If the Hamiltonian is $k$-local, each Pauli string $P_j$ acts nontrivially on at most $k$ qubits, where $k$ is independent of $n$. Consequently, the number of terms $m$ in the Pauli decomposition grows at most polynomially with $n$, and each exponential $\e^{-\ii \beta_j P_j t/(2r)}$ can be implemented using a circuit whose size is $O(\text{poly}(n))$. Therefore, each Trotter step of the second-order approximation can be implemented efficiently, and the overall simulation of the time-evolution operator $\e^{-\ii Ht}$ requires a number of elementary gates that is polynomial in the number of qubits.

\begin{exercise} (Second-order Trotter--Suzuki formula)
Assume that the first-order product formula has already been established. Let
\[
S(\delta)=\e^{A\delta/2}\e^{B\delta}\e^{A\delta/2},
\]
where $A$ and $B$ are bounded operators and $\delta$ is a small real parameter.

\begin{enumerate}

\item[(a)] Using the power-series expansion of the exponential, expand $S(\delta)$ up to terms of order $\delta^2$ and show that
\[
S(\delta)
=
I+(A+B)\delta+\frac{(A+B)^2}{2}\delta^2+O(\delta^3).
\]

\item[(b)] Deduce that
\[
\e^{A\delta/2}\e^{B\delta}\e^{A\delta/2}
=
\e^{(A+B)\delta}+O(\delta^3).
\]

\item[(c)] Let $H=A+B$ and take $\delta=t/r$. Explain why
\[
\e^{-\ii Ht}
=
\left(\e^{-\ii Ht/r}\right)^r.
\]
Then use item (b) to show that
\[
\e^{-\ii Ht}
=
\left(
\e^{-\ii A t/(2r)}
\e^{-\ii B t/r}
\e^{-\ii A t/(2r)}
\right)^r
+
O\!\left(\frac{t^3}{r^2}\right).
\]

\item[(d)] Consider the symmetric product
\[
S'(\delta)=
\prod_{j=1}^{m-1}
\e^{-\ii H_j \delta/2}
\;
\e^{-\ii H_m \delta}
\;
\prod_{j=m-1}^{1}
\e^{-\ii H_j \delta/2}.
\]

Use the result of item (b) repeatedly to argue that
\[
S'(\delta)
=
\e^{-\ii H\delta}+O(\delta^3).
\]
\item[(e)] Taking again $\delta=t/r$, conclude that Eq.~\eqref{eq:TS2} is correct.

\end{enumerate}
\end{exercise}

\subsection*{Sparse Hamiltonians}

Another important class of Hamiltonians that can be simulated efficiently are the \emph{sparse Hamiltonians}. Let $H$ be a $2^n\times 2^n$ Hermitian matrix acting on $n$ qubits. We say that $H$ is \emph{$s$-sparse} if each row and each column of $H$ contains at most $s$ nonzero entries, where $s$ grows at most polynomially with $n$.

In addition to sparsity, one usually assumes that the nonzero entries of $H$ can be efficiently located and computed. More precisely, there must exist an efficient classical procedure (or oracle) that, given a row index $x$ and an integer $k\in\{1,\ldots,s\}$, returns the column index and value of the $k$-th nonzero entry in that row. Under these assumptions, it is possible to simulate the time-evolution operator $\e^{-\ii Ht}$ efficiently.

Sparse Hamiltonians arise naturally in many physical and algorithmic settings. For example, the Hamiltonian describing a particle moving on a graph corresponds to the adjacency matrix of the graph, which is sparse whenever the degree of each vertex is bounded. Similarly, many lattice models in condensed matter physics lead to Hamiltonians in which each basis state is connected to only a small number of other states.

Efficient algorithms for simulating sparse Hamiltonians have been developed using several techniques, including higher-order product formulas, quantum walks, and more recent approaches based on block-encoding and qubitization. These methods allow the implementation of the unitary operator $\e^{-\ii Ht}$ using a number of quantum gates that scales polynomially with the number of qubits, the sparsity parameter $s$, and the evolution time $t$.

\begin{exercise} ($k$-local Hamiltonians are sparse)
Let $H$ be an $n$-qubit Hamiltonian of the form
\[
H=\sum_{\ell=1}^{L} H_\ell,
\]
where each term $H_\ell$ acts nontrivially on at most $k$ qubits, with $k$ independent of $n$. Such Hamiltonians are called $k$-local.

\begin{enumerate}

\item[(a)] Show that for each term $H_\ell$ there exists a subset
$S_\ell\subseteq\{1,\dots,n\}$ with $|S_\ell|\le k$ such that
\[
H_\ell=\widetilde H_\ell\otimes I,
\]
up to a permutation of tensor factors, where $\widetilde H_\ell$ acts only on the qubits in $S_\ell$.

\item[(b)] Let $\ket{x}$ be a computational-basis state.
Explain why
\[
\langle y|H_\ell|x\rangle\neq 0
\]
can occur only if the bit strings $x$ and $y$ differ at positions belonging to $S_\ell$.

\item[(c)] Deduce that for a fixed basis state $\ket{x}$ there are at most $2^{|S_\ell|}$ basis states $\ket{y}$ such that
\[
\langle y|H_\ell|x\rangle\neq 0.
\]

\item[(d)] Conclude that each row of the matrix representing $H_\ell$ in the computational basis contains at most $2^k$ nonzero entries.

\item[(e)] Using the decomposition $H=\sum_{\ell=1}^{L}H_\ell$, show that each row of $H$ has at most
\[
s\le L\,2^k
\]
nonzero entries.

\item[(f)] Conclude that if $k$ is constant and $L$ grows at most polynomially with $n$, then $H$ is $s$-sparse with $s=\operatorname{poly}(n)$.

\end{enumerate}

\end{exercise}

\subsection*{Simulation of 1-sparse Hamiltonians}

In this subsection we discuss the simulation of Hamiltonians $H$ that are \emph{1-sparse}. Each row and each column of $H$ contains at most one nonzero entry. In the sparse-Hamiltonian simulation model, the matrix $H$ is not provided explicitly as input, since writing down all its entries would require exponential space in $n$. Instead, the algorithm assumes access to an \emph{oracle description} of the Hamiltonian. Given a row index $x\in\{0,1\}^n$, the oracle returns the column index $y(x)$ of the unique nonzero entry in that row (if it exists), together with the value of the matrix element $H_{x,y(x)}$. If the rule or logic determining the location of the nonzero entries is unknown, then it is not possible to use this algorithm efficiently. The inputs to the simulation algorithm are therefore the quantum state $\ket{\psi}$ on which the evolution will act, the evolution time $t$, and oracle access to the functions that specify the position and value of the nonzero entries of $H$.

To formalize the oracle description, we introduce two unitary operators that encode the structure of the Hamiltonian. We assume that the circuit uses three registers. The first register stores the index $x$ of a computational-basis state and consists of $n$ qubits. The second register also consists of $n$ qubits and is used to store the column index $y(x)$ of the nonzero entry in row $x$. The third register stores the value of the corresponding matrix element and contains enough qubits to represent the number $H_{x,y}$ with the desired precision.

The first oracle returns the position of the nonzero entry in a given row. For a $1$-sparse Hamiltonian this oracle implements the mapping
\begin{equation}\label{eq:O_y}
O_y\ket{x}\ket{0}\ket{0}=\ket{x}\ket{y(x)}\ket{0},
\end{equation}
where $y(x)$ is the column index of the unique nonzero entry in row $x$ (if the row has no nonzero entry, we take $y(x)=x$ by convention). The oracle $O_y$ acts nontrivially only on the first two registers.

The second oracle returns the value of that entry and acts as
\begin{equation}\label{eq:O_H}
O_H\ket{x}\ket{y}\ket{0}
=
\ket{x}\ket{y}\ket{H_{x,y}},
\end{equation}
where $H_{x,y}$ is the corresponding matrix element of the Hamiltonian. Since $H$ is Hermitian and $1$-sparse, the mapping $x\mapsto y(x)$ has the property that $y(y(x))=x$ whenever the nonzero entry exists. These two oracles provide all the information needed by the simulation algorithm without requiring explicit access to the full $2^n\times2^n$ matrix of $H$.

Let us describe how the Hamiltonian acts on the computational basis.
Since $H$ is $1$-sparse, each row $x$ contains at most one nonzero entry, which is located in column $y(x)$.
Therefore, the action of $H$ on a computational-basis state $\ket{x}$ is particularly simple and can be written as
\[
H\ket{x}=H_{x,y(x)}\ket{y(x)},
\]
where $H_{x,y(x)}$ is the corresponding matrix element.
If the row $x$ has no nonzero entries, then $H\ket{x}=0$.
Because $H$ is Hermitian, the relation $H_{y(x),x}=H^*_{x,y(x)}$ holds, which implies that the state $\ket{y(x)}$ is coupled back to $\ket{x}$.
Thus the Hamiltonian connects computational-basis states in pairs (or leaves them uncoupled when the row is zero), a structure that will be useful for constructing the simulation algorithm.

Consider a pair of basis states $\{\ket{x},\ket{y(x)}\}$ such that
$H_{x,y(x)}\neq 0$. The Hamiltonian couples these two states but does not
connect them to any other basis state because $H$ is $1$-sparse.
Therefore the two-dimensional subspace spanned by
$\{\ket{x},\ket{y(x)}\}$ is invariant under the action of $H$.
In the ordered basis $\{\ket{x},\ket{y(x)}\}$ the Hamiltonian takes the form
\[
H_2=
\begin{bmatrix}
0 & H_{x,y(x)} \\
H^*_{x,y(x)} & 0
\end{bmatrix}.
\]
Thus the global Hamiltonian can be viewed as a direct sum of independent
$2\times2$ blocks acting on such pairs of basis states (together with
$1\times1$ zero blocks corresponding to uncoupled states).
Consequently, the simulation of $\e^{-\ii Ht}$ reduces to implementing the
time evolution generated by each of these $2\times2$ Hamiltonians.
Their action can be applied in quantum superposition, allowing the algorithm to exploit quantum parallelism. A sequential application of the $2\times2$ Hamiltonians associated with each pair of basis states would in general be inefficient.

Let us now compute the time-evolution operator generated by the
$2\times2$ Hamiltonian $H_2$ (which depends on $x$). Write
\[
H_{x,y(x)} = \omega\,\e^{\ii \phi}, \qquad \omega\ge 0,
\]
so that
\[
H_2=
\omega\begin{bmatrix}
0 & \e^{\ii \phi} \\
\e^{-\ii \phi} & 0
\end{bmatrix}.
\]
A direct calculation shows that
\[
(H_2)^2 = \omega^2 I .
\]
Using the power-series expansion of the exponential, we obtain
\begin{align}
\e^{-\ii H_2 t}
&= \cos(\omega t)\, I
-
\ii\,\frac{\sin(\omega t)}{\omega}\, H_2 \\
&=
\begin{bmatrix}
\cos(\omega t) & -\ii\,\e^{\ii \phi}\sin(\omega t) \\
-\ii\,\e^{-\ii \phi}\sin(\omega t) & \cos(\omega t)
\end{bmatrix}.
\end{align}
Therefore the evolution generated by $H$ performs a rotation in the
two-dimensional subspace spanned by $\{\ket{x},\ket{y(x)}\}$, mixing the
two basis states with an angle determined by $\omega t$.

To translate this structure into a quantum circuit, the simulation algorithm uses the oracle description of the Hamiltonian. Starting from a basis state $\ket{x}$, the first step is to compute the column index $y(x)$ using the oracle $O_y$. This produces the state
\[
\ket{x}\ket{0}\ket{0}\xrightarrow{O_y}\ket{x}\ket{y(x)}\ket{0}.
\]
Next, the value oracle $O_H$ is applied to obtain the matrix element $H_{x,y(x)}$ in the third register,
\[
\ket{x}\ket{y(x)}\ket{0}\xrightarrow{O_H}\ket{x}\ket{y(x)}\ket{H_{x,y(x)}}.
\]
These registers now contain all the information required to implement the evolution generated by the $2\times2$ block associated with the pair $\{\ket{x},\ket{y(x)}\}$.

Using the information stored in the third register, which contains the value $H_{x,y(x)}=\omega\e^{\ii\phi}$, the circuit performs a controlled transformation that mixes the amplitudes of the two basis states. Schematically, the transformation acts as
\begin{equation}\label{eq:x-y-H-trans}
\ket{x}\ket{y(x)}\ket{H_{x,y(x)}}
\longrightarrow
\cos(\omega t)\,\ket{x}\ket{y(x)}\ket{H_{x,y(x)}}
-\ii\,\e^{-\ii\phi}\sin(\omega t)\,\ket{y(x)}\ket{x}\ket{H_{x,y(x)}}.
\end{equation}
Thus the amplitudes associated with $\ket{x}$ and $\ket{y(x)}$ are mixed exactly as required by the operator $\e^{-\ii H_2 t}$.

Finally, the auxiliary registers are returned to their initial state by applying the inverse oracles to the whole superposition. We first apply $O_H^\dagger$, which gives
\[
\cos(\omega t)\,\ket{x}\ket{y(x)}\ket{0}
-\ii\,\e^{-\ii\phi}\sin(\omega t)\,\ket{y(x)}\ket{x}\ket{0}.
\]
Next, we apply $O_y^\dagger$ to the first two registers, which returns
the final state
\[
\left(\cos(\omega t)\,\ket{x}
-\ii\,\e^{-\ii\phi}\sin(\omega t)\,\ket{y(x)}\right)\ket{0}\ket{0}.
\]
Thus the second and third registers are restored to $\ket{0}$, while the first register undergoes exactly the desired evolution in the two-dimensional subspace spanned by $\{\ket{x},\ket{y(x)}\}$. In all these expressions, the parameter $\omega$ and the phase $\phi$ depend on $x$.

\begin{exercise}
Assume that the Hamiltonian is real, so that
\[
H_{x,y(x)}=\omega \qquad (\omega\in\mathbb{R})
\]
and therefore $H_2=\omega X$. Recall that $\omega$ depends on $x$. The goal of this exercise is to construct a circuit implementing transformation~\eqref{eq:x-y-H-trans}.
\begin{enumerate}

\item[(a)]
Introduce an ancilla qubit (fourth register) and show that the two states $\ket{x}\ket{y(x)}\ket{\omega}\ket{0}_a$ and $\ket{y(x)}\ket{x}\ket{\omega}\ket{1}_a$ span a two-dimensional subspace. Construct a circuit that maps
\[
\ket{x}\ket{y(x)}\ket{\omega}\ket{0}_a
\longrightarrow
\ket{x}\ket{y(x)}\ket{\omega}\ket{0}_a
+
\ket{y(x)}\ket{x}\ket{\omega}\ket{1}_a .
\]
Hint: use a Hadamard gate on the ancilla followed by a controlled
SWAP acting on the first two registers.

\item[(b)]
Show that the unitary operator
\[
R_x(2\omega t)=
\exp(-i\,\omega t\,X)
\]
acts on the ancilla qubit as
\[
\ket{0}_a
\longrightarrow
\cos(\omega t)\ket{0}_a
-
i\sin(\omega t)\ket{1}_a .
\]

\item[(c)]
Apply the rotation $R_x(2\omega t)$ to the ancilla qubit and verify
that the global state becomes
\[
\cos(\omega t)\ket{x}\ket{y(x)}\ket{\omega}\ket{0}_a
-
i\sin(\omega t)\ket{y(x)}\ket{x}\ket{\omega}\ket{1}_a .
\]

\item[(d)]
Finally, undo the circuit of part (a) to return the ancilla qubit to
$\ket{0}_a$ and show that the transformation on the first three
registers is the one described in~\eqref{eq:x-y-H-trans} with $\phi=0$.

\end{enumerate}

\end{exercise}

\subsection*{Simulation of $s$-sparse Hamiltonians}

We now extend the ideas developed for $1$-sparse Hamiltonians to the more general case of $s$-sparse Hamiltonians with $s>1$. Recall that a Hamiltonian $H$ acting on $n$ qubits is called $s$-sparse if each row and each column of its matrix representation contains at most $s$ nonzero entries. The key idea of the simulation algorithm is to reduce this problem to the case already studied. More precisely, one can decompose $H$ as a sum of Hamiltonians
\[
H=\sum_{j=1}^{m} H^{(j)},
\]
where each $H^{(j)}$ is $1$-sparse and the number of terms $m$ is at most proportional to the sparsity parameter $s$. Each Hamiltonian $H^{(j)}$ therefore couples basis states in disjoint pairs and can be simulated using the method described in the previous subsection. The remaining task is to combine these individual evolutions in order to approximate the full time-evolution operator $\e^{-\ii Ht}$.

To obtain the decomposition of $H$ into $1$-sparse Hamiltonians, it is convenient to interpret the matrix of $H$ as defining a graph structure. Consider the simple graph $G$ whose vertices are the computational-basis states $\ket{x}$, $x\in\{0,1\}^n$, where two vertices $x$ and $y$ are adjacent whenever $H_{x,y}\neq 0$. Since $H$ is Hermitian, $H_{x,y}\neq0$ implies $H_{y,x}\neq0$, and therefore the graph $G$ is undirected. Because $H$ is $s$-sparse, each vertex is connected to at most $s$ other vertices. Thus, the maximum degree $\Delta(G)$ is upper bounded by $s$. The goal is to assign a color to each edge of this graph so that no two edges of the same color share a vertex. Such a coloring partitions the set of edges into disjoint groups, where within each group every vertex is incident to at most one edge. Each color class therefore defines a Hamiltonian $H^{(j)}$ in which every row and column has at most one nonzero entry. In other words, each $H^{(j)}$ is $1$-sparse.

A coloring with this property can be constructed using at most $s+1$ colors. This is guaranteed by Vizing's theorem, which states that every simple graph $G$ with maximum degree $\Delta(G)$ belongs to one of two classes: $\chi'(G) = \Delta(G)$ (Class~1) or $\chi'(G) = \Delta(G)+1$ (Class~2), where $\chi'(G)$ is the edge-chromatic number. Let $c(x,y)\in\{1,\ldots,s+1\}$ denote the color assigned to the edge connecting $x$ and $y$. We then define the Hamiltonians
\[
H^{(j)}_{x,y} =
\begin{cases}
H_{x,y}, & \text{if } c(x,y)=j,\\[4pt]
0, & \text{otherwise}.
\end{cases}
\]
By construction, the matrix $H^{(j)}$ contains at most one nonzero entry in each row and column, and therefore it is $1$-sparse. Moreover,
\[
H=\sum_{j=1}^{s+1} H^{(j)} .
\]

Each Hamiltonian $H^{(j)}$ can therefore be simulated using the procedure developed for $1$-sparse Hamiltonians. The algorithm queries the oracle that specifies the location and value of the nonzero entries of $H$ and determines, for a given pair of basis states $x$ and $y(x,k)$ corresponding to the $k$-th nonzero entry in row $x$, which color $j=c(x,y(x,k))$ has been assigned to that edge. This information identifies the unique $1$-sparse Hamiltonian $H^{(j)}$ responsible for coupling those two basis states. Consequently, the time evolution generated by each $H^{(j)}$ can be implemented using the same circuit structure described in the previous subsection.

Finally, the simulation of the full Hamiltonian $H$ is obtained by combining the evolutions generated by the $1$-sparse Hamiltonians $H^{(j)}$. Since the Hamiltonians $H^{(j)}$ generally do not commute with one another, the operator $\e^{-\ii Ht}$ cannot in general be written exactly as a product of the operators $\e^{-\ii H^{(j)}t}$. Instead, we approximate the evolution using the Trotter--Suzuki product formulas introduced earlier. In this approach, the operator $\e^{-\ii Ht}$ is approximated by a product of operators of the form $\e^{-\ii H^{(j)}t/r}$ applied in sequence. Because each $H^{(j)}$ is $1$-sparse, each of these factors can be implemented using the circuit developed in the previous subsection. Consequently, the overall simulation of $\e^{-\ii Ht}$ can be performed with a number of oracle queries and quantum gates that scales polynomially with the sparsity parameter $s$, the evolution time $t$, and the number of qubits $n$.

In the sparse-Hamiltonian simulation model used in the original analysis of the HHL algorithm, the complexity acquires an additional factor of $s$ because the algorithm must determine which of the at most $s$ nonzero entries in a given row corresponds to the current edge being simulated. As a result, locating the appropriate matrix element may require $O(s)$ oracle queries, leading to an overall complexity that scales as $O(s^2)$.

\section{State preparation}\label{sec:state-preparation}

State preparation is a fundamental task in quantum computing, enabling the encoding of classical information into quantum states. Given an arbitrary normalized quantum state of the form
\[
\ket{\psi} = \sum_{\ell=0}^{2^n-1} c_\ell\e^{\ii \alpha_\ell}\ket{\ell},
\]
where $c_\ell$ are non-negative real numbers ($c_\ell \geq 0$ for all $\ell$) and $\alpha_\ell$ are phases, our goal is to construct a quantum circuit that efficiently prepares this state. The procedure consists of two main steps:
\begin{enumerate}
    \item {Amplitude Encoding:} Applying controlled $R_y$ rotations to encode the amplitudes $c_\ell$.
    \item {Phase Encoding:} Applying controlled $R_z$ rotations to encode the phases $\alpha_\ell$.
\end{enumerate}
The complete circuit can be schematically represented as follows:
\[
\Qcircuit @C=2.3em @R=1.3em {
\lstick{q_1:\,\ket{0}}&\multigate{2}{\,\,\,U_{\theta}\,\,\,}& \multigate{2}{\,\,\,U_{\beta}\,\,\,} &\qw  \\
\lstick{\vdots \,\,\,}&                               & & &{\ket{\psi}.}  \\
\lstick{q_n:\,\ket{0}}&\ghost{\,\,\,U_{\theta}\,\,\,}       & \ghost{\,\,\,U_{\beta}\,\,\,}      & \qw  \gategroup{1}{1}{3}{4}{.7em}{\}}
}\vspace{6pt}
\]
The algebraic formulation of this process is given by
\[
\ket{\psi}=U_{\beta}U_{\theta}\ket{0}^{\otimes n}.
\]
Below, we provide a detailed construction of the circuits for $U_{\theta}$ and $U_{\beta}$, which generate the desired state up to a global phase.

\subsection*{Amplitude encoding}

The first unitary $U_{\theta}$ consists of a sequence of controlled $R_y$ gates that prepare the intermediate state
\[
\ket{\psi_0} = U_{\theta}\ket{0}^{\otimes n}= \sum_{\ell=0}^{2^n-1} c_\ell\ket{\ell}.
\]
For $n=3$, the explicit structure of this part of the circuit is
\[
\Qcircuit @C=0.7em @R=1.0em {
\lstick{q_1:\,\ket{0}} & \gate{R_y(\theta^1_{1})}&\ctrlo{1}                          &\ctrl{1}                             &\ctrlo{1}                         &\ctrlo{1}                          &\ctrl{1}                          &\ctrl{1}                          & \rstick{} \qw \\
\lstick{q_2:\,\ket{0}} & \qw                                  &\gate{R_y(\theta^2_1)}&\gate{R_y(\theta^2_2)}&\ctrlo{1}                          &\ctrl{1}                             &\ctrlo{1}                          &\ctrl{1}                         &  \qw& \rstick{\ket{\psi_0}.} \\
\lstick{q_3:\,\ket{0}} & \qw                                  &\qw                                   &\qw                                   &\gate{R_y(\theta^3_1)}&\gate{R_y(\theta^3_2)}&\gate{R_y(\theta^3_3)}&\gate{R_y(\theta^3_4)}& \rstick{} \qw
\gategroup{1}{1}{3}{9}{.7em}{\}}}\vspace{0.2cm}
\]
The upper index $j$ in the angle $\theta^j_k$ indicates the qubit $q_j$ on which the $R_y$ gate acts, while the lower index $k$ ranges from 1 to $2^{j-1}$. From this example, we can easily generalize the circuit for larger $n$. For instance, when $n=4$, we must add, at the end of the circuit (in any order), eight controlled $C^3(R_y)$ gates with qubit $q_4$ as the target, ensuring that all possible combinations of full and empty controls are covered.

The angles $\theta^j_k$, for $1\le j\le n$ and $1\le k \le 2^{j-1}$, are computed as
\begin{equation}\label{eq:sin2-SP}
\sin^2 \left(\frac{\theta^j_k}{2}\right)=
\frac{{\sum_{\ell=0}^{2^{(n - j)}-1} c_{\ell+(2k - 1)2^{(n - j)} }^2}}
{{\sum_{\ell=0}^{2^{(n - j+1)}-1} c_{\ell+(k - 1)2^{(n - j+1)} }^2}}, \quad \theta^j_k\in [-\pi,0].
\end{equation}
Note that the right-hand side is a non-negative number between 0 and 1, and if $\theta^j_k$ is a solution to \eqref{eq:sin2-SP}, then $-\theta^j_k$ is also a solution. For the circuit described above, we must take $\theta^j_k$ in the range $[-\pi,0]$. If instead we take $\theta^j_k$ in the range $[0,\pi]$, we must reverse the circuit and the order of the multi-controlled $R_y$ gates within the blocks that have the same number of controls.

Given the state $\ket{\psi}$, we can compute $c_\ell=\big|\braket{\ell}{\psi}\big|$ for $\ell$ from 0 to $2^n-1$ and then determine all angles $\theta^j_k$ using formula~\eqref{eq:sin2-SP}. By substituting these angles into the circuit, we can prepare the quantum computer in the state $\ket{\psi_0}$.

To prepare the state $\ket{b}$ used in the HHL algorithm, only the first part of the circuit is needed, and Eq.~\eqref{eq:sin2-SP} is sufficient. However, we will complete the description of the state preparation algorithm as it has broader applications.

\begin{exercise}
Show that:
\begin{enumerate}
\item[(a)] $\theta_k^j=\frac{\pi}{2}$ for all $j$ and $k$, if
\[
\ket{\psi}=\frac{1}{\sqrt{2^n}}\sum_{\ell=0}^{2^n-1} \ket{\ell}.
\]

\item[(b)] $U_\theta=R_y\left(\frac{\pi}{2}\right)^{\otimes n}$.

\item[(c)] $U_\theta\ket{0}^{\otimes n}=H^{\otimes n}\ket{0}^{\otimes n}$.
\end{enumerate}
\end{exercise}

\subsection*{Phase encoding}

The second part of the circuit, $U_{\beta}$, consists of controlled $R_z$ gates that encode the phases $\alpha_\ell$, producing the final state up to a global phase. For $n=3$, it takes the form
\[
\Qcircuit @C=0.7em @R=1.0em {
\lstick{}& \gate{R_z(\beta^1_{1})}&\ctrlo{1}                          &\ctrl{1}                             &\ctrlo{1}                         &\ctrlo{1}                          &\ctrl{1}                          &\ctrl{1}                          & \rstick{} \qw \\
\lstick{\ket{\psi_0}\,\,\,}& \qw                                  &\gate{R_z(\beta^2_1)}&\gate{R_z(\beta^2_2)}&\ctrlo{1}                          &\ctrl{1}                             &\ctrlo{1}                          &\ctrl{1}                         &  \qw& \rstick{\e^{\ii \phi}\ket{\psi}.} \\
\lstick{}& \qw                                  &\qw                                   &\qw                                   &\gate{R_z(\beta^3_1)}&\gate{R_z(\beta^3_2)}&\gate{R_z(\beta^3_3)}&\gate{R_z(\beta^3_4)}& \rstick{} \qw
\gategroup{1}{1}{3}{1}{.7em}{\{}
\gategroup{1}{1}{3}{9}{.7em}{\}}
}\vspace{0.2cm}
\]
This part has the same structure as the previous one. As before, the upper index $j$ in the angle $\beta^j_k$ denotes the qubit $q_j$ on which the $R_z$ gate acts.

The angles $\beta^j_k$, for $1\le j\le n$ and $1\le k \le 2^{j-1}$, are given by
\begin{equation}\label{eq:beta-j-k}
\beta^j_k = \sum_{\ell  = 0}^{2^{(n - j)}-1}  \frac{\alpha_{\ell+(2k - 1)2^{(n - j)} } - \alpha_{\ell+(2k - 2)2^{(n - j )} }}{2^{n - j}}.
\end{equation}
The global phase factor introduced by this transformation is
\[
\phi = \frac{1}{2^n} \sum_{\ell=0}^{2^n-1} \alpha_\ell.
\]
In most applications, this global phase factor is irrelevant.

Given the state $\ket{\psi}$, the phases $\alpha_\ell$ are given by
\[
\alpha_\ell = \arg \braket{\ell}{\psi},
\]
if $\braket{\ell}{\psi} \neq 0$; otherwise, $\alpha_\ell = 0$. The argument function, $\arg(z)$, extracts the phase of a complex number. Specifically, for a complex number $z = |z| \e^{\ii \theta}$, $\arg(z) = \theta$.

\subsection*{Complexity}

Although this method provides an exact way to prepare arbitrary states, it requires $O(2^n)$ gates. Ref.~\cite{MVBS05} was one of the first to introduce this method. The theoretical lower bound is $O(2^n/n)$~\cite{PB11}. In practical implementations, approximate methods or variational approaches are often preferred~\cite{Cer21}.

If the state $\ket{b}$ has multiple repeated entries, the state preparation circuit is going to be shorter, and in some cases it is $O(\text{poly}(n))$. Apendix A of~\cite{PM22} has discussed many of those cases using an anzats of the state preparation circuit with $n$ multi-controled $R_y$ gates.

\subsection*{Application to quantum machine learning}

State preparation techniques are particularly useful in quantum machine learning, where classical data is encoded into quantum states. A common task is encoding $N$ real numbers $\alpha_0, \dots, \alpha_{N-1}$ into phase information, producing the state
\[
\ket{\psi_1} = \frac{1}{\sqrt{2^n}}\sum_{\ell=0}^{2^n-1} \e^{\ii \alpha_\ell} \ket{\ell}.
\]
This can be achieved by replacing $U_{\theta}$ with Hadamard transformations
\[
\ket{\psi_1}= U_{\beta}H^{\otimes n}\ket{0}^{\otimes n}.
\]
This approach is frequently used in quantum kernel methods and variational quantum circuits.

\chapter{Final Remarks}\label{chap:conc}

Most quantum algorithms analyzed in this work can be cast into the oracle-based framework. The query complexity of an algorithm based on an oracle or black box is the number of queries. It does not matter how difficult it is to implement the oracle unless we aim to solve a practical problem. In practical problems, it is our task to implement the oracle, and then the cost of each evaluation matters. Take Shor's factoring algorithm as an example. The oracle in this case is an $r$-periodic function, and our goal is to find $r$. We have seen that the function in Shor's algorithm is modular exponentiation, which can be implemented efficiently in terms of the input size using the repeated squaring method.

Any classical deterministic algorithm can be represented as an $n$-input and $m$-output function $f:{0,1}^n\longrightarrow {0,1}^m$, which can be viewed as a collection of $m$ $n$-bit Boolean functions. Therefore, any classical algorithm can be implemented on a quantum computer with two registers of sizes $n$ and $m$ using the operator
\[
U_f\ket{x}\ket{y}\,=\,\ket{x}\ket{y\oplus f(x)}.
\]
To exploit quantum parallelism, we need to apply $H^{\otimes n}$ to the first register before applying $U_f$. After applying $U_f$, we obtain a superposition state, which becomes useful only after we perform some quantum post-processing that produces the desired output. Most of the quantum algorithms we have analyzed can be cast into the following circuit:
\[
\Qcircuit @C=1.7em @R=0.9em {
\lstick{\ket{0}}            & \gate{H} & \multigate{5}{\,\,\,U_f\,\,\,} & \multigate{3}{\begin{array}{c}\text{post}\\ \text{processing}\end{array}} & \meter  & \rstick{i_0} \cw \\
\lstick{\vdots \,\,\,}      & {\vdots} &                                &{}&{\vdots} & \rstick{\vdots}  \\
\lstick{}                   &          &                                &         && \rstick{}  \\
\lstick{\ket{0}}            & \gate{H} & \ghost{\,\,\,U_f\,\,\,}        & \ghost{\begin{array}{c}\text{post}\\ \text{processing}\end{array}}& \meter  & \rstick{i_{n-1}}  \cw\\ & & & & {\hspace{1.7cm}} \\
\lstick{\ket{0}^{\otimes {m}}}& {/^{m}}\qw & \ghost{\,\,\,U_f\,\,\,}        & \meter  &  \rstick{j_0...j_{m-1}.}\cw
}\vspace{0.0cm}
\]
For the Deutsch-Jozsa, Bernstein-Vazirani, Simon, and Shor factoring algorithms, the quantum post-processing consists of either applying Hadamard gates to all qubits or using the inverse Fourier transform. They have the structure outlined above with a few adaptations. Some of these algorithms also require classical post-processing, which is not represented in the quantum circuit.

Grover's algorithm does not have the structure outlined above because the oracle and the post-processing are repeated many times before measurement. On the other hand, Grover's algorithm provides a polynomial speedup, in contrast to the exponential speedup of Simon's and Shor's algorithms. The extension of the general structure that includes Grover's algorithm is \vspace{-0.2cm}
\[
\Qcircuit @C=1.7em @R=0.9em {
& & \hspace{90pt} \mbox{repeat $k$ times} & & & \\
\lstick{\ket{0}}            & \gate{H} & \multigate{5}{\,\,\,U_f\,\,\,} & \multigate{3}{\begin{array}{c}\text{post}\\ \text{processing}\end{array}} & \meter  & \rstick{i_0} \cw \\
\lstick{\vdots \,\,\,}      & {\vdots} &                                &{}&{\vdots} & \rstick{\vdots}  \\
\lstick{}                   &          &                                &         && \rstick{}  \\
\lstick{\ket{0}}            & \gate{H} & \ghost{\,\,\,U_f\,\,\,}        & \ghost{\begin{array}{c}\text{post}\\ \text{processing}\end{array}}& \meter  & \rstick{i_{n-1}}  \cw\\ & & & & {\hspace{1.7cm}} \\
\lstick{\ket{0}^{\otimes {m}}}& {/^{m}}\qw & \ghost{\,\,\,U_f\,\,\,}    &\qw    & \meter  &  \rstick{j_0...j_{m-1}.}\cw
\gategroup{2}{3}{7}{4}{.7em}{--}
}\vspace{0.0cm}
\]
The number of repetitions $k$ is 1 for the Deutsch-Jozsa, Bernstein-Vazirani, Simon, and Shor algorithms, and $k$ is $\lfloor\pi\sqrt{2^n}/4\rfloor$ for Grover's algorithm.
The measurement of the second register is unnecessary. It is included because it helps in the analysis of the algorithm.

The second register of the Deutsch-Jozsa, Bernstein-Vazirani, and Grover algorithms has only one qubit ($m=1$), whose state during the computation is $\ket{-}$, which is obtained by applying $X$ and $H$ on the last qubit before $U_f$. The oracle for those cases obeys
\[
U_f\ket{x}\ket{-}\,=\,(-1)^{f(x)}\ket{x}\ket{-}.
\]
This means that the second register can be eliminated, yielding a simpler version of the circuit with the following form: \vspace{-3pt}
\[
\Qcircuit @C=1.7em @R=0.9em {
& & \hspace{90pt} \mbox{repeat $k$ times} & & & \\
\lstick{\ket{0}}            & \gate{H} & \multigate{3}{\,\,\,u_f\,\,\,} & \multigate{3}{\begin{array}{c}\text{post}\\ \text{processing}\end{array}} & \meter  & \rstick{i_0} \cw \\
\lstick{\vdots \,\,\,}      & {\vdots} &                                &{}&{\vdots} & \rstick{\vdots}  \\
\lstick{}                   &          &                                &         && \rstick{}  \\
\lstick{\ket{0}}            & \gate{H} & \ghost{\,\,\,u_f\,\,\,}        & \ghost{\begin{array}{c}\text{post}\\ \text{processing}\end{array}}& \meter  & \rstick{i_{n-1}.}  \cw\\ & & & & {\hspace{1.7cm}}
\gategroup{2}{3}{5}{4}{.7em}{--}
}\vspace{0.0cm}
\]
As before, $k=1$ for the Deutsch-Jozsa and Bernstein-Vazirani algorithms, $k=\lfloor\pi\sqrt{2^n}/4\rfloor$ for Grover's algorithm, and
\[
u_f\ket{x}=(-1)^{f(x)} \ket{x}.
\]

Shor's algorithm for discrete logarithms shows how to extend the structure of the circuit when the function $f$ has more than one variable. Suppose that $f$ has two variables. Then, $U_f$ is defined as
\[
U_f\ket{x_1}\ket{x_2}\ket{y}\,=\,\ket{x_1}\ket{x_2}\ket{y\oplus f(x_1,x_2)}.
\]
This means that we need a circuit with three registers, and the general structure of the algorithm remains the same, up to small changes, as follows:
\[
\Qcircuit @C=1.7em @R=1.6em {
& & \hspace{185pt} \mbox{repeat $k$ times} & & & \\
\lstick{\ket{0}^{\otimes n_1}}& {/^{n_1}}\qw&\gate{H^{\otimes n_1}} & \multigate{2}{\,\,\,U_f\,\,\,} & \multigate{1}{\begin{array}{c}\text{post}\\ \text{processing}\end{array}} & \meter  & \rstick{i_0...i_{n_1-1}} \cw \\
\lstick{\ket{0}^{\otimes n_2}}& {/^{n_2}}\qw&\gate{H^{\otimes n_2}} & \ghost{\,\,\,U_f\,\,\,}        & \ghost{\begin{array}{c}\text{post}\\ \text{processing}\end{array}}& \meter  & \rstick{i'_0...i'_{n_2-1}}  \cw\\
\lstick{\ket{0}^{\otimes {n_3}}}& {/^{n_3}}  \qw&\qw & \ghost{\,\,\,U_f\,\,\,}        & \qw    & \meter  &  \rstick{j_0...j_{n_3-1}.}\cw
\gategroup{2}{4}{4}{5}{1.4em}{--}
}\vspace{0.0cm}
\]
These circuit patterns provide a useful template for understanding more advanced quantum algorithms.

Tables~\ref{table:sizes},~\ref{table:entangled}, and~\ref{table:effi} summarize key features of these basic quantum algorithms.

\begin{table}[h!]
\centering
\begin{tabular}{|c|c|c|c|c|}
\hline
\textit{Algorithm} & \textit{1st reg.} & \textit{2nd reg.} & $k$ & \textit{Post-processing} \\
\hline
Deutsch-Jozsa & $n$ & 1 & $1$ & $H^{\otimes n}$ \\
\hline
Bernstein-Vazirani & $n$ & 1 & $1$ & $H^{\otimes n}$ \\
\hline
Simon & $n$ & $n$ & $1$ & $H^{\otimes n}$ \\
\hline
Shor (factoring)  & $2n$ or $2n-1$ & $n$ & $1$ & $F_{q}^\dagger$ \\
\hline
Grover & $n$ & 1 & $\sqrt{2^n}$ & $2\ket{\text{d}}\bra{\text{d}}-I$ \\
\hline
\end{tabular}
\caption{Overview of the basic algorithms, showing the number of qubits in the first and second registers, and the number of repetitions of the dashed box. The last column describes the post-processing, where $\ket{\text{d}}=\ket{+}^{\otimes n}$ and $q$ is either $2^{2n}$ or $2^{2n-1}$.  }\label{table:sizes}
\end{table}

\begin{table}[h!]
\centering
\begin{tabular}{|c|c|c|}
\hline
\textit{Algorithm}  &  \textit{Oracle} & \textit{Entangled} \\
\hline
Deutsch-Jozsa  & $f$ is balanced or constant  &  depends on $f$\\
\hline
Bernstein-Vazirani & $f$ is linear:  $f(x)=s\cdot x$ & no \\
\hline
Simon  & $f(x)=f(y) \iff x\oplus y\in\{0,s\}$  & depends on $|s|$ \\
\hline
Shor (factoring)  &  $f$ is periodic & yes\\
\hline
Grover   &  $f(x)=1$ iff $x=x_0$ & yes \\
\hline
\end{tabular}
\caption{Summary of the basic algorithms, including a brief description of the oracle and whether entanglement is present. }\label{table:entangled}
\end{table}

\begin{table}[h!]
\centering
\begin{tabular}{|c|c|c|}
\hline
\textit{Algorithm} & \textit{Quantum version} & \textit{Classical version} \\
\hline
Deutsch-Jozsa & $O(1)$ & $O(1)$  \\
\hline
Bernstein-Vazirani & $O(1)$ & $O(n)$  \\
\hline
Simon & $O(n^2)$ & $O(\sqrt{2^n})$  \\
\hline
Shor (factoring)  & $O(n^2\log n)$ & $\text{e}^{(1+o(1))\sqrt{n}\sqrt{\log n}}$  \\
\hline
Grover & $O(\sqrt{2^n})$ & $O({2^n})$  \\
\hline
\end{tabular}
\caption{Comparison of quantum and classical time complexities for the basic quantum algorithms, assuming randomized algorithms for the classical cases. For Shor's algorithm, we assume a fast multiplication method~\cite{HH21} for the quantum algorithm. For the other algorithms, we assume the oracle's implementation is $O(1)$. }\label{table:effi}
\end{table}


\newpage
\newpage
\setlength{\bibsep}{2.0pt}
\addcontentsline{toc}{chapter}{Bibliography}

\end{document}